\newtheorem{hypothesis}{Induction Hypothesis}
\newtheorem{definition}{Definition}
\newtheorem{lemma}{Lemma}
\newtheorem{basecase}{Base Case}
\theoremstyle{plain}
\newcommand\narrowstyle{\SetTracking{encoding=*}{-50}\lsstyle}
\newcommand\normalstyle{\SetTracking{encoding=*}{0}\lsstyle}
\definecolor{BLACK}{rgb}{0.0, 0.0, 0.0}
\tikzset{
  level/.style   = { ultra thick, blue },
  connect/.style = { dashed, red },
  notice/.style  = { draw, rectangle callout, callout relative pointer={#1} },
  label/.style   = { text width=2cm }
}
\DeclareMathOperator*{\argmin}{argmin}
\newcommand\crule[3][black]{\textcolor{#1}{\rule{#2}{#3}}}
\tiny\color{gray},%
\definecolor{kwcolor11}{rgb}{0,0,0.75}
\newcommand{\nosemic}{\renewcommand{\@endalgocfline}{\relax}}
\newcommand{\dosemic}{\renewcommand{\@endalgocfline}{\algocf@endline}}
\let\oldnl\nl
\newcommand{\nonl}{\renewcommand{\nl}{\let\nl\oldnl}}
\author{Matthias \textsc{Springer}} 
\keywords{} 
\let\Chapter\chapter
\def\chapter{\addtocontents{lol}{\protect\addvspace{10pt}}\Chapter}
\begin{document}

\frontmatter 

\pagestyle{plain} 


\begin{titlepage}
\begin{center}
\includegraphics[scale=0.3]{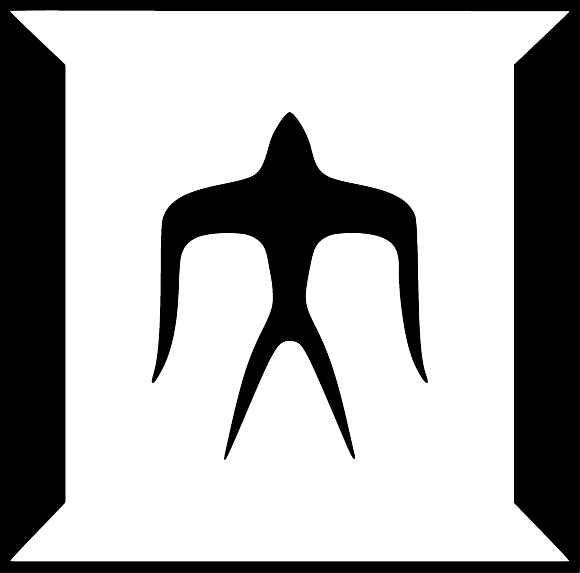} \\[0.1cm]
\vspace*{.01\textheight}
{\scshape\LARGE \univname\par}\vspace{1.5cm} 
\textsc{\Large Doctoral Thesis}\\[0.5cm] 

\HRule \\[0.4cm] 
{\huge \bfseries \ttitle\par}\vspace{0.4cm} 
\HRule \\[1.5cm] 

\begin{minipage}[t]{0.45\textwidth}
\begin{flushleft} \large
\emph{Author:}\\
\authorname\\[0.3cm] 
\emph{Student Number:}\\
\textsc{\studentnumname}
\end{flushleft}
\end{minipage}
\begin{minipage}[t]{0.45\textwidth}
\begin{flushright} \large
\emph{Supervisor:} \\
\supname 
\end{flushright}
\end{minipage}\\[3cm]

\vfill

\large \textit{A thesis submitted in fulfillment of the requirements\\ for the degree of \degreename}\\[0.3cm] 
\textit{in the}\\[0.4cm]
\groupname\\\deptname\\[2cm] 

\vfill

{\large \today}\\[4cm] 

\vfill
\end{center}
\end{titlepage}


\begin{declaration}
\addchaptertocentry{\authorshipname} 
\noindent I, \authorname, declare that this thesis titled, \enquote{\ttitle} and the work presented in it are my own. I confirm that:

\begin{itemize}
\item This work was done wholly or mainly while in candidature for a research degree at this University.
\item Where any part of this thesis has previously been submitted for a degree or any other qualification at this University or any other institution, this has been clearly stated.
\item Where I have consulted the published work of others, this is always clearly attributed.
\item Where I have quoted from the work of others, the source is always given. With the exception of such quotations, this thesis is entirely my own work.
\item I have acknowledged all main sources of help.
\item Where the thesis is based on work done by myself jointly with others, I have made clear exactly what was done by others and what I have contributed myself.\\
\end{itemize}

\noindent Signed:\\
\rule[0.5em]{25em}{0.5pt} 

\noindent Date:\\
\rule[0.5em]{25em}{0.5pt} 
\end{declaration}

\cleardoublepage






\setcounter{minitocdepth}{3}

\dominitoc

\begin{abstract}
\addchaptertocentry{\abstractname} 
Object-oriented programming (OOP) is often regarded as too inefficient for high-performance computing (HPC), even though many important HPC problems have an inherent object structure. To make HPC available to engineers and researchers in other domains, our goal is to bring efficient, object-oriented programming to massively parallel Single-Instruction Multiple-Data (SIMD) architectures, especially GPUs.

In this thesis, we develop techniques and prototypes for optimizing the memory access of object-oriented GPU code. Our first prototype \textsc{Ikra-Ruby} explores modular array-based GPU computing in Ruby, a high-level programming language. Our main prototype \textsc{Ikra-Cpp} explores object-oriented programming within parallel GPU code in CUDA/C++.

We propose a new object-oriented programming model called \emph{Single-Method Multiple-Objects} (SMMO) that can express many important HPC problems and that can be implemented efficiently on GPUs. Our main optimization is the well-known Structure of Arrays (SOA) data layout, which improves vectorized access and cache performance. The main contributions of this thesis are threefold: First, we develop an embedded C++/CUDA data layout for \textsc{Ikra-Cpp} that allows programmers to experience the performance benefit of SOA without breaking OOP abstractions. Second, we design \textsc{DynaSOAr}, a lock-free, dynamic GPU memory allocator for \textsc{Ikra-Cpp} that is based on hierarchical bitmaps and optimizes the usage of allocated memory with an SOA data layout. In contrast to other state-of-the-art GPU allocators, \textsc{DynaSOAr} trades raw (de)allocation performance for better memory access performance, resulting in an up to 3 times speedup of SMMO application code. Finally, we extend \textsc{DynaSOAr} with a memory defragmentation system called \textsc{CompactGpu}, which further increases the performance benefits of SOA and lowers the overall memory usage through less fragmented allocations.

We evaluated \textsc{Ikra-Cpp} with nine SMMO applications. Our experiments show that the SMMO programming model is powerful enough to express many important HPC problems from various domains. They also demonstrate that programmers can have the benefits of object-oriented programming and good runtime performance at the same time.

\end{abstract}



\begin{acknowledgements}
\addchaptertocentry{\acknowledgementname} 
First and foremost, I would like to thank my academic advisor Prof. Hidehiko Masuhara for accepting me as a doctoral student in the Programming Research Group, for giving me the opportunity to freely pursue my research interests, and for supervising, guiding and supporting my work throughout these last four years. I would also like to thank Prof. Robert Hirschfeld, my academic advisor during my Bachelor's and Master's studies for the continuous support and collaboration even after my Master's studies.

In 2016 and 2018, I had to pleasure to work with Peter Wauligmann and Yaozhu Sun, who joined the Programming Research Group as part of a university exchange program. Chapters~\ref{sec:thesis_expressing_parallel} and~\ref{sec:theis_opt_mem_access} contain material from two ARRAY workshop papers, which were co-authored by Peter Wauligmann and Yaozhu Sun. My co-authors have approved the inclusion of these papers in my thesis. I am grateful for their contributions to my research artifacts and wish them all the best with their own careers in industry and academia.

The financial support of MEXT (Japanese Ministry of Education, Culture, Science and Technology) and JSPS (Japan Society for the Promotion of Science) through their scholarship and fellowship programs allowed me focus entirely on my research. I am grateful for their support and their decision to fund my research proposal. I would also like to thank NVIDIA for donating a TITAN Xp GPU to our lab, which allowed me to run my experiments on the latest hardware architecture.
\end{acknowledgements}


\tableofcontents \mtcaddchapter

\listoffigures \mtcaddchapter

\listoftables \mtcaddchapter

\listofalgorithms 

\lstlistoflistings

\mainmatter 

\pagestyle{thesis} 



\newtheorem*{thstatement}{Thesis Statement}

\chapter{Introduction}

High-performance, general-purpose GPU computing has long been a tedious job, requiring programmers to write hand-optimized, low-level programs. Such programs are hard to develop, debug and maintain. Even though object-oriented programming is well established in other domains and appreciated for its good abstraction, expressiveness, modularity and developer productivity, it is regarded as too inefficient for high-performance computing (HPC). This is despite the fact that many important HPC applications exhibit an inherent object structure. Our goal is to bring fast object-oriented programming to SIMD architectures, especially GPUs.

\begin{thstatement}
Efficient object-oriented programming is feasible on GPUs.
\end{thstatement}

\begin{figure}
  \centering
  \includegraphics[width=0.8\textwidth]{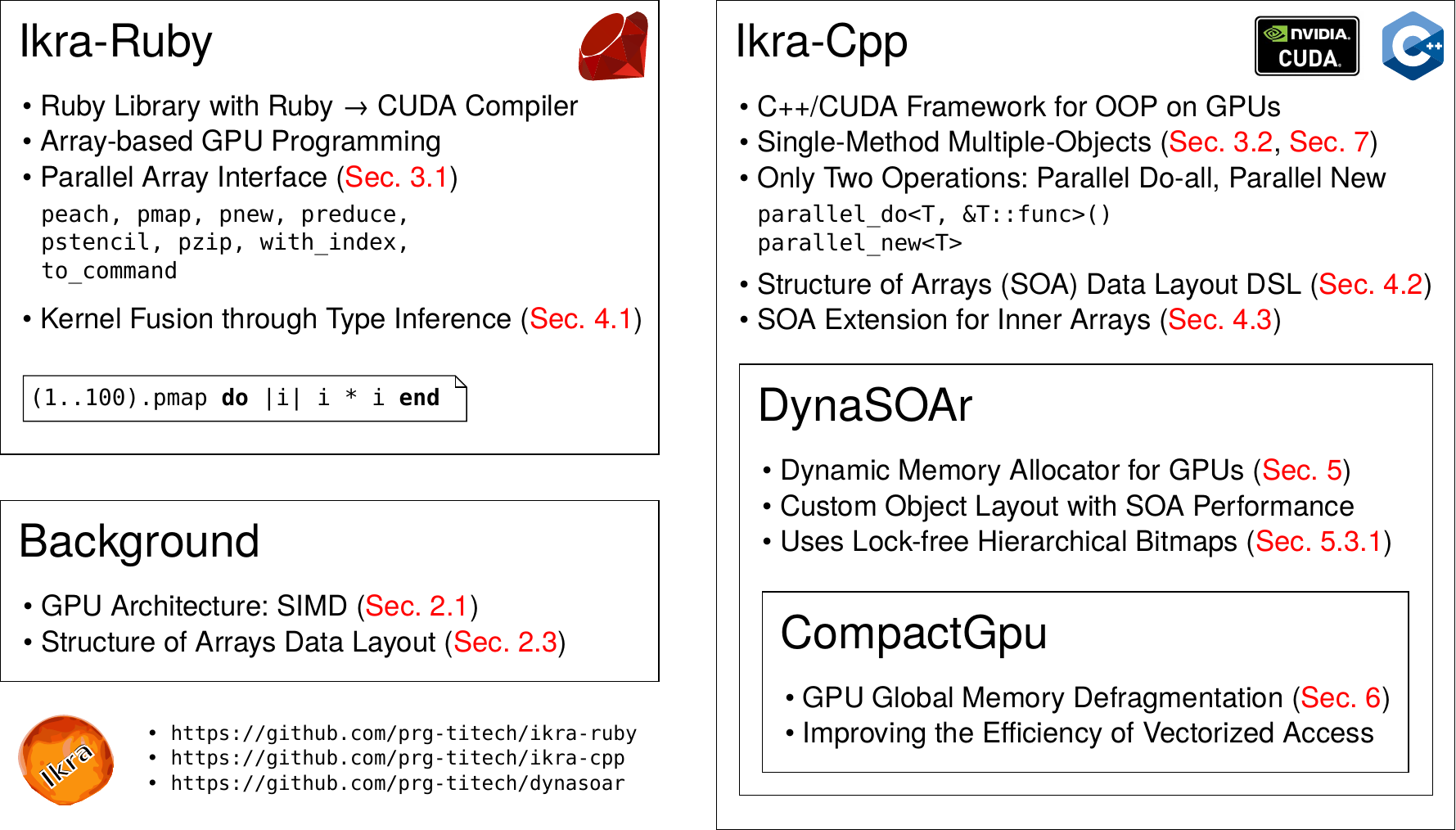}
  \caption{Overview of this thesis}
  \label{fig:thesis_overview}
\end{figure}

In the course of this thesis, we present the design and implementation of libraries/prototypes that allow programmers to utilize object-oriented programming on GPUs (Figure~\ref{fig:thesis_overview}). We show that an object-oriented programming style, with most of its benefits, is feasible on GPUs without sacrificing performance. We investigate (a) how GPU parallelism can be expressed with object orientation and (b) how object-oriented code that runs on GPUs can be optimized.

\paragraph{Expressing Parallelism: Parallel Array Interface}
Our first prototype, \textsc{Ikra-Ruby}, is a Ruby library for high-level, array-based GPU programming. Programmers express parallelism over an array with small, functional, customizable operations such as \emph{parallel map} or \emph{parallel stencil}. Programmers can utilize the full range of Ruby's object-oriented capabilities to compose a larger GPU program from multiple parallel operations in a modular way. By combining multiple parallel operations, programmers build a computation graph. \textsc{Ikra-Ruby} optimizes this computation graph by fusing multiple parallel operations into a small number of CUDA kernels. 

\paragraph{Expressing Parallelism: Single-Method Multiple-Objects}
Our second prototype, \textsc{Ikra-Cpp}, explores object-oriented programming within parallel GPU code in C++/CUDA. While \textsc{Ikra-Ruby} provides various parallel operations, we found that a simpler model with only one type of operation, \emph{parallel do-all}, is sufficient for a many applications. We discovered a broad programming model based on parallel do-all with many object-oriented applications that can be implemented efficiently on massively parallel SIMD accelerators. We call this model \emph{Single-Method Multiple-Objects} (SMMO), because parallelism is expressed by running a method on all objects of a type. SMMO fits well with the data-parallel execution pattern of GPUs and has many important real-world applications such as simulations for population dynamics, evacuations, wildfire spreading, finite element methods or particle systems. SMMO can also express breadth-first graph traversals and dynamic tree updates/constructions.

\paragraph{Optimizing Memory Access}
Object-oriented programming is slow on SIMD architectures mainly because of inefficient memory access. Getting data into and out of vector registers is often the biggest bottleneck and peak memory bandwidth utilization can be achieved only with efficient vector transactions. To optimize the memory access of \textsc{Ikra-Cpp} applications, we developed an embedded C++ DSL that stores objects of the same type in the well-known Structure of Arrays (SOA) data layout. SOA is a form of structure splitting that stores all values of a field together. SOA is a standard optimization and best practice for GPU programmers but without proper language support, it leads to less readable code and breaks language abstractions. \textsc{Ikra-Cpp} allows programmers to experience the performance benefit of SOA without sacrificing code readability or object-oriented abstractions in C++.

\paragraph{Dynamic Memory Allocation}
Dynamic memory management and the ability/flexibility of creating/deleting objects at any time is one of the corner stones of object-oriented programming. Unfortunately, existing GPU memory allocators are either notoriously slow in serving allocations or miss key optimizations for structured data, leading to poor data locality and low memory bandwidth utilization when accessing data. For this reason, many GPU programmers still avoid dynamic memory management entirely and try to statically allocate all memory.

As the main research contribution of this thesis, we extended \textsc{Ikra-Cpp} with \textsc{DynaSOAr}, a fully-parallel, lock-free, dynamic memory allocator. \textsc{DynaSOAr} improves the usage of allocated memory with an SOA data layout and achieves low memory fragmentation through efficient management of free and allocated memory blocks with lock-free, hierarchical bitmaps. Contrary to other state-of-the-art allocators, our design is heavily based on atomic operations, trading raw (de)allocation performance for better overall application performance. In our benchmarks, \textsc{DynaSOAr} achieves a speedup of SMMO application code of up to 3x over state-of-the-art allocators. Moreover, \textsc{DynaSOAr} manages heap memory more efficiently than other allocators, allowing programmers to run up to 2x larger problem sizes with the same amount of memory.

\paragraph{Memory Defragmentation}
In a system with dynamic memory allocation, the efficiency of an SOA data layout depends heavily on low memory fragmentation. If data is fragmented, more vector accesses are needed for the same number of bytes, reducing the benefit of SOA. To further improve memory fragmentation and vectorized access, we extended \textsc{DynaSOAr} with \textsc{CompactGpu}, an incremental, fully-parallel, in-place memory defragmentation system. \textsc{CompactGpu} defragments the heap by merging partly occupied memory blocks. We developed several implementation techniques for memory defragmentation that are efficient on SIMD/GPU architectures, such as finding defragmentation block candidates and fast pointer rewriting based on bitmaps.

\paragraph{Summary}
In this thesis, we show that, contrary to common belief, object-oriented programming is feasible on GPU without sacrificing performance, if the programming system provides well-designed abstractions and programming models that can be implemented efficiently on GPUs. SMMO is the main programming model throughout this thesis and we show through various examples that it is sufficiently expressive and that it can run efficiently on GPUs.

While this thesis focuses mostly on memory access performance, future work could focus more on control flow divergence or advanced features of object-oriented programming such as virtual function calls or advanced modularity constructs such as multiple inheritance. We also plan to integrate \textsc{Ikra-Cpp} into \textsc{Ikra-Ruby} in the future, so that programmers can develop SMMO applications in a high-level programming language.

\paragraph{Outline}
This thesis is organized as follows. Chapter~\ref{sec:thesis_background} gives an overview of the execution model and memory hierarchy of GPUs, as well as a brief summary of the object-oriented programming model and its challenges on GPUs. Chapter~\ref{sec:thesis_expressing_parallel} investigates how object orientation can be used to express GPU parallelism. In particular, we describe the APIs of \textsc{Ikra-Ruby} and \textsc{Ikra-Cpp}. Chapter~\ref{sec:theis_opt_mem_access} shows how to optimize the memory access of \textsc{Ikra-Ruby} and \textsc{Ikra-Cpp} applications with kernel fusion and with a Structure of Arrays (SOA) data layout. Chapter~\ref{sec:chapter_dynasoar} presents the design and implementation of the \textsc{DynaSOAr} dynamic memory allocator. Chapter~\ref{sec:chap_gpu_mem_defrag} describes how memory that was dynamically allocated with \textsc{DynaSOAr} can be defragmented with \textsc{CompactGpu} to improve the efficiency of the SOA layout and to lower the overall memory consumption. Chapter~\ref{chap:smmo_examples} illustrates the design and implementation of a various SMMO applications from different domains, which underlines the importance of the SMMO programming model. Finally, Chapter~\ref{sec:thesis_concl} concludes this thesis.

\chapter{Background}
\label{sec:thesis_background}
This chapter gives an overview of the architecture and execution model of recent NVIDIA GPUs (Section~\ref{sec:bg_gpu_exec_model}), as well as  a brief background on object-oriented programming (Section~\ref{sec:bg_oop}) and the Structure of Arrays (SOA) data layout (Section~\ref{sec:o_aos_vs_soa}).

\minitoc

\section{GPU Execution Model}
\label{sec:bg_gpu_exec_model}
Graphics processing units (GPUs) have become more and more popular in the last decade. While in the early years, GPUs were mainly used for producing visual output on a computer monitor and to accelerate graphical computations of computer games, they are now well established in a variety of other areas such as high-performance computing or machine learning.

The three main GPU manufacturers at present are AMD, Intel and NVIDIA~\cite{gpu_market_share}. We are focusing on NVIDIA architectures and the NVIDIA CUDA programming model in this work, as NVIDIA GPUs are most widely used in high-performance computing according to the \emph{TOP500} list of supercomputers~\cite{nvidia_supercomputer_list}. Since most GPU architectures follow similar design principles, we expect that our findings also apply to other architectures.

NVIDIA GPUs can be programmed with CUDA and OpenCL. Both are C++ dialects. While OpenCL works with a variety of GPU architectures, CUDA is specific to NVIDIA GPUs and provides access to more fine grained control flow, memory and synchronization primitives that may not be available on other architectures~\cite{cuda_vs_opencl_api}. On NVIDIA architectures, CUDA code has also been shown to be more performant than comparable OpenCL code~\cite{DBLP:journals/corr/abs-1005-2581, 6047190, 6419068}.

\subsection{Parallel Execution}
\label{sec:backg_parall_exec}
\begin{figure}
  \includegraphics[width=\textwidth]{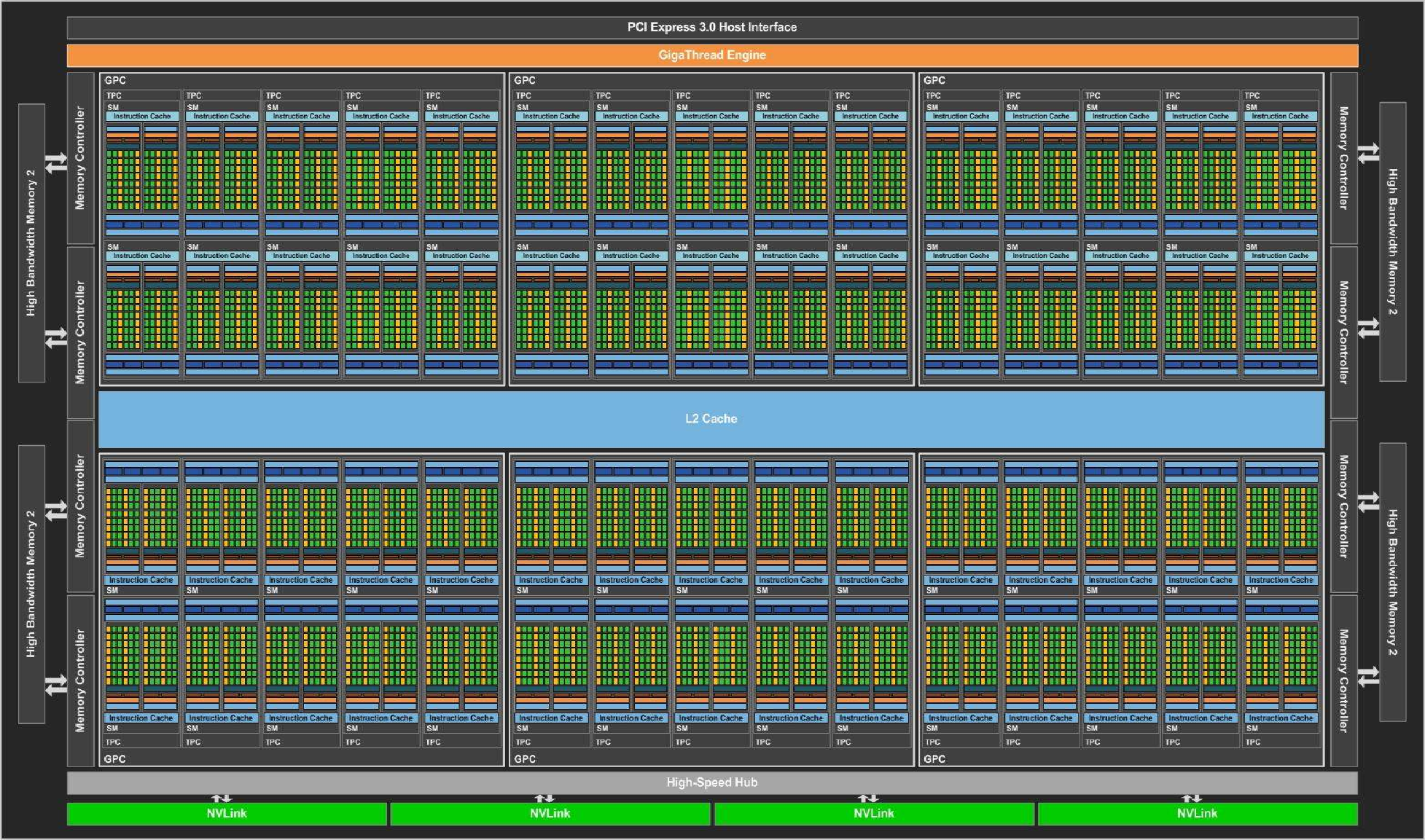}
  \caption[Architecture of NVIDIA GP100]{Architecture of NVIDIA GP100 (Source: NVIDIA Tesla P100 Whitepaper~\cite{pascal_whitepaper})}
  \label{fig:architecture_gp_100}
\end{figure}

\begin{figure}
  \centering
  \includegraphics[width=0.7\textwidth]{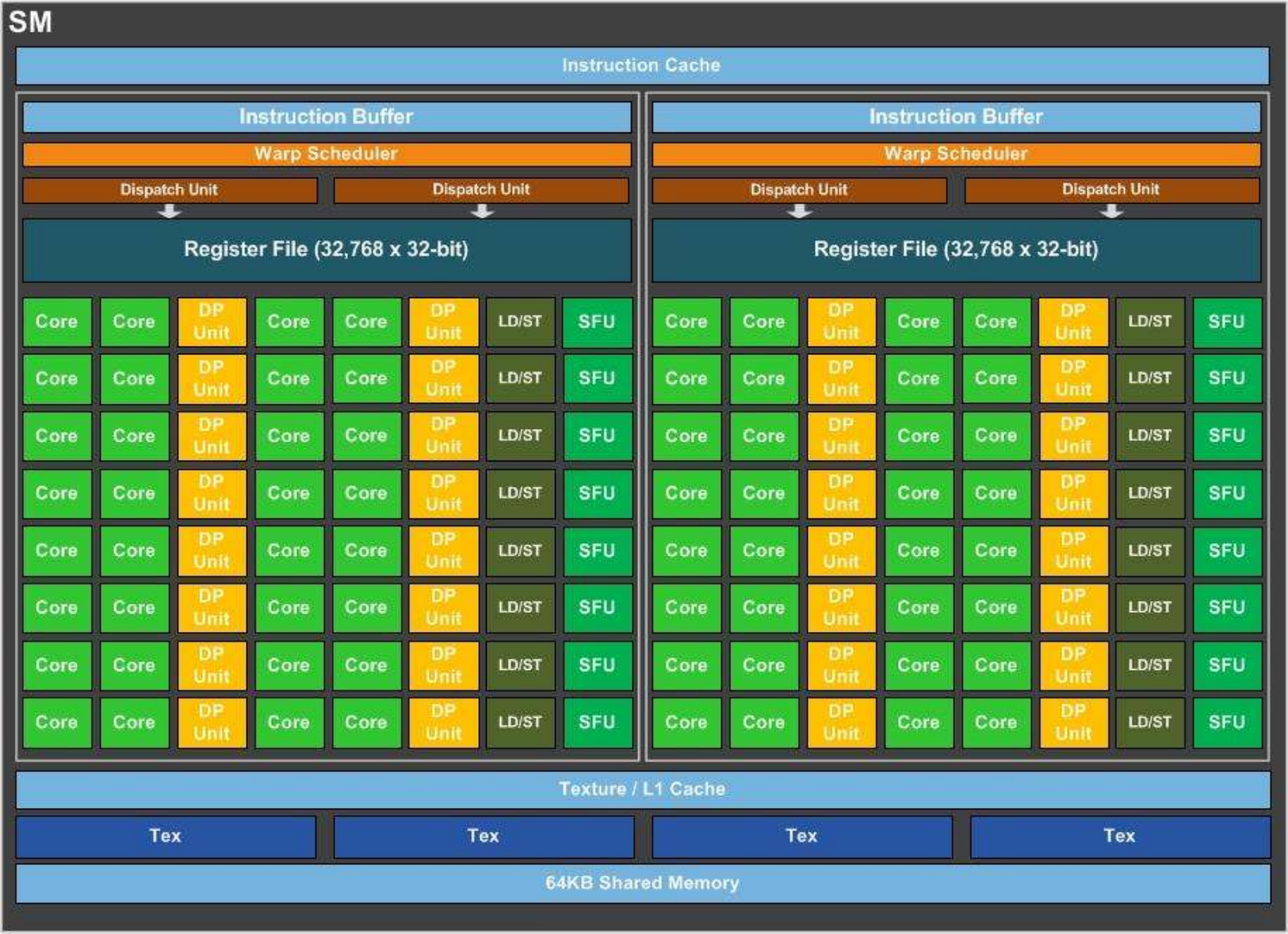}
  \caption[Architecture of a single NVIDIA GP100 SM]{Architecture of a single NVIDIA GP100 SM (Source: NVIDIA Tesla P100 Whitepaper~\cite{pascal_whitepaper})}
  \label{fig:architecture_gp_100_sm}
\end{figure}

Figures~\ref{fig:architecture_gp_100} and~\ref{fig:architecture_gp_100_sm} show the hardware architecture (\emph{SM block diagram}) of an NVIDIA GP100 GPU (Pascal architecture)\footnote{We used a GP102-450-A1 GPU (TITAN~Xp; Pascal architecture) for most of our benchmarks. NVIDIA has not published a whitepaper for this GPU, so there is no published SM block diagram for this GPU. However, its architecture is similar to a GP100 GPU.}. This GPU consists of 60 \emph{streaming multiprocessors} (SMs). Each SM has 64 \emph{CUDA cores} (green boxes), amounting to a total of 3840~CUDA cores. Every group of 32 CUDA cores (\emph{warp}) has a \emph{warp scheduler} that schedules processor instructions for all 32 cores. Each GP100 SM can also be seen as a dual-core processor, with each core operating on vector registers that hold 32 scalars (128~bytes).

GPUs are commonly referred to as \emph{massively parallel SIMD architectures}. However, they actually have three different ways of achieving parallelism.

\begin{itemize}
  \item \textbf{SIMD:} All 32 CUDA cores of a warp execute the same processor instruction as determined by their warp scheduler. Therefore, GPUs are \emph{Single-Instruction Multiple-Data} (SIMD) architectures. They achieve parallelism by executing the same instruction on a vector register (\emph{multiple data}).
  \item \textbf{MIMD:} Each warp has its own warp scheduler, so different warps can execute different instructions. Therefore, GPUs also have \emph{Multiple-Instruction Multiple-Data} (MIMD) parallelism. (Most CPU systems are MIMD architectures.)
  \item \textbf{Instruction-Level Parallelism (ILP):} Warp schedulers can issue two instructions per processor cycle (\emph{dual-issue}), e.g., an arithmetic instruction and a memory access instruction. Details vary from architecture to architecture. On older NVIDIA archiectures, ILP was necessary to achieve peak performance~\cite{volkov2010better, Volkov:EECS-2016-143}, but on Pascal \emph{single-issue} can fully utilize all CUDA cores~\cite{nvidia_instr_sched}.
\end{itemize}

SIMD and MIMD parallelism can be exploited with thread-level parallelism (TLP) in CUDA\footnote{This is a key difference from C++, where SIMD parallelism is not expressed via threads but via automatic vectorization and SIMD intrinsics (manual vectorization).}. There are no dedicated abstractions for instruction-level parallelism (ILP) in CUDA. The scheduling of instructions is up to the compiler and the hardware. However, programmers can sometimes achieve better ILP by manually unrolling loops~\cite{volkov2010better}.

\subsection{Memory Hierarchy}
Recent NVIDIA GPUs have five main types of memory. The size of each memory type can differ among GPU architectures and models. In this work, we are focusing mainly on \emph{device memory} (red access path in Figure~\ref{fig:nvvp_diagram}).

In the following list, we briefly describe each memory type and report memory sizes for an NVIDIA TITAN~Xp GPU, which we used for most of our benchmarks. Memory types are sorted from slowest to fastest.

\begin{figure}
  \centering
  \includegraphics[width=0.7\textwidth]{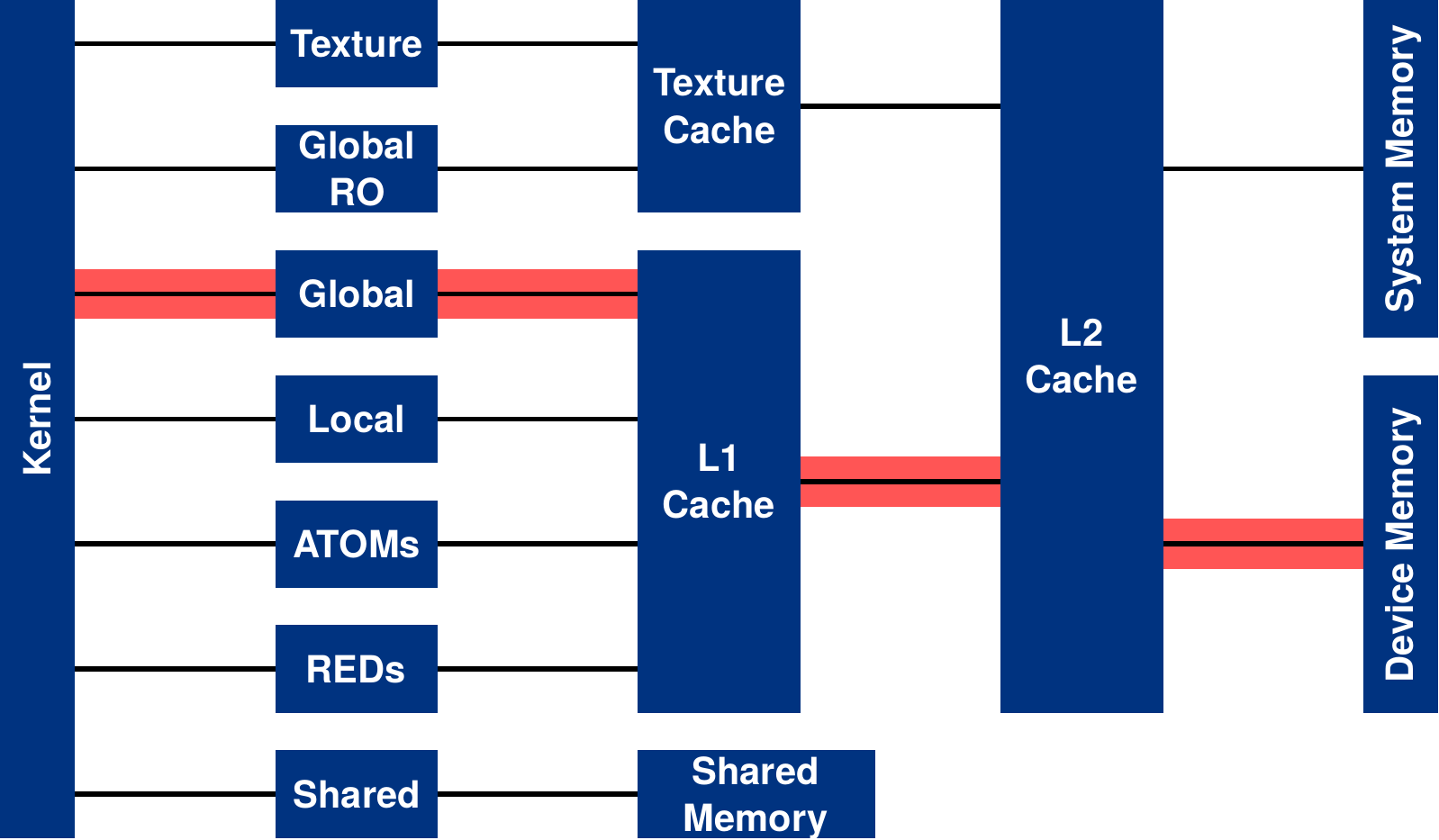}
  \caption[GPU memory access paths]{Memory access paths (Source: NVIDIA Visual Profiler)}
  \label{fig:nvvp_diagram}
\end{figure}

\begin{itemize}
  \item \textbf{Device Memory:} This is the largest type of memory. It resides in the GPU's DRAM. Device memory accesses are always routed through the L1/L2 caches~\cite{nvidia_routed_mem_access} (Figure~\ref{fig:nvvp_diagram}). The L1 cache acts as a \emph{coalescing buffer}: Data is accumulated in the L1 cache before it is delivered to a warp~\cite{nvidia_unified_l1_6x}. A memory request of a warp is potentially broken down into multiple requests, one per cache line (i.e., a memory instruction may be executed multiple times)~\cite{nvidia_global_mem_6x}. Our TITAN~Xp GPU has 12~GB of device memory. The latency of device memory access can be as high as 1,000~cycles in case of an L1/L2 cache miss~\cite{DBLP:journals/corr/abs-1804-06826}. In CUDA, device memory is referred to as \emph{global memory}.
  \item \textbf{Level 2 Cache:} All device memory accesses are cached in the L2 cache, which is 3~MB large on our TITAN~Xp GPU. The L2 cache is shared by all SMs and has a cache line size of 128~bytes. Each cache line consists of four 32~byte segments, which can be read/written independently. The latency of an L2 hit is around 220~cycles~\cite{DBLP:journals/corr/abs-1804-06826}.
  \item \textbf{Level 1 Cache:} By default, all device memory accesses are cached in the L1 cache. This behavior can be changed with compiler flags~\cite{nvidia_unified_l1_6x}. Our TITAN~Xp GPU has 48~KB (or 24~KB\footnote{There is no officially published number and other work reports contradicting numbers.}) of L1 cache per SM. The L1 cache line size is 128~bytes. Data is loaded from the L1 cache in 32~byte transactions, but written in 128~byte transactions~\cite{DBLP:journals/corr/abs-1804-06826}. L1 caches of different SMs are incoherent. The latency of an L1 hit is around 80~cycles~\cite{DBLP:journals/corr/abs-1804-06826}.
  \item \textbf{Shared Memory:} Until the NVIDIA Kepler architecture, L1 and shared memory used the same on-chip memory. In recent architectures, the shared memory is separate. Shared memory is the fastest kind of GPU memory that can be explicitly programmed/accessed in CUDA code. Our TITAN~Xp GPU has 96~KB of shared memory per SM.
  \item \textbf{Registers:} Processor registers are the fastest kind of memory. On Pascal, each SM has 65,536 32-bit registers (256~KB). Since each SM has 64~CUDA cores, an SM can execute exactly two warps at a time. However, more than two warps may be able to \emph{reside} in its register file, depending on the number of registers that each warp uses. To \emph{hide} the latency of (mostly memory) instructions, the SM can then issue instructions from another warp that is not blocked by a dependent instruction (\emph{latency hiding}~\cite{Volkov:EECS-2016-143}).
\end{itemize}

The CUDA Toolkit Documentations mentions two additional memory types. These types of memory are not relevant for our work and will not be mentioned outside of this section.
\begin{itemize}
  \item \textbf{Local Memory:} If an SM does not have enough resources, it uses (thread-)local memory. For example, registers may be spilled to local memory. On Pascal, the number of registers is limited to 255 per thread, but with a large number of threads per block, an SM may run out of registers earlier. Local memory actually resides in device memory and is not a separate address space.
  \item \textbf{Constant Memory:} Memory that does not change throughout a kernel can be stored in constant memory, which is a small part of the device memory but cached by a separate \emph{constant cache}. The size of the constant cache is not published, but believed to be 2~KB per SM on Pascal~\cite{DBLP:journals/corr/abs-1804-06826, nvidia_constant_cache}. Constant memory accesses cannot be coalesced. However, they can be broadcasted to other threads in a warp, should those threads access the same constant memory address. Our TITAN~Xp GPU has 64~KB of constant memory.
\end{itemize}

Details of memory access, the memory hierarchy and memory coalescing (Section~\ref{sec:background_mem_coal}) are constantly changing with new GPU architectures. With every new architecture, NVIDIA publishes a \emph{Tuning Guide} for achieving good performance on that architecture. However, many architectural/hardware details remain undisclosed, so that often the only way of understanding \emph{why} certain programming/memory access patterns achieve good performance is through exhaustive microbenchmarking, profiling and reverse engineering~\cite{5452013, DBLP:journals/corr/abs-1804-06826, 10.1007/978-3-662-44917-2_13, Zhang:2017:UGM:3018743.3018755}. Many architecture/hardware characteristics mentioned in this chapter were originally obtained in this way.

\subsection{CUDA Programming Model}
\label{sec:cuda_prog_model_backgr}
CUDA is an extension of C++. Programmers express thread-level parallelism with CUDA \emph{kernels}. A kernel is a C++ function, annotated with the \texttt{\_\_global\_\_} keyword. When programmers \emph{launch} a CUDA kernel, they have to specify the number of threads that should execute the kernel (C++ function). Inside the kernel, programmers can access special variables to determine a thread's ID, so that different threads run the same code but with different data.

In many cases, a thread ID is used as an index into an array of \emph{jobs}. If the number of threads equals the number of jobs, threads can be directly (one-to-one) mapped to jobs: A thread $t_i$ processes job $i$. However, CUDA limits the number of threads per kernel and, furthermore, using too many threads can cause runtime inefficiencies. If the number of jobs exceeds the number of threads, then each thread may have to process more than one job, typically with a \emph{grid-stride loop}~\cite{cuda_grid_stride}.

Besides the \texttt{\_\_global\_\_} keyword, functions and top-level variable declarations can be annotated with the \texttt{\_\_device\_\_} and/or \texttt{\_\_host\_\_} keyword. The former keyword instructs the compiler to generate GPU code for a function or to statically allocate the variable on \emph{device} (GPU) in global memory. The latter keyword (default if no keyword specified), instructs the compiler to generate CPU code for a function or to statically allocate the variable on the \emph{host} (CPU). In essence, \texttt{\_\_global\_\_} functions are like \texttt{\_\_device\_\_} functions but only the former ones can be invoked as kernels from CPU code. It is not possible to call \texttt{\_\_device\_\_} functions from \texttt{\_\_host\_\_} functions or vice versa.

\paragraph{Thread Hierarchy}
CUDA threads are organized in a hierarchy. Every thread belongs to one CUDA \emph{block}. The size of a block (\emph{block dimension}) can be between 1 and 1024 threads\footnote{We are focusing on 1D blocks and grids.}. The number of blocks (\emph{grid dimension}) can be between 1 and $2^{31} - 1$, so a kernel can have up to billions of CUDA threads. Programmers have to specify block/grid dimensions as kernel launch parameters. Inside a CUDA kernel, programmers can access four special variables.

\begin{itemize}
  \item \texttt{threadIdx.x}: The ID of the current thread within the block.
  \item \texttt{blockDim.x}: The size/dimension of each block.
  \item \texttt{blockIdx.x}: The ID of the block of the current thread.
  \item \texttt{gridDim.x}: The number of blocks (dimension of the \emph{grid}).
\end{itemize}

\noindent The ID of a thread and the total number of threads is then calculated as follows.

\begin{align*}
\mathit{tid} = \mathit{blockIdx.x} \cdot \mathit{blockDim.x} + \mathit{threadIdx.x} \tag{\emph{thread ID}}
\end{align*}
\begin{align*}
\mathit{gridDim.x} \cdot \mathit{blockDim.x} \tag{\emph{number of threads}}
\end{align*}

Besides blocks, there is a second thread hierarchy level that cannot be directly controlled programmers. Every consecutive group of 32 threads of a block (thread IDs $[32 \cdot i; 32 \cdot (i+1)$) is called a \emph{warp}.

\begin{align*}
\mathit{warp\_id} = \left\lfloor \frac{\mathit{threadIdx.x}}{32}\right\rfloor \tag{\emph{warp ID of a thread within a block}}
\end{align*}

\begin{align*}
\mathit{lane\_id} = \mathit{threadIdx.x} \,\,\, \% \,\,\, 32 \tag{\emph{lane ID of a thread}}
\end{align*}

CUDA warps map to physical warps, so threads of a warp execute the same instructions at the same time (SIMD parallelism). The threads within a warp are mapped to \emph{lanes} with IDs between 0 and 31. CUDA provides warp-level primitives for exchanging (shuffling) values between different lanes of a warp~\cite{cuda_website_using_warp_level}. Such primitives are not easy to use and require a certain amount of CUDA programming experience, but they are often necessary to achieve the best performance~\cite{DeGonzalo:2019:AGW:3314872.3314884}.

\paragraph{Block/Warp Scheduling}
Every CUDA block runs on a single SM and cannot be split to run on multiple SMs. If SM resources allow (as outlined below), multiple blocks may reside on a single SM. If the number of CUDA cores of an SM is less than the number of resident threads (i.e., block dimension times number of resident blocks per SM), then not all threads can run simultaneously and the SM has to context-switch between warps. This is usually the case so that memory latency can be hidden~\cite{Volkov:EECS-2016-143}. Among all resident warps (i.e., warps of all resident blocks), the warp scheduler selects a warp that is ready to execute (e.g., not blocked by a memory operation), until eventually the entire block finished executing. The \emph{block scheduler} then selects a new CUDA block for execution (if there are more blocks). Context-switching between warps is extremely fast on GPUs because the state of each warp is kept in the register file, which is much larger on GPUs compared to CPUs, so that registers do not have to be swapped to device memory.

The maximum number of resident blocks per SM depends on the computate capability version, the resource requirements of each block and the available resources of an SM, as described in the CUDA Toolkit Documentation. On our TITAN Xp with compute capability version 6.1, those limitations are as follows.

\begin{itemize}
  \item The total number of registers of all resident blocks must fit into the register file. I.e., all resident blocks together cannot use more than 256~KB of registers.
  \item The total amount of requested shared memory of all resident blocks must not exceed the SM's shared memory, which is 96~KB.
  \item No more than 32~blocks can reside on one SM.
  \item No more than 64~warps can reside on one SM. This limits the maximum number of resident threads per SM to 2048.
\end{itemize}

The NVIDIA Visual Profiler can be used to analyze such constraints. In the past, NVIDIA also provided an \emph{Occupancy Calculator} Excel sheet.

\paragraph{Grid-Stride Loops}
CUDA programs with fewer threads than jobs typically process jobs with a grid-stride loop~\cite{cuda_grid_stride, Khorasani:2015:EWE:2830772.2830796}. In a grid-stride loop, a thread $t_\mathit{tid}$ processes job $\mathit{tid}$ and increments of $n$, where $n$ is the total number of threads.

\begin{lstfloat}
\begin{lstlisting}[language=c++, caption={[Example: Grid-stride loop]Example of a grid-stride loop}, morekeywords={__device__, __global__}, label={lst:grid_stride_example}]
#define N 1000000
__device__ float data[N];

__global__ void kernel_increment_data() {
  for (unsigned int i = threadIdx.x + blockIdx.x * blockDim.x;
       i < N; i += blockDim.x * gridDim.x) { data[i] += 10.0f; }
}

int main() { kernel_increment_data<<</*gridDim.x=*/ 128, /*blockDim.x=*/ 128>>>(); }
\end{lstlisting}
\end{lstfloat}

Listing~\ref{lst:grid_stride_example} shows an example of a grid-stride loop. Grid-stride loops often have better performance than a different assignment of jobs to threads because of better memory coalescing (Section~\ref{sec:background_mem_coal}). This is because the threads of a warp process jobs that are spatially local in memory and can thus be serviced with more efficient vectorized memory loads/stores. This listing also illustrates CUDA's kernel launch notation, which allows programmers to specify the number of blocks/threads (Line~9).

\paragraph{Warp Divergence}
CUDA warps are mapped to physical warps of an SM. Since physical warps only have one warp scheduler, all threads of a warp execute the same instructions. If the control flow of the threads of a warp diverges (e.g., because they enter different \emph{if} branches), then the different control flow paths are executed sequentially until the control flow converges again. This phenomenon is called \emph{warp divergence} (also called \emph{thread divergence} or \emph{branch divergence}).

There are no guarantees about the order in which divergent branches are executed or when the control flow reconverges again~\cite{cuda_website_using_warp_level}. Warp divergence can make GPU programs significantly harder to reason about and is the main reason why locking is discouraged on GPUs.

As an example, consider the two implementations of a critical section in Listing~\ref{sec:impl_critic_sect}. Both implementations seem identical, but the first one deadlocks on our TITAN~Xp GPU and the second one works~\cite{Li:2015:FSD:2751205.2751232}\footnote{See also: \url{https://stackoverflow.com/questions/31194291/cuda-mutex-why-deadlock}.}. In fact, whether either one of them deadlocks or not may vary among GPU architectures and compilers.

\begin{lstfloat}
\begin{lstlisting}[language=c++, caption={Implementation of a critical section}, label={sec:impl_critic_sect}, morekeywords={__global__, __device__, __host__}]
__device__ int lock = 0;
__device__ int result = 0;  // Modified in critical section

__global__ void deadlock() {
  while (true) {
    if (atomicCAS(&lock, 0, 1) == 0) {  // lock
      result += 0x7777;
      atomicExch(&lock, 0);  // unlock
      break;
    }
  }
}

__global__ void no_deadlock() {
  bool continue_loop = true;
  while (continue_loop) {
    if (atomicCAS(&lock, 0, 1) == 0) {  // lock
      result += 0x7777;
      atomicExch(&lock, 0);  // unlock
      continue_loop = false;
    }
  }
}

int main() {
  deadlock<<<1, 32>>>();  // Run with 32 threads, 1 warp
}
\end{lstlisting}
\end{lstfloat} 

\begin{figure}
	\centering
  \includegraphics[width=\textwidth]{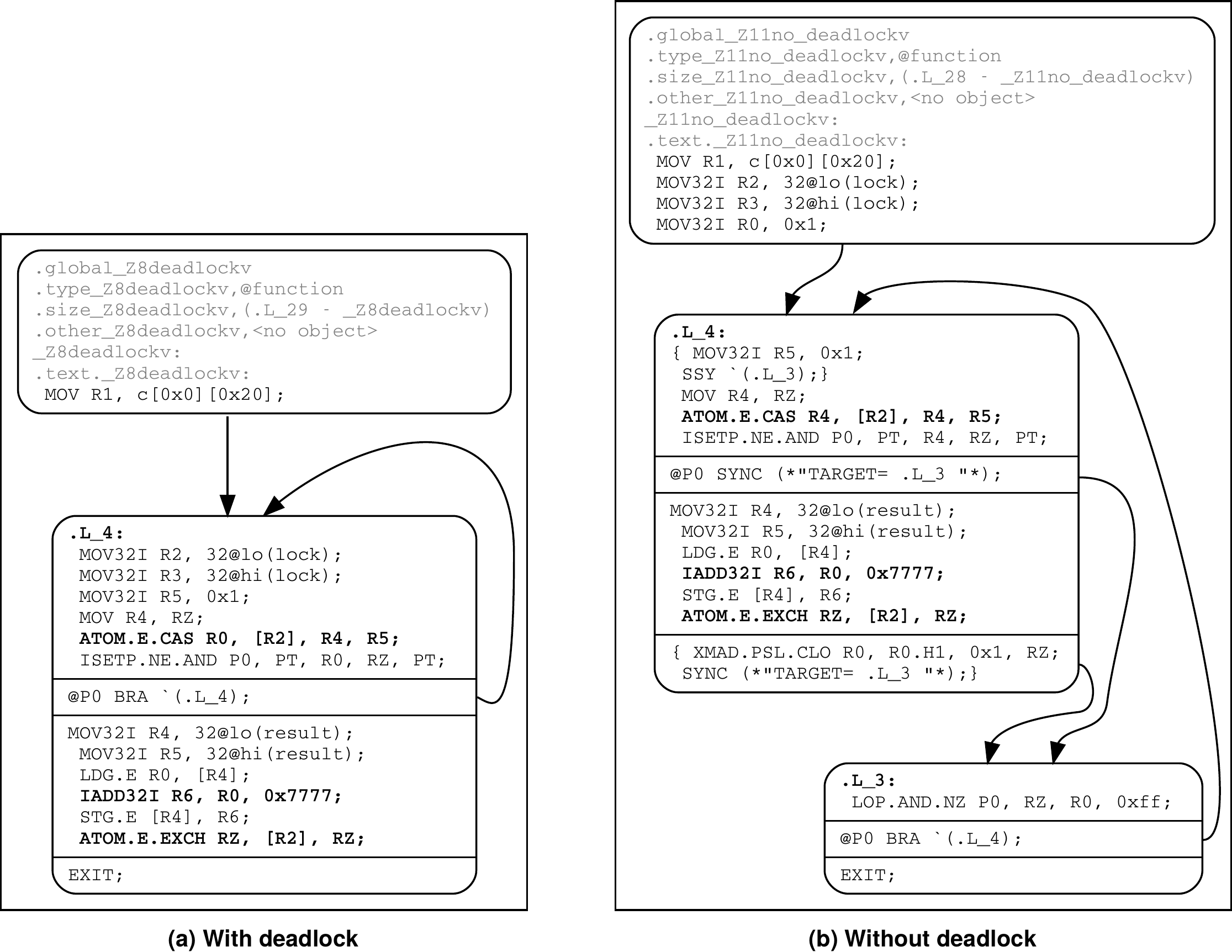}
 	\caption[Basic block diagram for critical sections in CUDA]{Basic block diagram for critical sections}
 	\label{fig:basic_block_crit}
\end{figure}

To understand why the first implementation deadlocks, we have to take a look at the basic block structure of the compiled PTX (Figure~\ref{fig:basic_block_crit}\textsc{a}). Each thread must aquire the lock in \texttt{L\_4} to enter the critical section and then release the lock in the third segment of \texttt{L\_4}, so that the next thread can enter the critical section. PTX instructions that start with an \emph{at} character are \emph{guarded} and conditionally executed (based on the result of the preceding instructions). The problem in (\textsc{a}) is that one thread acquires the lock, but then the other 31 unsuccessful threads jump back to the beginning of the loop. This is a control flow divergence and both control flows are executed sequentially. Unfortunately, our TITAN~Xp happens to always schedule the jump of the unsuccessful threads before executing the critical section with the successful thread, so the lock is never released.

Now consider (\textsc{b}). Before jumping back to the beginning of the loop, the compiler places a \texttt{SYNC} instruction, which converges threads after a conditional branch. If control flow is divergent, the GPU schedules the other not yet executed branches before continuing with the following instructions. Therefore, the successful thread now has a chance to execute the critical section and to release the lock.


Unfortunately, we do not know \emph{why} the compiler generates a \texttt{SYNC} instruction in (\textsc{b}) but not in (\textsc{a}). This is unspecified and may even change with different CUDA Toolkit versions~\cite{ElTantawy:2016:MSS:3195638.3195652}. Interestingly, if we invert the boolean flag \texttt{continue\_loop}\footnote{Initialize to \texttt{false}, check for \texttt{!continue\_loop} in the loop condition, set to \texttt{true} in the critical section.}, then our CUDA compiler generates the same deadlock-prone PTX as in (\textsc{a}). This example illustrates why locking within a warp is problematic on GPUs\footnote{If only one thread per warp acquires a lock, locking is unproblematic.}. We avoid locking in our work as much as possible and resort to lock-free algorithms. 

\paragraph{Consistency Model}
CUDA has a weak memory consistency model. In particular, global (device) memory accesses are not necessarily sequentially consistent. E.g., if a thread $t_1$ writes two variables $a$ and $b$, then another thread $t_2$ is not guaranteed to see the changes of $a$ and $b$ in the same order. Moreover, there are no guarantees that $t_2$ will see the changes of $a$ and $b$ at all. Such weak behaviors are in part due to incoherent L1 caches. For example, the old value of $a$ may still be in $t_2$'s L1 cache while $b$ is a cache miss. In such a case, $t_2$ would read the old value of $a$ but the new value of $b$. Unfortunately, weak GPU behavior is not well documented by GPU manufacturers, so that researchers have to resort to reverse engineering or litmus testing to fully understand GPU concurrency~\cite{Alglave:2015:GCW:2694344.2694391, Sorensen:2016:EER:2908080.2908114}.

Weak semantics on GPUs and how to avoid them has been discussed in previous work~\cite{Singh:2015:EES:2830772.2830778}. There are four main ways of addressing weak behavior in CUDA.
\begin{itemize}
  \item A \textbf{thread fence} (CUDA \texttt{\_\_threadfence()}) ensures that all global memory changes of the current thread before the thread fence become visible to other threads before global memory changes after the thread fence become visible. After executing the thread fence, the L1 cache of the executing thread is guaranteed to be consistent with the L2 cache.
  \item \textbf{Atomic operations} bypass the L1 cache and go straight to the L2 cache. Atomics can ensure sequential consistency and visibility of changes if data is both read/written with atomics. Atomic instructions have a higher latency than comparable non-atomic instructions, but they became considerably faster with recent GPU architectures~\cite{Gaihre:2019:XER:3307681.3326606, DBLP:journals/corr/abs-1804-06826}.
  \item The CUDA/C++ \textbf{\texttt{volatile} qualifier} indicates that a memory address may be concurrently read/written by another thread. It disables certain optimizations (such as keeping a value in a register) and always causes a memory read/write upon data access, bypassing the L1 cache.
  \item The \textbf{L1 cache} can also be entirely \textbf{turned off} with a compiler flag: \texttt{-Xptxas -dlcm=cg}. Note that this is not equivalent to making every memory access a \texttt{volatile} access, since the \texttt{volatile} qualifier also deactivates other optimizations (such as caching data in registers).
\end{itemize}

In this thesis, we develop various lock-free algorithms. Their implementation depends heavily on atomic operations, which are sequentially consistent. Atomic operations and retry loops are common patterns in lock-free algorithms~\cite{trevor_brown_phd}.

\subsection{Memory Coalescing}
\label{sec:background_mem_coal}
Many GPU applications are memory bound and for such applications, accessing device memory is the biggest bottleneck. \emph{Memory coalescing} is a fundamental optimization of the GPU memory controller that combines multiple memory requests into a smaller number of physical memory transactions, if certain conditions are met.

\begin{itemize}
  \item This optimization is specific to \textbf{device memory} (i.e., CUDA global memory).
  \item Only memory accesses (reads/writes) from the \textbf{same warp} can be combined.
  \item Memory addresses must be properly aligned and fall into an L1/L2 \textbf{cache line}.
\end{itemize}

The rules of memory coalescing differ among different NVIDIA compute capability versions. With compute capabilities between 3.x and 5.x, the NVIDIA CUDA C Programming Guide states that hits in the L1 cache can be coalesced into 128-byte transactions~\cite{nvidia_global_mem_6x}. I.e., if the threads of a warp access memory locations on the same L1 cache line (L1 cache lines are 128~bytes long and 128-byte aligned), then these accesses are serviced by one physical 128-byte transaction. Starting from compute capability 6.x (Pascal), the NVIDIA Pascal Tuning Guide states that the L1 cache services loads (but not stores) at a granularity of 32~bytes~\cite{nvidia_unified_l1_6x}, so memory loads can only be coalesced into smaller 32-byte transactions\footnote{This is presumably to avoid \emph{overfetching}, which required disabling the L1 cache for global memory access before Pascal.}.

In the worst case, if each thread of a warp loads a 32-bit word and those memory accesses cannot be coalesced, then each thread's access generates a 32-byte transaction, which increases the amount of memory transfer by 8~\cite{nvidia_dev_mem_access_cuda_c}, compared to one perfectly coalesced 128-byte transaction (CC 3.x-5.x) or four perfectly coalesced 32-byte transactions (CC 6.x).

Memory accesses of a warp that hit in the L2 cache and belong to the same L2 cache line, are coalesced into 32-byte transactions~\cite{nvidia_global_mem_6x}. Whether these memory addresses must correspond to the same 32-byte L2 cache line segment is not specified, but this is likely the case.

All device memory is accessed through the L1/L2 caches. If a memory load does not hit in the L1/L2 caches, then the data must first be loaded from the GPU's DRAM into the L2 cache (and maybe L1 cache). The DRAM is accessed with 32-byte transactions. In the worst case, if the uncached 32-bit word accesses of each thread in a warp cannot be coalesced, then each 32-bit word access triggers a 32-byte DRAM transaction, which increases the amount of memory transfer by~8.

Memory coalescing is one of the most fundamental optimizations. The CUDA C Best Practices Guide puts a \emph{high priority note} on coalesced access~\cite{nvidia_memoryco} and programmers should try to achieve good coalescing before trying any other optimizations.

The details of memory access and memory coalescing are changing with every new GPU architecture. While, until now, the number of generated memory transactions served as a good mental model to explain why memory coalescing increases performance, this no longer seems to be the case with the most recent architectures\footnote{See also: \href{https://stackoverflow.com/questions/56142674/memory-coalescing-and-nvprof-results-on-nvidia-pascal}{\texttt{https://stackoverflow.com/questions/56142674/memory-coalescing-and-nvprof- results-on-nvidia-pascal}}.}. In fact, NVIDIA removed certain metrics and performance counters from their profiling tools, so that those values are no longer exposed to programmers. While we can no longer clearly explain \emph{how} the hardware optimizes certain memory access patterns, the basic CUDA programming rules for achieving good memory coalescing and good memory bandwidth utilization fortunately remain the same.


\paragraph{Memory Coalescing vs. Vectorized Access}
Memory coalescing on GPUs is similar to vectorized memory access on CPUs but differs in one crucial aspect: Memory coalescing is a run-time optimization of the memory controller, whereas vectorization is a compile-time optimization\footnote{See also: \href{https://stackoverflow.com/questions/56966466/memory-coalescing-vs-vectorized-memory-access}{\texttt{https://stackoverflow.com/questions/56966466/memory-coalescing-vs-vectorized -memory-access}}}.

On CPU systems, there are two main techniques for utilizing SIMD parallelism: Either the programmer manually vectorizes the code with SIMD intrinsics, which is tedious. Or the compiler auto-vectorizes code as part of an optimization pass. Both techniques require the programmer/compiler to fully understand the memory access pattern. Otherwise, the code cannot be vectorized. The resulting vectorized assembly code contains vector load/stores such as \texttt{movaps}, which, given a memory address, loads four packed (consecutive) floats (128~bits) into an SSE vector register.

On GPU systems, the memory controller coalesces memory accesses by analyzing the requested memory addresses at runtime. There is nothing to do for the compiler. The resulting PTX assembly code contains scalar loads/stores such as \texttt{ld.global}, which, given one memory address per thread, performs a memory load for each thread in the warp. The memory controller analyzes the memory addresses and determines the number of times that this instruction has to be repeatedly executed such that all memory load requests in the warp can be fulfilled with 128-byte transactions.

\subsection{Memory Coalescing Experiment}
\label{sec:memory_coalescing_experiment_background}
To analyze the exact benefit of memory coalescing on a TITAN~Xp GPU, we implemented a simple benchmark that increments every value of a large 32-bit floating point numbers array. The size of the array is 8~GB. We measured the runtime performance with different assignments of array slots (jobs) to CUDA threads.

\paragraph{Experiment Setup}
\begin{figure}
  \centering
  \includegraphics[width=0.65\textwidth]{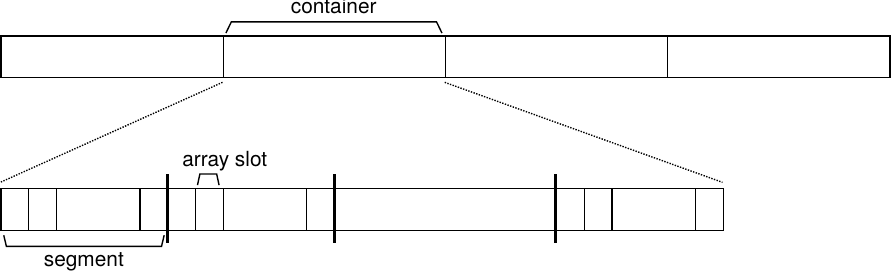}
  \caption[Array layout of memory coalescing experiment in CUDA]{Array layout of memory coalescing experiment}
  \label{fig:coalescing_experiment_sec3}
\end{figure}

The array of size (number of elements) $N$ is divided into $\mathit{num}_C \in \{1, 2, 4, 8, 16, 32\}$ containers of equal size, where $\mathit{num}_C$ is a configurable parameter. Each container is divided into segments of equal size (number of elements) $\mathit{size}_S$, where $\mathit{size}_S$ depends on the parameter $\mathit{num}_C$ (Figure~\ref{fig:coalescing_experiment_sec3}).

\begin{align*}
\mathit{size}_S = \frac{32}{\mathit{num}_C} \tag{\emph{segment size}}
\end{align*}

For the moment, let us assume that the number of array slots $N$ is equal to the number of GPU threads. We assign CUDA threads to array slots in such a way that every warp is processing one segment per container\footnote{Recall that the warp size is 32. Moreover, $\mathit{size}_S \cdot \mathit{num}_C = 32$.}. For example, for $\mathit{num}_C=4$, the threads of a warp are assigned to one segment (size $8$) per container. These segments are in totally different memory areas, so at least four memory transactions are required for a warp to load all assigned array values. By increasing $\mathit{num}_C$, we can increase the degree of \emph{scattering} and thus the number of required memory transactions. In the best case ($\mathit{num}_C=1$), each warp is accessing only consecutive array slots, which should result in perfect memory coalescing and best performance.

\paragraph{Access Pattern Details}
Let $\mathit{pos}$ be the array slot index that a given thread $t_{\mathit{tid}}$ is processing. This index is the sum of three parts: The offset into the array at which the respective container begins, the offset into the container at which the respective segment begins and the offset of the respective array slot within the segment.

\begin{align*}
\mathit{pos}(\mathit{tid}) = \mathit{array\_offset}(\mathit{tid}) + \mathit{container\_offset}(\mathit{tid}) + \mathit{segment\_offset}(\mathit{tid}) \tag{\emph{array slot}}
\end{align*}

The size of a container $\mathit{size}_C$ depends on the number of containers. All containers have the same number of array elements.

\begin{align*}
\mathit{size}_C = \frac{N}{\mathit{num}_C} \tag{\emph{container size}}
\end{align*}

Each warp consists of 32 threads. Within a warp, threads are numbered from 0 to 31 (\emph{lanes}). Containers are divided equally among lanes. $\mathit{size}_S$ is the number of lanes per container. For example, container 0 is processed by lanes 0 through $\mathit{size}_S - 1$ of all warps. The offset of the container for thread $t_\mathit{tid}$ is then computed as follows.
\begin{align*}
\mathit{array\_offset}(\mathit{tid}) & = \left\lfloor \frac{\mathit{lane\_id}(\mathit{tid})}{\mathit{size}_S} \right\rfloor \cdot \mathit{size}_C \tag{\emph{array offset}} \\
 & = \left\lfloor \frac{\mathit{tid} \,\, \% \,\, 32}{32 \,\, / \,\, \mathit{num}_C} \right\rfloor \cdot \frac{N}{\mathit{num}_C}
\end{align*}

Within a container, each segment is processed by the $\mathit{size}_S$ threads of each warp. Segment $i$ is processed by warp $i$. The offset of the segment for thread $t_\mathit{tid}$ is computed as follows. 
\begin{align*}
\mathit{container\_offset}(\mathit{tid}) & = \mathit{warp\_id}(\mathit{tid}) \cdot \mathit{size}_S \tag{\emph{container offset}} \\
 & = \left\lfloor \frac{\mathit{tid}}{32} \right\rfloor \cdot \frac{32}{\mathit{num}_C}
\end{align*}

Finally, among the $\mathit{size}_S$ threads per segment, we assign one thread to each array slot. The offset of the array slot within the segment for $t_\mathit{tid}$ is computed as follows.
\begin{align*}
\mathit{segment\_offset}(\mathit{tid}) & = \mathit{tid} \,\, \% \,\, \mathit{size}_s \tag{\emph{segment offset}} \\
 & = \mathit{tid} \,\, \% \,\, \frac{32}{\mathit{num}_c}
\end{align*}

\begin{lstfloat}
\begin{lstlisting}[language=c++, caption={Implementation of memory coalescing experiment}, morekeywords={__device__, __global__}, label={lst:full_source_code_bench_sec3}]
__global__ void kernel(float* volatile data, unsigned int N, unsigned int C) {
  for (unsigned int tid = threadIdx.x + blockIdx.x * blockDim.x;
       tid < N; tid += blockDim.x * gridDim.x) {
    unsigned int index = ((tid % 32) / (32 / C)) * (N / C)   // array offset
                       + (tid / 32) * (32 / C)               // container offset
                       + tid % (32 / C);                     // segment offset
    data[index] += 8.0f;  // 1 global memory read, 1 global memory write.
  }
}

int main() {
  float* data;
  cudaMalloc(&data, sizeof(float) * 2ULL * 1024 * 1024 * 1024);

  kernel<<<512, 512>>>(data, /*N=*/ 2ULL * 1024 * 1024 * 1024, /*C=*/ 4);
  cudaDeviceSynchronize();
}
\end{lstlisting}
\end{lstfloat}

Listing~\ref{lst:full_source_code_bench_sec3} shows the source code of the benchmark. Threads process array elements with a grid-stride loop, because spawning one thread per array slot would be too much overhead. This does not affect memory coalescing.

\paragraph{Results}
Figure~\ref{fig:sec3_bench_results} shows the results of the benchmark. We ran the benchmark with six different parameters $\mathit{num}_C \in \{1, 2, 4, 8, 16, 32\}$ and with a disabled L1 cache. Higher values of $\mathit{num}_C$ increase the degree of scattering, which reduces memory coalescing and increases the running time (Subfigure~\textsc{a}). The worst-case running time of $\mathit{num}_C = 32$ is 29.8 times slower than the best-case running time of $\mathit{num}_C = 1$.

Subfigure~\textsc{b} shows the number of bytes read/written from the GPU's DRAM. For $\mathit{num}_C \in \{1,2,4\}$, this value is almost the same because the DRAM is always accessed in 32-byte transactions. Only for higher values of $\mathit{num}_C$ does the number of bytes increase. In the worst case ($\mathit{num}_C = 32$), 127.4~GB are read/written from the DRAM. In this case, every 4-byte read/write access was serviced by a 32-byte transaction.

\begin{figure}
\subfloat[Running time]{\includegraphics[width=0.48\textwidth]{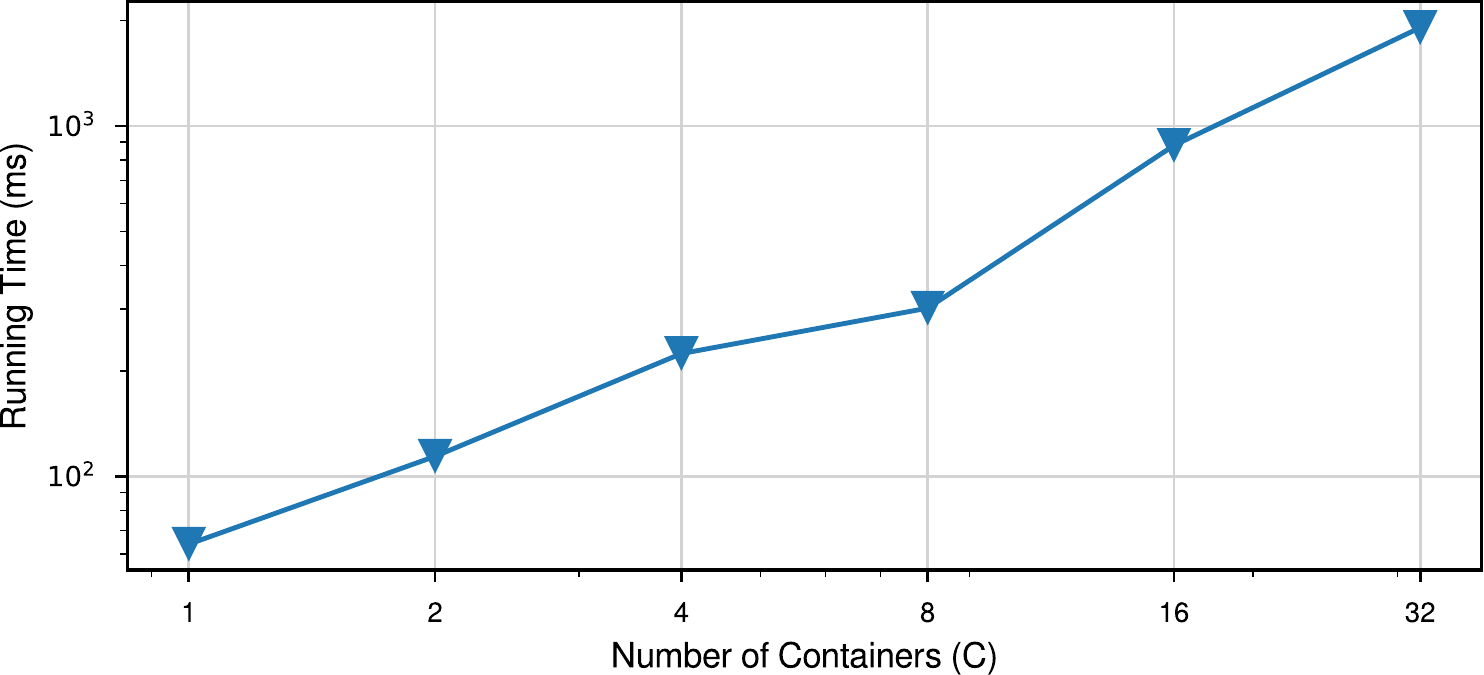}} \hfill
\subfloat[DRAM bytes read/written]{\includegraphics[width=0.48\textwidth]{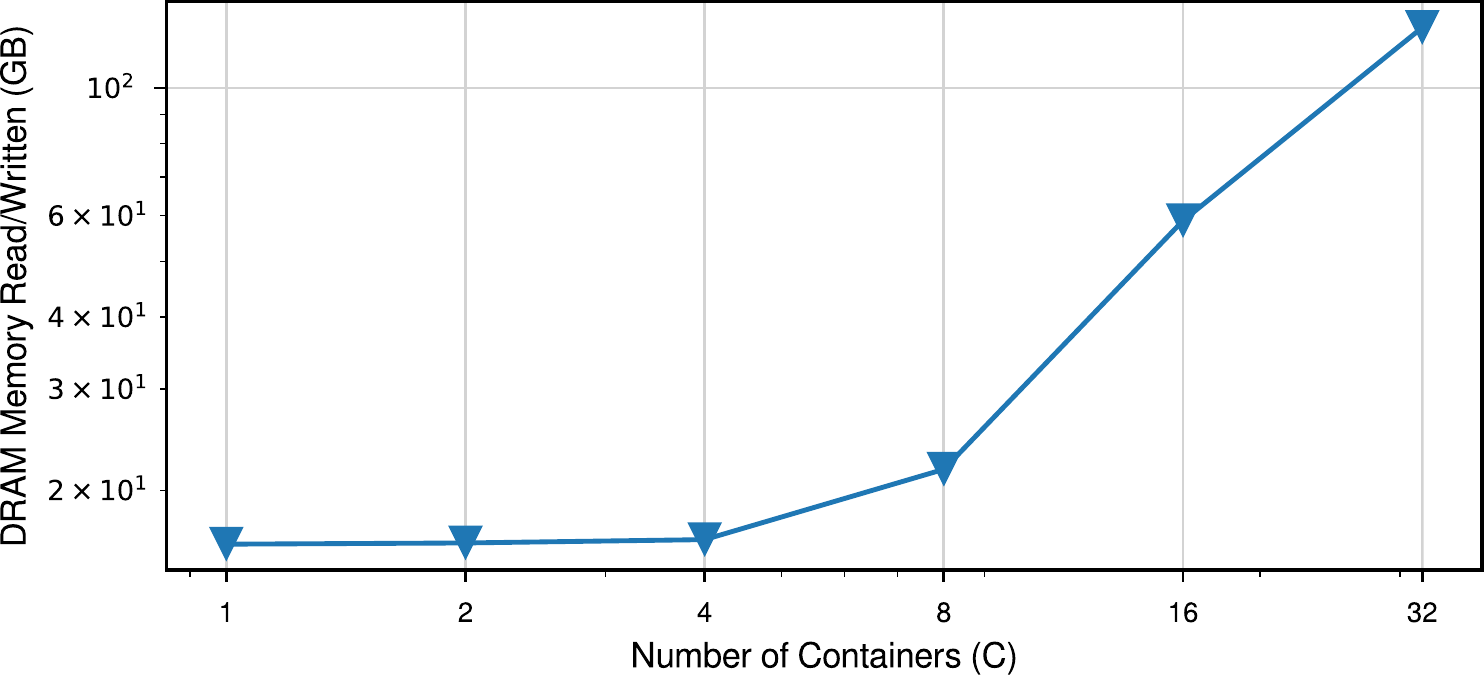}}
\caption[Memory coalescing experiment results]{Memory coalescing experiment results (lower is better)}
\label{fig:sec3_bench_results}
\end{figure}

\begin{align*}
2 \cdot \frac{8 \cdot 1024 \cdot 1024 \cdot 1024 \mbox{ byte}}{4 \mbox{ byte}} \cdot 32 \mbox{ byte} = 128 \mbox{ GB} \tag{\emph{exp. \#bytes accessed in DRAM}}
\end{align*}

The measured number of DRAM bytes is a bit less than 128~GB because a few transactions hit in the in the L2 cache and profiler metrics are not 100\% accurate.

The results of this experiment highlight the importance of coalesced memory access and show that significant running time speedups can be achieved by optimizing memory access. Our work focuses largely on achieving good memory coalescing with object-oriented programming on GPUs.

\section{Object-oriented Programming}
\label{sec:bg_oop}
Object-oriented programming (OOP) is a widely used programming paradigm. As of May 2019, 16 of the 20 most widely used programming languages (according to the TIOBE index\footnote{\url{https:
//www.tiobe.com/tiobe-index/}}) support object-oriented programming.

\paragraph{Objects}
In well designed programs, objects represent abstract or real-world entities. This can help programmers in building a good mental model of the data structures and code interactions in a program. An object in OOP consists of three fundamental parts.
\begin{itemize}
  \item \textbf{State:} The state of an object defines its properties. In pure OOP, the state of an object is \emph{private} and cannot be directly accessed by other objects.
  \item \textbf{Identity:} Even if two different objects have the same state, they have different identity. Identity can be seen as a special property that is different for each object.
  \item \textbf{Behavior:} In pure OOP, objects communicate by exchanging messages. Upon receipt, an object may process the message by executing code, which may in turn read/modify the object's state. Mainstream languages follow the notion of \emph{method calls} instead of \emph{message sends}.
\end{itemize}

Object-oriented programming is widely used in academia and industry. Some of its main advantages are good abstraction, encapsulation, modularity, good code understandability and high programmer productivity~\cite{AL:P20150611005-201503-201506110032-201506110032-39-46}.

\subsection{Class-based Object-oriented Programming}
The most prominent variant of object-oriented programming is \emph{class-based object-oriented programming}. Objects are instances of classes, which define the state (fields) and behavior (methods) of objects. CUDA and OpenCL are extensions of C++, a class-based, object-oriented programming language.

Class-based, object-oriented programming languages can increase code reuse and abstraction through subclassing/inheritance. In typed languages, a subclass relationship usually creates a subtype relationship. This means that, with respect to type safety, an object of a superclass can be substituted with an object of a subclass (\emph{Liskov Substitution}~\cite{Liskov:1994:BNS:197320.197383}). If a subclass \emph{overrides} a method of a superclass, the programming system should dispatch to the correct method based on an object's runtime type as opposed to its static type (\emph{virtual function call}).

\newsavebox{\sourcecpplayout}
\begin{lrbox}{\sourcecpplayout}
\begin{lstlisting}[numbers=none,linewidth=0.37\columnwidth,language=C++,morekeywords={override}]
class A {
  int f1; double f2; char f3;
  void foo();
  virtual void bar();
};

class B : public A {
  int f4;
  virtual void bar() override;
  virtual void qux();
};
\end{lstlisting}
\end{lrbox}

\begin{figure}
\subfloat{\begin{minipage}{.37\columnwidth}\vspace{-2.25cm}\usebox{\sourcecpplayout}\end{minipage}}\hfill
\subfloat{\includegraphics[width=0.6\textwidth]{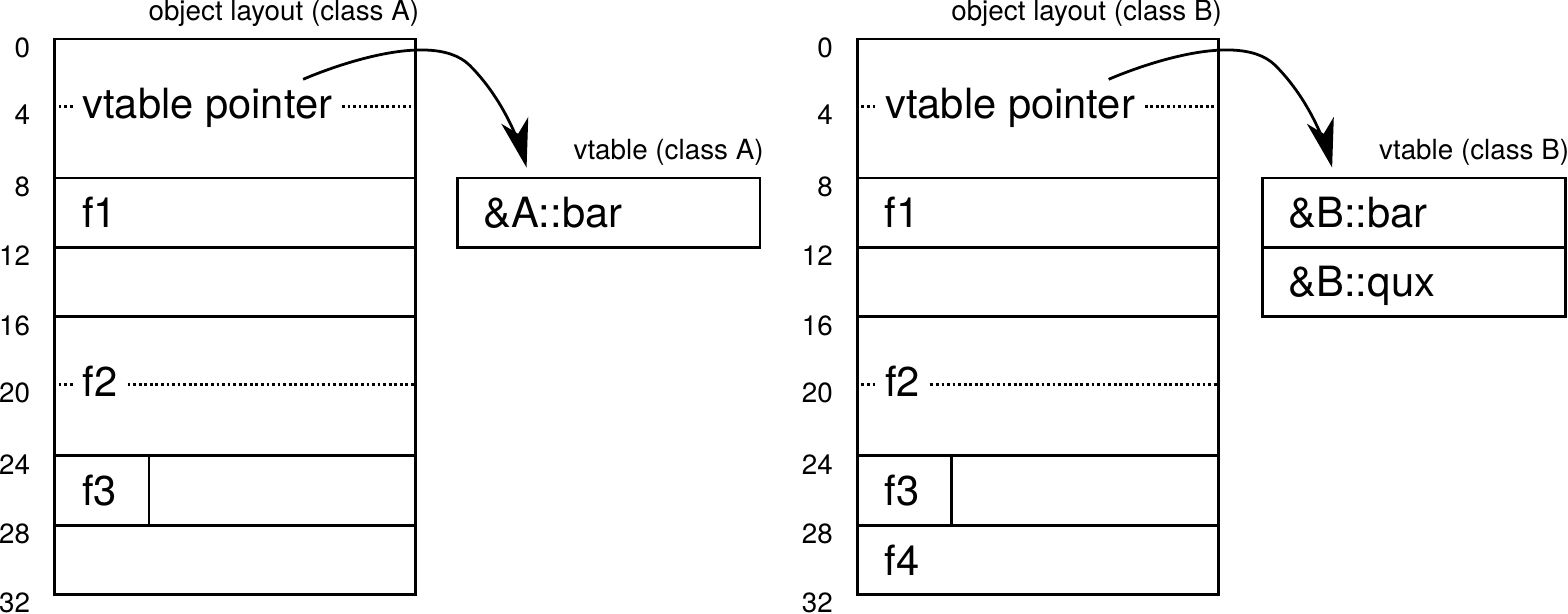}}\hfill
\caption{Example: C++ object layout on NVIDIA GPUs}
\label{fig:obj_layout_cpp_abi}
\end{figure}

\paragraph{C++ Object Layout}
The layout of objects in memory depends on the programming language/compiler and the platform's application binary interface (ABI)~\cite{abi_spark_ref1}. In C++, field values of an object are arranged in memory in the same order in which they are declared in the struct/class. If a struct/class inherits from another struct/class, then inherited fields come first\footnote{We are focusing on single inheritance. There are more complex rules for multiple inheritance.}. Furthermore, if the struct/class has at least one virtual member function, then the object starts with a pointer to the virtual method table (\emph{vtable}), which contains pointers to all virtual function implementations. The ABI defines the size and alignment of primitive types. For example, on recent NVIDIA GPUs, \texttt{char} is 1~byte, \texttt{int} is 4~bytes and \texttt{double} is 8~bytes. In addition, values of these types must be aligned to their respective byte size~\cite{cuda_abi}.

Figure~\ref{fig:obj_layout_cpp_abi} shows an example with two classes, where class \texttt{B} inherits from class \texttt{A}. Due to alignment, objects of both classes are 32~bytes in size, even though class \texttt{B} has an extra field. In particular, class \texttt{A} has a size of 32~bytes instead of 25~bytes. This is to ensure that if multiple \texttt{A} objects are allocated in an array, field values of each object are properly aligned.

\subsection{Problems of Object-oriented Programming on GPUs}
Even though object-oriented programming has a wide range of applications in high-performance computing~\cite{bandini2009, Kale:1993:CPC:165854.165874, allan2010survey, 10.1007/978-3-540-25934-3_2, CARY199720} it is often avoided due to bad performance~\cite{master_th_patel}. In C++, there are four main performance problems with object-oriented programming on GPUs.

\begin{itemize}
  \item \textbf{Data Layout:} While the object-oriented programming paradigm does not dictate any particular layout of objects in memory, most systems/compilers store each object in one contiguous block of memory. Such a layout may not be ideal for execution on GPUs. To achieve good memory access performance, GPU programmers have to tune data layouts to maximize memory coalescing and L1/L2 cache performance~\cite{5363313}, but current GPU languages such as CUDA and OpenCL do not allow programmers to change the layout of objects.
  \item \textbf{Dynamic Memory Allocation:} The ability/flexibility of creating and deleting objects at any time is one of the corner stones of object-oriented programming. CUDA provides a dynamic memory allocator (\texttt{malloc}/\texttt{free} interface) in GPU code, but this allocator is notoriously slow and unreliable~\cite{6339604}. Due to the massive number of threads and expensive inter-thread communication/synchronization, it is difficult to design efficient, dynamic memory allocators for GPUs.
  \item \textbf{Virtual Function Calls:} GPU compilers aggressively inline functions, because jumps and function calls are generally more expensive on GPUs~\cite{Wu:2016:GOG:2854038.2854041, 8366940}. In C++, virtual functions are typically compiled into a jump to an address in the virtual method table (\emph{vtable}). However, such jumps cannot be inlined by the compiler and are by a factor of 10x slower than regular function calls~\cite{cuda_virtual_methods_vtk}. Moreover, virtual function calls can lead to warp divergence.
  \item \textbf{64-Bit Pointers:} Memory pointers on recent GPU architectures are 64~bits long. Therefore, if an object stores a pointer to another object, an 8-byte field is required. Without object-oriented abstractions (i.e., object pointers), a 4-byte integer ID may be sufficient, so object-oriented programming can indirectly increase the memory usage. Similar overheads have been observed in Java applications when switching from a 32-bit address space to a 64-bit address space. Related work proposed \emph{pointer compression} to refer to objects with 32-bit values~\cite{10.1007/978-3-540-73589-2_5}. Similar techniques could be adopted to optimize GPU programs.
\end{itemize}

Related work describes techniques for optimizing the last two problems. In the course of this work, we are addressing the first two problems, which are largely unsolved and the main source of slowdowns: Data layout and dynamic memory allocation.

\paragraph{Optimizing Virtual Function Calls}
Virtual function calls are expensive because they are usually not inlined and compiled into jump/call instructions. Such jumps are particular expensive on GPUs, which were originally designed for highly regular control flow and neither have a branch target predictor nor execute code speculatively. We briefly review two techniques for inlining virtual function calls in C++:

\begin{itemize}
  \item \textbf{Switch-Case Statements:} Instead of generating a vtable pointer lookup and a jump, generate an exhaustive switch-case statement that dispatches to the correct method implementation based on the receiver's type (example in Listing~\ref{lst:handwritten_virt_meth_call}). This is possible because C++ is a statically-typed language and GPU programs are usually not separately compiled (i.e., one compilation unit). Therefore, the compiler knows all types that a receiver can have at runtime. Related work describes an instrumentation-based variant of this technique that works even in the absence of full compile-time type information~\cite{10.1007/BFb0053060}.
  \item \textbf{Expression Templates~\cite{Veldhuizen:1996:ET:260627.260749}:} This advanced C++ template metaprogramming technique encodes the structure of a computation in its type. It is heavily utilized in linear algebra libraries~\cite{doi:10.1137/110830125}.
\end{itemize}

As our focus is on efficient memory access, we do not attempt to optimize virtual function calls in this thesis. Instead, we hand-write the respective switch-case statements whenever we encounter a virtual function call in our examples.

\section{Array of Structure (AOS) vs. Structure of Arrays (SOA)}
\label{sec:o_aos_vs_soa}
Structure of Arrays (SOA) and Array of Structures (AOS) describe memory layouts for a fixed-size set of objects~\cite{intel_aos_soa}. In AOS, the standard layout of most systems, objects are stored as contiguous blocks of memory. In SOA, all values of a field are stored together (Figure~\ref{fig:nbody_aos_soa_ecoop}).


\begin{figure}
  \includegraphics[width=\textwidth]{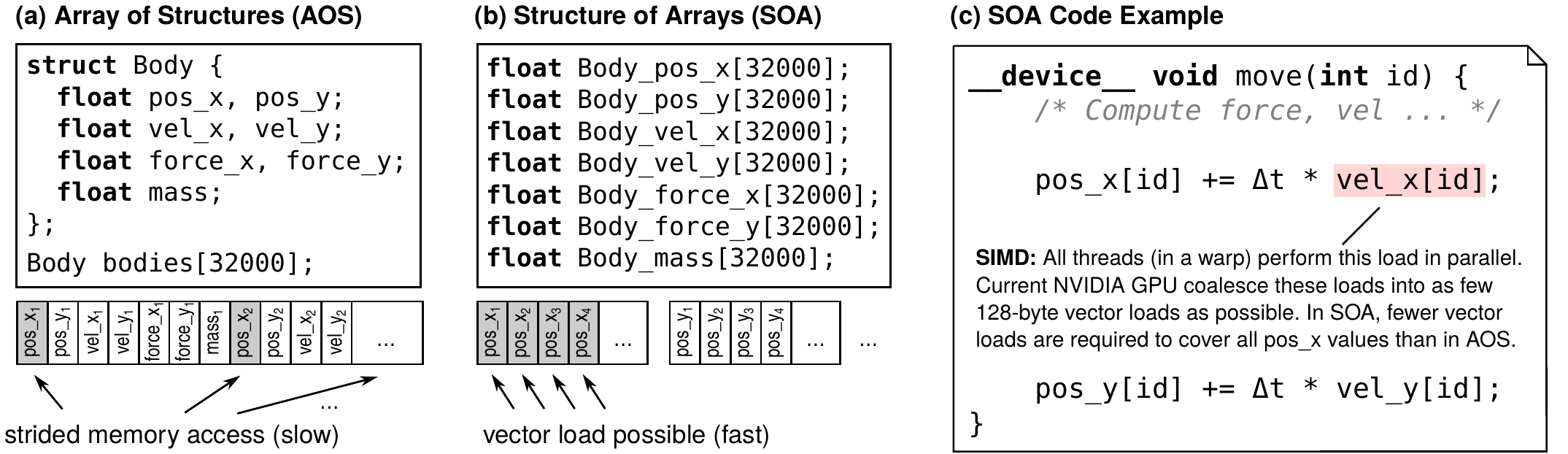}
  \caption{N-body simulation in AOS and SOA data layout}
  \label{fig:nbody_aos_soa_ecoop}
\end{figure}

\paragraph{Running Example}
As an example, we consider a simplified n-body simulation (full example in Section~\ref{sec:smmo_nbody_sec71}). Every body in the simulation is an object with fields for 2D position, 2D velocity, mass, etc. Listing~\ref{lbl:aos_layout} shows an excerpt of the source code in AOS layout. All bodies are stored in an array of \texttt{Body} base type, where \texttt{Body} is a C++ class/struct; thus the name \emph{Array of Structures}. Objects are constructed with C++'s placement-new syntax, which runs the constructor on a given memory address.

Alternatively, instead of allocating objects in an array, objects could be dynamically allocated on the heap. While objects were still stored as contiguous blocks of memory, there would be no guarantee that the dynamic memory allocator places them next to each other in an array-like form. Therefore, such a layout is no longer AOS and has likely performance characteristics different from AOS.

\begin{lstfloat}
\lstinputlisting[linewidth=\textwidth, language=c++, numbers=left, caption={N-body simulation in AOS layout}, label={lbl:aos_layout}, morekeywords={assert}]{example_cpp.cc}
\lstinputlisting[linewidth=\textwidth, numbers=left, language=c++, caption={N-body simulation in hand-written SOA layout},morekeywords={assert}, label={lbl:soa_layput}]{example_soa.cc}
\end{lstfloat}

Listing~\ref{lbl:soa_layput} shows the same program in SOA layout. The array of base type \texttt{Body} was replaced by seven arrays, one for every field; thus the name \emph{Structure of Arrays}. We call these arrays \emph{SOA arrays}.

\subsection{Abstractions for Object-oriented Programming}
At first sight, the SOA code is harder to read/understand than the AOS code. This is mainly because most programmers are familiar with object-oriented programming and its syntax/notation. However, even objectively, the AOS code does a better job at expressing the object-oriented design:

\begin{itemize}
  \item \textbf{Abstraction and Encapsulation:} Values that belong together are defined together. In AOS, \texttt{pos\_x}, \texttt{pos\_y}, etc. are defined and encapsulated inside the scope of class \texttt{Body}. Furthermore, \texttt{Body}'s fields could be made private, such that they can only be accessed from within the class.
  \item \textbf{Type System:} In AOS, objects can be referred to with class pointers. In SOA, they are referred to with integer IDs. In AOS, the type system can catch programming mistakes early on, while, in SOA, many programming mistakes can remain unnoticed until runtime.
  \item \textbf{C++ OOP Abstractions:} AOS allows C++ programmers to use C++ abstractions for object-oriented programming, such as constructor syntax, the \texttt{new} keyword, inheritance, member visibility, member function (method) calls and virtual functions. This is not possible with SOA code.
\end{itemize}

There are programming languages that allow programmers to specify custom memory layouts without breaking abstractions. \emph{Shapes}~\cite{Franco:2017:YAG:3133850.3133861} is one example for such a language. However, we are focusing on C++/CUDA in this work because GPUs are predominantly programmed in a C++ dialect. Unfortunately, standard C++/CUDA does not allow programmers to specify custom memory layouts. This is possible only with custom C/C++ dialects such as \emph{ispc}~\cite{6339601} or custom preprocessors such as \emph{ROSE}~\cite{doi:10.1142/S0129626400000214}. Some compilers automatically perform data layout optimizations~\cite{Zhong:2004:ARS:996841.996872, Chilimbi:1999:CSD:301618.301635, 10.1007/978-3-642-54420-0_19}, but they often fail at complex programs.

\subsection{Performance Characteristics of Structure of Arrays}
SOA is a well-studied best practice on SIMD architectures. Such architectures achieve parallelism by executing the same processor instruction on a \emph{vector} register. Getting data into and out of vector registers is often the biggest bottleneck and peak memory bandwidth utilization can be achieved only with memory coalescing. SOA is one of the most basic optimizations that experienced GPU/CPU-SIMD programmers apply to optimize global memory access~\cite{nvidia_soa_layout_devs}. Previous work has reported speedups of SOA over AOS by multiple factors~(e.g.,~\cite{HOMANN2018325}; see also Section~\ref{sec:memory_coalescing_experiment_background}).

Due to the data-parallel execution model, all threads of a SIMD work group (\emph{warp} in CUDA) execute the same processor instruction in parallel on a vector of scalars. Therefore, if one thread is reading/writing a field such as \texttt{Body::pos\_x}, then all other threads in the SIMD work group are likely reading/writing the same field (but usually of a different object). If neighboring threads process neighboring objects, then these reads/writes can be combined into a small number of physical memory transactions when using an SOA data layout (Figure~\ref{fig:nbody_aos_soa_ecoop}). On GPUs, this is implemented in hardware: The memory controller coalesces memory reads/writes at run-time. On CPUs, the instruction set provides vectorized (packed) versions of many commonly used instructions (including vector loads/stores) and the compiler generates those instructions instead of their scalar versions (see Section~\ref{sec:background_mem_coal}). In AOS, the memory access follows a much more inefficient strided pattern, which does not allow for vector loads/stores because the memory addresses are not contiguous\footnote{Recent CPU SIMD extensions such as AVX2 provide gather/scatter vector loads/stores~\cite{intel_ia_arch, Hofmann:2014:CPD:2568058.2568068}, but those instructions are not as fast as regular vector loads/stores.}.

Besides better memory coalescing, SOA can also improve cache utilization if not all fields are used in a computation. Such fields do not occupy cache lines in SOA. For example, the implementation of \texttt{Body::move()} does not access the fields \texttt{force\_x}, \texttt{force\_y} and \texttt{mass}. However, since the L1 cache line size is 128~bytes on NVIDIA GPUs, in AOS, these fields will occupy the same cache lines as the actually accessed \texttt{pos\_x}, \texttt{pos\_y}, \texttt{vel\_x}, \texttt{vel\_y} fields, leaving less space in the cache for those fields.

\begin{figure}
\centering
\subfloat[Host running time]{\includegraphics[width=0.49\columnwidth]{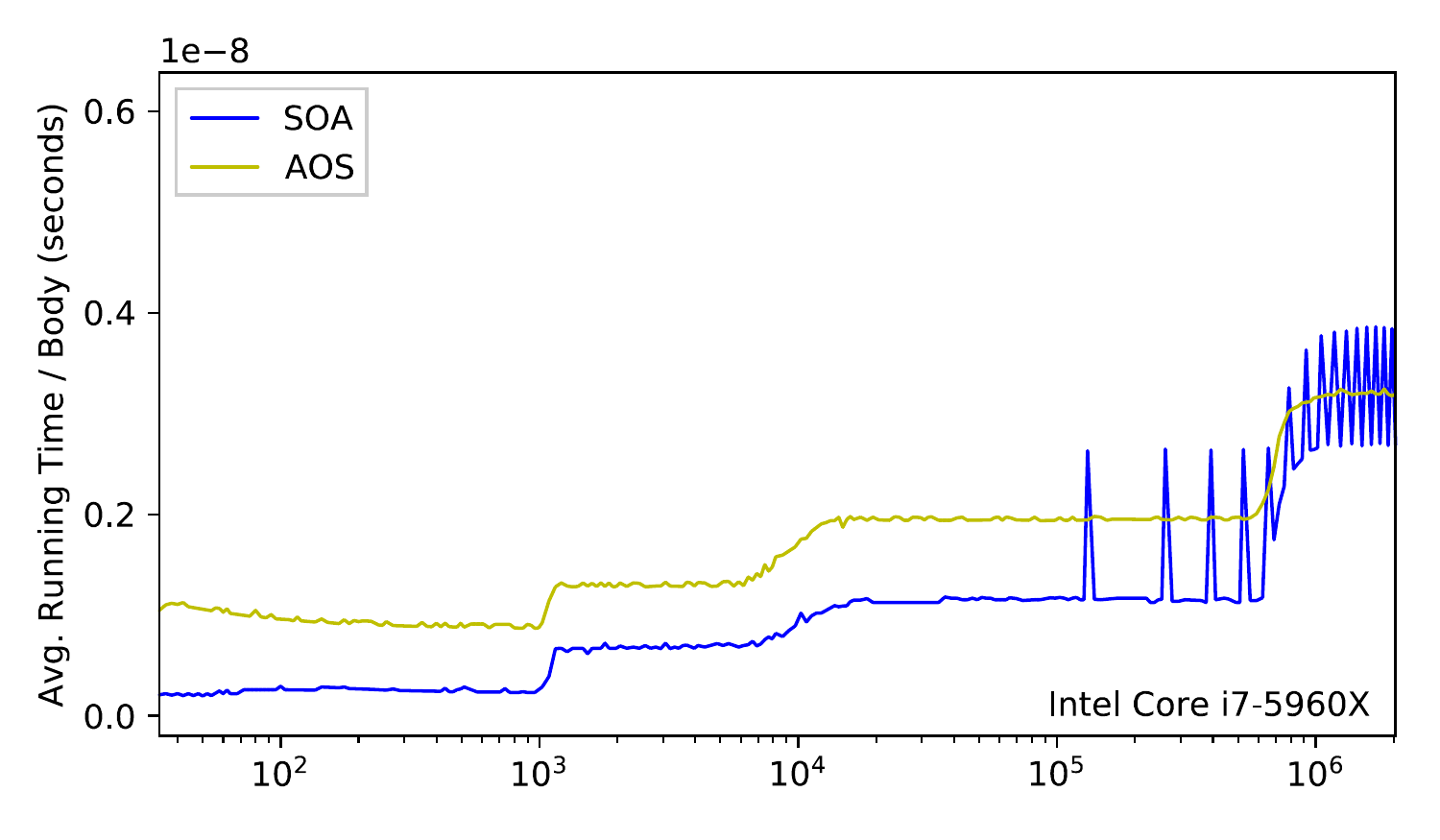}}\hfill
\subfloat[Device running time]{\includegraphics[width=0.49\columnwidth]{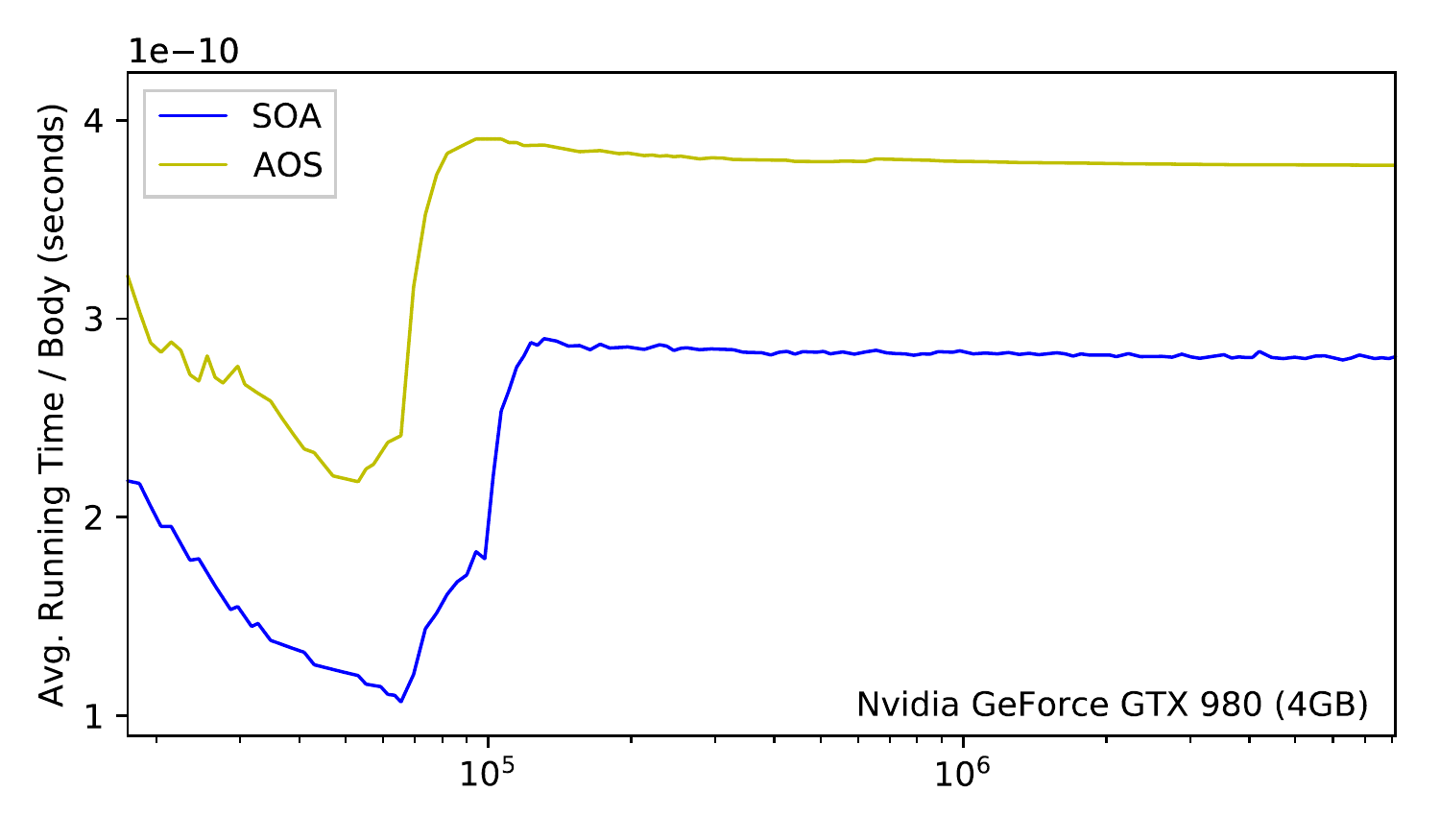}}
\caption{Running time of n-body simulation in AOS and SOA data layout}
\label{fig:running_time_soa_aos_body}
\end{figure}

Figure~\ref{fig:running_time_soa_aos_body} shows the running time of \texttt{Body::move} with AOS and SOA data layout for a varying number of \texttt{Body} objects (x-axis). We report the average running time per \texttt{Body} object (y-axis). In host (CPU, Subfigure~\textsc{a}) code, SOA is much faster at the beginning. This is because the compiler generates SSE vector instructions for SOA code. Furthermore, the cache is utilized more efficiently in SOA. This benefit starts fading away with higher body numbers. The spikes in the SOA graph are due to cache associativity issues. This could be resolved with a hybrid layout (Section~\ref{sec:choosing_dl_hybr}). In device (GPU, Subfigure~\textsc{b}) code, SOA is always faster than AOS, mostly due to better memory coalescing. For low body numbers, kernel invocation dominates the running time, thus the poor performance at the beginning.

\subsection{Object vs. SOA Array Alignment}
\label{sec:obj_vs_soa_array_alignment}
An object set stored in SOA layout can sometimes require less memory than the same object set in AOS\footnote{For reasonably large object sets, SOA never requires more space than AOS.}. In AOS, every field and every object must be properly aligned, whereas in SOA, only the SOA arrays themselves must be aligned.

\begin{lstfloat}
\begin{lstlisting}[language=c++, numbers=none, columns=fixed, caption={Example: Memory consumption of an object set in AOS/SOA}, label={lst:alignment_aos_soa}]
class DummyClass {
  double field_0;          // sizeof(double)       = 8
  char   field_1;          // sizeof(char)         = 1
};                         // sizeof(DummyClass)   = 16
DummyClass objects[50];    // sizeof(objects)      = 800

struct SoaContainer {
  double soa_field_0[50];  // sizeof(double[50])   = 400
  char   soa_field_1[50];  // sizeof(char[50])     = 50
};                         // sizeof(SoaContainer) = 456
\end{lstlisting}
\end{lstfloat}

As an example, consider the C++ source code in Listing~\ref{lst:alignment_aos_soa}. 50~objects in AOS layout require 800~bytes because every \texttt{double} field value (and \texttt{DummyClass} object) must be aligned to 8~bytes. However, 50~objects in SOA layout require only 456~bytes.

As a side note, the size of a C++ struct/class can sometimes be reduced by rearranging its fields (\emph{structure packing}). In the example of Listing~\ref{lst:alignment_aos_soa}, structure packing cannot reduce the size of \texttt{DummyClass}. 

\subsection{Choosing a Data Layout}
\label{sec:choosing_dl_hybr}
Choosing the best data layout for an application is challenging and depends on hardware characteristics and on the data access patterns of the application. Previous work has shown that, on different platforms, different layouts can sometimes achieve the best performance. For example, on platforms with low cache associativity, SOA is known to cause more cache evictions than AOS, which can increase the number of cache misses and slow down an application~\cite{10.1007/978-3-642-54420-0_19, 7853809}.

Previous work has shown that a mixture of AOS and SOA such as AoSoA can sometimes achieve the best performance~\cite{Franco:2017:YAG:3133850.3133861, 10.1007/978-3-662-48096-0_21, Weber:2017:MAL:3132652.3106341, Lee:2017:VPP:3105762.3105768, 7967111}. How to find good data layouts has been studied before~\cite{10.1007/978-3-662-48096-0_21, Abel:1999:ATS, haine2017} and is out of the scope of our work. When talking about custom data layouts, we are focusing on SOA in the remainder of this theis. However, our work could be extended to other layouts in the future.

\chapter{Expressing Parallelism in Object-oriented Programs}
\label{sec:thesis_expressing_parallel}
This chapter investigates how object orientation can be utilized in GPU programs. We are particularly interested in how object orientation can be used to express parallelism.

\minitoc

\paragraph{Overview and Outline}
We developed two techniques/prototypes for expressing GPU parallelism in object-oriented programs.

\begin{itemize}
\item \textsc{Ikra-Ruby}\footnote{\url{https://prg-titech.github.io/ikra-ruby/index.html}} (Section~\ref{sec:express_parallel_arr_int}): A data-parallel array class for Ruby with \texttt{each}/\texttt{map}/ \texttt{reduce}/\texttt{stencil}/... operations that run on the GPU. Object orientation allows programmers to compose GPU programs of small parallel sections in a modular way. \textsc{Ikra-Ruby} is suitable mainly for mathematical computations.
\item \textsc{Ikra-Cpp}\footnote{\url{https://github.com/prg-titech/ikra-cpp}} (Section~\ref{sec:express_smmo}): To allow for efficient object-oriented programming within GPU code, we developed a CUDA framework for a subclass of object-oriented applications that we call \emph{Single-Method Multiple-Objects} (SMMO). In SMMO, parallelism is expressed by running a method on all existing objects of a type. Many interesting applications in high-performance computing can be expressed with SMMO (Section~\ref{chap:smmo_examples}).
\end{itemize}

SMMO applications can also be implemented in \textsc{Ikra-Ruby} by running a parallel \texttt{each} operation on an array of objects. Besides being implemented in a different programming language, \textsc{Ikra-Cpp} is, in essence, a more restricted version of \textsc{Ikra-Ruby} with only one kind of operation: A parallel method call. This allows us to develop better optimizations for object-oriented code that runs on the GPU. 

Our vision is that \textsc{Ikra-Cpp} will eventually become a part of \textsc{Ikra-Ruby}, so that programmers can develop SMMO applications in a high-level language. However, this is out of the scope of this thesis and up to future work.



\paragraph{Publications}
This chapter is in part based on the following papers.
\begin{itemize}
  \item Matthias Springer, Hidehiko Masuhara. \textbf{``Object Support in an Array-based GPGPU Extension for Ruby.''} In: \emph{Proceedings of the 3rd ACM SIGPLAN International Workshop on Libraries, Languages, and Compilers for Array Programming.} ARRAY 2016. ACM, 2016, pp.~25--31. \texttt{\doi{10.1145/2935323.2935327}}
  \item Matthias Springer, Peter Wauligmann, Hidehiko Masuhara. \textbf{``Modular Array-Based GPU Computing in a Dynamically-Typed Language.''} In: \emph{Proceedings of the 4th ACM SIGPLAN International Workshop on Libraries, Languages, and Compilers for Array Programming.} ARRAY 2017. ACM, 2017, pp.~48--55. \texttt{\doi{10.1145/3091966.3091974}}
  \item Matthias Springer. \textbf{``DynaSOAr: Accelerating Single-Method Multiple-Objects Applications on GPUs.''} Extended Abstract. In: \emph{ACM Student Research Competition, Grand Finals, Graduate Category.} Originally submitted to SPLASH 2018. \href{https://arxiv.org/pdf/1809.07444.pdf}{\texttt{arXiv:1809.07444}} (reviewed and published on ACM SRC website, no formal proceedings)
\end{itemize}

\section{\textsc{Ikra-Ruby}: A Parallel Array Interface}
\label{sec:express_parallel_arr_int}
Most mainstream, object-oriented programming languages, such as Java, Python or Ruby, provide a rich collection API for arrays, linked lists, hash maps, etc. These containers often have functional operations that execute a scalar computation on each (sometimes multiple) element(s) of the collection. The most prominent operations are \texttt{map} and \texttt{reduce}, popularized by the Map-Reduce programming model~\cite{Dean:2008:MSD:1327452.1327492}.

In this section, we investigate how collections can be used to express GPU parallelism. Arrays are particularly interesting because they allow for efficient random element access. Since the (functional) scalar computations are independent of each other, they can run in parallel on the GPU. 

\paragraph{Ruby Array Interface}
As an example, consider the array interface (class \texttt{Array}) of the Ruby programming language. This interface provides many operations, but the following four ones are among the most used ones.

\begin{itemize}
  \item \texttt{Array.new(n, \&block)}: Creates a new array of size $n$. Each array slot is initialized with the return value of the block.
  \item \texttt{Array.map(\&block)}: Applies a block (lambda function) to every element of an array and returns a new array of results. The original array remains unchanged.
  \item \texttt{Array.each(\&block)}: Applies a block (lambda function) to every element of an array. In contrast to all other operations, this operation does not return a new array. The original array remains unchanged. However, the block execution usually has a side effect.
  \item \texttt{Array.reduce(\&block)}: Combines all elements of an array by applying a binary operation over and over. Returns a scalar value. The original array remains unchanged.
\end{itemize}

A Ruby \emph{block} is an anonymous/lambda function. Programmers can customize the above operations with a block that specifies the computation of each scalar value. Blocks are ordinary Ruby objects. However, when defining/calling functions, Ruby distinguishes between ordinary parameters and block parameters. Ruby provides a special syntax for passing a block to a function (Listing~\ref{lst:example_ruby_array_map}).

\begin{lstfloat}
\begin{lstlisting}[language=Ruby, caption={Example: Ruby \texttt{Array::map}}, label={lst:example_ruby_array_map}, numbers=none]
increment = Proc.new do |x| x + 1 end  # Define a computation
[1, 2, 3].map(&increment)  # Must pass blocks with ampersand ("block argument")
# => [2, 3, 4]

# More compact: Inline block definition
[1, 2, 3].map do |x| x + 1 end
# => [2, 3, 4]
\end{lstlisting}
\end{lstfloat}

\paragraph{Parallel Array Operations}
To explore array-based GPU computing in a high-level, object-oriented programming language, we developed \textsc{Ikra-Ruby}, a language extension for data-parallel and scientific computations on NVIDIA GPUs in Ruby. \textsc{Ikra-Ruby} uses arrays as an abstraction for expressing parallelism: We extended Ruby's \texttt{Array} class with parallel versions of the above operations and with a few extra operations that we describe later. The notation and API for these operations is similar to Ruby's sequential counterpart operations but method names are prefixed with \texttt{p} for \emph{parallel}. In the simplest case, one CUDA thread is assigned to each array element, such that the array can be processed in parallel.


\paragraph{Programming Style}
When using \textsc{Ikra-Ruby}, we encourage a \emph{dynamic} programming style that is governed by the following two concepts.
\begin{description}
    \item [Integration of Dynamic Language Features] Code in (blocks passed to) parallel array operations is limited to a restricted set of types and operations (dynamic typing and object-oriented programming is allowed). This allows us to generate efficient GPU code for parallel operations. All Ruby features (incl. metaprogramming) may still be used in other parts of a Ruby program. Therefore, programmers can still use external libraries (e.g., I/O or GUI libraries).
    \item [Modularity~\cite{Meyer:1988:OSC:534929}] While optimized low-level programs typically consist of a small number of kernels performing a variety of tasks, \textsc{Ikra-Ruby} allows programmers to compose a GPU computation from multiple reusable, smaller parallel array operations. \textsc{Ikra-Ruby} generates efficient CUDA kernels from these operations.
\end{description}

\paragraph{Compilation Process}
\textsc{Ikra-Ruby} is essentially a source-to-source compiler that compiles Ruby code to CUDA code. Ruby is a dynamically-typed programming language, while CUDA is a statically-typed language. Due to Ruby's dynamic language features, whole-program static (ahead of time) analysis is difficult. Therefore, \textsc{Ikra-Ruby} generates CUDA programs at runtime (just in time) when all type information is known.

\begin{figure}
    \centering
    \includegraphics[width=0.7\textwidth]{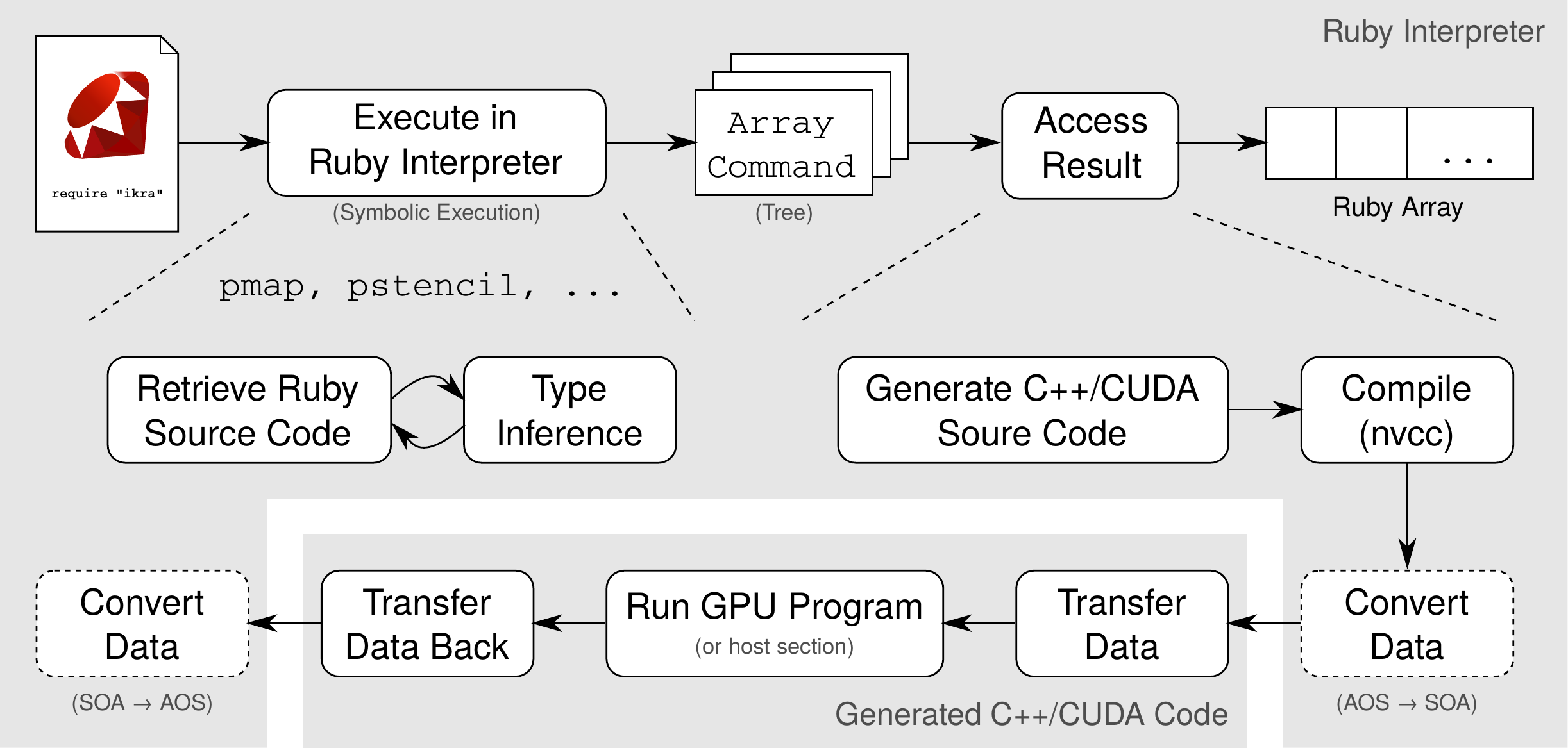}
    \caption[High-level overview of the \textsc{Ikra-Ruby} compilation process]{High-level overview of the compilation process}
    \label{fig:overview_arc}
\end{figure}

Figure~\ref{fig:overview_arc} gives a high-level overview of \textsc{Ikra-Ruby}'s compilation process. \textsc{Ikra-Ruby} executes parallel operations \emph{symbolically}~\cite{King:1976:SEP:360248.360252} in the Ruby interpreter. The result is an \emph{array command} object. Such an object contains all information required for CUDA code generation and execution. An array command can be used like a normal Ruby array. However, only when its contents are accessed for the first time, \textsc{Ikra-Ruby} generates CUDA/C++ source code, compiles it using the CUDA compiler and runs the generated GPU program. The generated program copies data to the GPU, executes the parallel operation(s) on the GPU, copies the result back to the host memory and returns the result\footnote{If object-oriented programming is used inside a kernel, data is first converted to a memory coalescing-friendly \emph{Structure of Arrays layout} (Section~\ref{sec:o_aos_vs_soa}).}. If a parallel operation modifies instance variables of objects as a side effect, then these changes are also copied back to Ruby.

Array commands can have other array commands as inputs, forming a \emph{computation graph} of array commands. Such a graph can be optimized as a whole before emitting CUDA code. Recently, many machine learning framework follow such a design~\cite{Abadi:2016:TSL:3026877.3026899, chainer_learningsys2015}. The main optimization of \textsc{Ikra-Ruby} is \emph{kernel fusion}: Compiling the computation graph into a small number of CUDA kernels. We describe this optimization in detail in Section~\ref{sec:kernel_fusion_in_ikra_ruby_sec4}.

\paragraph{Symbolic Execution}
During symbolic execution, \textsc{Ikra-Ruby} retrieves the source code of the block that is passed to a parallel operation and generates an abstract syntax tree\footnote{We utilize the Ruby \emph{parser} library. See also: \url{https://rubygems.org/gems/parser}.}. The result of symbolic execution is an array command. Within a parallel operation, \textsc{Ikra-Ruby} currently supports expressions of primitive types, user-defined classes and polymorphic types. Advanced Ruby features such as metaprogramming, features that cannot be easily compiled to C++/CUDA (e.g., system calls, file I/O, GUI or FFI), and nested parallel operations are not supported. Parallel sections typically perform mathematical computations, which can be implemented with most basic Ruby functionality.



\paragraph{Array Commands}
A command object in the Command Design Pattern~\cite{Gamma:1995:DPE:186897} is an object that contains all information that is necessary to perform an action at a later point of time. An array command in \textsc{Ikra-Ruby} is an object that contains all information required for code generation and running a parallel operation. For example, a stencil array command contains the AST of the computation (block), a reference to the input array command, an array of neighborhood indices, and an out of bounds value (explained in Section~\ref{sec:parallel_operations}). 

An array command can be seen as a special \emph{Ikra array}. It has all methods that an ordinary Ruby array has, but its contents are computed once it is accessed for the first time. The result of the computation is cached in Ruby, so that multiple accesses do not trigger recomputation. If the input of an array command is changed, the cached result is \emph{not} invalidated (Listing~\ref{lst:caching_result_parallel_arr}). This is a deliberate decision and similar to how standard Ruby behaves with normal (non-parallel) array operations. However, in contrast to standard Ruby, results are computed upon access, so modifications of the input of an array command after symbolic execution can affect its result. 

\begin{lstfloat}
\begin{lstlisting}[language=Ruby, caption={[Example: Caching the result of \textsc{Ikra-Ruby} operations]Example: Caching the result of parallel array operations}, label={lst:caching_result_parallel_arr}, numbers=none]
arr = [10, 20, 30, 40, 50]
squared = arr.pmap do |x| x * x end  # No computation yet
arr[0] = 5  # Modifying "arr" before computation affects result of "squared"
squared[0]  # First access triggers compilation and execution on GPU
# => 25

squared[1]  # Result of "squared" is already in the cache
# => 400

arr[3] = 1000  # No recomputation of "squared"
squared[3]
# => 1600  (not 1000000)
\end{lstlisting}
\end{lstfloat}

Figure~\ref{fig:integration} gives an overview of \textsc{Ikra-Ruby}'s integration in Ruby and the design of array commands.
\begin{figure}
    \centering
    \includegraphics[width=0.7\textwidth]{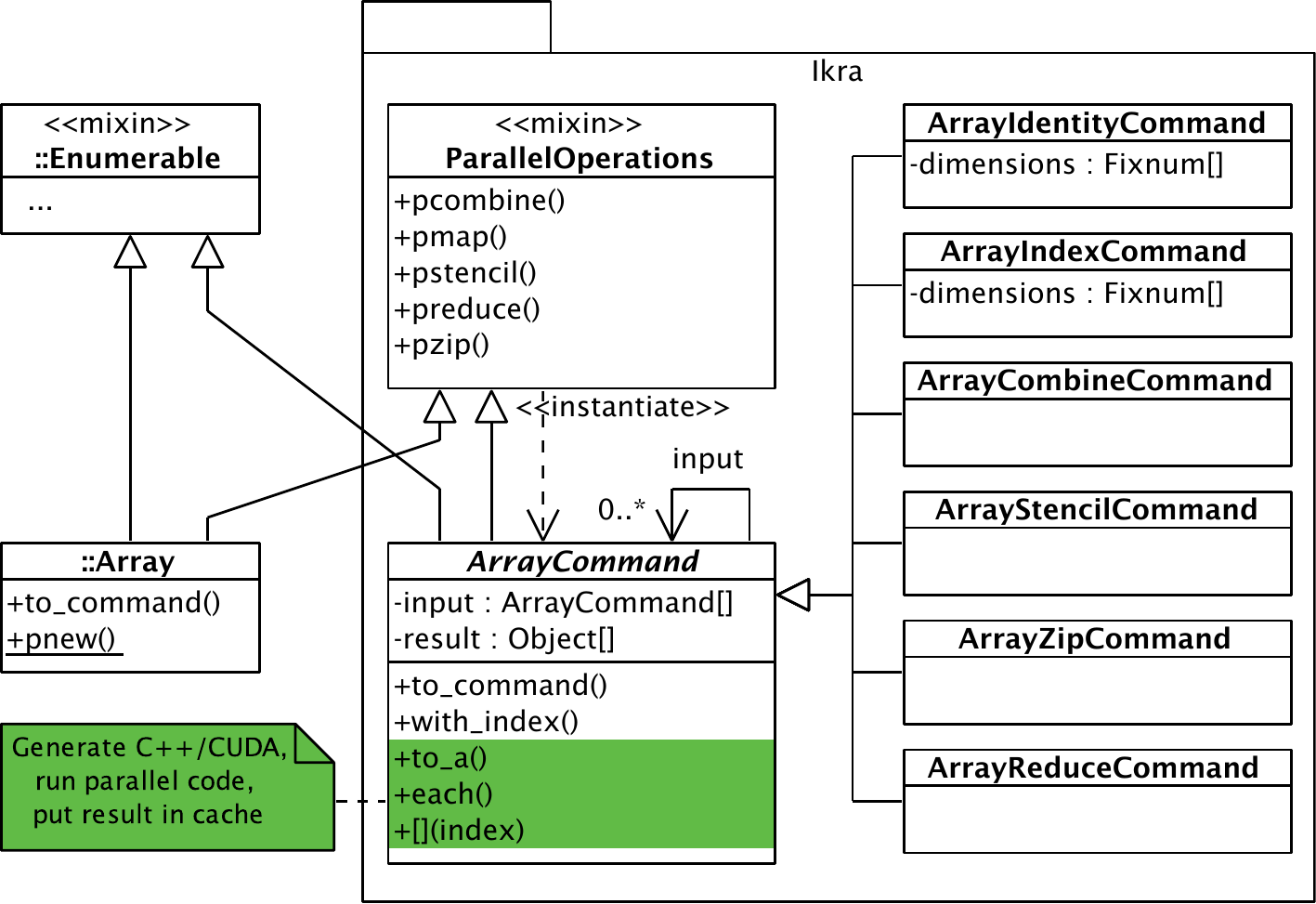}
    \caption{Integration of \textsc{Ikra-Ruby} in Ruby}
    \label{fig:integration}
\end{figure}
There are a variety of subclasses of \texttt{ArrayCommand} corresponding to the parallel operations that are supported in \textsc{Ikra-Ruby} (Section~\ref{sec:parallel_operations}). The standard Ruby \texttt{Array} class and \textsc{Ikra-Ruby}'s \texttt{ArrayCommand} class include two mixins~\cite{Bracha:1990:MI:97945.97982}: \texttt{Enumerable} and \texttt{ParallelOperations}. The first mixin provides standard collection API functionality and requires an implementation of the \texttt{each} method in the class that it is mixed into. The second mixin provides \textsc{Ikra-Ruby}'s parallel operations which are executed on the GPU.

\subsection{Parallel Operations}
\label{sec:parallel_operations}
This section gives an overview of the parallel operations that are provided by \textsc{Ikra-Ruby}. All operations can handle multidimensional \textsc{Ikra-Ruby} arrays, making code more readable if data is inherently multidimensional (e.g., images). For presentation reasons, we use only one dimension for most operations in this section.

If an operation performs a computation, then the size of the receiver (\emph{base array}) determines the number of CUDA threads that are used. By default, \textsc{Ikra-Ruby} uses one CUDA thread per array element, but this can be changed.

\paragraph{Array Identity}
This operation wraps a Ruby array $A$ in an \textsc{Ikra-Ruby} array (command), denoted by $\mathit{id}(A)$. Since an \textsc{Ikra-Ruby} computation graph consists of only array commands, this operation is necessary to make an external Ruby array $A$ (which is not computed on the GPU) available in \textsc{Ikra-Ruby}. 

Array identity is applied implicitly where required. For example, when a \emph{parallel map} operation is applied to a Ruby array, \textsc{Ikra-Ruby} first wraps the Ruby array in an array identity command.

Array identity can also be used to \emph{reshape} a Ruby array. E.g., this is useful if programmers want to convert a one dimensional Ruby array into a multidimensional \textsc{Ikra-Ruby} array. Reshaping does not change the actual data layout and is ``for free''. Array identity is exposed to Ruby programmers as \texttt{to\_command}, taking an optional parameter for dimensions.

\smallskip
\begin{lstlisting}[numbers=none]
<@$A$@>.to_command()
<@$A$@>.to_command(dimensions: [15, 20])
\end{lstlisting}
\smallskip

\textsc{Ikra-Ruby} arrays can be converted back to Ruby arrays with \texttt{to\_a}. This executes the computation graph on the GPU, unless the graph was already executed and the result is already cached. 

\paragraph{Combine}
This operation is used to map over one or more arrays $A_i$ of same size $m$ and dimensions. It takes as input $n$ arrays and a block (anonymous function) $f$ taking $n$ scalar values. It applies $f$ to every element of the input and retains the original shape of the input (all dimensions).

\begin{align*}
\mathit{combine}(A_1, \ldots, A_n, f) &= \begin{bmatrix}
       f(A_1[0], \ldots, A_n[0]) \\
       \vdots \\
       f(A_1[m-1], \ldots, A_n[m-1])
     \end{bmatrix}
\end{align*}

By default, \textsc{Ikra-Ruby} allocates $m$ CUDA threads, i.e., every thread processes one tuple. This operation is exposed to Ruby programmers as \texttt{pcombine}:

\smallskip
\begin{lstlisting}[numbers=none]
<@$A_1$@>.pcombine(<@$A_2$@>, ..., <@$A_n$@>, &f)
\end{lstlisting}

\paragraph{Map}
This operation is a special case of \textit{combine} with only one input array. It corresponds to an ordinary map operation but is executed in parallel.

\begin{equation*}
\mathit{map}(A_1, f) = \mathit{combine}(A_1, f) = [f(A_1[0]), \ldots, f(A_1[m-1])]
\end{equation*}

\noindent This operation is exposed to Ruby programmers as \texttt{pmap}:

\smallskip
\begin{lstlisting}[numbers=none]
<@$A_1$@>.pmap(&f)
\end{lstlisting}

\paragraph{For-Each}
This operation is similar to \emph{map}: It runs a given Ruby block for every element in an array $A_1$. However, in contrast to all other operations, \emph{for-each} is not a \emph{functional} operation and does not have a return value. The Ruby block may as a side effect, just as in all other operations, modify the elements of $A_1$ (if they are objects), as well as other objects that are in scope, such as lexical variables. These changes are written back to Ruby after running the operation. This operation is exposed to Ruby programmers as \texttt{peach}:

\smallskip
\begin{lstlisting}[numbers=none]
<@$A_1$@>.peach(&f)
\end{lstlisting}
\smallskip

Since \emph{for-each} does not have a return value, it is not symbolically executed but immediately executed. Furthermore, certain optimizations that are described later, such as kernel fusion, are not applied to this operation.

\paragraph{Index}
This operation generates an array of size $m$ of consecutive indices starting from $0$ and ending with $m-1$.

\begin{equation*}
\mathit{index}(m) = [0, 1, 2, \ldots, m-1]
\end{equation*}

In a multidimensional case, \textit{index} takes $d$ arguments $m_i$ ($d$ is the number of dimensions), where $m_i$ is the size of the $i$-th dimension. Every value in the resulting array is then an array of size $d$ containing the indices for every dimension. 

\begin{align*}
\mathit{index}(m_1, m_2) &= \begin{bmatrix}
       [0, 0], \ldots, [0, m_1 - 1], \ldots, \\
       [m_2 - 1, 0], \ldots, [m_2 - 1, m_1 - 1]
     \end{bmatrix}
\end{align*}

This operation is not directly exposed to programmers. Similar to standard Ruby, programmers must invoke the method \texttt{with\_index} after a parallel operation (the parameter $m$ is provided implicitly) or use \texttt{Array.pnew} (see below).

\smallskip
\begin{lstlisting}[numbers=none, language=Ruby]
<@$A_1$@>.pmap.with_index(&f)

# API Example:
[10, 20, 30, 40, 50].map.with_index do |x, y| x + y end
# => [10, 21, 32, 43, 54]
\end{lstlisting}

\paragraph{New}
This operation is a combination of \textit{index} and \textit{map}. It creates a new array of size $m$ and initializes it using the block (anonymous function) $f$. It is a parallel version of Ruby's \texttt{Array.new}.

\begin{equation*}
\mathit{new}(m, f) = \mathit{map}(\mathit{index}(m), f) = [f(0), \ldots, f(m-1)]
\end{equation*}

Similiar to \textit{index}, this operation takes multiple arguments in a multidimensional case. This operation is exposed to Ruby programmers as \texttt{pnew}:

\smallskip
\begin{lstlisting}[numbers=none]
Array.pnew(m, &f)
\end{lstlisting}

\paragraph{Stencil (Convolution)}
This operation takes as arguments an input array $A$, an array of relative indices $I$ (neighborhood) of size $k$, a block $f$, and an out-of-bounds value $o$. It creates an array of same size and dimensionality where position $i$ is initialized using $f$, passing the values in the neighborhood of $A[i]$ as arguments to $f$. Simple examples of stencil computations are image filtering kernels.

The following formula is used to calculate the value in the resulting array at position $i$. If all indices are within bounds (case 1), i.e., $0 \leq i + I[j] < m$ for all $0 \leq j < k$ (where $m$ is the size of $A$), the value of the stencil computation is used. Otherwise (case 2), the fallback value $o$ is used.

\begin{equation*}
\mathit{v}(A, I, f, o, i) = 
\begin{cases}
    f([A[i + I[0]], \ldots, A[i + I[k-1]]]), & \text{(case 1)} \\
    o, & \text{(case 2)}
\end{cases}
\end{equation*}

\noindent Using this helper function $v$, a stencil computation is defined as follows.

\begin{equation*}
\begin{split}
\mathit{stencil}(A, I, f, o) = [v(A, I, f, o, 0), \ldots, v(A, I, f, o, m-1)]
\end{split}
\end{equation*}

\noindent This operation is exposed to Ruby programmers as \texttt{pstencil}. The value of the parameter $I$ is inferred from the code of the block.

\smallskip
\begin{lstlisting}[numbers=none,columns=fixed, language=Ruby]
<@$A$@>.pstencil(o, &f)

# API Example (Gaussian blur 3x3 image kernel):
img = load_pixels("pic.png").to_command(dimensions: [640, 480])
filtered = img.pstencil(0) do |v|
  ( 1 * v[-1][-1]  +  2 * v[0][-1]  +  1 * v[1][-1] +
    2 * v[-1][ 0]  +  4 * v[0][ 0]  +  2 * v[1][ 0] +
    1 * v[-1][ 1]  +  2 * v[0][ 1]  +  1 * v[1][ 1] ) / 16.0
end
\end{lstlisting}

\paragraph{Zip}
This operation does not perform a computation but groups values of two or more arrays of same size and dimensionality.

\begin{align*}
\mathit{zip}(A_1, \ldots, A_n) &= \begin{bmatrix}
       [[A_1[0], \ldots, A_n[0]] \\
       \vdots \\
       [A_1[m-1], \ldots, A_n[m-1]]]
     \end{bmatrix}
\end{align*}

The result of this operation is an array of arrays. This operation is exposed to Ruby programmers as \texttt{pzip}:

\smallskip
\begin{lstlisting}[numbers=none]
<@$A_1$@>.pzip(<@$A_2$@>, ..., <@$A_n$@>)
\end{lstlisting}

\paragraph{Reduce}
This operation takes as arguments an input array $A$ and a block $f$, whose function must be associative. Every block application reduces two elements into a single one. The block is applied until only one element is left (dimensions are ignored\footnote{Machine learning libraries provide more powerful versions that reduce values along one or multiple specified dimensions, e.g. \texttt{reduce\_sum} in TensorFlow. This is not currently supported in \textsc{Ikra-Ruby}.}). This operation is similar to Ruby's \texttt{Array.reduce}, but the return value is an array with one element instead of a scalar value.

\begin{equation*}
\begin{split}
\mathit{reduce}(A, f) = [f(\ldots f(f(A[0], A[1]), f(A[2], A[3]), \ldots) \ldots)]
\end{split}
\end{equation*}

There are no guarantees about the order in which elements are reduced, because reduction is done in parallel. This operation is exposed to Ruby programmers as \texttt{preduce}:

\smallskip
\begin{lstlisting}[numbers=none, language=Ruby]
# Passing a block (function)
<@$A$@>.preduce(&f)
# Symbols are also possible as shortcuts
# E.g.: <@$A$@>.preduce do |a, b| a + b; end
<@$A$@>.preduce(:+)
\end{lstlisting}

\subsection{Mapping Ruby Types to C++ Types}
Ruby is a dynamically typed programming language. During type inference, \textsc{Ikra-Ruby} infers which types each expression in a parallel operation can have, based on the runtime types of the input to the parallel operation. If an expression is monomorphic (has a single type), a Ruby type is directly mapped to the corresponding type in the C++/CUDA source code.

\begin{table}[!htb]
\caption[\textsc{Ikra-Ruby}: Mapping Ruby types to C++ types]{Mapping Ruby types to C++ types}
\label{fig:mapping_types}
\small
\begin{tabularx}{\columnwidth}{l|X}
\hline\hline
\textbf{Ruby Type} & \textbf{C++ / CUDA Type} \\
\hline
\texttt{Fixnum} & \texttt{int} \\
\texttt{Float} & \texttt{float} \\
\texttt{TrueClass} & \texttt{bool} \\
\texttt{FalseClass} & \texttt{bool} \\
\texttt{NilClass} & \texttt{int} \\
\texttt{Array} & \texttt{array\_t}, generated struct type\\
\texttt{ArrayCommand} & \texttt{array\_command\_t *} \footnotesize \textit{\textcolor{gray}{(only in host sections, Section~\ref{sec:hs_symbolic})}} \\
(other) & \texttt{object\_id\_t} (\texttt{int}) \\
(polymorphic) & \texttt{union\_t} \\
\hline\hline
\end{tabularx}
\end{table}

Table~\ref{fig:mapping_types} shows the mapping of Ruby data types to CUDA/C++ data types. Numeric values are currently represented by \texttt{int} or \texttt{float}. \texttt{nil} is represented by \texttt{int} value 0. Arrays are either represented by \texttt{array\_t} (a pointer-size pair) or a generated struct type for zip types. Other objects are represented by \texttt{int} object IDs generated by \textsc{Ikra-Ruby}'s object tracer (Section~\ref{sec:impl_tracer}).

\newsavebox{\unionta}
\begin{lrbox}{\unionta}
\begin{lstlisting}[numbers=none,linewidth=0.55\columnwidth,language=C++]
union union_v_t {
  int int_;
  float float_;
  bool bool_;
  void *pointer;
  array_t array;
  array_command_t array_command; // later...
};
\end{lstlisting}
\end{lrbox}
\newsavebox{\uniontb}
\begin{lrbox}{\uniontb}
\begin{lstlisting}[numbers=none,linewidth=0.3\columnwidth,language=C++]
struct union_t {
  int class_id;
  union_v_t value;
};
\end{lstlisting}
\end{lrbox}

\begin{figure}[!htb]
    \centering
    \subfloat{\begin{minipage}{.6\columnwidth}\usebox{\unionta}\end{minipage}}
    \subfloat{\begin{minipage}{.3\columnwidth}\usebox{\uniontb}\end{minipage}}
    \caption[Union type struct definition for polymorphic types]{Union type struct definition}
    \label{fig:union_type_struct}
\end{figure}

For polymorphic expressions (e.g., if \texttt{Fixnum}s and \texttt{nil} are assigned to a variable), union type~\cite{Abadi:1991:DTS:103135.103138} structs (Figure~\ref{fig:union_type_struct}) are used. Values are stored in \texttt{union\_v\_t} which can hold values (or pointers to values) for all C++ types of Table~\ref{fig:mapping_types}. The class ID field contains a number that identifies the runtime type\footnote{This is different from standard C++, where a vtable pointer is stored at the allocation site of an object. Such an implementation would impose a large overhead in \textsc{Ikra-Ruby} because we would then have to heap-allocate primitive types and refer to them with pointers.}. If a method is called on a polymorphic expression, \textsc{Ikra-Ruby} generates a switch-case statement with all types that the expression can have at runtime. Such an implementation is much more efficient in CUDA than a virtual function call because all call targets can be inlined by the compiler. If a monomorphic value is assigned to a polymorphic lvalue, \textsc{Ikra-Ruby} wraps the value in a union type struct. Arrays of union type structs are used to represent polymorphic arrays.

In contrast to other compilers that just-in-time generate GPU code from dynamic language code~\cite{Fumero:2017:JGC:3050748.3050761}, \textsc{Ikra-Ruby} performs a \emph{conservative} type inference pass, such that no unexpected types can appear at GPU program runtime. Therefore, \textsc{Ikra-Ruby} does not need to insert runtime type checking guards and code is never invalidated due to a failing runtime type check. This design decision is based on our assumption that most GPU code is monomorphic or exhibits a very small set of runtime types. In fact, we could not find any real GPU programs for which our conservative type inference system would infer an excessively large set of types of which only a few actually appear at GPU program runtime.

\subsection{Object Tracer}
\label{sec:impl_tracer}
If object-oriented programming is used within a parallel operation, we have to transfer all objects to the GPU that the GPU program may be accessing during its runtime. Object allocation and deallocation is not supported within parallel operations; only existing objects can be modified. Before CUDA code is generated, the \emph{object tracer} identifies all objects that may potentially be accessed in a parallel operation. Since the exact runtime behavior of a program cannot be predicted, this analysis is conservative and may identify objects that are not actually accessed at runtime. The object tracer performs three main tasks.

\begin{itemize}
  \item Decide \textbf{which Ruby objects may be accessed} in the parallel operation and must, therefore, be transferred to the GPU.
  \item Assign a \textbf{unique ID} to each such object.
  \item For each Ruby class that is used within a parallel operation, determine \textbf{which instance variables are read/written}. Only those instance variables are transitively traced and copied to the GPU (and copied back to Ruby after the kernel finished executing).
\end{itemize}


The object tracer starts tracing with a set of root objects: All elements of the base array and lexical variables that are accessed. Then, it traverses the object graph by following all instance variables that are read or written inside the parallel section. \textsc{Ikra-Ruby}'s object tracer is somewhat similar to what a \emph{system tracer}~\cite{smalltalk_tracer, Krasner:1983:SBH:226} does in Smalltalk-80 systems.

Object tracing can result in additional type inference passes. The reason for that is Ruby's dynamic typing. During object tracing, \textsc{Ikra-Ruby} might find that the set of possible types of an instance variable is larger than previously assumed. If a method is called on such an instance variable, \textsc{Ikra-Ruby} has to translate also this new method to CUDA code. Furthermore, this method may, for the first time, access another instance variable of objects of a class that we have already begun tracing. The instance variable values of those already traced objects must now also be traced.

\textsc{Ikra-Ruby}'s object tracer is currently not optimized for performance. The current implementation allows us to explore object-oriented programming within parallel operations, but object tracing and data transfers to the GPU can take a considerable amount of time. Assuming that most computation-intensive code runs on the GPU, future versions of \textsc{Ikra-Ruby} should optimize this by allocating most objects on the GPU and transferring them to the Ruby interpreter only upon access. 

\subsection{Example: Image Manipulation Library}
\label{sec:example_img}
To illustrate \textsc{Ikra-Ruby}'s API, we design a simple image manipulation library. This library provides methods for loading images from the file system and a few filters (\emph{image kernels}) and effects.

\begin{lstfloat}
\begin{lstlisting}[numbers=none, language=Ruby, caption={Example: Definition of image manipulation filters}, label={lst:def_of_filters}]
module ImageLibrary::Filters
  def self.load_png(filename)
    # Pixels are just integers.
    image = read_png(filename)
    return image.pixels.to_command(dimensions: [image.height, image.width])
  end

  def self.blend(other, ratio)
    return CombineFilter.new(other) do |p1, p2|
        # pixel_add, pixel_scale: Helper functions for dealing with RGB values
        pixel_add(pixel_scale(p1, 1.0 - ratio), pixel_scale(p2, ratio))
    end
  end

  def self.pixel_add(p1, p2)  # Add values for each channel, cap at 255.
    return min(((p1 & 0xff0000) >> 16 + (p2 & 0xff0000) >> 16), 255) << 16  # red
         + min(((p1 & 0x00ff00) >>  8 + (p2 & 0x00ff00) >> 8), 255) << 8    # green
         + min((p1 & 0x0000ff) + (p2 & 0x0000ff), 255)                      # blue
  end

  def self.pixel_scale(p1, r)  # Ratio r must be between 0.0 and 1.0
    return (r * ((p1 & 0xff0000) >> 16)) << 16  # red
         + (r * ((p1 & 0x00ff00) >> 8)) << 8    # green
         +  r * (p1 & 0x0000ff)                 # blue
  end

  def self.blur
    return StencilFilter.new(0) do |v|
            r = v[-1][-1][0] + ... + v[1][1][0]
            g = v[-1][-1][1] + ... + v[1][1][1] 
            b = v[-1][-1][2] + ... + v[1][1][2]
            [r / 9.0, g / 9.0, b / 9.0]
    end
  end

  # Source code of other filters omitted.
end
\end{lstlisting}

\begin{lstlisting}[numbers=none, caption={Example: Image manipulation library usage}, label={lst:img_manip_usage}, language=Ruby, morekeywords={require}]
require "image_library"

tt = ImgLib.load_png("tokyo_tower.png")
for i in 0...3
  tt = tt.apply_filter(ImgLib::Filters.blur)
end

sun = ImgLib.load_png("sunset.png")
combined = tt.apply_filter(ImgLib::Filters.blend(sun, 0.3))

forest = ImgLib.load_png("forest.png")
forest = forest.apply_filter(ImgLib::Filters.invert)
combined = combined.apply_filter(
    ImgLib::Filters.overlay(forest, ImgLib::Masks.circle(tt.height/4)))

ImgLib::Output.render(combined)  # Draw pixels
\end{lstlisting}
\end{lstfloat}

As an example, Figure~\ref{fig:img_manip_ex} shows which filters we are using and in which order we apply them: First, we load a picture of the Tokyo Tower. Then, we apply a blur filter multiple times. Next, we load a picture of a sunset and merge (blend) both pictures. Finally, we load a picture of a forest, invert it, and overlay it with the previously merged picture. Listing~\ref{lst:img_manip_usage} shows the source code for this example. Notice how the code is modular with respect to composability, reusability, understandability: Image filters are provided by the library and can be arbitrarily combined.

\begin{figure}
    \centering
    \includegraphics[width=0.7\textwidth]{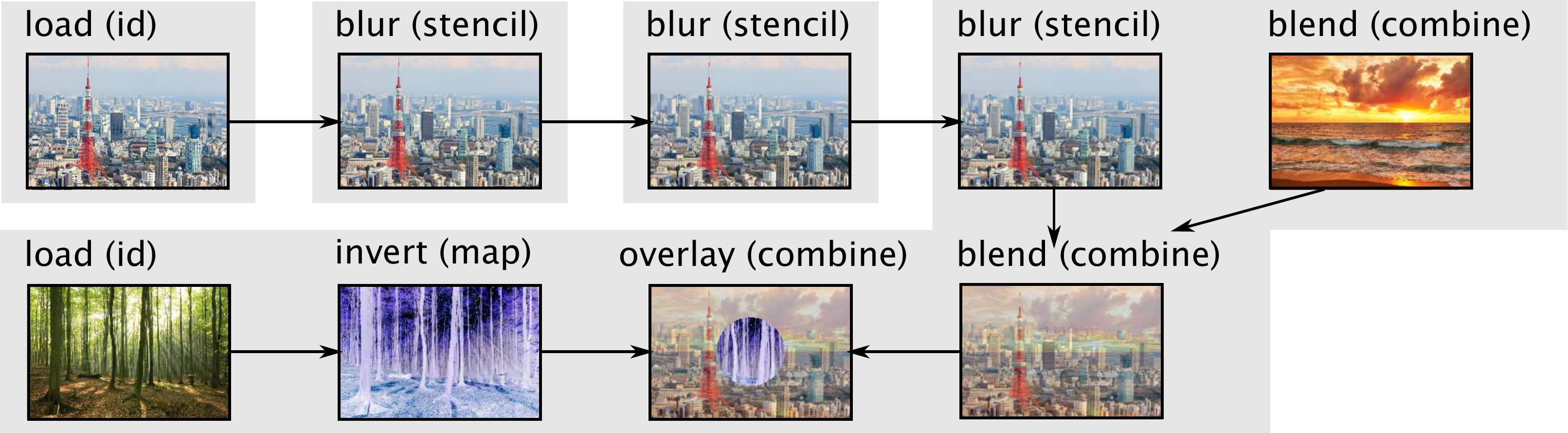}
    \caption[Usage of image manipulation library example]{Example: Gray boxes indicate CUDA kernels (\emph{kernel fusion}, Section~\ref{sec:kernel_fusion_in_ikra_ruby_sec4}).}
    \label{fig:img_manip_ex}
\end{figure}

\begin{figure}
    \centering
    \includegraphics[width=0.7\textwidth]{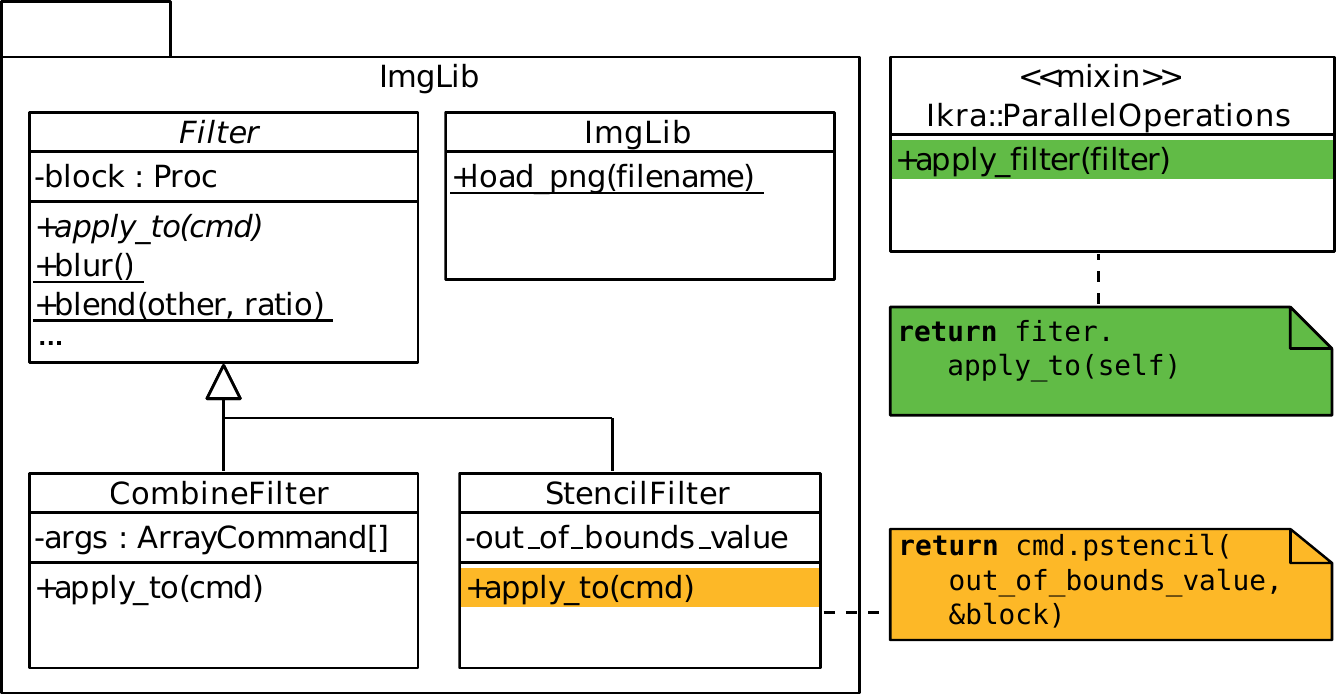}
    \caption{Architecture of image manipulation library example}
    \label{fig:img_lib_arch}
\end{figure}

The filters are implemented in the library using parallel map or stencil operations (Figure~\ref{fig:img_lib_arch}, Listing~\ref{lst:def_of_filters}). Images are represented as 2D array command objects. The library defines an extension method \texttt{apply\_filter} for applying a filter to an image with double dispatch~\cite{Ingalls:1986:STH:28697.28732}. Multiple filters can be applied in a sequence by chaining \texttt{apply\_filter} method calls. Only when the final result is accessed, does \textsc{Ikra-Ruby} generate an optimized CUDA program, copy the images to device memory, run the CUDA kernels and copy back the result.

\subsection{Summary}
In this section, we presented the design and implementation of \textsc{Ikra-Ruby}, a Ruby library for data-parallel computations. Programmers express parallelism by running parallel operations on an array. \textsc{Ikra-Ruby} allows programmers to write modular code with respect to reusability and composability of parallel operations. Section~\ref{sec:kernel_fusion_in_ikra_ruby_sec4} describes how \textsc{Ikra-Ruby} optimizes such GPU programs with kernel fusion.


\section[\textsc{Ikra-Cpp}: A C++/CUDA Library for SMMO Applications]{\textsc{Ikra-Cpp}: A C++/CUDA Library for Single-Method Multiple-Objects Applications}
\label{sec:express_smmo}
Based on our experiments, we found that \textsc{Ikra-Ruby} is suitable for mathematical and purely functional applications, but less suitable for applications that utilize object-oriented programming inside of parallel operations. Such applications do not take advantage of the full range of \textsc{Ikra-Ruby}'s operations. Most of them use only \emph{for-each} (\texttt{Array::peach}) and modify object fields as a side effect, as opposed to expressing state transitions in a functional way through a series of \emph{map} operations.

In fact, in object-oriented programming languages, many programmers express computation as objects imperatively changing their internal state upon receipt of a message (method call). This is in contrast to functional programming, which favors immutability of state. If the state of an object is modifed in a purely functional system, the result is a brand new object~\cite{Kjolstad:2011:TCI:1985793.1985803}. Such a functional programming style certainly has its benefits~\cite{10.1007/978-3-540-24851-4_12} and can make programs easier to reason about~\cite{1214329}. However, such \emph{functional objects}~\cite{Wegner:1990:CPO:382192.383004}/\emph{value objects}~\cite{Riehle:2006:VO:1415472.1415507} lack object identity, which many programmers see as a fundamental concept in object-oriented programming.


\subsection{Single-Method Multiple-Objects}
We identified a broad object-oriented programming model that can be implemented efficiently on SIMD architectures such as GPUs and has many real-world applications in the area of high-performance computing. We call this model \emph{Single-Method Multiple-Objects} (SMMO). We can think of SMMO as OOP-speech for SIMD (\emph{Single-Instruction Multiple-Data}). The most fundamental operation of SMMO is parallel \emph{do-all} (\emph{for-each}): Running one method in parallel on all existing objects of a type (\emph{object set}). Parallel do-all operations typically perform some kind of computation by imperatively modifying the state of objects.

SMMO fits well with the data-parallel SIMD execution model of GPUs and can be implemented very efficiently. Since we only process objects (jobs) of the same type in a parallel do-all operation, we expect those computations to be mostly uniform with little warp divergence.

\paragraph{Definition and Runtime Semantics}
The SMMO programming model provides one main abstraction/operation for expressing parallelism: \emph{Parallel do-all}. This operation runs a given method \texttt{T::func} for all objects of a given type $S$ (and subtypes) that exist at the beginning of the parallel do-all operation. $T$ must be a supertype of $S$ or be equal to $S$, i.e., $S <: T$. If \texttt{T::func} is virtual, then the operation runs the most specific (overridden) method for each object, similar to a virtual method call.

A parallel do-all operation typically runs in multiple threads. The number of threads and the assignment of objects to threads is implementation-specific, but on GPUs, certain assignments may result in better performance than others (Section~\ref{sec:par_do_all_sec3}). Since objects are typically\footnote{A parallel do-all implementation that runs sequentially would also satisfy our definition of SMMO.} processed in parallel, there is no defined order in which objects are processed. A parallel do-all operation returns when all threads finished executing, i.e., when all objects were processed.

New objects of any type may be created within a parallel do-all operation. However, these objects are not being processed by the same parallel do-all operation. Furthermore, existing objects may be deleted within a parallel do-all operation with two limitations. Let $S$ be the type of objects that are processed in a parallel do-all operation and \texttt{T::func} be the method that is executed for all objects of type $S$.

\begin{itemize}
  \item No object may be deleted more than once. This is forbidden even in ordinary object-oriented programming.
  \item An object \texttt{obj} of type $S$ may be deleted only at the end of \texttt{T::func} (last statement) and only by the method execution that is bound to \texttt{obj}. In other words, a method \texttt{T::func} can delete the object that it is bound to (\texttt{delete this}) but no other objects of type $S$. This is to avoid that an object is deleted while it is being processed by another thread.
\end{itemize}

\begin{figure}
  \centering
  \includegraphics[width=0.8\textwidth]{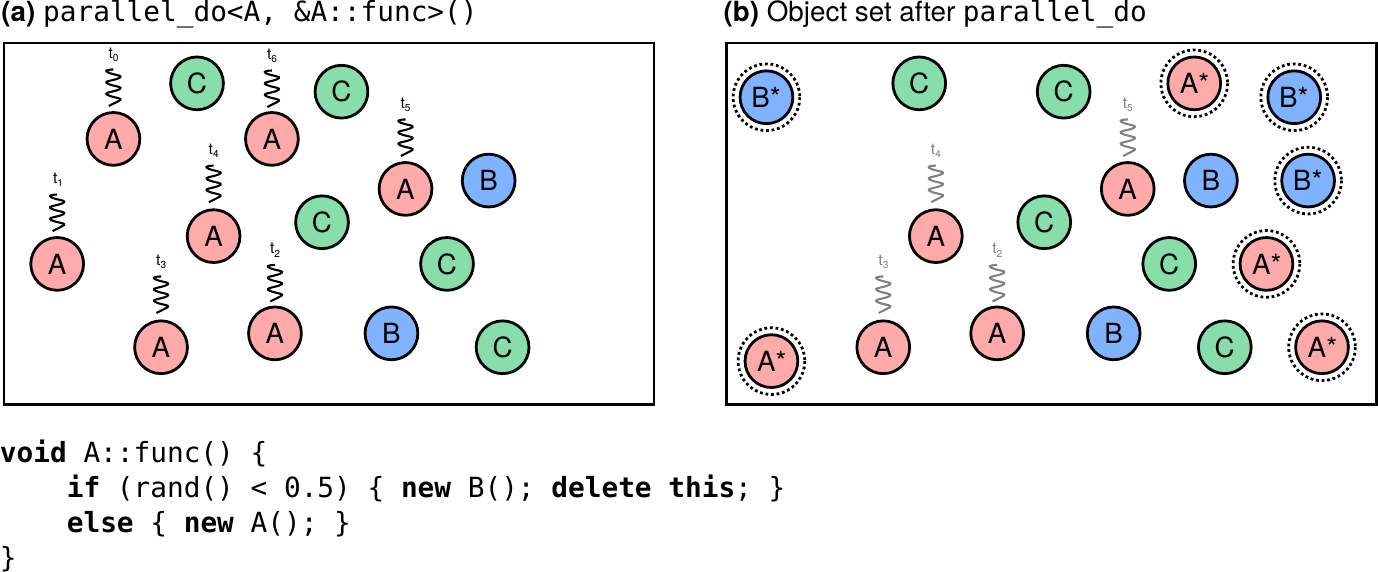}
  \caption[Example: Object set before and after a parallel do-all operation]{Example: Object set before (during) and after a parallel do-all operation}
  \label{fig:smmo_vizz}
\end{figure}

Figure~\ref{fig:smmo_vizz} illustrates a parallel do-all operation of a method \texttt{A::func} with an example. In this example, one thread is spawned for each object of type $A$ (Subfigure~\textsc{a}). The method \texttt{A::func} instantiates new objects of various types (annotated with a star), including objects of type $A$. However, these new objects are not processed by the same parallel do-all operation.

\paragraph{Example Applications}
SMMO is applicable to a broad class of problems\footnote{We implemented a few SMMO applications from different domains: \url{https://github.com/prg-titech/dynasoar/wiki/Benchmark-Applications}} with many real-world applications, such as simulations for population dynamics, (e.g., Sugarscape~\cite{RePEc:mtp:titles:0262550253}), evacuations~\cite{doi:10.1002/cpe.3808}, wildfire spreading~\cite{doi:10.1080/21580103.2016.1262793}, finite element methods or particle systems, to name just a few. SMMO can also express breadth-first search graph traversals and dynamic tree updates/constructions, e.g., in Barnes-Hut~\cite{BURTSCHER201175}. We show the design and implementation of a few SMMO applications in Chapter~\ref{chap:smmo_examples}, highlighting their SMMO structure and their parallel do-all operations.





\subsection{Programming Interface and Notation}
\label{sec:ikra_cpp_api_sect3}
We developed \textsc{Ikra-Cpp}, a C++/CUDA framework that implements the SMMO programming model. \textsc{Ikra-Cpp} facilitates the development of SMMO applications by providing abstractions for parallel do-all operations. In the course of this thesis, we will extend \textsc{Ikra-Cpp} with a data layout DSL (Section~\ref{sec:data_layout_dsls_ikracpp}), a dynamic memory allocator (Chapter~\ref{sec:chapter_dynasoar}) and a memory defragmentation system (Chapter~\ref{sec:chap_gpu_mem_defrag}). We plan to provide a Ruby frontend of \textsc{Ikra-Cpp} in the future, so that programmers can write SMMO applications in a high-level language.

\paragraph{Class Definition}
\textsc{Ikra-Cpp} applications must follow a certain notation. Listing~\ref{lst:sec_3_api_e_ikra_cpp} shows \textsc{Ikra-Cpp}'s notation with an n-body simulation (details in Section~\ref{sec:smmo_nbody_sec71}). User-defined classes/structs must inherit from a special template class \texttt{IkraBase}\footnote{User-defined classes/structs cannot inherit from other user-defined classes/structs. We will remove this limitation later in Chapter~\ref{sec:chapter_dynasoar}.}. This template has two arguments: The class/struct itself (\emph{curiously recurring template pattern}~\cite{Coplien:1995:CRT:229227.229229}) and the number of objects $n$ of the class/struct that can exist at runtime. The class/struct must be initialized with a helper macro \texttt{IKRA\_INITIALIZE\_CLASS}. Finally, the programer must decide whether objects should reside on the host (CPU) or on the device (GPU): \texttt{IKRA\_HOST\_STORAGE} or \texttt{IKRA\_DEVICE\_STORAGE}. These macros statically allocate a \emph{storage array} of size $n$, either on the host or on the device.

\begin{lstfloat}
\begin{lstlisting}[language=c++, caption={[\textsc{Ikra-Cpp}: N-body simulation with AOS layout]API example of \textsc{Ikra-Cpp}}, label={lst:sec_3_api_e_ikra_cpp}, morekeywords={__host__, __device__}, numbers=none]
class Body : public IkraBase<Body, 50> {
 public: <@\emph{IKRA\_INITIALIZE\_CLASS}@>
  float pos_x = 0.0f;
  float pos_y = 0.0f;
  float vel_x = 0.0f;
  float vel_y = 0.0f;
  float force_x = 0.0f;
  float force_y = 0.0f;
  float mass = 0.0f;

  __device__ __host__ Body(float m, float x, float y)
      : mass(m), pos_x(x), pos_y(y) {}

  __device__  __host__ Body(int index) { /* ... */ }  // parallel_new constructor

  __device__ void apply_force(Body* other) {
    if (other != this) {
      // To avoid race conditions: Update other instead of this.
      float dx = pos_x - other->pos_x;
      float dy = pos_y - other->pos_y;
      float dist = sqrt(dx*dx + dy*dy);
      float F = kGravityConstant * mass * other->mass / (dist * dist);
      other->force_x += F*dx / dist;
      other->force_y += F*dy / dist;
    }
  }

  __device__ void compute_force() {
    force_x = force_y = 0.0f;
    device_do<Body>(&Body::apply_force, this);
  }

  __device__ void update() {
    pos_x = pos_x + vel_x * kDt;
    pos_y = pos_y + vel_y * kDt;
  }
};

// Preallocate memory for 50 Body objects on the GPU.
<@\emph{IKRA\_DEVICE\_STORAGE}@>(Body);

int main() {
  // Create a few random Body objects on the host. Alternative: parallel_new.
  for (int i = 0; i < kNumBodies; ++i) {
    new Body(/*m=*/ rand_float(100.0f, 1000.0f),
             /*x=*/ rand_float(-1.0f, 1.0f), /*y=*/ rand_float(-1.0f, 1.0f));
  }

  for (int i = 0; i < kNumIterations; ++i) {
    parallel_do<Body, &Body::update>();
    parallel_do<Body, &Body::move>();
  }
}
\end{lstlisting}

\begin{lstlisting}[language=c++, caption={[\textsc{Ikra-Cpp}: Dynamic object allocation]Object creation in \textsc{Ikra-Cpp}}, label={lst:ikra_cpp_object_creation}, numbers=none]
parallel_new<Body>(49);  // Create 49 objects from host code
Body* b = new Body(100.0f, 0.0f, 0.0f);  // Create a new object from device/host code
delete b;  // Not supported yet. Later in this thesis...
Body* b2 = new Body(50.0f, 0.0f, 0.0f);  // Out of memory
\end{lstlisting}
\end{lstfloat}

\paragraph{Data Layout and Object Creation}
All objects of a type are stored in their respective statically-allocated storage array. Such a layout is called an \emph{Array of Structures} (AOS; Section~\ref{sec:o_aos_vs_soa}). \textsc{Ikra-Cpp} classes are not designed for heap allocation. This is in part because the default CUDA dynamic memory allocator is so slow that runtime heap allocation could severely reduce the performance of a GPU program, unless a custom memory allocator is used.

There are two ways of creating new objects in \textsc{Ikra-Cpp} (Listing~\ref{lst:ikra_cpp_object_creation}): With the C++ \texttt{new} keyword or with a \emph{parallel new} operation (next paragraph). In either case, new objects are stored inside the AOS storage buffer and not on the heap. Existing objects cannot be deleted, so we can create at most 50 objects in the example code before running out of memory. We will extend \textsc{Ikra-Cpp} with an efficient and fully fledged dynamic memory allocator in Chapter~\ref{sec:chapter_dynasoar}.

\paragraph{Expressing Parallelism}
\textsc{Ikra-Cpp} provides a simple API for expressing parallelism. This API allows programmers to run a member function for all objects of a type, which is the foundation of SMMO. Our API abstracts from CUDA implementation details, resulting in more compact and readable code. 

Programmers do not have to define kernels or write kernel invocation statements. A boolean flag \texttt{CUDA} controls whether an operation is executed on the device (GPU) or on the host (CPU). This flag is optional and omitted in all following examples. By default, parallel operations run where the data is located, as indicated by \texttt{IKRA\_DEVICE\_STORAGE} or \texttt{IKRA\_HOST\_STORAGE}.

\begin{itemize}
  \item \texttt{parallel\_do<CUDA, S, \&T::func>(args...)}: Runs a member function \texttt{T::func} for all objects of type $S$ that exist at invocation time, where $S <: T$. If \texttt{CUDA} is \texttt{true}, this operation spawns a GPU kernel and runs the the member function inside the kernel in parallel. Otherwise, the code executes on the host\footnote{Currently sequentially, but we could use OS threads in the future.}.
  \item \texttt{parallel\_do\_and\_reduce<CUDA, S, \&T::func, \&T::reducer>(args...)}: \\ Same as \texttt{parallel\_do}, but the return values of \texttt{T::func} are reduced into a single value by applying a binary, static function \texttt{T::reducer} over and over. Parallel reductions can be implemented efficiently in CUDA with shared memory~\cite{cuda_reduction_web}. This operation is useful for termination detection of iterative algorithms where the termination criteria depends on a property of multiple objects.
  \item \texttt{parallel\_new<CUDA, T>(n, args...)}: Instantiates $n$ objects of type $T$. This operation calls the constructor of $T$ in parallel with an object index (between $[0; n)$), followed by \texttt{args...}. $T$ must have a suitable constructor.
  \item \texttt{device\_do<S, \&T::func>(args...)}: Runs a member function \texttt{T::func} for all objects of type $S$ in the current CPU/GPU thread, where $S <: T$. This is a sequential \emph{for-each} loop. It is typically used inside of a parallel do-all for processing all pairs of objects (e.g., in n-body simulations). Whether objects created within the enclosing parallel do-all operation are enumerated is unspecified. 
\end{itemize}

Note that a parallel operation does not necessarily have to run where the data is located, as indicated by \texttt{IKRA\_*\_STORAGE}, as long as the function/constructor has the correct \texttt{\_\_device\_\_} and/or \texttt{\_\_host\_\_} qualifiers. \textsc{Ikra-Cpp} will take care of the necessary data transfers. The following four configurations are supported.

\begin{itemize}
  \item \textbf{Code on GPU, Data on GPU:} If the program is suitable for GPU execution, then this configuration achieves the best performance.
  \item \textbf{Code on CPU, Data on CPU:} This configuration is useful if no GPU is available and for debugging purposes.
  \item \textbf{Code on CPU, Data on GPU:} During our experiments, we found that it was often very convenient to run setup code (e.g., loading and parsing data from an external source and creating the necessary objects) on the host.
  \item \textbf{Code on GPU, Data on CPU:} This configuration may be useful for highly compute-bound applications that access almost no memory.
\end{itemize}

To reduce the amount of data transfers and to achieve good runtime performance, performance-critical code should in general run where the data is located.





\subsection{Implementation Details}
\textsc{Ikra-Cpp} is implemented entirely in C++. It does not need a separate compiler or preprocessor/code generator. \textsc{Ikra-Cpp} consists mainly of two preprocessor macros (\texttt{IKRA\_INITIALIZE\_CLASS} and \texttt{IKRA\_*\_STORAGE}) and API functions for running a constructor or running a member functions on all objects of a type.

\paragraph{Storage and Allocation}
The purpose of the two preprocessor macros is to declare an array that contains all objects of the class. In particular, \texttt{IKRA\_DEVICE\_STORAGE} in Listing~\ref{lst:sec_3_api_e_ikra_cpp} generates a variable that contain the array of structures and an object counter variable.

\smallskip
\begin{lstlisting}[language=c++, numbers=none, morekeywords={__device__}]
__device__ Body objects_Body[50];
__device__ int counter_Body = 0;
\end{lstlisting}
\smallskip

The purpose of \texttt{IKRA\_INITIALIZE\_CLASS} is mainly to overload the C++ \texttt{new} operator, such that newly instantiated objects are always stored in the previously generated array. Note that the counter variable must be incremented with an atomic operation because multiple threads may be simultaneously instantiating objects.

\smallskip
\begin{lstlisting}[language=c++, numbers=none, morekeywords={__device__, assert}]
__device__ void* Body::operator new() {
  assert(counter_Body < 50);
  return &objects_Body[atomicAdd(&counter_Body)];
}
\end{lstlisting}
\smallskip

\noindent In C++, object instantiation with the \texttt{new} keyword involves three steps.

\begin{enumerate}
  \item Allocate heap memory for the new object. This request is delegated to the system-wide heap allocator (\texttt{malloc}) or to an overloaded \texttt{operator new}.
  \item Zero-initialize the allocated memory. CUDA seems to omit this step.
  \item Run the constructor. This includes field initializers.
\end{enumerate}

Note that C++ operators can be overloaded for specific types. For example, the above source code listing overloads the \texttt{new} operator only for class \texttt{Body}, so other classes/structs are still heap-allocated with the system-wide memory allocator.

\paragraph{API Functions}
In the case of GPU execution, \texttt{parallel\_do} and \texttt{parallel\_new} variants launch a CUDA kernel with a configurable number of threads. By default, \textsc{Ikra-Cpp} launches 256 blocks with 256 threads per block. Inside the template-generated CUDA kernel, \textsc{Ikra-Cpp} processes objects with a grid-stride loop.

\texttt{parallel\_do} and \texttt{parallel\_new} are always called from host code. The functions \texttt{T::func} and \texttt{T::reducer} are device functions and passed to \texttt{parallel\_do} as function pointers. Taking the address of a device function in host code is not allowed. This is because the host code and device code are essentially two different programs. They are compiled separately. However, the address of a device function can be taken in host code if it is used to instantiate a template that resides in device code. When calling such a template function pointer, the (CUDA) compiler can fully inline the function~\cite{cpp_func_templtaes} (instead of generating a jump), because the template instantiation is specific to that function.

\paragraph{Memory Transfers}
If objects are accessed on a device that differs from the allocation device (e.g., accessed on CPU, allocated on GPU), \textsc{Ikra-Cpp} automatically performs the required memory transfers. Programmers have to set a flag to enable support for this. Instead of statically allocating a storage array, \textsc{Ikra-Cpp} then allocates the array at runtime in CUDA unified memory~\cite{nvidia_unified_cuda_unif}. This CUDA feature automatically performs the required memory transfers and was introduced with CUDA 6.

\subsection{Conclusion}
In this section, we presented a second, more restricted way of expressing GPU parallelism: By running a method in parallel on all objects of a type. This simple programming model (\emph{Single-Method Multiple-Objects}) is expressive enough for many important applications in high-performance computing (Chapter~\ref{chap:smmo_examples}) and the full range of \textsc{Ikra-Ruby} operations is often not necessary.

We designed and implemented a small C++/CUDA framework \textsc{Ikra-Cpp} for SMMO applications. At this point, \textsc{Ikra-Cpp} is not much more than a wrapper around kernel invocation statements. The purpose of this section is to make the reader familiar with the SMMO programming model. We will extend \textsc{Ikra-Cpp} with a data layout DSL (Section~\ref{sec:data_layout_dsls_ikracpp}) and a full dynamic memory allocator (Chapter~\ref{sec:chapter_dynasoar}) later in this thesis.

\section{Related Work}
\label{sec:express_related}
%
This section gives an overview of how GPU parallelism is expressed in other systems. Most systems fall into one of two categories: Parallel array interfaces or \emph{for} loop parallelization.

\subsection{Parallel Array/Tensor Interface}
There are many libraries for various programming languages that provide parallel array functions that execute on the GPU, similar to \textsc{Ikra-Ruby}. This section is not an exhaustive description of all of them, but gives an overview of a few major ones.

Ishizaki et al. developed a JIT compiler than translates lambda expressions of Java~8 bytecode to NVVM~IR, an LLVM~IR-based intermediate representation that can be compiled to NVIDIA PTX code~\cite{Ishizaki:2015:COJ:2923305.2923809}. Their compiler targets array-based computations that are expressed with the Java Streams API. Object-oriented programming is supported within lambda expressions and virtual method calls are devirtualized: Either directly, if the receiver type is found to be monomorphic at JIT compilation time, or based on runtime profiling with a guard, otherwise. If the guard fails, the lambda expression is executed on the host. Their compiler follows the same assumption as \textsc{Ikra-Ruby}, namely that GPU code is mostly monomorpic, so guards are expected to fail rarely.

Fumero et. al. also developed an array-based GPU library for Java~8~\cite{Fumero:2014:CAF:2627373.2627381}. Instead of targeting the Java Streams API, their JIT compiler comes with a custom Java array interface that provides functions that are similar to \textsc{Ikra-Ruby}. Their JIT compiler does not support object-oriented programming in GPU code, but parallel array operations can be chained, similar to \textsc{Ikra-Ruby}.

Fumero et al. designed also a GPU extension for the R programming language, a dynamically-typed language~\cite{Fumero:2017:JGC:3050748.3050761}. Their implementation is built on top of the Truffle AST interpreter framework~\cite{Wurthinger:2013:OVR:2509578.2509581} and the Graal JIT compiler. They use partial evaluation to generate optimized OpenCL code for hot code sections. In contrast to \textsc{Ikra-Ruby}, the resulting OpenCL code can handle only monomorphic types, whereas \textsc{Ikra-Ruby} generates a single CUDA program with union types that can handle all types which could theoretically show up during runtime. Consequently, their generated OpenCL code is more efficient, but requires recompilation if the runtime types are changing.

Lime is an object-oriented programming language for task-based and array-based parallelism~\cite{Dubach:2012:CHL:2254064.2254066}. The Lime compiler decides automatically which code should be executed on the device and which code should be executed on the host. In the former case, the compiler generates OpenCL code and in the latter case, the compiler generates a mixture of Java bytecodes.

Delite is a Scala framework for building and executing implicitly parallel DSLs~\cite{Chafi:2011:DAH:1941553.1941561}. Parallel operations form a computation graph of \emph{Delite ops}, which are similar to array commands in \textsc{Ikra-Ruby}. Delite schedules the execution of the computation graph. Independent parts of the computation graph can be executed in parallel. Delite also provides abstractions for defining data-parallel computations.

Firepile is a Scala library for array-based GPU programming~\cite{Nystrom:2011:FRC:2047862.2047883}. Firepile provides a parallel array class whose functional operations are executed in parallel on the GPU. Similar to \textsc{Ikra-Ruby}, the Firepile compiler is usually invoked right before a kernel run (just-in-time). Object-oriented programming is supported in parallel GPU code. Similar to \textsc{Ikra-Ruby}, Firepile uses union types to represent polymorphic types and devirtualizes virtual method calls with switch-case statements.

TensorFlow is a machine learning framework developed by Google~\cite{Abadi:2016:TSL:3026877.3026899}. Programmers build a computation graph of linear algebra nodes with a Python or C++ frontend. This graph is then scheduled to execute on one or multiple devices such as CPUs, GPUs or TPUs (tensor processing units). In contrast to \textsc{Ikra-Cpp}, TensorFlow operations cannot be customized with code. There are a variety of machine learning frameworks that follow a similar design.

\subsection{For-Loop Parallelization}
There are a variety of systems that accelerate programs with GPUs by finding and offloading loops to the GPU. MegaGuards is a Python framework that transparently compiles and offloads compute-intensive loop to the GPU~\cite{qunaibit_et_al:LIPIcs:2018:9221}. It detects parallelizable loops with polyhedral optimization techniques. MegaGuards avoids runtime type checks (guards) inside GPU code with a type stability analysis. If positive, all type checks within GPU code are replaced by a single, larger guard before the loop.

There are also parallelization frameworks that let programmers manually annotate \emph{for} loops for acceleration on GPUs. For example, OpenACC allows programmers to annotate C/C++ loops with pragmas that trigger GPU code generation~\cite{openacc_cuda}.

JaMP is a Java extension with OpenMP-style directives for parallel \emph{for} loops~\cite{6550592}. It supports object-oriented programming within parallel \emph{for} loops and generates a C struct type for every Java class. Instead of referring to referenced objects with pointers, objects are fully inlined into the C struct. JaMP distinguishes between shared objects, which can be accessed by any thread and are replicated on all devices, and managed arrays, which are partitioned among all devices.

Concord is a heterogeneous C++ programming framework for running irregular application code on integrated GPUs~\cite{Barik:2014:EMI:2581122.2544165}. It provides two abstractions for expressing parallelism: A parallel \emph{for} loop and a parallel reduction loop. Concord is implemented with Clang and LLVM and generates OpenCL code for parallel loop bodies. Most C++ features, including virtual function calls and multiple inheritance, are supported in parallel GPU code. Virtual function calls are implemented with vtable pointers, as usual in C++. Concord also provides a software-based shared virtual memory, so that pointers can be shared between GPU code and CPU code. Concord's main selling point is improved energy efficiency of application code over a CPU-only implementation.

%
%

\chapter{Optimizing Memory Access}
\label{sec:theis_opt_mem_access}
Memory access is the one of the biggest bottlenecks in many GPU applications. In this chapter, we describe two memory access optimizations for \textsc{Ikra-Ruby} and \textsc{Ikra-Cpp}: \emph{Kernel fusion} and a data layout DSL for the \emph{Structure of Arrays} (SOA) layout. These optimizations can improve memory bandwidth utilization and cache performance, and thus increase the overall application performance.

\minitoc

\paragraph{Outline}
This chapter is organized as follows. Section~\ref{sec:kernel_fusion_in_ikra_ruby_sec4} describes how we extended \textsc{Ikra-Ruby} with kernel fusion. Section~\ref{sec:data_layout_dsls_ikracpp} describes the design and implementation of an embedded SOA data layout DSL for \textsc{Ikra-Cpp}. This DSL makes SOA easier to use for CUDA/C++ programmers. Section~\ref{sec:inner_arrays_in_soa_papaer} presents an extension of SOA to inner array types and its implementation in \textsc{Ikra-Cpp}. Finally, Section~\ref{sec:summary_chap4} concludes this chapter.

\paragraph{Publications}
This chapter is in part based on the following papers.
\begin{itemize}
  \item Matthias Springer, Peter Wauligmann, Hidehiko Masuhara. \textbf{``Modular Array-Based GPU Computing in a Dynamically-Typed Language.''} In: \emph{Proceedings of the 4th ACM SIGPLAN International Workshop on Libraries, Languages, and Compilers for Array Programming.} ARRAY 2017. ACM, 2017, pp.~48--45. \texttt{\doi{10.1145/3091966.3091974}}
  \item Matthias Springer, Hidehiko Masuhara. \textbf{``Ikra-Cpp: A C++/CUDA DSL for Object-Oriented Programming with Structure-of-Arrays Layout.''} In: \emph{Proceedings of the 4th Workshop on Programming Models for SIMD/Vector Processing.} WPMVP 2018. ACM, 2018, pp.~1--9. \texttt{\doi{10.1145/3178433.3178439}}
  \item Matthias Springer, Yaozhu Sun, Hidehiko Masuhara. \textbf{``Inner Array Inlining for Structure of Arrays Layout.''} In: \emph{Proceedings of the 5th ACM SIGPLAN International Workshop on Libraries, Languages, and Compilers for Array Programming.} ARRAY 2018. ACM, 2018, pp.~50--58. \texttt{\doi{10.1145/3219753.3219760}}
\end{itemize}

\section{Kernel Fusion in \textsc{Ikra-Ruby}}
\label{sec:kernel_fusion_in_ikra_ruby_sec4}
Unoptimized \textsc{Ikra-Ruby} programs suffer from two types of slowdowns. \textsc{Ikra-Ruby} optimizes GPU programs using two techniques that are well-known in statically-typed languages but not in dynamically-typed languages such as Ruby.

\begin{description}
  \item[Global Memory Access] \textsc{Ikra-Ruby} encourages a programming style where a GPU application is composed of many small parallel operations. The image manipulation library of Section~\ref{sec:example_img} illustrates this programming style. A naive compilation strategy compiles each parallel operation into a separate CUDA kernel. This is problematic because the intermediate results of each GPU kernel must be loaded from/stored in global memory. To eliminate this slowdown, \textsc{Ikra-Ruby} \emph{fuses} multiple parallel operations into one large CUDA kernels, such that intermediate results can remain in GPU registers. Kernel fusion is a common and well-studied GPU optimization~\cite{Wahib:2014:SKF:2683593.2683615, Filipovic2015, Sato2009}.
  \item[Ruby Interpreter Loops] Many GPU applications are iterative applications of parallel operations. For example, the \emph{Himeno} benchmark~\cite{5470394} is an iterative application of a stencil operation. If the loops of iterative applications are executed in the Ruby interpreter, the overall program performance can suffer for two reasons. First, the Ruby interpreter is much slower than C++ code. Second, every time a parallel operation executed in Ruby, it is symbolically executed, which incurs some overheads even if the same parallel operation was seen before. To eliminate such slowdowns, we introduce the concept of a \emph{host section}. A host section is a block of Ruby code that contains a more complex program (typically with a loop) with multiple parallel operations. In such a case, the entire block is translated to C++ code, avoiding switching from the Ruby interpreter to external C++ programs multiple times.
\end{description}

Microbenchmarks show that both techniques together achieve performance that is comparable to hand-written CUDA code. As the main contribution of this section, we show how kernel fusion can be implemented as part of the type inference process of host section code.


\subsection{Kernel Fusion}
All array commands except for \emph{index} and \emph{array identity} have at least one input array command (\texttt{input} in Figure~\ref{fig:integration}). E.g., the inputs of a \textit{combine} operation are the array commands that are being mapped over. During code generation, \textsc{Ikra-Ruby} traverses the tree of dependent (input) commands (\emph{computation graph}). Depending on the access pattern of the dependent commands, \textsc{Ikra-Ruby} may fuse multiple array commands into a single CUDA kernel. This can increase the performance of the generated CUDA code, because intermediate results can remain in GPU registers and do not have to be written back to global memory.

\begin{lstfloat}
\begin{lstlisting}[language=Ruby, caption={[Example program for \textsc{Ikra-Ruby} kernel fusion]Example program for kernel fusion}, label={lst:ruby_kf_example_code}]
r0 = (0...1000).to_a
r1 = r0.pmap do |x| 2 * x end
r2 = r1.pmap do |x| x + 1 end
r2.to_a
\end{lstlisting}

\begin{lstlisting}[language=c++, morekeywords={__global__}, caption={Generated CUDA code without kernel fusion}, label={lst:without_kernel_fusion_code}]
__global__ void kernel_1(float* input, float* output) {
  for (unsigned int i = threadIdx.x + blockIdx.x * blockDim.x;
       i < 1000; i += blockDim.x * gridDim.x) { output[i] = 2 * input[i]; }
}

__global__ void kernel_2(float* input, float* output) {
  for (unsigned int i = threadIdx.x + blockIdx.x * blockDim.x;
       i < 1000; i += blockDim.x * gridDim.x) { output[i] = input[i] + 1; }
}

void run_gpu_program(float* input, float* output) {
  float *d_data1, *d_data2;
  cudaMalloc(&d_data1, sizeof(float) * 1000);
  cudaMalloc(&d_data2, sizeof(float) * 1000);
  cudaMemcpy(d_data1, input, sizeof(float) * 1000, cudaMemcpyHostToDevice);

  // Read input from d_data1, store temp. result in d_data2.
  kernel_1<<<256, 256>>>(d_data1, d_data2);
  cudaDeviceSynchronize();
  // Read temp. result from d_data2, store final result in d_data1.
  kernel_2<<<256, 256>>>(d_data2, d_data1);
  cudaDeviceSynchronize();

  cudaMemcpy(output, d_data1, sizeof(float) * 1000, cudaMemcpyHostToDevice);
}
\end{lstlisting}

\begin{lstlisting}[language=c++, morekeywords={__global__}, caption={Generated CUDA code with kernel fusion}, label={lst:code_with_kernel_fus}]
__global__ void kernel_fused(float* input, float* output) {
  for (unsigned int i = threadIdx.x + blockIdx.x * blockDim.x;
       i < 1000; i += blockDim.x * gridDim.x) {
    float tmp_result_1 = 2 * input[i];
    output[i] = tmp_result_1 + 1;
  }
}

void run_gpu_program(float* input, float* output) {
  float *d_data1, *d_data2;
  cudaMalloc(&d_data1, sizeof(float) * 1000);
  cudaMalloc(&d_data2, sizeof(float) * 1000);
  cudaMemcpy(d_data1, input, sizeof(float) * 1000, cudaMemcpyHostToDevice);

  // Read input from d_data1, store final result in d_data2.
  kernel_fused<<<256, 256>>>(d_data1, d_data2);
  cudaDeviceSynchronize();

  cudaMemcpy(output, d_data2, sizeof(float) * 1000, cudaMemcpyHostToDevice);
}
\end{lstlisting}
\end{lstfloat}

Consider the \textsc{Ikra-Cpp} code in Listing~\ref{lst:ruby_kf_example_code} as an example. This listing contains two chained parallel map operations. Without kernel fusion (Listing~\ref{lst:without_kernel_fusion_code}), the first kernel stores the temporary result of \texttt{r1} in global memory and the second kernel loads the temporary result of \texttt{r1} from global memory. These two loads/stores can be eliminated with kernel fusion (Listing~\ref{lst:code_with_kernel_fus}).

\begin{table}[!htp]
\caption[\textsc{Ikra-Ruby}: Input access patterns of array commands]{Input access patterns of array commands}
\label{fig:access_pattern_ac}
\begin{tabularx}{\columnwidth}{l|X}
\hline\hline
\textbf{Command} & \textbf{Input Access Pattern} \\
\hline
\textit{combine} & same location (for all inputs) \\
\textit{stencil} & multiple (fixed pattern) \\
\textit{reduce} & multiple \\
\textit{zip} & same location (for all inputs) \\
(\texttt{with\_index}) & no input, but always accessed as ``same location'' \\
\hline\hline
\end{tabularx}
\end{table}

Table~\ref{fig:access_pattern_ac} lists access patterns for dependent computations for all array commands. ``Same location'' means that for the computation of the element at position $i$ only the element at the same position in the dependent command(s) is required. For example, to calculate element 12 in \emph{map}, only element 12 from the input is required. ``Multiple'' means that an array command needs multiple elements from the dependent command(s). For example, a stencil computation requires an entire neighborhood of values from the input.

\textsc{Ikra-Ruby} can currently merge dependent computations only if the access pattern is ``same location''. In that case, one thread can first compute the dependent operation and then directly proceed with the following computation without any synchronization (Listing~\ref{lst:code_with_kernel_fus}).

In this section, we focus on the process of identifying \emph{which} parallel operations should be merged into a CUDA kernel. We omit implementation details of the compilation process, e.g., \emph{how exactly} the code of one parallel operation is inlined into another parallel operation and how C++/CUDA code is generated from an AST.

\paragraph{Example}
Figure~\ref{fig:ex_kernel_fusion} shows an example. White boxes indicate parallel operations and gray boxes indicate fused CUDA kernels. The leftmost gray box corresponds to Lines~2--3. Those operations are fused together because the input access pattern for \textit{combine}/\textit{map} is ``same location''\footnote{\textit{id} is added implicitly when programmers use Ruby arrays.}. The first input of the stencil computation cannot be fused because its access pattern is ``multiple''. The \textit{index} input can be fused because input generated by \texttt{with\_index} is always accessed as ``same location''. 

\newsavebox{\fusionexlst}
\begin{lrbox}{\fusionexlst}
\begin{lstlisting}[linewidth=\textwidth,xleftmargin=0.325cm]
<@\textcolor{blue}{A1}@> = [1, 2, 3]; <@\textcolor{blue}{A2}@> = [10, 20, 30]               <@\hfill \textcolor{gray}{\# Ruby arrays}@>
a = <@\textcolor{blue}{A1}@>.pmap.with_index do |e, idx| ... end      <@\hfill \textcolor{gray}{\# combine}@>
b = a.pcombine(<@\textcolor{blue}{A2}@>) do |e1, e2| ... end          <@\hfill \textcolor{gray}{\# combine}@>
c = b.pstencil([-1, 0, 1], 0).with_index do |values, idx| ... end <@\hfill \textcolor{gray}{\# stencil}@>
d = c.preduce do |r1, r2| ... end               <@\hfill \textcolor{gray}{\# reduce}@>
\end{lstlisting}
\end{lrbox}

\begin{figure}
    \usebox{\fusionexlst}\\
    \begin{center}
    \includegraphics[width=0.7\textwidth]{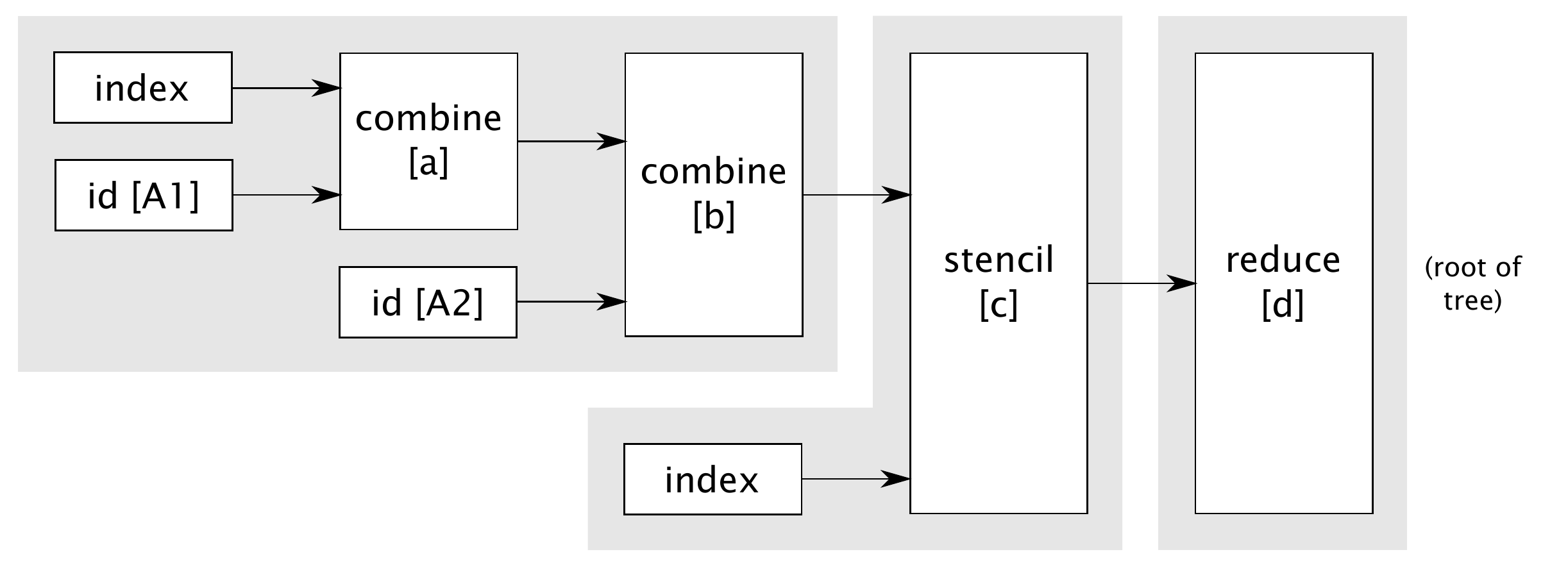}
    \end{center}
    \caption[Example: Kernel fusion of \textsc{Ikra-Ruby} operations]{Example: Kernel fusion result. Gray boxes are kernels.}
    \label{fig:ex_kernel_fusion}
\end{figure}

\subsection{Host Sections}

Many scientific computations (e.g., numerical partial differential equations) exhibit an iterative structure where an array or matrix is updated for a fixed number of times or until convergence. The example code in Listing~\ref{fig:iterative_host_ex_new} does not compute anything meaningful but illustrates how to write such computations in \textsc{Ikra-Ruby}. The loop is enclosed in a host section, a code block that is compiled to C++ and executed on the host, as opposed to \emph{parallel operations} which are executed on the device. The value of the last statement of a host section is the return value of the host section. Inside host sections, only \emph{simple} Ruby code may be written: Everything that is allowed inside a parallel operation plus parallel operations themselves. More advanced Ruby features (such as metaprogramming) are forbidden.

\begin{lstfloat}
\begin{lstlisting}[caption={[Example: Iterative \textsc{Ikra-Ruby} computation in host section]Example: Iterative computation in host section}, label={fig:iterative_host_ex_new}]
input = [10, 20, 30, 40, 50, 60]
result = <@\textcolor{blue}{Ikra}@>.host_section do
  arr = input.to_command(dimensions: [2, 3])
  for i in 0...10
    if arr.preduce(:+)[0] % 2 == 0
      arr = arr.pmap do |i| i + 1; end <@\hfill \textcolor{gray}{\textit{\# $\mathit{map}_A$}}@>
    else
      arr = arr.pmap do |i| i + 2; end <@\hfill \textcolor{gray}{\textit{\# $\mathit{map}_B$}}@>
    end
    arr = arr.pmap do |i| i + 3; end <@\hfill \textcolor{gray}{\textit{\# $\mathit{map}_C$}}@>
  end
  arr
end
\end{lstlisting}
\end{lstfloat}

One C++/CUDA program is generated for the code in Listing~\ref{fig:iterative_host_ex_new}. That program contains a C++ function for the host section and multiple CUDA kernels. As part of code generation, control flow statements inside host sections are executed symbolically as opposed to control flow statements outside of host sections, which are executed by the Ruby interpreter.

Host sections are translated with a conservative kind of ahead-of-time compilation and kernel fusion technique: In the above example, it is not clear until runtime if the parallel operation in Line~10 will be executed together with the one in Line~6 ($\mathit{map}_A$ + $\mathit{map}_C$) or the one in Line~8 ($\mathit{map}_B$ + $\mathit{map}_C$), or both in different iterations. Therefore, \textsc{Ikra-Ruby} generates both fused kernel variants and launches the appropriate one at runtime.

\subsection{Symbolic Execution in Host Sections}
\label{sec:hs_symbolic}
Host sections are pieces of Ruby code that are entirely translated to C++ code. They may contain one or more parallel operations but no advanced language features like metaprogramming. The compilation process of host sections is identical to the one of parallel operations, but there are additional steps to handle parallel operations within them. Since no code generation can be done once a host section (C++ code) is executing, fused kernels must be generated up front. 

\textsc{Ikra-Ruby} statically analyzes all code paths through a host section with parallel operations and generates a number of fused kernels, even some that might never be used at runtime. At runtime, \textsc{Ikra-Ruby} keeps track of which parallel operations were executed symbolically and eventually launches a fused kernel, which may contain multiple parallel operations, when the result is accessed.

\paragraph{Kernel Fusion via Type Inference}
Within host sections, there are additional type inference rules to handle parallel operations. \textsc{Ikra-Ruby} performs kernel fusion through type inference: The type of a parallel operation is the array command object that it evaluates (symbolically) to in the Ruby interpreter.

\begin{lstfloat}
\begin{lstlisting}[caption={Example: \textsc{Ikra-Ruby} type inference}, label={lst:ikra_ruby_type_inf_ex}]
a = 10  # type(a) = Int

b = Array.pnew(a) do |i| 2.5 + i; end
# Eval Ruby: Array.pnew(CodeRef.new(:a)) do ... end
# => ArrayCombineCommand instance
# type(b) = ArrayCombineCommand<@$_3$@>[Float, (ArrayIndexCommand[Float, <@$\emptyset$@>])]

c = b.pmap do |i| 0.5 + i; end
# Eval Ruby: type(b).pmap do ... end
# => (another) ArrayCombineCommand instance
# type(c) = ArrayCombineCommand<@$_8$@>[Float, (
# <@\,\,\,\,\,\,\,\,@> ArrayCombineCommand<@$_3$@>[Float, (ArrayIndexCommand[Float, <@$\emptyset$@>])])]

d = Array.pnew(a) do |i| 2.5 + i; end
# Eval Ruby: Array.pnew(CodeRef.new(:a)) do ... end
# => (yet another) ArrayCombineCommand instance
# type(d) = ArrayCombineCommand<@$_{14}$@>[Float, (ArrayIndexCommand[Float, <@$\emptyset$@>])]
\end{lstlisting}
\end{lstfloat}

Listing~\ref{lst:ikra_ruby_type_inf_ex} illustrates the type inference rules with an example. The type of variable \texttt{a} is \texttt{Int}. The type of variable \texttt{b} is an \texttt{ArrayCombineCommand} object (Figure~\ref{fig:integration}). This is because a parallel \emph{new} operation is implemented as a \emph{map}/\emph{combine} operation wrapped around an \emph{index} operation (see Section~\ref{sec:parallel_operations}). The type of variable \texttt{c} is another \texttt{ArrayCombineCommand} object.

All array commands with a code block have a subscript that identifies the block. In the above example, code blocks are referred to with line numbers. Therefore, even though the types of \texttt{b} and \texttt{d} look similar, they are actually different types. For presentation reasons, we only account for array command base types (\texttt{Float} in all types in the above example), code blocks and computation structure (i.e., the nesting of array commands) in this section and omit other properties of array command types such as array dimensions or out-of-bounds values of stencil operations.

Notice that \texttt{ArrayCommand} objects are not only used for code generation but also serve as type representations/objects in our type inference engine. After type inference, \textsc{Ikra-Ruby} generates a CUDA kernel for each array command type that is \emph{executed}. An expression of array command type is executed when its result is accessed, via either Ruby method \texttt{to\_a} or array subscript notation (\texttt{[]}).


\paragraph{Compilation Overview}
A host section is translated to C++/CUDA as shown below. This process is similar to symbolic execution of parallel operations in the Ruby interpreter (see Section~\ref{sec:express_parallel_arr_int}). However, since host sections are entirely translated to C++ code, parallel operations are now symbolically executed in C++ instead of in the Ruby interpreter.

\begin{enumerate}
    \item Retrieve the Ruby source code of the host section.
    \item Generate an AST (abstract syntax tree) of the source code.
    \item Convert the AST to SSA (static single assignment) form.
    \item Insert a \texttt{to\_a} method call on last expression.
    \item Eliminate expressions with circular types by inserting \texttt{to\_a} method calls.
    \item Perform type inference. (This step also performs kernel fusion.)
    \item Generate C++ source code for the host section and CUDA source code for all array commands on which \texttt{to\_a} or \texttt{[]} is called.
\end{enumerate}

The SSA form simplifies type inference: If a variable is written a second time with a value of different type, a new C++ variable is allocated and a union type can sometimes be avoided. The return value (last expression) of a host section must be an array command or an array. In the former case, to ensure that an array command is executed, \textsc{Ikra-Ruby} wraps the last expression in a \texttt{to\_a} method call. That method does not have any effect for arrays but triggers array command execution. The next two steps, eliminating circular types and type inference, are described in detail in the following paragraphs.

\subsection{Type Inference}
Ruby is a dynamically typed language. Since C++ and CUDA are statically typed languages, we represent polymorphic expressions with union types (also known as \emph{sum types}~\cite{Pierce:2002:TPL:509043}).

For simplicity, we assume that all expressions are monomorphic in this section. Therefore, our types and typing rules do not account for union types. Furthermore, we only describe the typing rules that involve array commands. In addition, we assume that arrays are explicitly wrapped (\emph{boxed}) in array commands before a parallel operation is invoked on them.


\paragraph{Types}
Figure~\ref{fig:ikra_ruby_list_of_types} gives a simplified overview of the types in our system that we need for fusion of array commands. An expression can be of primitive type (\texttt{TPrim}), of array type with a certain primitive base type (\texttt{TArray}) or of array command type (\texttt{TArrayCommand}). There are multiple subtypes of \texttt{ArrayCommand}, depending on the type of the operation (Figure~\ref{fig:integration}).

Array command types have two parameters. The first parameter is the base type of the array command, similar to the base type of an array. The second parameter is a list of all dependent (input) array command types (\texttt{AcInput}). Similar to C++ expression templates~\cite{Veldhuizen:1996:ET:260627.260749}, an array command type in \textsc{Ikra-Ruby} encodes the structure of a computation in its type. The code generator can use this information to apply additional code optimizations, kernel fusion in the case of \textsc{Ikra-Ruby}.

\paragraph{Typing Rules}
To perform kernel fusion, our type inference engine captures method calls whose receiver types are a subclass of \texttt{ArrayCommand}. The type of such a method call is the result of the evaluation of that method call in the Ruby interpreter, where the receiver and all arguments are replaced with their respective types (\textsc{T-ParOp}). The rule \textsc{T-ParOp} can be broken down into one rule per operation type: \textsc{T-ParMap}, \textsc{T-ParCombine} and rules for other operation types that we omit here.

As an example, consider \textsc{T-ParCombine}. This typing rule captures \texttt{pcombine} method calls on receivers with \texttt{ArrayCommand} subtype. If this method call has $n$ arguments (excluding the code block), then the code block \texttt{b} must be a function taking $n+1$ arguments: The type of the first argument is the base type of the receiver array command. The types of the following $n$ arguments are the base types of the argument array command types. The resulting type is an \texttt{ArrayCombineCommand} with a list of the receiver type and all argument types as the second type parameter, encoding the structure of the computation.

Before programmers can run a parallel operation on an array, they must wrap it in an array command, as described above (\textsc{T-Box}). Similarly, an array command can be converted into an array, which triggers execution on the GPU (\textsc{T-UnBox}).

\begin{figure}
\begin{align*}
  \texttt{TPrim} ::= & \,\, \texttt{Bool} \left.\right| \texttt{Float} \left.\right| \texttt{Int}\\
 \\
  \texttt{TArray} ::= & \,\, \texttt{Array}[\texttt{TPrim}] \\
 \\
  \texttt{TArrayCommand} ::= & \,\, \texttt{ArrayCommand}_\mathit{loc}[\texttt{TPrim}, \texttt{AcInput}] \\
                             & \,\, \left.\right| \texttt{ArrayIdentityCommand}[\texttt{TPrim}, \emptyset] \\
                             & \,\, \left.\right| \texttt{ArrayIndexCommand}[\texttt{Int}, \emptyset] \\
                             & \,\, \left.\right| \texttt{ArrayCombineCommand}_\mathit{loc}[\texttt{TPrim}, \texttt{AcInput}] \\
                             & \,\, \left.\right| \texttt{ArrayStencilCommand}_\mathit{loc}[\texttt{TPrim}, (\texttt{TArrayCommand})] \\
                             & \,\, \left.\right| \texttt{ArrayReduceCommand}_\mathit{loc}[\texttt{TPrim}, (\texttt{TArrayCommand})] \\
  \\
  \texttt{AcInput} ::= & \,\, \emptyset \left.\right| \texttt{TArrayCommand} \circ \texttt{AcInput} \\
  \\
  \texttt{T} ::= & \,\, \texttt{TPrim} \left.\right| \texttt{TArray} \left.\right| \texttt{TArrayCommand}
\end{align*}
\vspace{-0.5cm}
\begin{center}
  \textit{(types omitted for classes, zip types, nil and arrays with non-primitive base types)}
\end{center}
\caption[\textsc{Ikra-Ruby}: List of types]{List of types}
\label{fig:ikra_ruby_list_of_types}

\vspace{1cm}

\begin{equation}
\inferrule
  {\Gamma \vdash e_i : T_i \and T_i <: \texttt{ArrayCommand}_*[*, *]}
  {\Gamma \vdash \texttt{$e_0$.m($e_1$, \ldots, $e_n$, \&b)} : \mathit{eval}(\texttt{$T_0$.m($T_1$, \ldots, $T_n$, \&b)})} \tag{\textsc{T-ParOp}}
\end{equation}

\begin{equation}
\inferrule
  {\Gamma \vdash e_0 : T_0 \and T_0 <: \texttt{ArrayCommand}_*[B_0, *] \and \Gamma \vdash \texttt{b} : B_0 \rightarrow R}
  {\Gamma \vdash \texttt{$e_0$.pmap(\&b)} : \texttt{ArrayCombineCommand}_{\mathit{loc}(\texttt{b})}[R, (T_0)]} \tag{\textsc{T-PMap}}
\end{equation}

\begin{equation}
\inferrule
  {\Gamma \vdash e_i : T_i \and T_i <: \texttt{ArrayCommand}_*[B_i, *] \and \Gamma \vdash \texttt{b} : B_0 \rightarrow B_1 \rightarrow \ldots \rightarrow B_n \rightarrow R}
  {\Gamma \vdash \texttt{$e_0$.pcombine($e_1$, \ldots, $e_n$, \&b)} : \texttt{ArrayCombineCommand}_{\mathit{loc}(\texttt{b})}[R, (T_0, T_1, \ldots, T_n)]} \tag{\textsc{T-PCombine}}
\end{equation}

\begin{center}
  \textit{(rules omitted for \texttt{pnew}, \texttt{preduce}, \texttt{pstencil} and \texttt{pzip})}
\end{center}

\begin{equation}
\inferrule
  {\Gamma \vdash e_0 : \texttt{Array}[B]}
  {\Gamma \vdash \texttt{$e_0$.to\_command()} : \texttt{ArrayIdentityCommand}[B, \emptyset]} \tag{\textsc{T-Box}}
\end{equation}

\begin{equation}
\inferrule
  {\Gamma \vdash e_0 : T_0 \and T_0 <: \texttt{ArrayCommand}_*[B, *]}
  {\Gamma \vdash \texttt{$e_0$.to\_a()} : \texttt{Array}[B]} \tag{\textsc{T-UnBox}}
\end{equation}
\caption[\textsc{Ikra-Ruby}: Typing rules]{Typing rules}
\label{fig:ikra_ruby_typing_rules}
\end{figure}

\paragraph{Breaking Circular Types}
The type of an array command encodes the structure of its computation, so an array command type effectively encodes a data flow path through the program. If there are multiple possible paths (one of which is chosen at runtime), our type inference system uses a union type that contains all paths. This is problematic with loops or recursion. Since the number of loop iterations is not generally known ahead of time, the union type would grow infinitely, because every additional loop iteration adds another path to the union type.

\begin{lstfloat}
\begin{lstlisting}[caption={[\textsc{Ikra-Ruby}: Circular union type]Circular union type}, label={lst:circ_union_type}]
arr = [1, 2, 3].to_command

for i in 1...100 do
  arr = arr.pmap do |x|
    x + 1
  end
end

result = arr.to_a
\end{lstlisting}
\end{lstfloat}

Consider Listing~\ref{lst:circ_union_type} as an example. After Line~1, the type of variable \texttt{arr} is \texttt{ArrayIdentityCommand[Int, $\emptyset$]}. After the first loop iteration, the type of \texttt{arr} is \texttt{ArrayCombineCommand$_4$[Int, ArrayIdentityCommand[Int, $\emptyset$]]}. Every additional loop iteration wraps the type in another \texttt{ArrayCombineCommand}$_4$.

Since the number of loop iterations is generally unknown until runtime, the type of \texttt{arr} after the loop is an infinite union type with one type for each possible number of loop iterations.

\begin{align*}
  \emph{type}(\texttt{arr}) = \{ & \,\, \mathit{id}, \\
                                 & \,\, \mathit{map}_4[\mathit{id}], \\
                                 & \,\, \mathit{map}_4[\mathit{map}_4(\mathit{id}]], \\
                                 & \,\, \ldots, \\
                                 & \,\, \mathit{map}_4[\mathit{map}_4[ \ldots \mathit{map}_4[\mathit{id}] \ldots ]] \,\, \}
\end{align*}

We abbreviate \texttt{ArrayCombineCommand} with \emph{map} and \texttt{ArrayIdentityCommand} with \emph{id} in the above formula for presentation reasons. Furthermore, we omit base types.

The infinitely large type of \texttt{arr} is a problem. Line~9 of the source code accesses the result of  \texttt{arr} and will trigger CUDA kernel execution. Since \texttt{arr} is a union type, this will translate to a switch-case statement with one case per type. Our type inference engine avoids such circular types by changing the source code of the program. The type $T$ of an expression is \emph{circular} if $T$ is included as a dependent array command of $T$. Not necessarily as a directly dependent array command, but maybe somewhere nested deep inside $T$. For example, given the type $T=\mathit{map}_4[\mathit{id}]$, another iteration of the loop yields the type $T' = \mathit{map}_4[\mathit{map}_4[\mathit{id}]] = \mathit{map}_4[T]$, so \texttt{arr} has a circular type.

Whenever a circular type is detected, we insert two chained method calls \texttt{to\_a()} and \texttt{to\_command()}, which execute the CUDA kernel and reset the type (Listing~\ref{lst:circ_union_type_elim}).

\begin{lstfloat}
\begin{lstlisting}[caption={[\textsc{Ikra-Ruby}: Eliminated circular union type]Eliminated circular union type}, label={lst:circ_union_type_elim}]
arr = [1, 2, 3].to_command

for i in 1...100 do
  arr = arr.pmap do |x|
    x + 1
  end <@\fbox{\texttt{.to\_a.to\_command}}@>
end

result = arr.to_a
\end{lstlisting}
\end{lstfloat}

The elimination of circular types is baked into our type inference engine and we do not describe it in more detail in this section. There may be multiple ways of eliminating a circular type. Where exactly we insert the two chained method calls is an implementation details and not important. What is important is that the type of \texttt{arr} after the loop of Listing~\ref{lst:circ_union_type_elim} is now no longer circular.

\begin{align*}
  \emph{type}(\texttt{arr}) = \{ & \,\, \mathit{id}, \mathit{map}_4[\mathit{id}] \,\, \}
\end{align*}

\paragraph{Type Inference by Example}
To illustrate type inference in a larger example, consider the host section source code of Listing~\ref{lst:hs_ssa_form_ex}, which is identical to Listing~\ref{fig:iterative_host_ex_new}, but in SSA form and with a \texttt{to\_a} method call at the end.

\begin{lstfloat}
\begin{lstlisting}[caption={[Example: Iterative \textsc{Ikra-Ruby} computation in SSA form]Example: Iterative \textsc{Ikra-Ruby} computation in host section in SSA form}, label={lst:hs_ssa_form_ex}]
result = Ikra.host_section do
    arr<@$_1$@> = input.to_command(dimensions: [2, 3])
    for i in 0...10
        arr<@$_2$@> = <@$\phi$@>(arr<@$_1$@>, arr<@$_6$@>)
        if arr<@$_2$@>.preduce(:+)[0] % 2 == 0
            arr<@$_3$@> = arr<@$_2$@>.pmap do |i| i+1; end <@\hfill \textcolor{gray}{\textit{\# $\mathit{map}_A$}}@>
        else
            arr<@$_4$@> = arr<@$_2$@>.pmap do |i| i+2; end <@\hfill \textcolor{gray}{\textit{\# $\mathit{map}_B$}}@>
        end
        arr<@$_5$@> = <@$\phi$@>(arr<@$_3$@>, arr<@$_4$@>)
        arr<@$_6$@> = arr<@$_5$@>.pmap do |i| i+3; end <@\hfill \textcolor{gray}{\textit{\# $\mathit{map}_C$}}@>
    end
    arr<@$_7$@> = <@$\phi$@>(arr<@$_1$@>, arr<@$_6$@>)
    arr<@$_7$@>.to_a
end
\end{lstlisting}
\end{lstfloat}

In the following paragraphs, we take a look at the inferred types of all \texttt{arr}$_i$ variables. \texttt{arr}$_1$ is an identity command for the Ruby array \texttt{input} and \texttt{arr}$_2$ cannot be fully inferred yet because \texttt{arr}$_6$ is still unknown.

\begin{equation*}
\begin{split}
\mathit{type}(\texttt{arr}_1) & = \mathit{id} \\
\mathit{type}(\texttt{arr}_2) & = \{ \,\, \mathit{type}(\texttt{arr}_1), \mathit{type}(\texttt{arr}_6) \,\, \} = \{ \,\, \mathit{id}, \mathit{type}(\texttt{arr}_6) \,\, \}
\end{split}
\end{equation*}

Next, we infer the types for the first two \textit{map} operations. Different subscripts of \textit{map} operations indicate that the operations are different array commands.

\begin{equation*}
\begin{split}
\mathit{type}(\texttt{arr}_3) = & \, \mathit{map}_A[\mathit{type}(\texttt{arr}_2)] = \{ \,\, \mathit{map}_A[\mathit{id}], \mathit{map}_A[\mathit{type}(\texttt{arr}_6)] \,\, \} \\
\mathit{type}(\texttt{arr}_4) = & \, \mathit{map}_B[\mathit{type}(\texttt{arr}_2)] = \{ \,\, \mathit{map}_B[\mathit{id}], \mathit{map}_B[\mathit{type}(\texttt{arr}_6)] \,\, \} \\
\mathit{type}(\texttt{arr}_5) = & \, \{ \,\, \mathit{type}(\texttt{arr}_3), \mathit{type}(\texttt{arr}_4) \,\, \} \\
 = & \, \{ \,\, \mathit{map}_A[\mathit{id}], \mathit{map}_A[\mathit{type}(\texttt{arr}_6)], \mathit{map}_B[\mathit{id}], \mathit{map}_B[\mathit{type}(\texttt{arr}_6)] \,\, \} \\
\end{split}
\end{equation*}

\noindent Next, we infer the type of the last \textit{map} operation.

\begin{equation*}
\begin{split}
\mathit{type}(\texttt{arr}_6) = & \, \mathit{map}_C[\mathit{type}(\texttt{arr}_5)] \\
 = & \, \{ \,\, \mathit{map}_C[\mathit{map}_A[\mathit{id}]], \mathit{map}_C[\mathit{map}_A[\mathit{type}(\texttt{arr}_6)]], \\
  & \,\,\,\,\,\,\, \mathit{map}_C[\mathit{map}_B[\mathit{id}]], \mathit{map}_C[\mathit{map}_B[\mathit{type}(\texttt{arr}_6)]] \,\, \}
\end{split}
\end{equation*}

As can be seen from the definitions above, the type of \texttt{arr}$_6$ is circular. If we try to fully expand its definition, it will have an infinite number of elements. 

\textsc{Ikra-Ruby} breaks this circular type by inserting a \texttt{to\_a.to\_command} method call, which will launch the kernel, return its result as an array and then box it in another array command. Consequently, this method call will stop the kernel fusion process. \textsc{Ikra-Ruby} currently inserts the method call in Line~11, but it could also be inserted in Lines~4 or 10.

\smallskip
\begin{lstlisting}[firstnumber=11]
arr<@$_6$@> = arr<@$_5$@>.to_a.to_command.pmap do |i| i + 3; end <@\hfill \textcolor{gray}{\textit{\# $\mathit{map}_C$}}@>
\end{lstlisting}
\smallskip

We can now complete the type inference process and fill in the \texttt{arr}$_2$ placeholders in the other definitions.

\begin{equation*}
\begin{split}
\mathit{type}(\texttt{arr}_6) = & \, \mathit{map}_C[\mathit{id}] \\
\mathit{type}(\texttt{arr}_2) = & \, \{ \,\, \mathit{id}, \mathit{map}_C[\mathit{id}] \,\, \} \\
\mathit{type}(\texttt{arr}_5) = & \, \{ \,\, \mathit{type}(\texttt{arr}_3), \mathit{type}(\texttt{arr}_4) \,\, \} \\
 = & \, \{ \,\, \mathit{map}_A[\mathit{id}], \mathit{map}_A[\mathit{map}_C[\mathit{id}]], \mathit{map}_B[\mathit{id}], \mathit{map}_B[\mathit{map}_C[\mathit{id}]] \,\, \} \\
\mathit{type}(\texttt{arr}_7) = & \, \{ \,\, \mathit{type}(\texttt{arr}_1), \mathit{type}(\texttt{arr}_6) \,\, \} = \{ \,\, \mathit{id}, \mathit{map}_C[\mathit{id}] \,\, \}
\end{split}
\end{equation*}

For code generation, only \texttt{arr}$_2$, \texttt{arr}$_5$ and \texttt{arr}$_7$ are of interest, because their result is accessed. \textsc{Ikra-Ruby} generates kernels and invocations for them: The static type of a variable (or union type class ID field if polymorphic) determines the kernel to be launched. In total, \textsc{Ikra-Ruby} generates the following kernels in this example. Some of them might never be launched at runtime.

\begin{multicols}{2}
\begin{description}
    \item [\normalfont (5.1)] $\mathit{map}_A[\mathit{id}]$
    \item [\normalfont (5.2)] $\mathit{map}_A[\mathit{map}_C[\mathit{id}]]$
    \item [\normalfont (5.3)] $\mathit{map}_B[\mathit{id}]$
    \item [\normalfont (5.4)] $\mathit{map}_B[\mathit{map}_C[\mathit{id}]]$
    \item [\normalfont (7.1)] $\mathit{id}$
    \item [\normalfont (7.2)] $\mathit{map}_C[\mathit{id}]$
    \item [\normalfont (r.1)] $\mathit{reduce}[\mathit{id}]$
    \item [\normalfont (r.2)] $\mathit{reduce}[\mathit{map}_C[\mathit{id}]]$
\end{description}
\end{multicols}

\subsection{Code Generation}
After type inference, \textsc{Ikra-Ruby} generates C++ source code for the host section. Expressions of array command type, have a generated \texttt{array\_command\_t} type in the C++ code. This struct contains a pointer to the cached result of the array command. If an expression's polymorphic type can be one of multiple array commands, then \textsc{Ikra-Ruby} uses a union type in the generated C++ code and the class ID indicates the exact array command at runtime.

In the above example, all variables except for \texttt{arr}$_1$ are polymorphic and will have type \texttt{union\_t} in the generated C++ code. The class ID field is used to determine which kernel should be launched. For example, either kernel 7.1 or kernel 7.2 should be launched in Line~14 depending on whether there was at least one loop iteration. This information is implicitly encoded in the class ID field and \textsc{Ikra-Ruby} generates a switch-case statement, similar to polymorphic method calls. In the generated code, Lines~2, 4, 6, 8, 10 and 13 do not launch a kernel but merely return a new \texttt{array\_command\_t} object, possibly wrapped inside a union type struct containing the class ID for the command.

Whenever an array command access is detected, \textsc{Ikra-Ruby} generates a CUDA kernel for the array command and a kernel invocation snippet which checks if a cached result is available and otherwise transfers data (if necessary) and launches the kernel. Since array commands may have dependent commands, generated kernels may consist of multiple fused parallel operations.

\subsection{Benchmarks}
\begin{sidewaysfigure}
    \includegraphics[width=\textwidth]{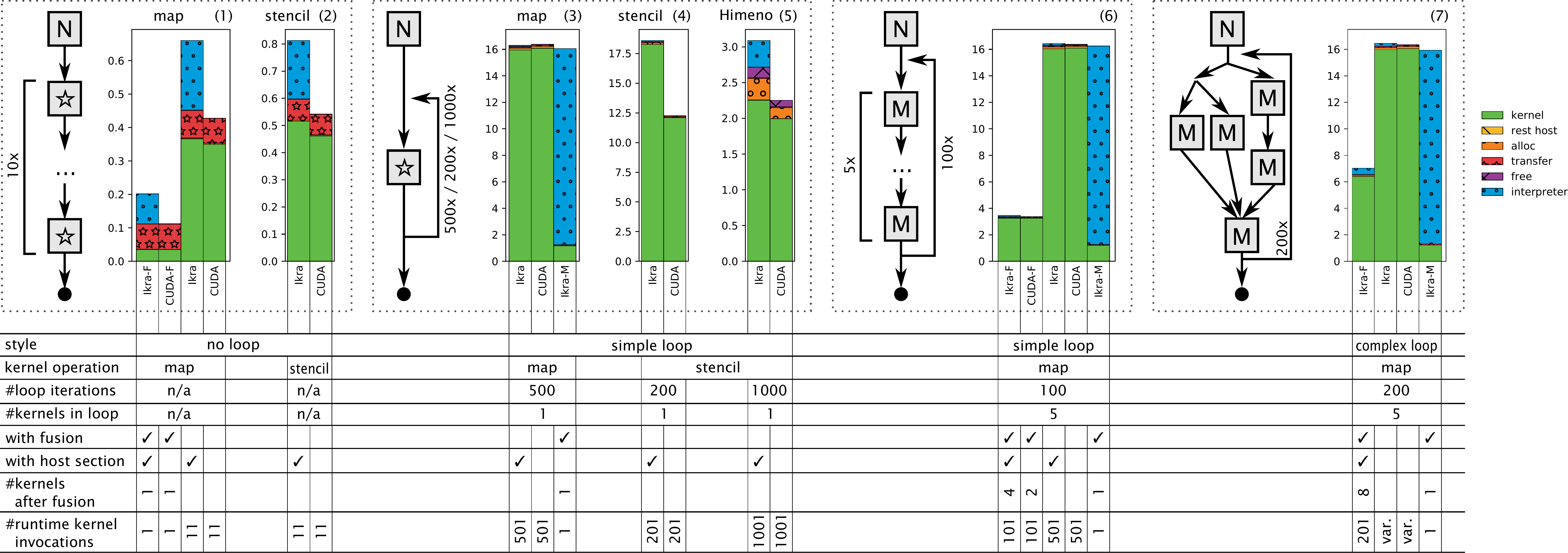}
    \caption[\textsc{Ikra-Ruby} microbenchmark results]{Microbenchmark runtime in seconds. \emph{Ikra-F} is \textsc{Ikra-Ruby} with all code in a single host section and kernel fusion. \emph{Ikra} is without kernel fusion. \emph{Ikra-M} is a lower bound where all code is in a single kernel (even among iterations). \emph{CUDA-F} and \emph{CUDA} are hand-written baseline implementations with/without (manual) kernel fusion. Compilation time (not shown here) is around 2 seconds for \textsc{Ikra-Ruby}-generated code.}
    \label{fig:ikra_ruby_benchmarks}
\end{sidewaysfigure}
Figure~\ref{fig:ikra_ruby_benchmarks} shows the runtime performance of a number of microbenchmarks\footnote{Source code: \url{https://github.com/prg-titech/ikra-ruby}, branch \texttt{array17}} (small parallel operations, only \textit{for} loops) of the current \textsc{Ikra-Ruby} implementation in various configurations. Benchmarks were run on a computer with an Intel Core i7-6820HQ CPU (2.70 GHz), 32 GB RAM, an NVIDIA GeForce 940MX GPU, Ubuntu 16.04.1 (kernel version \texttt{4.4.0-43-generic}), Ruby 2.3.1 and the CUDA Toolkit V8.0.44. Program~5 is a 3D stencil computation on a matrix of size $129 \times 65 \times 65$. All other programs operate on matrices of size $6 \times 10^7$ (228~MB) with a one-dimensional CUDA block size of 1024. 

The benchmarks show the performance speedup due to kernel fusion and how generated \textsc{Ikra-Ruby} code performs in comparison to hand-written CUDA code. We compare 7 different programs with various compilation strategies. For every program, we show the structure of parallel operations (control/data flow graph). The square boxes indicate parallel operations and the arrows indicate data flow. The letter \emph{M} indicates a \textit{map} operation, the letter \emph{N} a \textit{new} operation, and the star a can be either \textit{map} or \textit{stencil}.

Programs~1, 6 and 7 show the benefit of kernel fusion. In Program~1, a single kernel is launched for all 11 parallel operations, giving a 10x speedup compared to the version without kernel fusion. In Program~6, 101 kernels are launched for 501 parallel operations. All 5 kernels within the loop are fused together, giving a 5x speedup compared to the version without kernel fusion. If all 501 \textit{map}/\textit{new} operations are fused together by executing the loop in the Ruby interpreter (no host section; \emph{Ikra-M}), another 2x speedup is possible. However, in the general case, the number of loop iterations is unknown. Moreover, the resulting kernel code becomes large; increasing the number of iterations too much even results in an \texttt{nvcc} compilation error. As can be seen from the \emph{interpreter} time, \textsc{Ikra-Ruby} also spends a very long time in the Ruby interpreter if a tree of 501 array commands is analyzed and fused together. This shows that there is still potential for optimization of the part of \textsc{Ikra-Ruby} that runs in the Ruby interpreter.

Program~7 shows the benefit of kernel fusion in a program with more complex control flow, giving a speedup of 2x. This program generates source code with union types to keep track of which kernel to launch at the end of an iteration. This leads to additional runtime overheads. However, this overhead (\emph{rest host}) is much smaller than the kernel runtime and could not even be measured with confidence in our experiments. The reason that Program~7 does not achieve the same speedup from kernel fusion as Program~6 is that the number of parallel operations inside the loop is smaller and the number of loop iterations is larger (200 vs. 100).

Programs~3--5 consist of a loop with a single \textit{map} or \textit{stencil} operation. In such cases, \textsc{Ikra-Ruby} cannot perform kernel fusion inside the loop, which is why we only report the performance for \textit{Ikra}. \emph{Himeno} is a benchmark with a memory-bound stencil computation. It has a high \emph{alloc} time because \textsc{Ikra-Ruby} (and the CUDA baseline) do currently not reuse memory but allocate a new piece of memory every time an array is updated within the loop. 

All benchmarks have a low \emph{transfer} time, i.e., time spent for transferring data between the device and the host. This is because data is transferred to the host only when the result of a parallel operation is accessed. The loops in all benchmarks are \emph{for} loops with a fixed number of iterations. Only after the last iteration, the result is accessed and transferred back to the host. In a more realistic case, where the control flow (e.g., number of iterations) depends on the data, more data transfer will occur.

When comparing \textsc{Ikra-Ruby}'s performance with hand-written CUDA code, we can see that the kernel running times are almost identical for \textit{map} operations. The generated code of \textit{stencil} operations is not yet fully optimized (or fused). In addition, \textsc{Ikra-Ruby} spends some time in the Ruby interpreter for performing type inference and generating Ruby code. Overall, it spends very little time in the remaining host section code (allocating/comparing union types/array commands, loop overhead).

\paragraph{Number of Generated Kernels}
Due to the kernel fusion process described in Section~\ref{sec:hs_symbolic}, \textsc{Ikra-Ruby} generates one kernel per data flow path (excluding loops). This can lead to a combinatorial explosion of the number of generated kernels. Based on an analysis of the kernel structure of a large number of parallel programs by Shen et al.~\cite{shen2015study}, we believe that the number of kernels remains manageable in real applications. Their work showed that the kernel structure of all analyzed programs is similar to the ones in our benchmarks\footnote{\textsc{Ikra-Ruby}'s programming style could increase the number of parallel sections.} and never more complex than the structure in Program~7.

\subsection{Future Work}
\label{sec:future_work}
Besides improving the overall efficiency of our implementation, we plan extend \textsc{Ikra-Ruby} with kernel fusion of stencil operations and with better memory management in future. Furthermore, there are cases when kernel fusion can lead to a slowdown of programs, e.g., if the first kernel contains highly divergent control flow. We will analyze such programs in the future.

\paragraph{Kernel Fusion of Stencil Operations}
\textsc{Ikra-Ruby} can currently only fuse operations whose input pattern is ``same location''. We plan to extend kernel fusion to certain stencil computations that exhibit a \emph{simple} neighborhood. Stencil computations can be fused with shared memory or with redundant computations. Consider, for example, the following two stencil operations.

\begin{equation*}
\begin{split}
A_1 &= \mathit{stencil}(A_0, [-1, 0], f, 10) \\
A_2 &= \mathit{stencil}(A_1, [-1, 0], g, 10) \\
\end{split}
\end{equation*}
The resulting arrays after each iteration are defined as follows.
\begin{align*}
A_1 &= \begin{bmatrix}
       10 \\
       f(A_0[0], A_0[1]) \\
       f(A_0[1], A_0[2]) \\
       \ldots
     \end{bmatrix}
\end{align*}
\begin{align*}
A_2 &= \begin{bmatrix}
       10 \\
       g(10, f(A_0[0], A_0[1])) \\
       g(f(A_0[0], A_0[1]), f(A_0[1], A_0[2])) \\
       \ldots
     \end{bmatrix}
\end{align*}
The definition of $A_2$ represents the computation of a fused stencil operation. Many arguments of function $g$ are computed twice (redundantly). To reduce this overhead, \textsc{Ikra-Ruby} could split $A_0$ into multiple subarrays, assign each subarray to a CUDA block and store intermediate results in shared memory. Inter-block synchronization or redundant computation is then only necessary at the subarray borders (\emph{ghost region}~\cite{Kjolstad:2010:GCP:1953611.1953615}). An optimized version of this technique is known as \emph{temporal blocking}~\cite{temp_blocking_naoya}. 

\paragraph{Reusing Memory}
Many programs that update a vector or matrix iteratively only need access to the data of the preceding iteration. To save memory and for performance reasons (e.g., caching), CUDA programmers allocate only one (for \textit{combine} operations) or two (for \textit{stencil} operations) arrays in hand-written code and keep writing to these arrays. However, all parallel operations in \textsc{Ikra-Ruby} (except \texttt{peach}) are \emph{functional} and return a new array instead of modifying the existing array in-place. Furthermore, \textsc{Ikra-Ruby} does not have a garbage collector, so the memory is released only after the execution of the host section or if the programmer frees memory explicitly in the Ruby code. 

To decide whether it is safe to reuse a previously allocated array, \textsc{Ikra-Ruby} must perform an escape analysis. This is to ensure that data from a previous iteration is not accessed at a later point of time. Such advanced memory management issues are subject to future work.

\subsection{Related Work}
Kernel fusion is an optimization that is supported in many other GPGPU frameworks and languages, but their focus is on different aspects. For example, Harlan~\cite{Holk:2014:RMM:2660193.2660244} supports nested kernels, Futhark~\cite{Henriksen:2016:DGP:2935323.2935326} has support for nested parallelism and a powerful fusion engine for map/reduce combinations, and Kernel Weaver focuses on database queries~\cite{Wu:2012:KWA:2457472.2457490}. Furthermore, all of these tools focus on statically-typed programming languages, making translation within a kernel easier compared to \textsc{Ikra-Ruby}, because no union types are required and making kernel fusion itself easier, because it is known ahead of execution time which kernels are executed together.

\section{A Data Layout DSL for \textsc{Ikra-Cpp}}
\label{sec:data_layout_dsls_ikracpp}
Maintaining an SOA layout in C++/CUDA manually is troublesome. SOA code is less readable and less expressive than AOS code and native C++ language constructs for object-oriented programming cannot be used in SOA code.

To give programmers the performance benefit of SOA and the expressiveness of AOS-style object-oriented programming at the same time, we extended \textsc{Ikra-Cpp} with a lightweight, embedded DSL\footnote{\textsc{Ikra-Cpp}'s data layout DSL is a \emph{performance-oriented DSL}, aiming to ``make the compiler more productive (producing better code)''~\cite{DBLP:journals/corr/abs-1109-0778}. We call it a \emph{DSL} because it defines a new syntax/notation for application code.} (Listing~\ref{lbl:ikra_cpp_example_code}). This DSL is implemented entirely in C++ with template metaprogramming, operator overloading, helper classes and preprocessor macros. It can be used on CPUs and GPUs and should work with every modern C++14 compiler and the NVIDIA CUDA Toolkit 9.0 or higher (in GPU mode).

\begin{lstfloat}
\lstinputlisting[linewidth=\textwidth, language=c++, numbers=left, caption={[\textsc{Ikra-Cpp}: N-body simulation with SOA layout but AOS notation]N-body simulation in \textsc{Ikra-Cpp}: SOA layout but AOS notation}, label={lbl:ikra_cpp_example_code}, morekeywords={float_, assert}]{example_ikra.cc}
\end{lstfloat}




\paragraph{Contribution}
The main contribution of \textsc{Ikra-Cpp} is twofold. First, to the best of our knowledge, \textsc{Ikra-Cpp} is the first C++ data/memory layout DSL that supports a wide range of OOP abstractions, most notably member function calls and constructors. Second, there exist other solutions that allow for a limited set of OOP abstractions (e.g., referencing objects with class pointers instead of IDs) with a custom data/memory layouts in C++ (Section~\ref{sec:related_work_ikra_cpp_paper}). However, these solutions rely on external tools such as custom preprocessors or compiler/language extensions. In contrast, \textsc{Ikra-Cpp} is implemented entirely in C++ and requires only a modern C++ compiler.



\subsection{Language Overview}
\label{sec:arch}
In this section, we describe the basic functionality of \textsc{Ikra-Cpp}'s data layout DSL, focusing on host (CPU) code.

\paragraph{Notation and API}
A class whose objects are stored as SOA is called a \emph{SOA class} and its instances are called \emph{SOA objects}. Recall that classes in \textsc{Ikra-Cpp} must inherit from \texttt{IkraBase}. This class template is for AOS layouts. If we would like to store objects in SOA layout, classes must inherit from \texttt{IkraSoaBase}. This class template provides useful helper methods and type aliases. Same as with \texttt{IkraBase}, the maximum number of instances of an SOA class is a compile-time constant and template parameter of \texttt{IkraSoaBase} (Listing~\ref{lbl:ikra_cpp_example_code}, Line~1).

The main programming restriction of \texttt{IkraSoaBase} compared to \texttt{IkraBase} is that SOA objects can only be created with the \texttt{new} keyword and must be referred to with pointers or references (Listing~\ref{lst:ikra_soa_layput_restrict}). Stack/static allocation is not allowed, because the fields of an object are not stored as a consecutive chunk of data as in a traditional AOS layout. They can only reside within a structure of arrays, i.e., within the storage buffer generated by \texttt{IKRA\_*\_STORAGE}.

\begin{lstfloat}
\begin{lstlisting}[language=c++, caption={[\textsc{Ikra-Cpp}: Allocation restrictions of \texttt{IkraSoaBase}]Allocation restrictions of \texttt{IkraSoaBase}}, label={lst:ikra_soa_layput_restrict}, morekeywords={assert}]
class A : public IkraBase<A, 100> { /* ... */ };
class B : public IkraSoaBase<B, 100> { /* ... */ };

<@\emph{IKRA\_HOST\_STORAGE}@>(A);
<@\emph{IKRA\_HOST\_STORAGE}@>(B);

int main() {
  A obj1;               // OK, but stored outside of the array of structures.
  B obj2;               // Error
  A* ptr1 = new A();    // OK. Allocated as AOS within storage buffer.
  B* ptr2 = new B();    // OK. Allocated as SOA within storage buffer.
  B& obj3 = *obj2;      // OK
}

void function1(B obj);   // Error
void function2(B* ptr);  // OK
void function3(B& ptr);  // OK
\end{lstlisting}
\end{lstfloat}

There is one main change in notation compared to our original \textsc{Ikra-Cpp} implementation from Section~\ref{sec:express_smmo}. Regardless of AOS or SOA layout, programmers must now use special \emph{proxy field types} for field declarations. These proxy types are available for all primitive C++ types and end with an underscore, e.g., \texttt{float\_} instead of \texttt{float} (Listing~\ref{lbl:ikra_cpp_example_code}, Lines~3--9). While the old notation (plain \texttt{float}) is technically still supported for AOS layouts, we recommend using proxy types because the layout of a class can then be switched from AOS to SOA or vice versa with mininal effort by changing the superclass from \texttt{IkraBase} to \texttt{IkraSoaBase} or vice versa.


\paragraph{Supported C++ OOP Features}
In SOA mode, \textsc{Ikra-Cpp} supports many but not all OOP features/abstractions of C++. This paragraph gives an overview of the main supported features and the main restrictions.

\begin{itemize}
  \item Same as in AOS mode, classes are defined with \textbf{standard C++ notation} (\texttt{class} keyword). Classes cannot be templatized and class inheritance is not possible. Programmers must use the two proprocessor macros \texttt{IKRA\_INITIALIZE\_CLASS} and \texttt{IKRA\_*\_STORAGE}.
  \item Objects must be referred to with \textbf{class pointers} or object references.
  \item Member fields must be declared with \textbf{proxy types}. \textsc{Ikra-Cpp} provides proxy types for all \textbf{primitive types}. Furthermore, there is a notation for defining proxy types for other base types.
  \item Member functions must be \textbf{non-virtual} but can be \textbf{templatized}.
  \item \textbf{Class constructors}, including field initializers, are supported.
  \item Class constructors and member functions can be \textbf{overloaded}.
  \item Instance creation is supported \textbf{only with the \texttt{new} keyword}. Same as in AOS mode, existing objects cannot be deleted (allocation only, no deallocation).
  \item Given an object pointer, members can be accessed with the C++ \textbf{member of pointer} (arrow) operator. Given an object reference, members can be accessed with the C++ \textbf{member of object} (dot) operator. This is the main feature that gives \textsc{Ikra-Cpp} code an object-oriented \emph{look and feel}, even in SOA mode.
\end{itemize}

We will lift some of these restrictions with \textsc{DynaSOAr}, which is an extension of \textsc{Ikra-Cpp} with a dynamic memory allocator (Section~\ref{sec:chapter_dynasoar}).



\subsection{Implementation Details}
\label{sec:implementation_details}
\textsc{Ikra-Cpp} is implemented entirely in C++. It does not need a separate compiler or preprocessor/code generator. It is based on four ideas:

\begin{itemize}
  \item All data is stored in a large, statically-allocated \textbf{storage buffer} (\texttt{char} array, generated by \texttt{IKRA\_*\_STORAGE}).
  \item Upon allocation, all objects are assigned unique \textbf{integer IDs} (starting from 1).
  \item Objects are referenced with \textbf{fake pointers} that encode their object ID. This is a form of \emph{type punning}, i.e., ``a programming technique that subverts or circumvents the type system of a programming language in order to achieve an effect that would be difficult or impossible to achieve within the bounds of the formal language''~\cite{wiki:typeunning}. Type punning can lead to fragile code and a future, more mature implementation of \textsc{Ikra-Cpp} could rely on a more powerful preprocessor such as ROSE~\cite{doi:10.1142/S0129626400000214}.
  \item Several \textbf{operators} of proxy field types are \textbf{overridden} to decode IDs from fake pointers and to calculate physical memory addresses within the storage buffer.
\end{itemize}

In this section, we explain those concepts in more detail, using a more verbose notation. The code in Listing~\ref{lst:ikra_expanded_code} is identical to the one in Listing~\ref{lbl:ikra_cpp_example_code} (without constructor), but with simplified, expanded preprocessor macros. In particular, note that tokens like \texttt{float\_} are preprocessor macros that expand to different proxy field type instantiations like \texttt{float\_\_<...>}\footnote{Internally, this is implemented with the \texttt{\_\_COUNTER\_\_} preprocessor macro, which is supported by most compilers. See \url{https://gcc.gnu.org/onlinedocs/cpp/Common-Predefined-Macros.html}.}, from now on simply called \emph{proxy types}.

\begin{lstfloat}
\lstinputlisting[linewidth=\textwidth, language=c++, label={lst:ikra_expanded_code}, numbers=left, caption={[\textsc{Ikra-Cpp}: Macro-expanded \texttt{Body} class]Macro-expanded \texttt{Body} class from a Listing~\ref{lbl:ikra_cpp_example_code} (simplified)}]{body_only.cc}
\end{lstfloat} 

\paragraph{SOA Objects}
SOA objects can only be referred to with pointers or references. In particular, they cannot be stack-allocated. This is because the address of an SOA object is not a physical memory location but encodes an object ID (\emph{fake pointer}). Based on an object ID and the memory locations of the SOA arrays within the storage buffer, the actual, physical memory location of each field value of that object can be computed. This can be seen as a form of \emph{address translation}. This translation is performed transparently and implemented in C++, as part of the data layout DSL.

All objects of an SOA class \texttt{C} have unique IDs between $[1; \mathit{maxInst}(\texttt{C})]$, where \emph{maxInst}(\texttt{C}) is the user-specified maximum number of objects of \texttt{C}. ID 0 is reserved for null pointers. E.g., running \texttt{new C()} for the first time returns a \texttt{C*} pointer encoding ID \texttt{1}. This pointer does \emph{not} point to an actual memory location. Accessing the address of this fake pointer would most likely result in a memory access violation (Listing~\ref{lst:deref_fake_ptr}).

\begin{lstfloat}
\begin{lstlisting}[language=c++, caption={[\textsc{Ikra-Cpp}: Dereferencing a fake pointer]Dereferencing a fake pointer}, label={lst:deref_fake_ptr}]
C* obj = new C();
char* ptr = reinterpret_cast<char*>(obj);
printf("%c\n", *ptr);  // This usually works, but with IkraSoaBase it segfaults!
\end{lstlisting}
\end{lstfloat}

\paragraph{Proxy Types}
SOA fields are declared with \emph{proxy types} (\texttt{Field\_} template instantiations, Listing~\ref{lst:proxy_types}). These types behave like normal C++ types (base types) in most cases, but access data at a different physical location inside the storage buffer. Proxy type values always appear as \emph{lvalues}, i.e., as values that have a memory location. This is because their implementation calculates the actual, physical data location based on their own lvalue address. Proxy types support the following operations, implemented via operator overloading.

\begin{itemize}
  \item \emph{Reading a Value:} A proxy type lvalue can be converted to its base type value without an explicit typecast (\emph{implicit conversion operator}\footnote{The \texttt{auto} keyword is not supported. E.g., a proxy type lvalue cannot be assigned to a variable declared as \texttt{auto} without an explicit type cast, unless the variable is of reference type.}, Line~17).
  \item \emph{Writing a Value:} A base type value and a proxy type lvalue can be assigned to a proxy type lvalue by overloading the  \emph{assignment operator} (Lines~20, 45).
  \item \emph{Method Call:} For proxy type lvalues with a pointer base type, a method call is forwarded to the object at the physical data location by overloading the \emph{member of pointer ``arrow'' operator}\footnote{Note for experienced C++ programmers: This is similar to how \texttt{std::unique\_ptr} is implemented.} (Line~23).
  \item \emph{Address-of:} It is possible to take the address of a proxy type lvalue (\emph{address-of operator}, Line~26).
  \item \emph{Dereference:} It is possible to dereference a proxy type lvalue (\emph{pointer dereference operator}, Line~29) if its base type is a pointer type.
  \item \emph{Array acccess:} It is possible to access a proxy type lvalue of array base type with array syntax (\emph{array subscript operator}, Line~34).
  \item \emph{Initialization:} Proxy type lvalues can be initialized with base type values or proxy type lvalues; e.g., with a field initializer of a class constructor (constructor, Lines~14, 40).
\end{itemize}

\begin{lstfloat}
\lstinputlisting[linewidth=\textwidth, numbers=left, label={lst:proxy_types}, language=c++, caption={[\textsc{Ikra-Cpp}: Implementation of proxy types]Implementation of proxy types}]{field_only.cc}
\end{lstfloat}

SOA field types are defined in \texttt{IkraSoaBase} as template instantiations of \texttt{Field\_} (Listing~\ref{lst:proxy_types}). This class provides the necessary operator implementations and calculates the address inside the storage buffer at which the field value of a certain object can be found (Line~49).

\paragraph{Custom Proxy Types}
\textsc{Ikra-Cpp} provides proxy type macros such as \texttt{float\_} for all primitive C++ types. Proxy types of other types can be defined with the helper macro \texttt{field\_} (Listing~\ref{lst:proxy_field_type_notation}). This macro takes as argument the desired base type.

\begin{lstfloat}
\begin{lstlisting}[language=c++, numbers=none, morekeywords={float_, int_, field_}, caption={[\textsc{Ikra-Cpp}: Field proxy type notation]Field proxy type notation}, label={lst:proxy_field_type_notation}]
class DummyClass : IkraSoaBase<DummyClass, 50> {
 public: <@\emph{IKRA\_INITIALIZE\_CLASS}@>
  float_ field_0;  // Base type: float
  field_(void*) field_1;  // Base type: void*
  field_(std::array<float, 10>) field_2;  // Base type: std:array<float, 10>
  int_ field_3;  // Base type: int
  /* ... */
}; <@\emph{IKRA\_HOST\_STORAGE}@>(DummyClass);
\end{lstlisting}
\end{lstfloat}

\paragraph{Address Computation -- Simplified}
Proxy types perform a form of address translation: They translate a fake pointer into a physical memory address (Figure~\ref{fig:addr_fake_ptr_tr}). Proxy objects always appear as lvalues, i.e., values that have an address. This address does not point to allocated memory but is based on a fake pointer.

\begin{figure}
  \centering
  \includegraphics[width=0.8\textwidth]{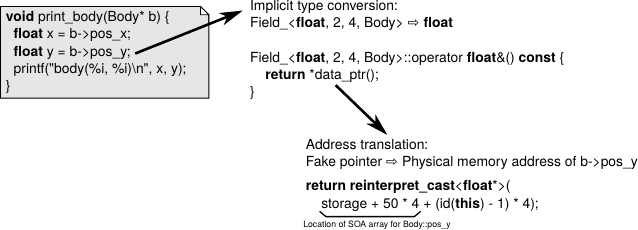}
  \caption[\textsc{Ikra-Cpp}: Address translation of a fake pointer]{Address translation of a fake pointer}
  \label{fig:addr_fake_ptr_tr}
\end{figure}

In the most basic case, given the address of a field proxy object $\mathit{this}$ (i.e., an object of type \texttt{Field\_<...>}), the address of a field value \texttt{C::f} can be computed with the formula below. $\mathit{id}$ is a function that decodes the object ID from a proxy object address. Note that we always translate addresses from the perspective of a field proxy object and not from the perspective of an SOA object, because we overloaded the operators of \texttt{Field\_} and not the operators of the SOA class.

\begin{align*}
\mathit{addr}(\mathit{this}, \texttt{C::f}) = & \; {\mathit{storage}} \\
 & + {\mathit{maxInst}(\texttt{C}) \cdot \mathit{offset}(\texttt{C::f})} \\
 & + ({\mathit{id}(\mathit{this})} - 1) \cdot {\mathit{sizeof}(\texttt{C::f})}
\end{align*}

The first two lines in the equation compute the beginning of the \emph{SOA array} storing all values of \texttt{C::f}. The third line computes the offset into that array. How exactly object IDs are encoded in fake pointers is determined by the \emph{addressing mode} and described in the next few paragraphs. \textsc{Ikra-Cpp} supports three different addressing modes, one of which must be chosen at compile time: \emph{Zero Addressing} and two variants of \emph{Valid Addressing}. The former one is more space-efficient but relies on non-standard C++ constructs, so it might not work with some compilers\footnote{We verified that it works with g++ 5.4.0, clang 3.8.0 and CUDA 9.0.}.

\subsection{Addressing Modes}
\label{sec:addr_mode}
This section describes of three addressing modes. Zero addressing and storage-relative zero addressing are implemented in \textsc{Ikra-Cpp}. In accordance with the C++ zero overhead principle~\cite{Stroustrup2012}, zero addressing is the default mode.

\paragraph{Zero Addressing}
In zero addressing mode (Figure~\ref{fig:boxes_zero_addr}, Listing~\ref{lst:addr_comp_zero_addr}), an object of an SOA class \texttt{C} with ID $i$ is referenced with a \texttt{C*} fake pointer pointing to address $i$ (Listing~\ref{lst:fake_ptrs_zero_addr}).

Values are grouped by field within the storage buffer (SOA layout). No field values are stored for object 0 (null pointer). Given a \texttt{C*} fake pointer $\mathit{obj}$, the physical memory location within the storage buffer of the field value \texttt{C::f}, i.e., \texttt{\&obj->f}, is calculated as follows. Compile-time constants are in blue.

\begin{lstfloat}
\lstinputlisting[linewidth=\textwidth, language=c++, numbers=left, caption={[\textsc{Ikra-Cpp}: Address computation in zero addressing]Address computation in zero addressing}, label={lst:addr_comp_zero_addr}, morekeywords={uintptr_t}]{zero_addr_impl.cc}

\begin{lstlisting}[language=c++, caption={[\textsc{Ikra-Cpp}: Fake pointers in zero addressing mode]Fake pointers in zero addressing mode}, label={lst:fake_ptrs_zero_addr}]
C* a = new C();  // First object. ID 1
C* b = new C();  // Second object. ID 2
printf("%p, %p\n", a, b);  // Prints: 0x00000001, 0x00000002
\end{lstlisting}
\end{lstfloat}

\begin{figure}
  \centering
  \includegraphics[width=0.5\textwidth]{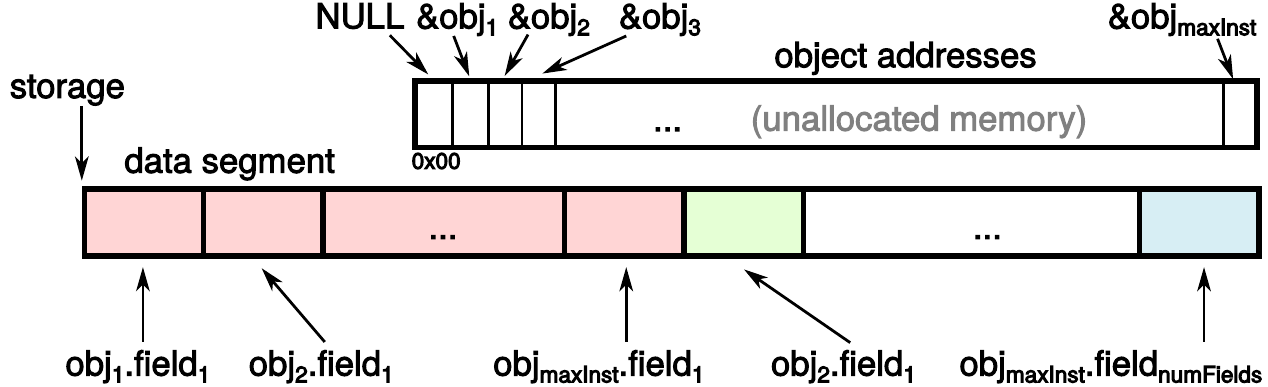}
  \caption[\textsc{Ikra-Cpp}: Storage buffer layout in zero addressing]{Storage buffer layout in zero addressing} 
  \label{fig:boxes_zero_addr}
\end{figure}

\begin{align*}
\mathit{addr}_\mathit{zero}(\mathit{obj}, \texttt{C::f}) = & \; \textcolor{blue}{\mathit{storage}} \\
\mbox{\raisebox{-.25\height}{\crule[blue]{1em}{1em}} \textsf{\textcolor{gray}{constant}} \;\;\;\;\;\;\;\;\;\;\;\;\;\;\;\;\;\;\;\;\;} & + \textcolor{blue}{\mathit{maxInst}(\texttt{C}) \cdot \mathit{offset}(\texttt{C::f})} \\
\mbox{\raisebox{-.25\height}{\crule[red]{1em}{1em}} \textsf{\underline{\textcolor{gray}{variable}}} \;\;\;\;\;\;\;\;\;\;\;\;\;\;\;\;\;\;\;\;\;\;} & - \textcolor{blue}{\mathit{sizeof}(\texttt{C::f})} \\
 & + \textcolor{red}{\underline{\mathit{obj}}} \cdot \textcolor{blue}{\mathit{sizeof}(\texttt{C::f})}
\end{align*}

Intuitively, the address is computed as follows: Start with the address of the storage buffer. Add the offset at which the SOA array for \texttt{C::f} begins. Finally, add the offset at which the \emph{(obj - 1)}\textsuperscript{th} value begins. We have to subtract 1 to account for the fact that object IDs start at 1 but C++ arrays start indexing with 0.

Since the storage buffer is statically allocated, the first three lines of the address calculation are compile-time constants and the fourth line is a strided memory access. After constant folding, this is identical to a hand-written SOA layout with statically allocated SOA field arrays. In a hand-written SOA layout, the address of the field value \texttt{C::f} of an object with ID $i$, i.e., \texttt{\&C\_f[i]}, is computed as follows.

\begin{align*}
\mathit{addr}_\mathit{manual}(\mathit{i}, \texttt{C::f}) = & \; \textcolor{blue}{\texttt{\&C\_f[0]}} \\
 & +\textcolor{red}{ \underline{\mathit{i}}} \cdot \textcolor{blue}{\mathit{sizeof}(\texttt{C::f})}
\end{align*}

Modern compilers are good at peephole optimizations. We compared the compiled assembly code of a single field access for both a hand-written SOA layout and for the equivalent \textsc{Ikra-Cpp} code in C++ and CUDA. The resulting assembly code was identical (Section~\ref{sec:ikra_cpp_code_generation_experiment}).

One crucial assumption of zero addressing is that the C++ object size of SOA classes and \texttt{Field\_} instantiations is zero bytes (e.g., \texttt{sizeof(Body) = 0}). In that case, the fake address of an SOA object is equal to the addresses of all its proxy field objects\footnote{E.g., for a \texttt{Body* b}: $\texttt{b} = \texttt{\&b->pos\_x} = \texttt{\&b->pos\_y} = ...$}. Listing~\ref{lst:addr_comp_zero_addr} shows the implementation of the function \texttt{data\_ptr}, which calculates the physical memory address of an SOA field value within the storage buffer. If \texttt{Field\_} instantiations had a byte size larger than zero, Line~7, which calculates the fake pointer of the SOA object, would have to be changed.

Zero addressing has one main advantage: It allows us to use the C++ constructor syntax without wasting memory. If \texttt{Field\_} instantiations and SOA objects have a C++ object size of zero bytes\footnote{This break pointer arithmetics with SOA object pointers. However, we provide a custom iterator type that programmers can use instead.}, we can use the \texttt{new} keyword for instance creation. In host code (not GPU code), all memory is zero-initialized before running a constructor. Zero-initializing a memory segment of zero bytes is a \emph{no operation}, even if the segment starts at a bogus memory address (\emph{fake pointer}). On the contrary, zero-initializing at least one byte at a bogus memory address will likely result in a memory access violation and crash the program.

According to the C++ standard, the size of a class or struct should be greater than zero (even if it has no members)~\cite{ISO:2012:III}, but many compilers can be instructed to use a size of zero. If this is not supported by a compiler, either valid addressing or a different mechanism for instance creation must be used.

\paragraph{Valid Addressing Mode}
Since zero addressing does not conform to the C++ standard, \textsc{Ikra-Cpp} provides an alternative addressing mode. In \emph{valid addressing}, the C++ object size of every proxy field object is one byte (e.g., \texttt{sizeof(double\_) = 1}). Consequently the C++ object size of every SOA object is $\mathit{numFields}$ bytes. Note that in either addressing mode, the C++ \texttt{sizeof} keyword reports a number that is different from the actual memory consumption an SOA object within the storage buffer.

In order to support the \texttt{new} keyword, the address of an SOA object must then point to valid (allocated) memory; thus the name \emph{valid} addressing. Otherwise, zero-initialization would cause a memory access violation. The challenge of valid addressing is to add an as small as possible amount of padding (\emph{wasted} memory) such that no data is overwritten by zero initialization.

Programmers should use zero addressing if supported by their compiler, since it does not waste any padding memory. We expect the same runtime performance as in zero addressing, because address computation is in both cases reduced to a strided memory access after constant folding.

\paragraph{Storage-relative Zero Addressing}
This addressing mode is one of two variants of valid addressing. In this addressing mode, an object of SOA class \texttt{C} with ID $i$ is referred to with a fake \texttt{C*} pointer pointing to the $i$\textsuperscript{th} byte of the storage buffer (Figure~\ref{fig:rel_zero_addddr}). E.g., if the storage buffer is allocated at address \texttt{0x4000}, then the address of $\texttt{obj}_3$ is $\texttt{0x4003}$. 

The \emph{data segment}, where field values are stored, is identical to the one in zero addressing. However, it starts at byte offset \emph{padding}, i.e., $\mathit{padding}$ many bytes are wasted in this addressing mode.

\begin{align*}
\mathit{padding} = \emph{maxInst}(\texttt{C}) + 1 + \mathit{numFields}(\texttt{C})
\end{align*}

Since all fake object pointers point to one of the first \emph{maxInst}(\texttt{C}) bytes of the storage buffer (\emph{object ptr. address} part in Figure~\ref{fig:rel_zero_addddr}), not a single byte of the data segment is overwritten by zero-initialization, which zeros out \emph{numFields}(\texttt{C}) many bytes. Note that, similar to zero addressing, we have to add 1 in the above formula because object IDs start with 1 (0 is reserved for null pointers).

In general, the physical memory location of a field value \texttt{C::f} of an object with fake pointer $\mathit{obj}$ is calculated as follows. This formula is identical to the one in zero addressing, except for the offset into the data segment and the object ID computation/decoding part.

\begin{figure}
  \includegraphics[width=0.5\textwidth]{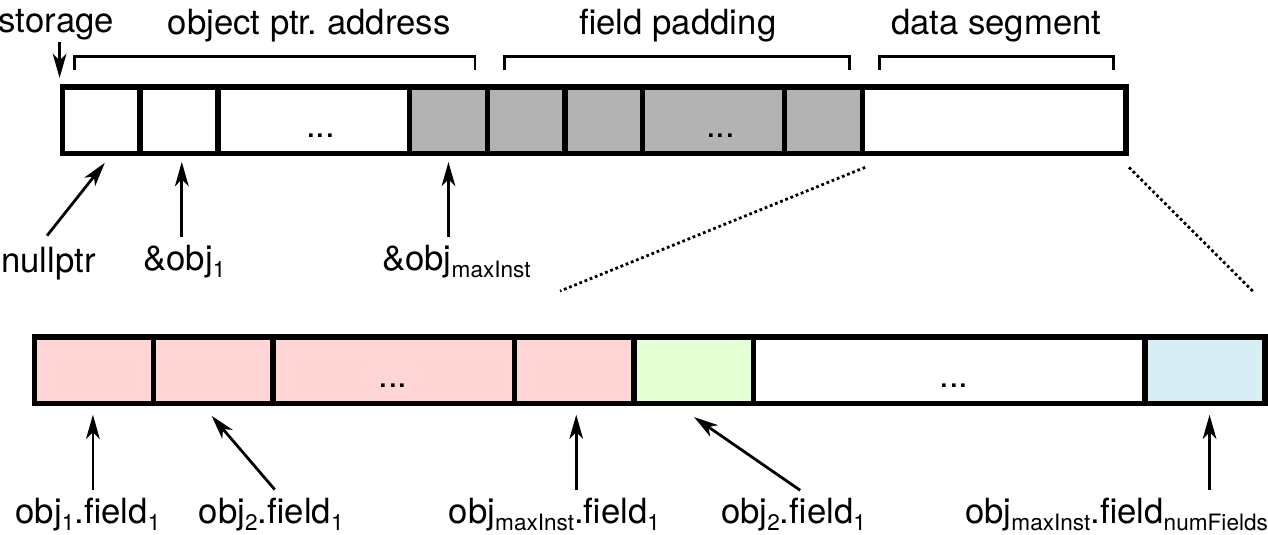}
  \centering
  \caption[\textsc{Ikra-Cpp}: Storage buffer layout in storage-relative zero addressing]{Storage buffer layout in storage-relative zero addressing}

  \label{fig:rel_zero_addddr}
\end{figure}

\begin{align*}
\mathit{addr}_\mathit{valid}(\mathit{obj}, \texttt{C::f}) =\;  & \mathit{storage} \\
 \mbox{\textsf{\textcolor{gray}{data segment offset}} \;\;\;\;\;\;} & \mbox{\raisebox{-.5\height}{\shadowbox{$+ \mathit{maxInst}(\texttt{C}) + 1 + \mathit{numFields}(\texttt{C})$}}} \\
 & + \mathit{maxInst}(\texttt{C}) \cdot \mathit{offset}(\texttt{C::f}) \\
 & - \mathit{sizeof}(\texttt{C::f}) \\
 \mbox{\textsf{\textcolor{gray}{ID computation}\;\;\;\;\;\;\;\;\;\;\;\;}} & + \mbox{\raisebox{-.5\height}{\shadowbox{$(\mathit{obj} - \mathit{storage})$}} $ \cdot \mathit{sizeof}(\texttt{C::f})$}
\end{align*}

Since address computation is done inside field proxy types (i.e., \texttt{Field\_} instantiantions, \emph{not} \texttt{IkraSoaBase}), we have to express the above formula in terms of the address (\textit{this} pointer) of the field proxy object instead of the fake object address \textit{obj}. The fake object address \textit{obj} of an SOA object can be computed based on the address of a field proxy object of \texttt{C::f} as follows, where $\mathit{index}(\texttt{C::f})$ is the field index of \texttt{C::f} (start counting from 1).

\begin{align*}
\mathit{obj} = \mathit{this} - \mathit{index}(\texttt{C::f}) + 1
\end{align*}

Since every \texttt{Field\_} instantiation is 1 byte in size, we can retrieve the base address (fake pointer) \emph{obj} of an object from the address \emph{this} of the $i$\textsuperscript{th} field proxy object of \emph{obj} by subtracting $i-1$ from it. For example, let \texttt{0x4003} be the address of the third field proxy object (\texttt{vel\_x}) of a \texttt{Body} object. Then, $\mathit{obj} = \texttt{0x4003} - 3 + 1 = \texttt{0x4001}$. This fake object pointer can be plugged into $\mathit{addr}_\mathit{valid}(\emph{obj}, \texttt{Body::vel\_x})$. Putting both definitions together, the physical memory location of a field value \texttt{C::f} with respect to its proxy object's address $\mathit{this}$ is then calculated as follows.


\begin{align*}
\mathit{addr}_\mathit{valid}(\mathit{this}, \texttt{C::f}) \; & =  \textcolor{blue}{\mathit{storage}} \\
 + \; & \textcolor{blue}{\mathit{maxInst}(\texttt{C}) + 1 + \mathit{numFields}(C)} \\
 + \; & \textcolor{blue}{\mathit{maxInst}(\texttt{C}) \cdot \mathit{offset}(\texttt{C::f})} \\
 - \; & \textcolor{blue}{\mathit{sizeof}(\texttt{C::f}) \cdot (\mathit{index}(\texttt{C::f}) + \mathit{storage})} \\
 + \; & \textcolor{red}{\underline{\mathit{this}}} \cdot \textcolor{blue}{\mathit{sizeof}(\texttt{C::f})}
\end{align*}

The formula above was rearranged to keep the number of terms small. After constant folding, the address of a field value can be calculated with the same instructions as in zero addressing mode. 

\paragraph{First Field Addressing}
This addressing mode is the second variant of valid addressing and not currently implemented in \textsc{Ikra-Cpp}. Its purpose is to reduce the amount of waste due to the padding area. It could potentially also be useful for virtual function support in the future.

An object of SOA class \texttt{C} with ID $i$ is referred to with a fake \texttt{C*} pointer pointing to the physical memory location of the value of the first field of object $i$ (Figure~\ref{fig:mode_first_field}). If the SOA class has at least one virtual function, then the first field is the vtable pointer\footnote{Most C++ compilers store the vtable pointer in the first 8 bytes of an object.}. If the number of fields of the SOA class is larger than the size of the first field, then the memory of the first field must be padded with $\mathit{sizeof}(\texttt{C::}\mathit{first}) - \mathit{numFields}(\texttt{C})$ bytes to avoid overwriting values of the first field of other objects (with larger object IDs) due to zero initialization. Since object deallocation is not yet supported in \textsc{Ikra-Cpp}, this would currently not be problem because object slots with greater IDs are guaranteed to be empty. However, we will extend \textsc{Ikra-Cpp} with dynamic object deallocation in the next chapter. Given a fake \texttt{C*} pointer $\mathit{obj}$, the memory location of a field \texttt{C::f} is calculated as follows.

\begin{figure}
  \centering
  \includegraphics[width=0.5\columnwidth]{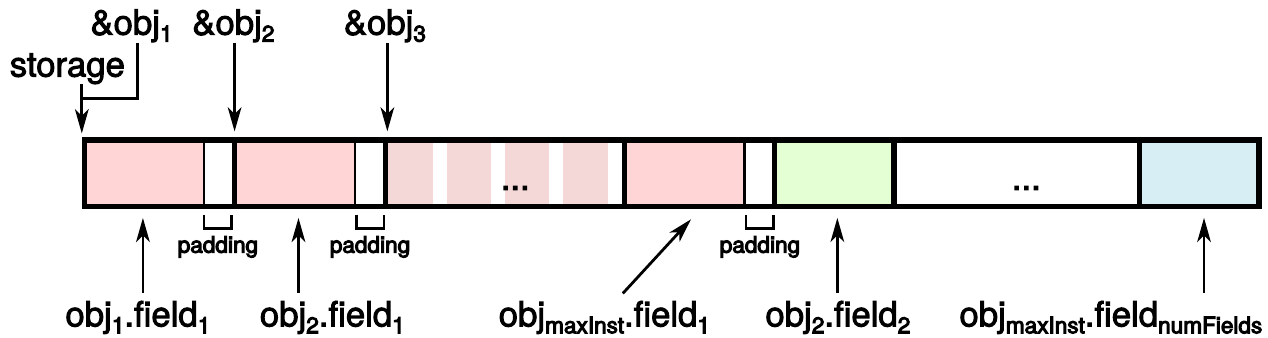}
  \caption[\textsc{Ikra-Cpp}: Storage buffer layout in first field addressing]{Storage buffer layout in first field addressing}
  \label{fig:mode_first_field} 
\end{figure}

\begin{align*}
\mathit{addr}_\mathit{first}(\mathit{obj}, \texttt{C::f})  & = \mathit{storage} \\
 + \; & \mathit{maxInst}(\texttt{C}) \cdot \mathit{offset}^*(\texttt{C::f}) \\
 + \; & \left(\frac{\mathit{obj} - \mathit{storage}}{\mathit{sizeof}^*(\texttt{C::}\mathit{first})} - 1\right) \cdot \mathit{sizeof}^*(\texttt{C::f})
\end{align*}

$\mathit{sizeof}(\texttt{C::f})$ and $\mathit{offset}(\texttt{C::f})$ denote the C++ size of a field type and the offset of a field within the class (Listing~\ref{lbl:ikra_cpp_example_code}). In addition, $\mathit{sizeof}^*(\texttt{C::f})$ and $\mathit{offset}^*(\texttt{C::f})$ also take into account padding that may be added to the first field. The physical memory location of a field value \texttt{C::f} with respect to its proxy object's address $\mathit{this}$ is calculated as follows. We get this formula by plugging in the definition of \emph{obj} from storage-relative zero addressing.

\begin{align*}
\mathit{addr}_\mathit{first}(\mathit{this}, \texttt{C::f})  & = \textcolor{blue}{\mathit{storage} - \mathit{sizeof}(\texttt{C::f})} \\
 + \; & \textcolor{blue}{\mathit{maxInst}(\texttt{C}) \cdot \mathit{offset}^*(\texttt{C::f})} \\
 - \; & \textcolor{blue}{(\mathit{index}(\texttt{C::f}) + \mathit{storage} - 1) \cdot R} \\
 + \; & \textcolor{red}{\underline{\mathit{this}}} \cdot \textcolor{blue}{R} \\
\mbox{where } R  & = \textcolor{blue}{\frac{\mathit{sizeof}^*(\texttt{C::f})}{\mathit{sizeof}^*(\texttt{C::}\mathit{first})}}
\end{align*}

Even though the definition of $\mathit{addr}_\mathit{first}$ contains a fraction, its value is always an integer. However, its calculation is not straightforward. Similar to the previous addressing modes, we rearranged the terms in the above formula such that it can be computed with one addition and one multiplication after constant folding (strided memory access). While the formula always gives us integer values, single parts such as $\mathit{this} \cdot R$ may be fractions. Floating point operations as part of the address computation are highly inefficient and must be avoided.

Therefore, there are two options for implementing first field addressing in \textsc{Ikra-Cpp}. Either we compute the formula differently (with more arithmetic operations) or we enforce $R$ to be an integer, i.e., the size of every field must be a multiple of the size of the first field.

This addressing mode is superior to storage-relative zero addressing only if the field padding size is zero or one byte. Otherwise, there would be no space savings, because field padding is incurred for every object, i.e., $\mathit{maxInst}$ many times. 

\subsection{Code Generation Experiment}
\label{sec:ikra_cpp_code_generation_experiment}
\textsc{Ikra-Cpp}'s data layout DSL is \emph{embedded} into C++/CUDA and does not rely on an external preprocessor or code generator. It is relies heavily on template metaprogramming, operator overloading and type punning. Therefore, the code that the C++ compiler is seeing is quite complex. In this section, we analyze how well C++ compilers can optimize such code.

\begin{lstfloat}
\begin{lstlisting}[language=c++, morekeywords={uintptr_t, int_}, caption={Example: Writing a field of an object in AOS/SOA/\textsc{Ikra-Cpp}}, label={lst:writing_field_of_obj}, numbers=none]
// Case 1: AOS style
class DummyClassAos {
 public:
  int field0;
  int field1;
};

// Case 2: Handwritten SOA
int soa_field0[100];
int soa_field1[100];

// Case 3: Ikra-Cpp (SOA)
class DummyClassIkraCpp : public IkraSoaBase<DummyClassIkraCpp, 100> {
 public: <@\emph{IKRA\_INITIALIZE\_CLASS}@>
  int_ field0;
  int_ field1;
}; <@\emph{IKRA\_HOST\_STORAGE}@>(DummyClassIkraCpp)

void write_field0(DummyClassIkraCpp* obj)        { obj->field0    = 0x7777; }
void write_field0_handwritten_soa(uintptr_t id)  { soa_field0[id] = 0x7777; }
void write_field0_aos(DymmyClassAos* obj)        { obj->field0    = 0x7777; }
\end{lstlisting}

\begin{lstlisting}[columns=fixed, caption={[Generated assembly code for field write]Generated assembly code for functions in Listing~\ref{lst:writing_field_of_obj}}, label={lst:generated_assembly_code_funcs}, numbers=none]
00000000004005e0 <_Z12write_field0P9DummyClassIkraCpp>:
  <@\textcolor{red}{4005e0:}@> c7 04 bd b0 a3 60 00  <@\textcolor{blue}{movl}@>   $0x7777,0x60a3b0(,%rdi,4)
  <@\textcolor{red}{4005e7:}@> 77 77 00 00 
  <@\textcolor{red}{4005eb:}@> c3                    <@\textcolor{blue}{retq}@>   
  <@\textcolor{red}{4005ec:}@> 0f 1f 40 00           <@\textcolor{blue}{nopl}@>   0x0(%rax)

0000000000400620 <_Z21write_field0_handwritten_soam>:
  <@\textcolor{red}{400620:}@> c7 04 bd 60 10 60 00  <@\textcolor{blue}{movl}@>   $0x7777,0x601060(,%rdi,4)
  <@\textcolor{red}{400627:}@> 77 77 00 00 
  <@\textcolor{red}{40062b:}@> c3                    <@\textcolor{blue}{retq}@>   
  <@\textcolor{red}{40062c:}@> 0f 1f 40 00           <@\textcolor{blue}{nopl}@>   0x0(%rax)

0000000000400640 <_Z25write_field0_aosP16DummyClassAos>:
  <@\textcolor{red}{400640:}@> c7 07 77 77 00 00     <@\textcolor{blue}{movl}@>   $0x7777,(%rdi)
  <@\textcolor{red}{400646:}@> c3                    <@\textcolor{blue}{retq}@>   
  <@\textcolor{red}{400647:}@> 66 0f 1f 84 00 00 00  <@\textcolor{blue}{nopw}@>   0x0(%rax,%rax,1)
  <@\textcolor{red}{40064e:}@> 00 00 
\end{lstlisting}
\end{lstfloat}

Listing~\ref{lst:writing_field_of_obj} shows the setup of our experiment. We would like to analyze the assembly code that a C++ compiler generates for writing an integer value to a field of a given object. We consider three cases.
\begin{enumerate}
  \item An ordinary C++ class, where the object is referred to with an object pointer (AOS style).
  \item A hand-written SOA layout, where the object is referred to with an integer index into SOA arrays.
  \item An \textsc{Ikra-Cpp} class in SOA layout, where the object is referred to with a fake pointer in zero addressing mode.
\end{enumerate}

Listing~\ref{lst:generated_assembly_code_funcs} shows the generated assembly code of gcc~5.4.0 with \texttt{-O3} optimization. Lower optimization levels result in considerably less efficient code because the overloaded operators and the \texttt{data\_ptr} function (Listing~\ref{lst:proxy_types}) are not inlined.

We can see that the generated assembly code of the \textsc{Ikra-Cpp} version is nearly identical to the hand-written SOA version. The assembly code of both functions differs only in the address of the (conceptual) SOA array. This shows that, at least in this small experiment, a modern compiler is able to optimize \textsc{Ikra-Cpp} code through peephole optimizations such as constant folding.


\paragraph{Automatic Vectorization}
Unfortunately, this is not always the case for other compiler optimizations. One example is automatic loop vectorization. We analyzed the generated assembly code of a benchmark that runs \texttt{Body::move} (Listing~\ref{lbl:ikra_cpp_example_code}) on the host (CPU) in a loop with many iterations. In the hand-written SOA code, gcc and clang vectorize the method call \texttt{Body::move} of multiple loop iterations with SSE (\emph{Streaming SIMD Extensions}) processor instructions. However, only gcc performs the equivalent loop vectorization with \textsc{Ikra-Cpp} code. Clang is able to apply optimizations like loop unrolling but considers the memory reads/writes\footnote{The fundamental problem is pointer casting. In the simplest case, an expression like \texttt{array[reinterpret\_cast<uintptr\_t>(id)]}, where \texttt{id} is a pointer encoding an integer array offset is already considered unsafe. This could potentially be solved with the C/C++ \texttt{restrict} keyword.} as potentially \emph{dependent} memory operations and thus unsafe for vectorization.

There are three approaches to solve this problem. First, we can try rewriting the address computation part of \textsc{Ikra-Cpp}, in an attempt to give the compiler additional hints that trigger optimizations. Due to our type punning-based implementation, this approach is fragile and could break at any time. Second, code can be vectorized manually, either with C++ SSE intrinsics or with a vectorization framework like \emph{Sierra}~\cite{LeiBa:2012:ECL:2145816.2145825, LeiBa:2014:SSE:2568058.2568062}. Considering that real applications, which exhibit code that is more complex than our example here, cannot be automatically vectorized (yet) with today's compilers, even if written in SOA style, this approach seems feasible to us. Third, \textsc{Ikra-Cpp} could be implemented as a compiler extension or with a custom preprocessor/code generator, which is the cleanest and most stable solution. 

Note that automatic vectorization is a minor issue for GPU code. CUDA follows the SIMT (Single-Instruction Multiple-Threads) model. Since SIMD parallelism is exposed to programmers as threads, programmers effectively vectorize their code manually. Similarly, programs written for the Intel SPMD Prgram Compiler (ispc)~\cite{6339601} code are implicitly vectorized.

\subsection{Preliminary Performance Evaluation}
\label{sec:benchmarks}
We evaluated \textsc{Ikra-Cpp} on a computer with an Intel Core i7-5960X CPU (4x 3.00 GHz), 32 GB RAM and an NVIDIA GeForce GTX 980 GPU, a 64-bit Ubuntu 16.04.1, gcc~5.4.0 and the NVIDIA CUDA Toolkit 9.0.176 in zero addressing mode.

We benchmarked an iterative application of \texttt{Body::move} (Listing~\ref{lbl:ikra_cpp_example_code}) in a parallel do-all operation. This benchmark is quite simple, but it clearly isolates the overheads of \textsc{Ikra-Cpp}, specifically address computation. Since the generated assembly code for \textsc{Ikra-Cpp} is nearly identical to a hand-written SOA layout, we expect minimal overheads. The number of iterations was chosen such that every program ran for at least 5 seconds. We calculated the average running time per iteration and report the minimum time out of 12 program runs.

\begin{figure}
\centering
\includegraphics[width=0.49\columnwidth]{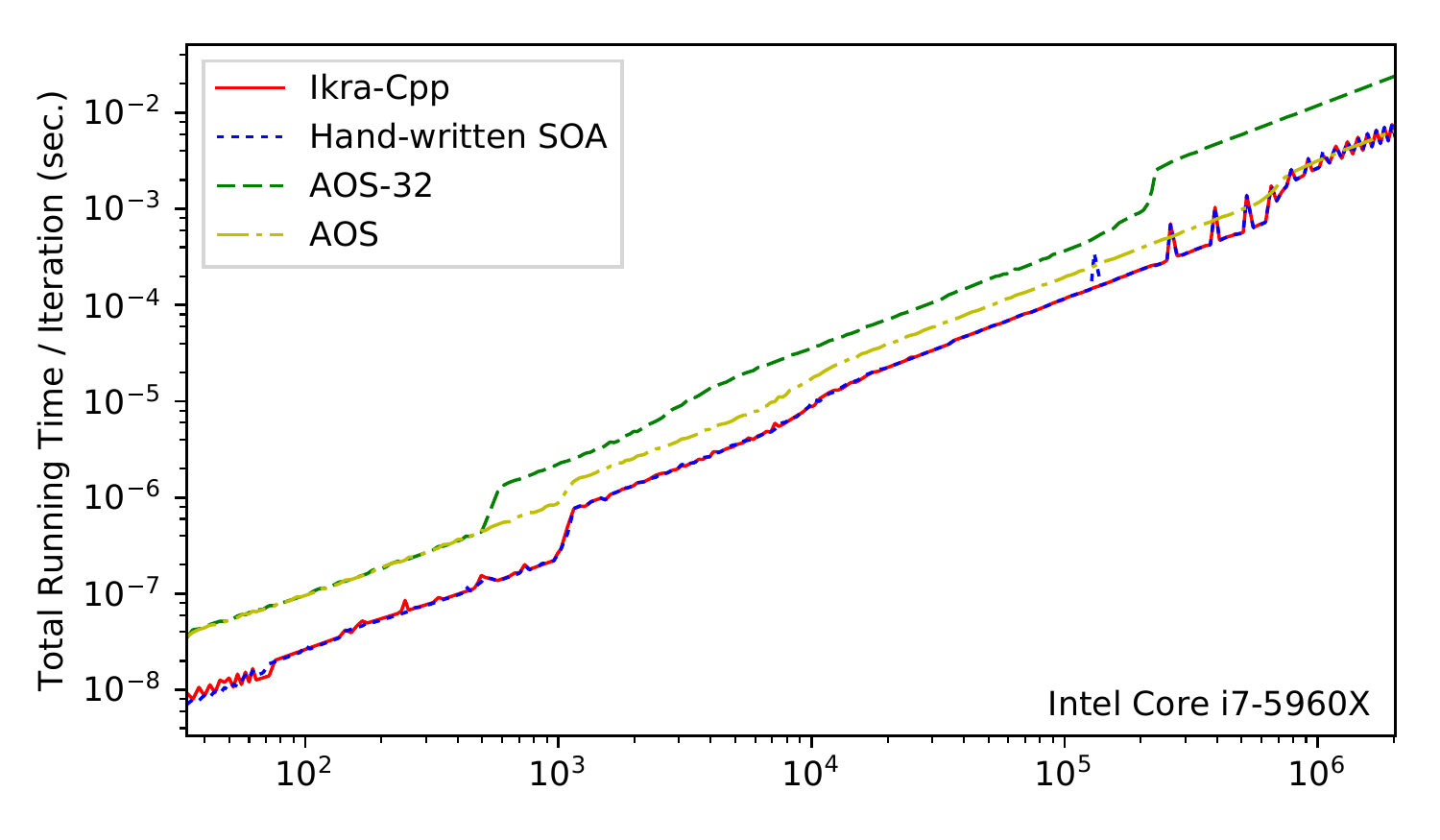}\hfill
\includegraphics[width=0.49\columnwidth]{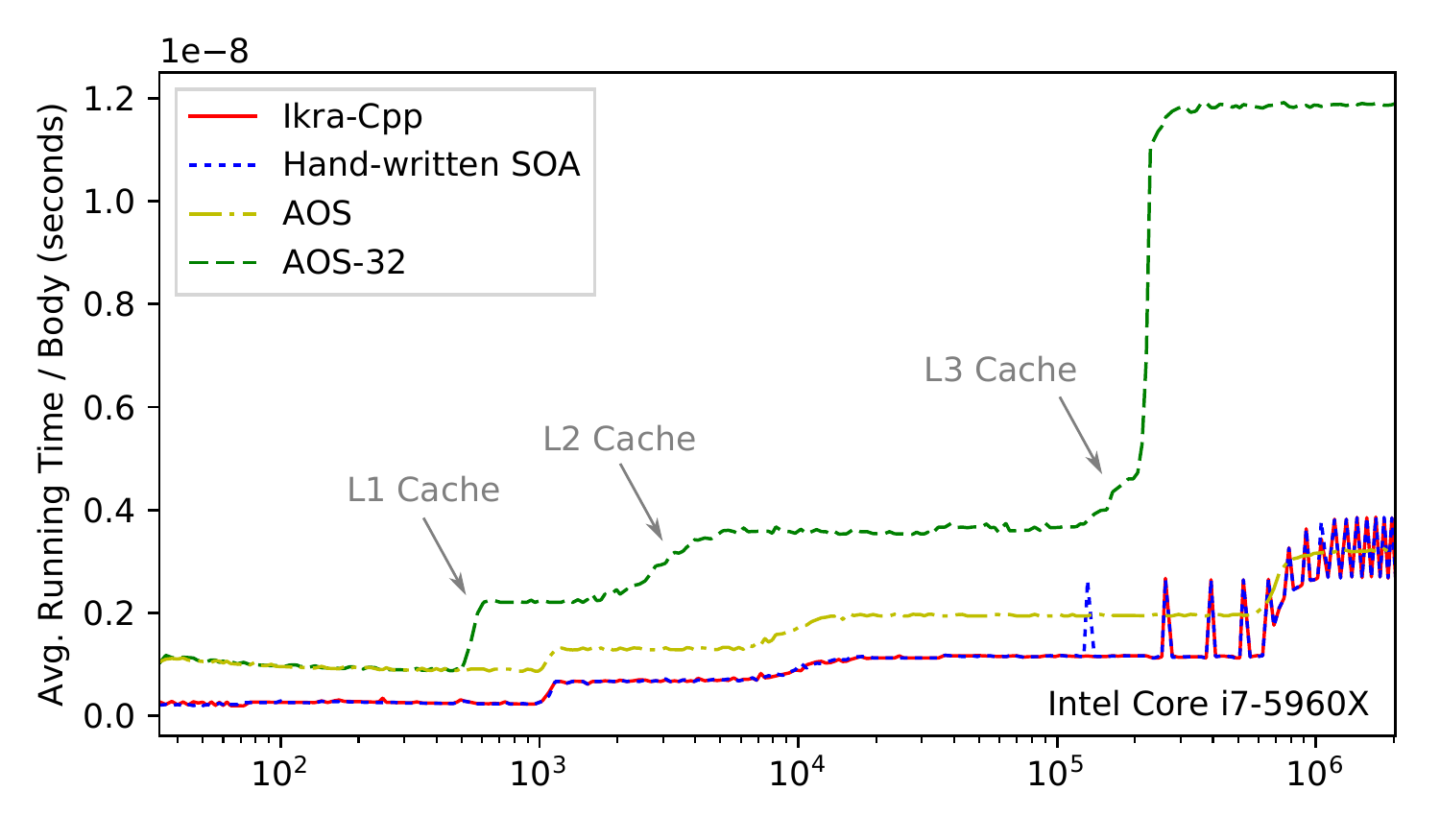}
\caption[N-body in \textsc{Ikra-Cpp}: Host mode running time]{Host mode running time}
\label{fig:workstation_bench_host}
\end{figure}

\begin{figure}
\centering
\includegraphics[width=0.49\columnwidth]{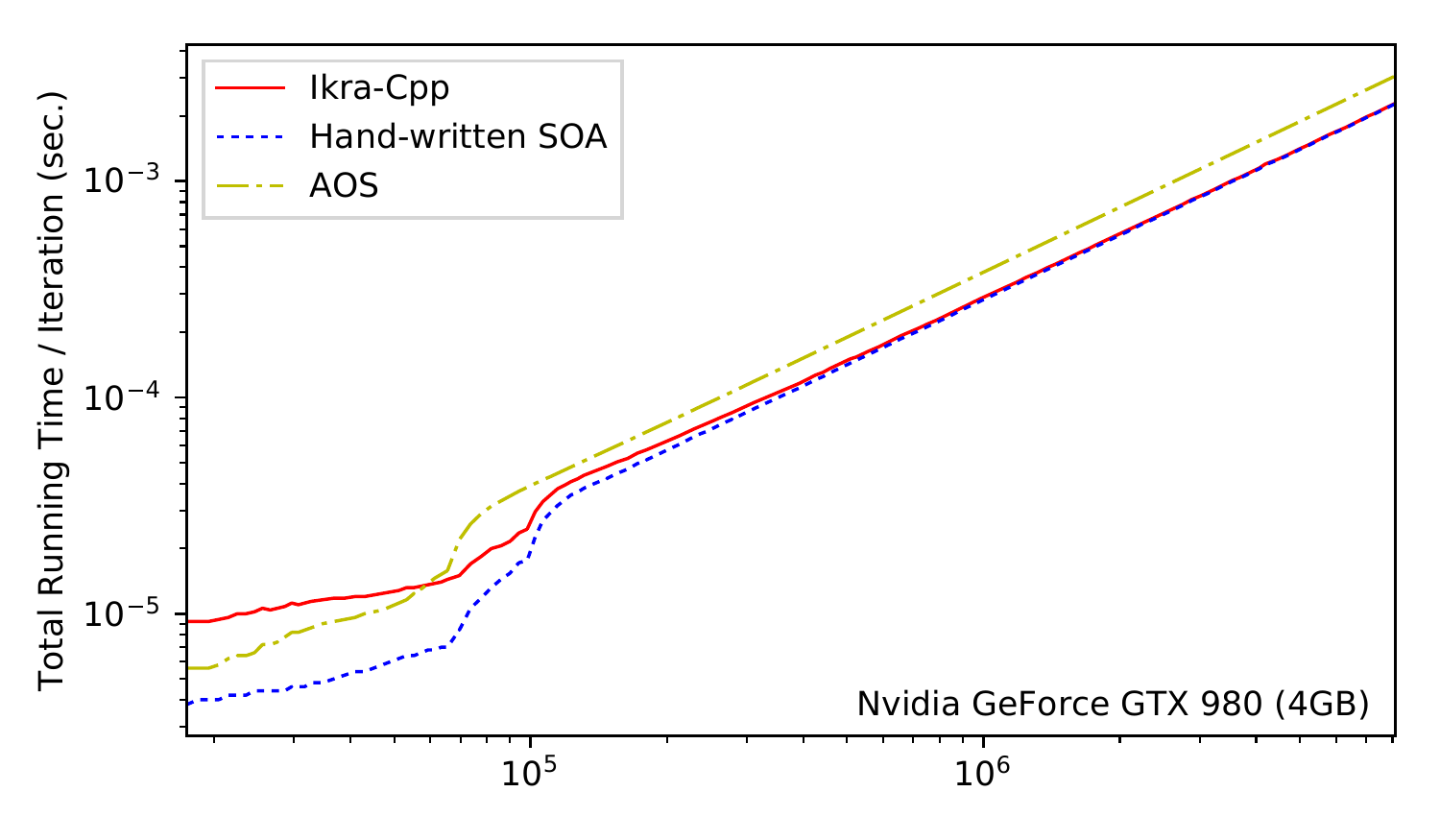}\hfill
\includegraphics[width=0.49\columnwidth]{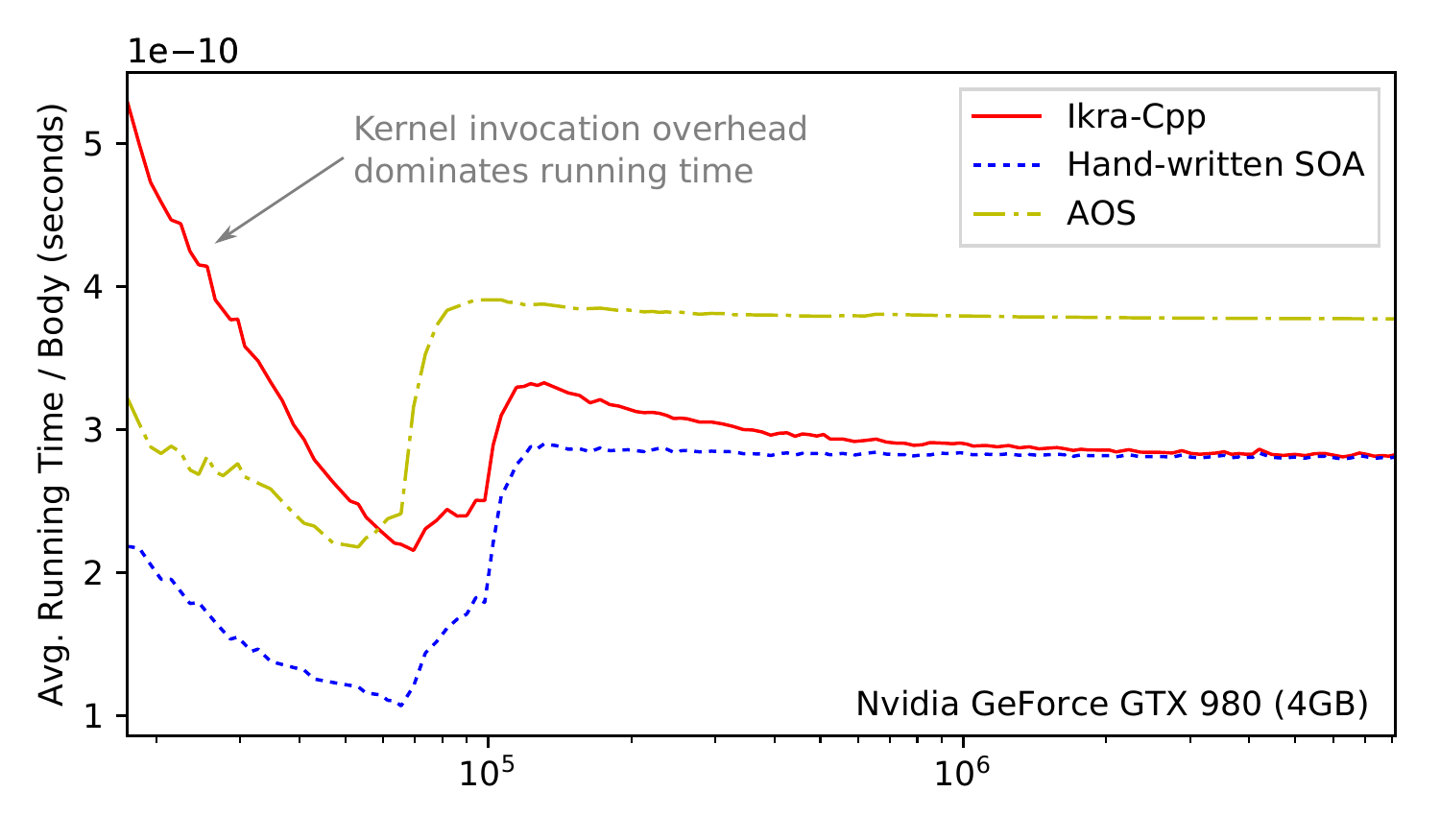}
\caption[N-body in \textsc{Ikra-Cpp}: Device mode running time]{Device mode running time}
\label{fig:workstation_bench_device}
\end{figure}

Figures~\ref{fig:workstation_bench_host} and \ref{fig:workstation_bench_device} show the running time on CPU and GPU. The x-axis denotes the number of objects and the y-axis denotes the running time in seconds. The left subfigure shows the average running time of one entire iteration and the right subfigure shows the average running time for a single \texttt{Body} object.

In host mode, \textsc{Ikra-Cpp}'s performance is almost identical to the hand-written SOA code. AOS-32 is a variant of AOS where 16 supplemental \texttt{double} fields were added to the \texttt{Body} class, similarly to the SoAx benchmark section~\cite{HOMANN2018325}. We can think of such fields as additional properties of a body (e.g., mass or radius) that are not utilized in this particular computation. The AOS-32 line of the right subfigure clearly shows the effect of the L1, L2, L3 caches (32~KB, 256~KB, 20~MB) in host mode.

The performance difference between \textsc{Ikra-Cpp} and the hand-written SOA layout in device mode is due to a higher kernel invocation overhead of \textsc{Ikra-Cpp}. More than 10,000 iterations (kernel invocations) are performed for small problem sizes. With a larger number of bodies, we get closer to hand-written SOA code, because each kernels performs more work.

\subsection{Related Work}
\label{sec:related_work_ikra_cpp_paper}
The AOS-SOA tradeoff is a well-known problem and has been studied in previous work in the context of C structs. To the best of our knowledge, there is no system that provides an AOS-like programming style for object-oriented programming with SOA performance characteristics.

\emph{SoAx}~\cite{HOMANN2018325} is a C++ library for AOS-style C/C++ programming with an implicit SOA layout. It is based on preprocessor macros and template metaprogramming. SoAx does not support OOP concepts like classes or methods. SOA struct types are defined with \texttt{std::tuple} instantiations and a helper macro that defines every SOA array separately. Object field values can only be accessed through a getter method of an SOA container object, which takes an object ID as argument, and not through SOA pointers. While such code is less expressive, it has two benefits: First, such code is easier to optimize for compilers than \textsc{Ikra-Cpp} code because it does not require decoding an object ID from a pointer. Second, it allows programmers to create multiple containers (structures of arrays), each of which has its own object ID range.

\emph{Array of Structures eXtended (ASX)}~\cite{STRZODKA2012429} is a library similar to SoAx. Objects in ASX can be allocated in ASX containers as well as on the stack (as single objects). ASX containers support both SOA and AOS data layouts, one of which must be chosen as a template parameter. ASX allows fields to be accessed with C++ member syntax (dot/arrow operators), but there are certain restrictions with respect to the size of field types.

\emph{Columnar Objects}~\cite{Mattis:2015:COI:2814228.2814230} is a Python extension (implemented in PyPy) that stores objects as SOA. Columnar Objects can deliver significant speedups of object-oriented analytical applications. In contrast to \textsc{Ikra-Cpp} and other libraries, subclassing is possible. However, newly introduced SOA arrays of subclasses store \emph{null} values for objects of superclasses, which can waste memory. Similar to \textsc{Ikra-Cpp}, objects are accessed through proxy objects which delegate field accesses to the actual, physical memory locations.

The \emph{Intel SPMD Program Compiler (ispc)}~\cite{6339601, Brodman:2014:WSS:2568058.2568065} is an experimental C compiler with language features for better SIMD support. Among other features, it can store an array of C structs in a hybrid SOA layout (also called \emph{Array of Structures of Arrays} (AoSoA)~\cite{Strzodka:2012:DLO:2337604.2337713, Weber:2014:ACA:2855568.2855580} or \emph{Tiled AOS}~\cite{10.1007/978-3-662-48096-0_21}). If a struct type is annotated with the \texttt{soa<$N$>} keyword and used to declare an array (where $N$ should be the SIMD width), then the array is stored as hybrid SOA with an SOA length of $N$. Array elements can be accessed with the usual C syntax. Furthermore, it is possible to take the address of an SOA object and fields can be accessed using an SOA object pointer. From that perspective, ispc's functionality is very similar to \textsc{Ikra-Cpp}. It would interesting to see how easily ispc can be extended to support OOP concepts like methods.

\emph{Shapes} is a high-level programming language that allows programmers to specify custom data layouts for better memory cache performance~\cite{Franco:2017:YAG:3133850.3133861}. Objects are stored in \emph{pools} and every pool can have a different object layout, e.g., AOS, SOA or a mixed layout. The developers of Shapes are recently working on compiling Shapes applications for SIMD architectures~\cite{Tasos:2018:ESS:3242947.3242951}.

\textsc{Ikra-Cpp} could be implemented as a compiler extension. To the best of our knowledge, no such extension exists for a widely used language. We believe that this is due to the high engineering effort of writing a new compiler or such an invasive compiler extension~\cite{Mernik:2005:DDL:1118890.1118892}.

\subsection{Summary}
\label{sec:conclusion}
We presented a first implementation of \textsc{Ikra-Cpp}'s data layout DSL for object-oriented programming with SOA performance characteristics. \textsc{Ikra-Cpp} allows programmers to write object-oriented code in AOS notation, while data is stored as SOA for better performance. SOA object members are always accessed through fake pointers or object references. How exactly an object ID is encoded in a pointer is determined by the addressing mode. Our main insights are that (a) object ID decoding and field address computations can be done efficiently after constant folding and that (b) an AOS-style notation can be achieved transparently in C++ with operator overloading, template metaprogramming, and preprocessor macros. Preliminary benchmarks show that simple examples written with \textsc{Ikra-Cpp} and compiled with gcc are on par with hand-written SOA code.

\textsc{Ikra-Cpp} is the basis of \textsc{DynaSOAr} (Chapter~\ref{sec:chapter_dynasoar}). The data layout DSL of \textsc{DynaSOAr} is heavily based on \textsc{Ikra-Cpp}'s DSL.

\section{Inner Arrays in a Structure of Arrays}
\label{sec:inner_arrays_in_soa_papaer}
SOA works well with simple data structures, but cannot be easily applied to structs that contain inner arrays (i.e., fields of array type), potentially of non-constant size. Such structures appear frequently in graph-based applications and in object-oriented designs with associations of high multiplicity.

Such arrays are typically allocated separately on the heap, requiring an additional pointer indirection. In this section, we analyze different data layout techniques for inner arrays. We extended \textsc{Ikra-Cpp} with additional proxy field types that implement those layouts.

Applications that iterate over arrays one-by-one or in a fashion that is \emph{uniform} among all objects are particularly interesting, because their memory access can be optimized. While a standard SOA layout does not affect the layout of inner arrays, a different, more SIMD-friendly layout can group elements by array index and increase memory coalescing on GPUs for such applications.

\paragraph{Examples}
To experiment with various inner array layouts, we implemented two important real-world SMMO applications in \textsc{Ikra-Cpp}: Breadth-first search (BFS) and an agent-based, object-oriented traffic flow simulation. In graphs, vertices often have a varying number of neighbors and adjacency lists (arrays) are the preferred representation for BFS on GPUs~\cite{10.1007/978-3-540-77220-0_21}. The traffic flow simulation exhibits graph-based features for representing street networks and utilizes array-based data structures within the simulation logic.


\begin{figure}
\begin{lstlisting}[language=c++, morekeywords={__host__, __device__, int_}, caption={Data structure of frontier-based BFS in \textsc{Ikra-Cpp}},label={lst:bfs_in_ikra_cpp_impl}]
class Vertex : public IkraBase<Vertex, 100> {  // or: IkraSoaBase<Vertex, 100>
 public: <@\emph{IKRA\_INITIALIZE\_CLASS}@>
  int_ distance = std::numeric_limits<int>::max();
  <@\fbox{???}@> /* (some array type) */ neighbors;

  // Constructor can run on host and device.
  __device__ ___host__ Vertex(int num_neighbors) : neighbors(num_neighbors) {}

  __device__ int num_neighbors() { return neighbors.size(); }

  __device__ void visit(int frontier);  // Implementation later...
};

<@\emph{IKRA\_DEVICE\_STORAGE}@>(Vertex)
\end{lstlisting}
\vspace{20pt}

\begin{minipage}{0.48\textwidth}
\includegraphics[width=\columnwidth]{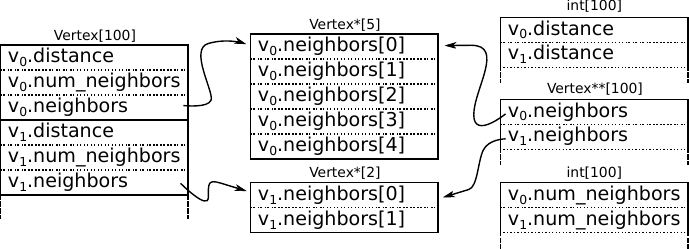}
\caption[Inner array layout: No inlining]{No inlining: AOS (left side) and SOA (right side)} 
\label{fig:ex_strat_no}
\end{minipage}\hfill \begin{minipage}{0.48\textwidth}
\includegraphics[width=\columnwidth]{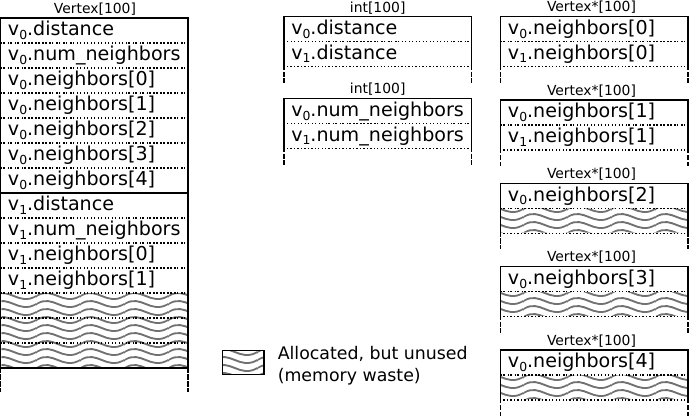}
\caption[Inner array layout: Full inlining]{Full inlining: AOS (left side) and SOA (middle, right side)} 
\label{fig:ex_strat_full}
\end{minipage}
\vspace{20pt}

\begin{minipage}{0.48\textwidth}
\includegraphics[width=\columnwidth]{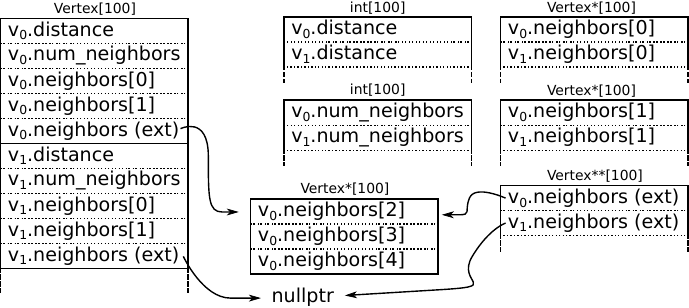}
\caption[Inner array layout: Partial inlining]{Partial inlining: AOS (left side) and SOA (middle, right side)} 
\label{fig:ex_strat_part}
\end{minipage} \hfill \begin{minipage}{0.48\textwidth}
\includegraphics[width=\columnwidth]{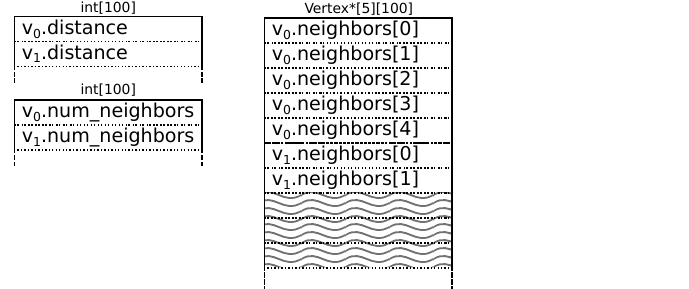}
\caption[Inner array layout: Array as object]{SOA, array as object} 
\label{fig:ex_strat_obj}
\end{minipage}
\end{figure}

\subsection{Data Layout Strategies for Inner Arrays}
\label{sec:data_layout_str}


This section gives an overview of seven inner array layout strategies. We focus on C++-style arrays with fixed size after allocation. Figures~\ref{fig:ex_strat_no}--\ref{fig:ex_strat_obj} illustrate these strategies visually, using the \texttt{Vertex} class for a breath-first algorithm as an example (Listing~\ref{lst:bfs_in_ikra_cpp_impl}). This class has two fields: A distance field storing the computed distance of the vertex from the source vertex and an array of \texttt{Vertex} pointers (adjacency list). Note that the adjacency lists of different vertices may have different lengths. Therefore, we also have to store the size of each list/array (\texttt{num\_neighbors}).

In the following paragraphs, we describe multiple layout strategies of the adjacency list array.  We implemented these strategies in \textsc{Ikra-Cpp} with additional proxy field types. 

We consider three categories of layout strategies: \emph{AOS}, \emph{SOA} and \emph{SOA with Array as Object} (i.e., SOA without handling inner arrays specially). In every category, inner arrays can either be \emph{fully inlined}, \emph{partially inlined} or not inlined at all (\emph{without inlining}). Whether objects of a class are stored in AOS or SOA depends on the chosen superclass (\texttt{IkraBase} or \texttt{IkraSoaBase}).

\paragraph{AOS without Inlining (Figure~\ref{fig:ex_strat_no} (left), Listing~\ref{lst:ikra_cpp_aos_wo_inlining})}
This is the default data layout that many programmers choose intuitively. Objects are stored as AOS (\texttt{IkraBase}), i.e., all field values of an object are stored together. Inner arrays are heap-allocated. This layout is useful if inner arrays have different sizes, because then no memory is wasted by array inlining. In standard C++, inner arrays can be allocated manually (using \texttt{malloc}/\texttt{new}) or with a helper class like \texttt{std::vector<T>}.

\textsc{Ikra-Cpp} provides a proxy field type \texttt{inlined\_array\_(T, 0)} which internally stores a memory pointer to a heap-allocated array and the size of the array. Fields of this type must be initialized in the constructor with the desired array size (Listing~\ref{lst:bfs_in_ikra_cpp_impl}, Line~7). \textsc{Ikra-Cpp} will then allocate the array with the system/CUDA-wide dynamic memory allocator (\texttt{malloc}). Note that the CUDA allocator is slow, so this implementation is not suitable if the constructor is part of performance-critical code.

\begin{figure}
\makeatletter\def\@captype{lstlisting}\makeatother
\caption[\textsc{Ikra-Cpp}: Notation: AOS without inlining]{Notation: AOS without inlining}
\label{lst:ikra_cpp_aos_wo_inlining}
\vspace{-0.25cm}
\begin{minipage}{.32\textwidth}
\begin{lstlisting}[language=c++, numbers=none, morekeywords={inlined_array_, field_}]
// Preferred notation:
inlined_array_(Vertex*, 0)
    neighbors;
\end{lstlisting}
\end{minipage} \hfill
\begin{minipage}{.32\textwidth}
\begin{lstlisting}[language=c++, numbers=none, morekeywords={int_, field_}]
// Alternative 1:
int_ num_neighbors;
field_(Vertex**) neighbors;
\end{lstlisting}
\end{minipage}\hfill
\begin{minipage}{.28\textwidth}
\begin{lstlisting}[language=c++, numbers=none, morekeywords={int_, field_}]
// Alternative 2:
field_(std::vector<
    Vertex*>) neighbors;
\end{lstlisting}
\end{minipage}
\end{figure}

Depending on the class structure, compilers may have to add padding to ensure that all fields are properly aligned. E.g., the location that stores the 64-bit memory pointer to the inner array must be aligned to a multiple of 64-bit. This is a general disadvantage of AOS (Section~\ref{sec:obj_vs_soa_array_alignment}). Note that this layout is identical to ``AOS with Partial Inlining'' with an inlining size of zero, but the conditional branch of that layout strategy is optimized away by the compiler.

\paragraph{AOS with Full Inlining (Figure~\ref{fig:ex_strat_full} (left), Listing~\ref{lst:ikra_cpp_aos_full_inl})}
This layout strategy still stores objects as AOS (\texttt{IkraBase}), but inlines inner arrays fully into objects, such that they can be accessed more efficiently without a pointer indirection. Very small inner arrays may be able to share cache lines with other fields, and thus benefit cache utilization.

The downside of this layout is that it potentially wastes memory; all inner arrays must have the same size, i.e., the largest size among all inner arrays, to be able to hold all elements. The amount of wasted memory depends on the variance among inner array sizes.

In standard C++, fully-inlined inner arrays of size $N$ can be declared with \texttt{std::array<T, N>}. \textsc{Ikra-Cpp} provides a proxy field type \texttt{fully\_inlined\_array\_}.

\begin{figure}
\makeatletter\def\@captype{lstlisting}\makeatother
\caption[\textsc{Ikra-Cpp}: Notation: AOS with full inlining]{Notation: AOS with full inlining}
\label{lst:ikra_cpp_aos_full_inl}
\vspace{-0.25cm}
\begin{minipage}{0.45\textwidth}
\begin{lstlisting}[language=c++, numbers=none, morekeywords={fully_inlined_array_}]
// Preferred notation:
fully_inlined_array_(Vertex*, N)
    neighbors;
\end{lstlisting}
\end{minipage} \hfill
\begin{minipage}{0.5\textwidth}
\begin{lstlisting}[language=c++, numbers=none, morekeywords={field_, int_}]
// Alternative:
int_ num_neighbors;
field_(std::array<Vertex*, N>) neighbors;
\end{lstlisting}
\end{minipage}
\end{figure}

\paragraph{AOS with Partial Inlining (Figure~\ref{fig:ex_strat_part} (left), Listing~\ref{lst:ikra_cpp_aos_part_inl})}
This layout is a mixture of the previous two strategies. Up to $N$ inner array elements are inlined into objects, where $N$ is a compile-time constant. Elements with an index $\geq N$ are stored externally on the heap. The benefit of this approach is efficient access to the first $N$ elements. Access to elements on the external storage is as expensive as with ``Without Inlining'', i.e., it requires a pointer indirection. This strategy requires an additional conditional branch to determine whether an element is stored in the inline storage or on the external storage. This imposes little overhead on GPUs, which do generally not execute instructions speculatively.

In ordinary C++, inner arrays can be partially inlined with a helper class such as \texttt{absl::InlinedVector<T, N>}\footnote{\texttt{InlinedVector} is part of the Abseil library. See \url{https://github.com/abseil/abseil-cpp}}. \textsc{Ikra-Cpp} supports this layout with the previously mentioned proxy field type \texttt{inlined\_array\_(T, N)}. The second argument $N$ is the number of inlined array slots. The total array size must be specified during field initialization.

\begin{figure}
\makeatletter\def\@captype{lstlisting}\makeatother
\caption[\textsc{Ikra-Cpp}: Notation: AOS with partial inlining]{Notation: AOS with partial inlining}
\label{lst:ikra_cpp_aos_part_inl}
\vspace{-0.25cm}
\begin{minipage}{1.0\textwidth}
\begin{lstlisting}[language=c++, numbers=none, morekeywords={int_, field_, inlined_array_}]
inlined_array_(Vertex*, N) neighbors;  // Preferred notation
field_(absl::InlinedVector<Vertex*, N>) neighbors;  // Alternative

// Data layout is equivalent to, however, much harder to use:
int_ num_neighbors;
field_(std::array<Vertex*, N>) neighbors;
field_(Vertex**) neighbors_other;
\end{lstlisting}
\end{minipage}
\end{figure}

\paragraph{SOA without Inlining (Figure~\ref{fig:ex_strat_no} (right), Listing~\ref{lst:ikra_cpp_soa_wo_inlining})}
This layout is identical to ``AOS without Inlining'', but stores objects as SOA (\texttt{IkraSoaBase}). It has the usual benefits of SOA-style allocation: First, if not all fields are used all the time, it can improve cache utilization because those fields will not occupy cache lines. Second, it allows for efficient loads/stores from/into vector registers if objects with consecutive IDs are simultaneously accessed (memory coalescing). Third, less memory is wasted for object padding compared to AOS, because only the SOA arrays themselves must be aligned to certain byte sizes, but not each object.

With respect to inner arrays, the same advantages and disadvantages as in the first strategy apply. \textsc{Ikra-Cpp} provides a proxy field type \texttt{inlined\_array\_(T, 0)} which implements this layout.

\begin{figure}
\makeatletter\def\@captype{lstlisting}\makeatother
\caption[\textsc{Ikra-Cpp} Notation: SOA without inlining]{Notation: SOA without inlining}
\label{lst:ikra_cpp_soa_wo_inlining}
\vspace{-0.25cm}
\begin{minipage}{.32\textwidth}
\begin{lstlisting}[language=c++, numbers=none, morekeywords={inlined_array_, field_}]
// Preferred notation:
inlined_array_(Vertex*, 0)
    neighbors;
\end{lstlisting}
\end{minipage} \hfill
\begin{minipage}{.32\textwidth}
\begin{lstlisting}[language=c++, numbers=none, morekeywords={int_, field_}]
// Alternative 1:
int_ num_neighbors;
field_(Vertex**) neighbors;
\end{lstlisting}
\end{minipage}\hfill
\begin{minipage}{.28\textwidth}
\begin{lstlisting}[language=c++, numbers=none, morekeywords={int_, field_}]
// Alternative 2:
field_(std::vector<
    Vertex*>) neighbors;
\end{lstlisting}
\end{minipage}
\end{figure}

\paragraph{SOA with Full Inlining (Figure~\ref{fig:ex_strat_full} (right), Listing~\ref{lst:ikra_cpp_soa_full_inl})}
This layout is identical to ``AOS with Full Inlining'', but stores objects as SOA (\texttt{IkraSoaBase}). In particular, inner arrays are also stored in SOA, as if every array slot were a separate field of the class. There is a separate SOA array for each inner array index.

This layout provides opportunities for vectorized operations and memory coalescing not only for primitive fields but also when accessing inner array elements. This is possible if objects with consecutive IDs are processed in the same warp and inner array elements with the same indices are accessed.

Unfortunately, many parallel graph algorithms~\cite{Malewicz:2010:PSL:1807167.1807184} on GPUs do not benefit much from additional memory coalescing in this layout. Even though reading the pointers of an adjacency list can be coalesced, data reads/writes on neighboring vertices are still uncoalesced, because vertex IDs of neighbors are usually \emph{random} and not consecutive (Section~\ref{sec:inner_arrays_perf_eval}).

Similar to ``AOS with Full Inlining'', this layout provides more efficient array access without a pointer redirection but can waste memory. It is supported by \textsc{Ikra-Cpp} as \texttt{fully\_inlined\_array\_(T, N)}.

\begin{figure}
\makeatletter\def\@captype{lstlisting}\makeatother
\caption[\textsc{Ikra-Cpp}: Notation: SOA with full inlining]{Notation: SOA with full inlining}
\label{lst:ikra_cpp_soa_full_inl}
\vspace{-0.25cm}
\begin{minipage}{1.0\textwidth}
\begin{lstlisting}[language=c++, numbers=none, morekeywords={int_, field_, fully_inlined_array_}]
// Preferred notation:
fully_inlined_array_(Vertex*, N) neighbors;

// Data layout is equivalent to, however, much harder to use:
int_ num_neighbors;
field_(Vertex*) neighbors_1;
field_(Vertex*) neighbors_2;
/* ... */
field_(Vertex*) neighbors_N;
\end{lstlisting}
\end{minipage}
\end{figure}

\texttt{fully\_ininlined\_array\_} is used in both this layout and ``AOS with Full Inlining''. Whether this proxy type stores the inner array as AOS or as SOA depends on the layout of the class (\texttt{IkraBase} or \texttt{IkraSoaBase}).

\paragraph{SOA with Partial Inlining (Figure~\ref{fig:ex_strat_part} (right), Figure~\ref{lst:ikra_cpp_soa_part_inl})}
This strategy is identical to ``AOS with Partial Inlining'', but stores objects as SOA (\texttt{IkraSoaBase}). The first $N$ inner array elements are inlined and stored as SOA, as in the previous strategy. This layout is supported by \textsc{Ikra-Cpp} as \texttt{inlined\_array\_(T, N)}.

\begin{figure}
\makeatletter\def\@captype{lstlisting}\makeatother
\caption[\textsc{Ikra-Cpp}: Notation: SOA with partial inlining]{Notation: SOA with partial inlining}
\label{lst:ikra_cpp_soa_part_inl}
\vspace{-0.25cm}
\begin{minipage}{1.0\textwidth}
\begin{lstlisting}[language=c++, numbers=none, morekeywords={int_, field_, inlined_array_}]
// Preferred notation:
inlined_array_(Vertex*, N) neighbors;

// Data layout is equivalent to, however, much harder to use:
int_ num_neighbors;
field_(Vertex*) neighbors_1;
field_(Vertex*) neighbors_2;
/* ... */
field_(Vertex*) neighbors_N;
field_(Vertex**) neighbors_other;
\end{lstlisting}
\end{minipage}
\end{figure}

\begin{figure}
\makeatletter\def\@captype{lstlisting}\makeatother
\caption[\textsc{Ikra-Cpp}: Notation: SOA with array as object]{Notation: SOA with array as object}
\label{lst:ikra_cpp_soa_array_as_obj}
\vspace{-0.25cm}
\begin{minipage}{1.0\textwidth}
\begin{lstlisting}[language=c++, numbers=none, morekeywords={field_}]
field_(std::array<Vertex*, N>) neighbors;
\end{lstlisting}
\end{minipage}
\end{figure}

\texttt{ininlined\_array\_} is used in both this layout and ``AOS with Partial Inlining''. Whether this proxy type stores the inlined inner array slots as AOS or as SOA depends on the layout of the class (\texttt{IkraBase} or \texttt{IkraSoaBase}).

\paragraph{SOA with Array as Object (Figure~\ref{fig:ex_strat_obj}, Listing~\ref{lst:ikra_cpp_soa_array_as_obj})}
This layout treats inner arrays as normal C++ objects and does not perform any data layout transformations on them. There is \emph{one} SOA array for every field, including inner arrays. This layout is useful for GPU applications with nested parallelism. If threads with consecutive IDs simultaneously access consecutive inner array slots, then those accesses can be coalesced. E.g., this is the case when optimizing BFS with virtual warp-centric programming~\cite{Hong:2011:ACG:1941553.1941590}. This layout is supported by \textsc{Ikra-Cpp} as \texttt{field\_(std::array<T, N>)}.

From an inlining perspective, this layout inlines the inner array fully. It could easily be adapted to \emph{partial inlining} (\texttt{field\_(absl::InlinedVector<T, N>)}) or \emph{without inlining} (\texttt{field\_(std::vector<T>)}), but we do not analyze such layouts any further in this work.

\paragraph{Choosing a Layout Strategy}
Programmers have a variety of layout strategies to choose from. Which strategy is best depends on the hardware architecture, the data access patterns of the application and the characteristics of the dataset. Even for experienced programmers this process can to some degree be a trial and error.

With \textsc{Ikra-Cpp}, programmers still have to take all these factors into account, but switching between strategies is now much easier. As a rule of thumb, we suggest to start experimenting with a partial inlining size that ensures that 80\% of all inner array elements are inlined.

\subsection{Performance Evaluation}
\label{sec:inner_arrays_perf_eval}
We evaluated the previously described data layouts with three benchmarks: A synthetic benchmark, a frontier-based BFS implementation and a traffic flow simulation.

We ran all experiments on a machine with an Intel i7-5960X CPU (8x 3.00~GHz), 32~GB main memory, an NVIDIA GeForce GTX 980 GPU (4~GB memory), Ubuntu~16.04 and the \texttt{nvcc} compiler from the NVIDIA CUDA Toolkit version 9.1. This work focuses on GPU execution, but similar performance effects can be observed on CPUs with a compiler with good auto-vectorization. 

\paragraph{Synthetic Benchmark}
To isolate the performance effects of array inlining, we created a synthetic benchmark (Listing~\ref{lst:source_synth_inner_array_bench}) with a dummy class containing an \texttt{int} data field and an \texttt{int} array. The array has between 32 and 64 elements (chosen randomly). The benchmark adds the data field value to all array elements in a loop, i.e., all inner array elements are read and written. 

\begin{figure}
\centering
\includegraphics[width=0.65\columnwidth]{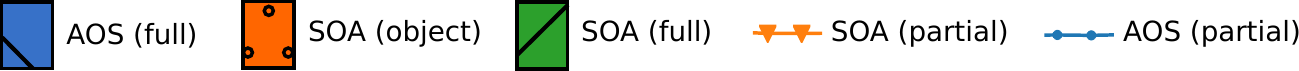}
\begin{minipage}{0.48\textwidth}
\centering
\includegraphics[width=0.7\columnwidth]{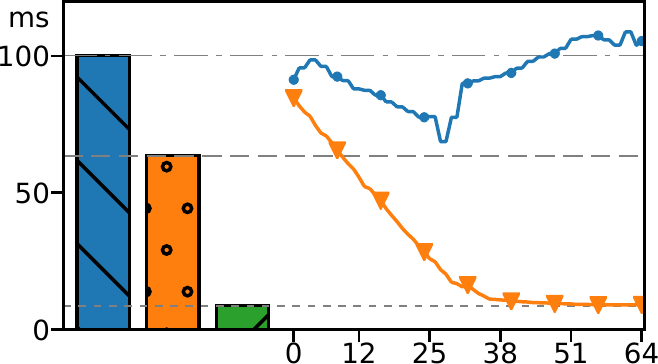}
\caption[Inner array inlining: Synthetic benchmark]{Synthetic benchmark (ms)} 
\label{fig:synthetic_bench_array_onl}
\end{minipage}\hfill \begin{minipage}{0.48\textwidth}
\centering
\includegraphics[width=0.7\columnwidth]{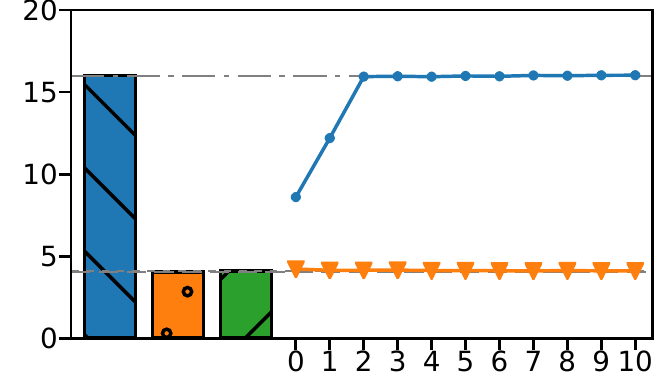}
\caption[Inner array inlining: Frontier-based BFS benchmark]{Frontier-based BFS (sec.)} 
\label{fig:bfs_bench_array_inl}
\end{minipage}
\vspace{0.5cm}

\begin{minipage}{1.0\textwidth}
\begin{lstlisting}[language=c++, numbers=none, morekeywords={__device__, int_}, caption={[Inner Array Inlining: Synthetic Benchmark]Source code of synthetic benchmark}, label={lst:source_synth_inner_array_bench}]
__device__ int rand_int(int min, int max) { /* return random int */ }

class DummyClass : public <@\fbox{???}@><DummyClass, 262144> {
 public: <@\emph{IKRA\_INITIALIZE\_CLASS}@>
  int_ data;
  <@\fbox{???}@>(int, <@\fbox{?}@>) arr;

  __device__ DummyClass() : data(rand_int(0, 100)), arr(rand_int(32, 65)) {}

  __device__ void benchmark() {
    // Can the accesses of arr[i] be coalesced?
    for (int i = 0; i < arr.size(); ++i) { arr[i] += data; }
  }
}; <@\emph{IKRA\_DEVICE\_STORAGE}@>(DummyClass);

void init_benchmark() { parallel_new<DummyClass>(262144); }
void run_benchmark() { parallel_do<DummyClass, &DummyClass::benchmark>(); }
\end{lstlisting}

\begin{lstlisting}[label={lst:vertex_class}, language=c++, numbers=left, caption={[Frontier-based BFS]Source code of frontier-based BFS}, morekeywords={__device__}]
__device__ still_running = false;

__device__ void Vertex::visit(int frontier) {
  if (distance == frontier) {
    for (int i = 0; i < num_neighbors(); ++i) {
      if (neighbors[i]->distance > frontier + 1) {
        still_running = true;
        neighbors[i]->distance = frontier + 1;
      }  // else: Vertex was already visited.
    }
  }
}

void run_bfs(Vertex* start_vertex) {
  // Ikra-Cpp objects can be accessed from host and device.
  start_vertex->distance = 0;

  // Read/write still_running with cudaMemcpyTo/FromSymbol.
  for (int iteration = 0; still_running; iteration++, still_running = false) {
    // Notation: Parameter types (int) must be specified before function pointer.
    parallel_do<Vertex, int, &Vertex::visit>(iteration);
  }
}
\end{lstlisting}
\end{minipage}
\end{figure}

Figure~\ref{fig:synthetic_bench_array_onl} shows the running time for 262,144 dummy objects. The ``Array as Object'' SOA version is more than 30\% faster than the AOS version. The fully inlined SOA version is an order of magnitude faster than the AOS version.

The ``SOA (partial)'' line shows the running time with various partial inner array inlining sizes $N$ (x-axis) in SOA layout. For $N=0$ (no inlining), the performance is worse than ``Array as Object''. In both cases, the entire inner array is stored in one block of memory. However, in the former case, the array is located on the heap and can only be accessed with a pointer indirection, causing a slowdown. Note that, while ``Array as Object'' is faster, it wastes a considerable amount of memory.

The performance of partial inlining (SOA) increases with partial inlining sizes, because inner array accesses can be coalesced. Partial array inlining sizes larger than 32 do not improve the SOA-mode performance anymore, because, due to warp divergence caused by differing inner array sizes, those additional inner array accesses are unlikely to be coalesced.

The performance of partial inlining (AOS) decreases with partial inlining sizes after 32 because more and more cache entries are blocked by non-existing array slots.

\paragraph{Breadth-first Search}

BFS is an important and fundamental algorithm in graph processing. A variety of implementation strategies have been proposed for GPUs, some based on advanced techniques such as hierarchical queues~\cite{Luo:2010:EGI:1837274.1837289} or virtual warp-centric programming~\cite{Hong:2011:ACG:1941553.1941590}. The frontier-based BFS algorithm~\cite{Merrill:2015:HSG:2737841.2717511} is among the simplest ones and provides a reasonable speedup compared to CPU execution.

Frontier-based BFS (Listing~\ref{lst:vertex_class}) computes the distance of every vertex from a designated start vertex. At first, the start vertex has distance zero and all other vertices have distance infinity. The algorithm now proceeds iteratively. In iteration $i$, all vertices with distance $i$ (i.e., the \emph{frontier}) are processed in parallel: For every vertex in the frontier, all of its neighbors are updated with distance $i + 1$, unless they already have a smaller/equal distance value. The algorithm terminates if no updates are performed anymore. BFS is an interesting example for \textsc{Ikra-Cpp} because the adjacency lists are arrays of different sizes.

Figure~\ref{fig:bfs_bench_array_inl} shows the running time of the frontier-based BFS algorithm with different layout strategies on the Pennsylvania road network (1,088,092 vertices, 3,083,796 directed edges, avg. degree 2.83)~\cite{DBLP:journals/corr/abs-0810-1355}. The graph clearly shows the benefit of SOA over AOS. The performance of AOS degrades with a growing inlining size, because more cache entries for non-existing \texttt{neighbors} array slots are wasted. BFS does not benefit from any additional memory coalescing when inlining inner arrays in SOA mode (compare ``SOA (object)'' and ``SOA (full)''), because the neighbors accessed in the inner \emph{for} loop (Listing~\ref{lst:vertex_class}, Line~5) have \emph{random} (as opposed to consecutive) IDs. While array reads \texttt{neighbors[i]} can be coalesced, accesses to fields of \texttt{neighbors[i]} cannot be coalesced. Different optimization techniques can be applied to speed up such algorithms, most notably virtual warp-centric programming~\cite{Hong:2011:ACG:1941553.1941590}, which is a form of nested parallelism. That optimization would see a speedup from an \emph{Array as Object} layout.

\paragraph{Traffic Flow Simulation}
In Section~\ref{sec:smmo_traf_flow_sia}, we are developing a complex traffic flow simulation that implements the Nagel-Schreckenberg model~\cite{nagel_schr}. This implementation has six array fields in five different classes. These arrays are accessed in most parallel do-all operations of \textsf{traffic} and we can improve the overall runtime performance by optimizing their access\footnote{Refer to Section~\ref{sec:smmo_traf_flow_sia} for application implementation details.}. We consider two inner arrays in this paragraph.

\begin{itemize}
  \item \texttt{Cell::outgoing}: Read sequentially in a parallel do-all operation of a method of class \texttt{Car}.
  \item \texttt{Car::path}: Cleared at the beginning of a Nagel-Schreckenberg iteration, then filled sequentially and read sequentially in parallel do-all operations of methods of class \texttt{Car}.
\end{itemize}



\begin{figure}

    \subfloat[\texttt{Cell::incoming}]{\includegraphics[width=0.31\columnwidth]{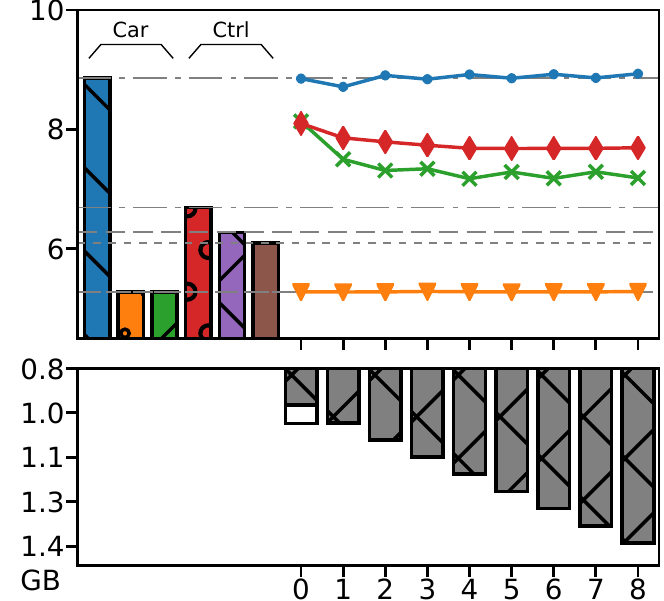}}\hfill
    \subfloat[\texttt{Car::path}]{\includegraphics[width=0.31\columnwidth]{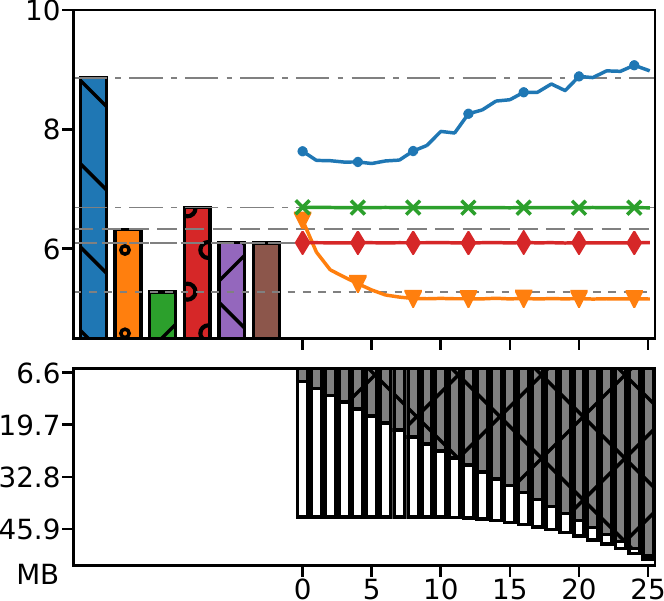}}\hfill
    \begin{minipage}{0.31\textwidth}
      \vspace{-2.75cm}
      \includegraphics[width=0.95\columnwidth]{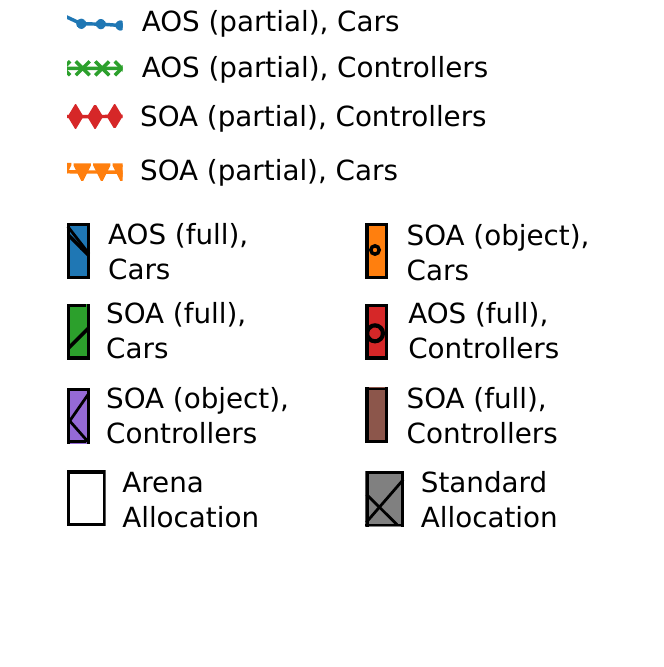}
    \end{minipage}
    \caption[Inner array inlining: \textsf{traffic} benchmark]{Running time and memory requirement of traffic simulation}
    \label{fig:bench_traffic}
\end{figure}

The benefit of an SOA layout over an AOS layout is clearly visible in this benchmark (Figure~\ref{fig:bench_traffic}). Regardless of which inner array inlining strategy is chosen, the running time spent on processing cars is always significantly lower in SOA (e.g., first bar vs. second bar). We can observe a similar behavior for traffic controllers.

The benefit of array inlining can be observed best in Subfigure~(\textsc{b}) when comparing the running time of a fully inlined \texttt{Car::path} (third bar) with the one that is stored ``as object'' (second bar). The fully inlined version is 15\% faster. This is because our implementation of the Nagel-Schreckenberg algorithm iterates over the path array in every thread. Because all cars are processed in order (i.e., thread $i$ processes car with ID $i$), memory accesses are coalesced. Furthermore, a slightly better speedup can be achieved with partial array inlining (orange line), starting from an inlining size of 6. Accesses to \texttt{Car::path} at indices higher than 6 are unlikely to be coalesced because very few threads actually access these array slots. We believe that the additional speedup is due to better cache utilization and prefetching: With partial inlining, a cache line can contain multiple elements of the same inner array from different objects.


The lower part of every subgraph shows the memory usage of inner arrays. The gray part represents inlined allocation (regular fields and inlined slots of inner arrays). The white part represents external storage (heap allocation). Too much inlining wastes memory because not all inner arrays are equally large. Recall that the number of inlined slots of an inner array is the same for every object. On the contrary, inner arrays of different objects may have different sizes on the external storage.  




\subsection{Conclusion and Related Work}
\label{sec:soa_layout_conclusion}
In this section, we presented an overview of various data layout strategies for inner arrays in a Structure of Arrays layout. Depending on the data access pattern, such arrays can be split and regrouped by array index to take advantage of memory coalescing when accessing inner array elements. Since writing and maintaining such low-level code is tedious, we extended \textsc{Ikra-Cpp} with additional proxy types that store inner arrays in the discussed layouts. 

Object inlining has been proposed for Java-like object-oriented languages for better cache performance and reducing overheads due to allocation and pointer indirections~\cite{Dolby:2000:AOI:349299.349344}. Later work applied the idea of data inlining to arrays, which can simplify address arithmetics of array accesses and eliminate load instructions in the assembly code~\cite{Wimmer:2008:AAI:1356058.1356061}. \textsc{Ikra-Cpp} applies the same idea to arrays in an SOA layout and we are seeing similar speedups. However, we inline arrays under simplified assumptions: Arrays are of fixed size and inlining is controlled manually by the programmer. Future work could attempt to automate this process.

\section{Summary}
\label{sec:summary_chap4}
We discussed two global memory access optimizations in this chapter: (a) Kernel fusion (\textsc{Ikra-Ruby}) and (b) the Structure of Arrays (SOA) data layout (\textsc{Ikra-Cpp}). These two optimizations improve memory access by (a) reducing the amount of memory transfer between global memory and GPU registers and (b) improving memory coalescing and cache performance.

In the remainder of this thesis, our focus will be on \textsc{Ikra-Cpp}. In the next two chapters, we will extend \textsc{Ikra-Cpp} with a dynamic memory allocator and a memory defragmentation system.

\chapter[Dynamic Mem. Allocation with SOA Performance Characteristics]{Dynamic Memory Allocation with SOA Performance Characteristics}
\label{sec:chapter_dynasoar}
\newcommand\soaalloc{\textsc{DynaSOAr}}
Dynamic memory management and the ability/flexibility of creating/deleting objects at any time is one of the corner stones of object-oriented programming. We believe that poor support for dynamic memory allocation is one of the reasons why many GPU programmers avoid object-oriented programming entirely.

In this chapter, we present \soaalloc{}, a CUDA framework for SMMO applications. \soaalloc{} is a parallel, lock-free, dynamic memory allocator, combined with an efficient parallel do-all operation and an embedded C++/CUDA DSL to enable object-oriented abstractions in an SOA layout. The DSL built on top of \textsc{Ikra-Cpp} and \textsc{DynaSOAr} can be seen as \textsc{Ikra-Cpp} extended with a dynamic memory allocator, although with a slightly different notation and syntax.

\minitoc

\paragraph{Outline}
This chapter is organized as follows. Section~\ref{sec:design_goals} describes \textsc{DynaSOAr}'s design goals and its programming interface, an extension of SMMO. Section~\ref{sec:dynasoar_arch_sec} presents \textsc{DynaSOAr}'s high-level design and architecture. Section~\ref{sec:optimizations} describes optimizations that improve the runtime performance of (de)allocate operations. Section~\ref{sec:concurrency_dynasoar} describes how \textsc{DynaSOAr} handles concurrency and argues for its correctness. Section~\ref{sec:related_work} compares \textsc{DynaSOAr} with other dynamic GPU memory allocators. Section~\ref{sec:benchmark} presents a performance evaluation with synthetic benchmarks and a variety of SMMO applications, which are described in more detail in Chapter~\ref{chap:smmo_examples}. Finally, Section~\ref{sec:conclusion} concludes this chapter.

\paragraph{Overview}
Dynamic memory management on GPUs is a hard problem. Due to the massive parallelism and data-parallel execution of GPUs, the number of simultaneous (de)allocations is significantly higher than on other parallel hardware architectures. Massively parallel SIMD allocations also follow patterns different from CPU/MIMD allocations: Most memory requests are small in size\footnote{If thousands of threads were to request large memory allocations, a GPU would run out of memory immediately.} and due to mostly regular control flow, many allocations have the same byte size. In recent years, fast, dynamic memory allocators have been developed for GPUs~\cite{6339604,5577907, Widmer:2013:FDM:2458523.2458535, Vinkler:2015:RED:3071494.3071506, DBLP:journals/corr/abs-1710-11246, Spliet:2014:KDM:2588768.2576781, osti_1398234, Gelado:2019:TGM:3293883.3295727} and demanded by application developers~\cite{Zhu:2015:PIM:2817095.2817115, master_th_cuda_allc, doi:10.1002/cpe.3808, IJNC126, Schafer:2013:RLD:2492045.2492052, Li:2014:ENS:2701002.2701020, Li:2015:CAS:2769458.2769470}, showing a growing interest in better programming models and abstractions that have long been available on other platforms. However, while these allocators often provide good (de)allocation performance, they miss key optimizations for structured data (such as SOA), leading to poor data locality and memory bandwidth utilization when accessing allocated memory.


In contrast to state-of-the-art allocators, \textsc{DynaSOAr} does not only control the data placement/layout through its memory allocator, but also data access through its do-all operation. In SMMO applications, \soaalloc{} achieves superior performance compared to state-of-the-art allocators due to three main optimizations.

\begin{itemize}
\item Objects are stored in a \textbf{Structure of Arrays (SOA)} data layout, a best practice for structured data in SIMD programs, making usage of allocated memory more efficient when used in conjunction with \soaalloc{}'s do-all operation.
\item Memory fragmentation caused by dynamic object allocation/deallocation is minimized with \textbf{hierarchical bitmaps}. This is important because fragmentation diminishes the benefit of the SOA layout (Section~\ref{sec:chap_gpu_mem_defrag}) and adversely affects cache performance~\cite{Grunwald:1993:ICL:155090.155107}.
\item Object allocation and deallocation performance is optimized with a number of \textbf{low-level techniques}. For example, \soaalloc{} combines allocation requests within SIMD thread groups (\emph{warps}) to reduce the number of memory accesses during allocations~\cite{5577907} and takes advantage of efficient bitwise operations/integer intrinsics.
\end{itemize}

\paragraph*{Contributions}
This chapter makes the following contributions.

\begin{itemize}
  \item The design and implementation of \soaalloc{}, a dynamic object allocator for CUDA; with fast (de)allocation and a parallel do-all operation. To the best of our knowledge, \textsc{DynaSOAr} is the first dynamic allocator that stores objects in an SOA data layout.
  \item An extension of the SOA data layout to dynamic object sets and subclassing. While in \textsc{Ikra-Cpp}, the maximum number of objects of each class had to be specified as a compile-time constant, objects can be freely allocated in \textsc{DynaSOAr} without such restrictions.
  \item A concurrent, lock-free, hierarchical bitmap data structure, based on atomic operations and retry loops.
  \item A comparison and evaluation of existing state-of-the-art GPU memory allocators on SMMO applications.
\end{itemize}

\paragraph{Publications}
This chapter is in part based on the following papers.
\begin{itemize}
  \item Matthias Springer, Hidehiko Masuhara. \textbf{``DynaSOAr: A Parallel Memory Allocator for Object-oriented Programming on GPUs with Efficient Memory Access.''} In: \emph{Proceedings of the 33rd European Conference on Object-oriented Programming}. ECOOP 2019. Leibniz-Zentrum f{\"u}r Informatik, Dagstuhl Publishing, 2019, LIPIcs, Vol. 134, pp.~17:1--17:37. \texttt{\doi{10.4230/LIPIcs.ECOOP.2019.17}}.
\end{itemize}

\section{Design Goals}
\label{sec:design_goals}
\textsc{DynaSOAr} is a CUDA framework for SMMO applications and consists of three parts.

\begin{description}
  \item[Memory Allocator] We developed a dynamic memory allocator that provides \texttt{new}/ \texttt{delete} operations in GPU code and stores objects in an SOA data layout. The main task of the allocator is to decide where to store each field value of each object on the heap.
  \item[Data Layout DSL] We developed an embedded C++ DSL to support OOP abstractions while storing objects in a custom layout. We could alternatively implement \textsc{DynaSOAr} in a language that allows programmers to specify custom data layouts (e.g., Shapes~\cite{Franco:2017:YAG:3133850.3133861} or ispc~\cite{6339601}), but such languages have limited GPU support.
  \item[Parallel Do-All] We developed an object enumeration strategy for SMMO applications that achieves efficient access of allocated memory on SIMD architectures. By controlling memory allocation and memory access, applications can achive better performance with \textsc{DynaSOAr} than with other state-of-the-art allocators, which are only concerned with memory allocation.
\end{description}

\textsc{DynaSOAr}'s DSL builds on top of \textsc{Ikra-Cpp}'s embedded C++ DSL for object-oriented programming with SOA layout (Section~\ref{sec:data_layout_dsls_ikracpp}). Its purpose is to make \textsc{DynaSOAr} easier to use for programmers. This chapter is mainly about the memory allocator and the parallel do-all operation.

\subsection{Programming Interface}
In contrast to general memory allocators, \soaalloc{} is an \emph{object allocator}. The types (classes/structs) that can be allocated must be specified at compile time. \soaalloc{} provides five basic operations. The operations for object enumeration follow the \textsc{Ikra-Cpp} API (Section~\ref{sec:ikra_cpp_api_sect3}), but are implemented differently and do not support host-side execution. All operations except for \texttt{parallel\_do} and \texttt{parallel\_new} are \emph{device} functions that can only be called from GPU code.

\begin{itemize}
  \item \texttt{\textbf{new}(d\_allocator) T(args...)}: Allocates a new object of type $T$ and returns a pointer to the object. The \emph{placement new} notation~\cite{cpp_placement} is a common C++ pattern for arena allocation and \texttt{d\_allocator} is the allocator/arena in which the object is allocated.
  \item \texttt{destroy(d\_allocator, ptr)}: Deletes an object with pointer \texttt{ptr}, assuming that the object was allocated with \texttt{d\_allocator}\footnote{There is no \emph{placement delete} syntax, so it is a common pattern to use a separate \texttt{destroy} function~\cite{placement_delete}.}.
  \item \texttt{HAllocatorHandle::parallel\_do<S, \&T::func>(args...)}: Launches a GPU kernel that runs a member function \texttt{T::func} for all objects of type $S$ and subtypes\footnote{To avoid branch divergence, we launch a separate kernel for every type.} existing at launch time (\emph{parallel do-all}), where $S <: T$. \texttt{T::func} may allocate new objects, but they are not enumerated by this parallel do-all. \texttt{T::func} may deallocate any object of different type $U \not= S$, but \texttt{\textbf{this}} is the only object of type $S$ it may deallocate (delete itself). This is to avoid race conditions.
  \item \texttt{HAllocatorHandle::parallel\_new<T>(n, args...)}: Launches a GPU kernel that instantiates $n$ objects of type $T$. This operation calls the constructor of $T$ in parallel with an object index (between 0 and $n$) as first argument, followed by \texttt{args...}. $T$ must have a suitable constructor.
  \item \texttt{DAllocatorHandle::device\_do<S, \&T::func>(args...)}: Runs a member function \texttt{T::func} for all objects of type $S$ in the current GPU thread, where $S <: T$. Can only be used inside of a \texttt{parallel\_do} or a manually launched GPU kernel. This is a sequential \emph{for-each} loop. It is typically used for processing all pairs of objects (e.g., in n-body simulations). 
\end{itemize}

Listing~\ref{lst:short_example} shows parts of an implementation of an n-body simulation with collisions to illustrate \textsc{DynaSOAr}'s API and DSL (full example in Section~\ref{sec:nody_with_coll}). 

\begin{lstfloat}
\begin{lstlisting}[language=c++,caption={\textsc{DynaSOAr} API Example: \textsf{n-body}}, label={lst:short_example}, morekeywords={__device__}]
#include "dynasoar.h"

class Body;  // Pre-declare all classes. This simple example has only one class.
using AllocatorT = SoaAllocator</*max_num_obj=*/ 16777216, /*T...=*/ Body>;
__device__ DAllocatorHandle<AllocatorT> d_allocator;

class Body : public AllocatorT::Base {  // Can subclass other user-defined class.
 public:
  // Pre-declare all field types. DynaSOAr uses these to compute the size of blocks.
  declare_field_types(Body, float /*pos_x_*/, float /*pos_y_*/,
                            /* ... */, bool /*was_merged_*/)
 private:
  // Declare fields with proxy types but use like normal C++ fields (as in Ikra-Cpp).
  Field<Body, 0> pos_x_;               // Position X
  Field<Body, 1> pos_y_;               // Position Y
  /* other fields omitted... */
  Field<Body, 9> was_merged_;          // Was this body merged into another one?

 public:
  __device__ Body(float pos_x, float pos_y, float vel_x, float vel_y, float mass)
    : pos_x_(pos_x), pos_y_(pos_y), vel_x_(vel_x), vel_y_(vel_y), mass_(mass) {}

  // This constructor is invoked by parallel_new.
  __device__ Body(int id) : Body(/*pos_x=*/ random_float(0, 1), /*...*/) {}

  __device__ void apply_force(Body* other) {
    if (other != this) {
      float dx = pos_x_ - other->pos_x_;  float dy = pos_y_ - other->pos_y_;
      float dist = sqrt(dx*dx + dy*dy);
      float F = kGravityConstant * mass_ * other->mass_ / (dist * dist);
      other->force_x_ += F * dx / dist;  other->force_y_ += F * dy / dist;
    }
  }

  __device__ void step_1_compute_force() {
    force_x_ = force_y_ = 0.0f;
    d_allocator->device_do<Body, &Body::apply_force>(this);
  }

  __device__ void step_2_move(float dt) {
    vel_x_ += force_x_ * dt / mass_;  vel_y_ += force_y_ * dt / mass_;
    pos_x_ += dt * vel_x_;            pos_y_ += dt * vel_y_;
  }

  __device__ void step_6_delete_merged() {
    if (was_merged_) { destroy(d_allocator, this); }
  }
};

int main() {
  // Create new allocator. This will allocate a large buffer ("heap") on the GPU.
  auto* h_allocator = new HAllocatorHandle<AllocatorT>();
  // Allocate 65536 new bodies, randomly initialized.
  h_allocator->parallel_new<Body>(65536);
  for (int i = 0; i < kIterations; ++i) {
    h_allocator->parallel_do<Body, &Body::step_1_compute_force>();
    h_allocator->parallel_do<Body, &Body::step_2_move>(/*dt=*/ 0.5);
    /* some steps omitted... */
    h_allocator->parallel_do<Body, &Body::step_6_delete_merged>();
  }
  delete h_allocator;  // Deallocate buffer and all allocations within.
}
\end{lstlisting}
\end{lstfloat}

\paragraph{Data Layout DSL}
Similar to \textsc{Ikra-Cpp}, programmers must define their classes with our data layout DSL. Through this DSL, \textsc{DynaSOAr} can programmatically \emph{reflect} on the classes/fields that are defined in an application, somewhat similar to the Java Reflection API or metaobject protocols~\cite{Chiba:1995:MPC:217838.217868}.

Before the actual application code begins, programmers must pre-declare all classes (Line~3) and define an allocator type (Line~4). Every allocator type is a template instantiation of \texttt{SoaAllocator}. The first template argument is the maximum number of objects of the smallest type, which indirectly defines the heap size, as described in detail later. The other template arguments list all types that are under control of the allocator (variadic template). This design imposes some restrictions with respect to separate compilation: While different compilation units can be developed/compiled separately and then linked together in C++, the compilation unit containing the \textsc{DynaSOAr} memory allocator requires definitions of all classes/structs that should be dynamically allocated with \textsc{DynaSOAr}. However, this is a minor issue at the moment, since GPU applications rarely use separate compilation.

All user-defined classes/structs must in some way inherit from a special class \texttt{AllocatorT::Base}; either as a direct subclass or by inheriting from a class that already inherits from that class. Fields types must be pre-declared (Line~10) and then declared with proxy types (Lines~14--17). 

\paragraph{Class Inheritance}
In contrast to \textsc{Ikra-Cpp}, a user-defined class can inherit from another user-defined class that is managed by the allocator. Multiple inheritance is not supported at the moment. A class can be marked as \emph{abstract} by defining a static field \texttt{static const bool kIsAbstract = true;} after the field pre-declarations.

Programmers can downcast object pointers with the \texttt{cast<T>()} member function defined by \texttt{AllocatorT::Base}. This function is similar to C++ \texttt{dynamic\_cast<T*>} \texttt{(ptr)}; it performs a dynamic type check and returns \texttt{nullptr} if the object is not of runtime type \texttt{T}.

The n-body simulation example in this chapter does not utilize class inheritance, but Chapter~\ref{chap:smmo_examples} contains a number of source code examples with class inheritance.

\subsection{Memory Access Performance}
The main insight of our work is that optimizing only for fast (de)allocations is not enough. Optimizing the access of allocated memory can result in much higher speedups, because device (\emph{global}) memory access is the biggest bottleneck of memory-bound GPU applications:

\begin{description}
\item[Latency] Global memory access instructions have a very high latency at around 400--800 clock cycles, compared to arithmetic instructions at around 6--24 cycles. Programmers can hide latency with \emph{high occupancy}~\cite{Volkov:EECS-2016-143} (i.e., running many threads).

\item[Memory Bandwidth] The global memory bandwidth is a limiting factor. Peak memory transfer rates can be achieved only with \emph{memory coalescing}: When the threads in a GPU application simultaneously access different memory addresses, the GPU coalesces accesses from the same SIMD thread group (\emph{warp} in CUDA, every 32 consecutive threads) into one physical transaction if the addresses are on the same 128-byte cache line~\cite{5473222}. However, if threads access data on multiple cache lines (e.g., non-contiguous, spread-out addresses), more transactions are needed\footnote{This is similar to vectorized loads, but coalescing is performed at runtime by the hardware.}, which reduces transfer rates significantly. The CUDA Best Practices Guide puts a \emph{high priority} note on coalesced memory accesses~\cite{nvidia_memoryco}. 

\item[Caches] Hits in the L1/L2 cache are served much faster (less latency, memory bandwidth pressure) than global memory loads. Field reordering and structure splitting are common techniques for increasing the number of hot fields in cache~\cite{Chilimbi:1999:CSD:301618.301635}.

\end{description}

\textsc{DynaSOAr} achieves good memory access performance with an SOA-style data layout: First, SOA increases memory coalescing because values of the same field, which are accessed simultaneously in SIMD, are stored together. Second, SOA is an extreme form of structure splitting and can improve cache utilization because fields that are not accessed do not occupy cache lines.

\subsection{High Density Memory Allocation}
An SOA data layout (Figure~\ref{fig:clustering_ex}a) achieves good memory performance but is not suitable for dynamic allocation: The size of SOA arrays is fixed and new allocations cannot be accommodated once all array slots are occupied.

\textsc{DynaSOAr}'s design is based on the insight that a \emph{clustered layout} with SOA-style structure splitting (Figure~\ref{fig:clustering_ex}b) has the same cache/vector performance characteristics as an SOA layout, if scalar values are stored in dense clusters of at least 128~bytes (vector and cache line size) and clusters are aligned to 128~bytes, regardless of where the clusters themselves are located in memory.

Figure~\ref{fig:clustering_ex} illustrates the number of required memory transactions for reading 24 floats simultaneously. For illustration purposes, we assume a warp size (vector length) of 4 instead of 32. Both layouts require the same number memory transactions because accesses can be equally well coalesced/vectorized in both layouts. This shows that a perfect SOA layout is not required for perfect memory coalescing. This insight is exploited by \textsc{DynaSOAr}'s allocation policy and gives \textsc{DynaSOAr} more freedom in the placement of allocations.


\begin{figure}[!t]
  \centering
  \includegraphics[width=\textwidth]{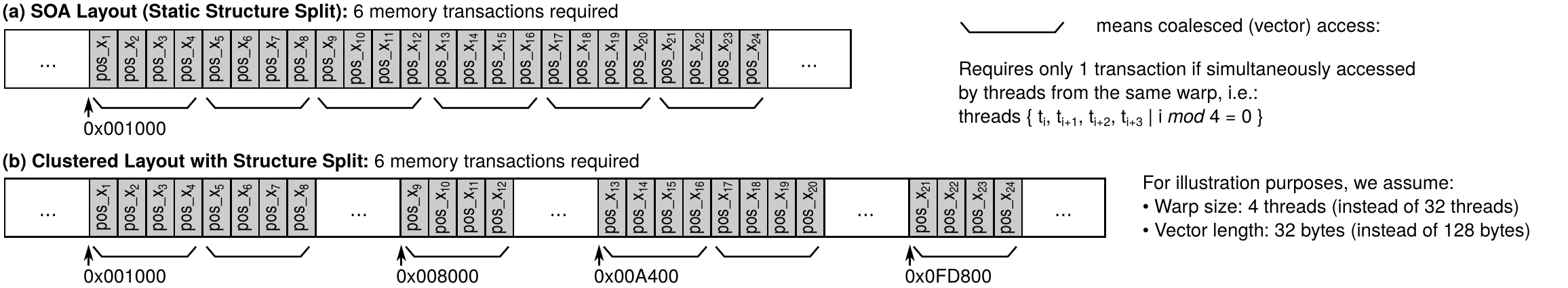}
  \caption{Data layouts: SOA layout and \textsc{DynaSOAr}'s SOA-style layout}
  \label{fig:clustering_ex}
\end{figure}

\subsection{Parallel Object Enumeration Strategy}
Current GPUs follow the Single-Instruction Multiple-Threads (SIMT) execution model. Intuitively, every SIMD lane corresponds to a thread and every group of consecutive 32 threads forms a \emph{warp} which executes an instruction on a vector register.

To benefit from memory coalescing, the threads of a warp must access addresses on the same 128-byte L1 cache line. In an SOA data layout, this is achieved when the threads of a warp read/write the same fields of objects with \emph{contiguous indices} at the same time. Intuitively, threads in a warp should process \emph{neighboring} objects.

In \textsc{DynaSOAr}, programmers invoke GPU kernels with parallel do-all operations. These operations must (a) spawn enough GPU threads to hide latency, but not too many to avoid inefficiencies, and (b) assign objects to threads in such a way that memory access is coalesced.



\subsection{Scalability}
Memory allocations require some sort of synchronization between threads to prevent \emph{collisions}, i.e., two threads allocating the same memory location. To avoid collisions, some allocators such as Cilk~\cite{Blumofe:1999:SMC:324133.324234} and Hoard~\cite{Berger:2000:HSM:378993.379232} utilize private heaps, but such designs can lead to high memory consumption (\emph{blowup})~\cite{Berger:2000:HSM:378993.379232} and are infeasible on massively parallel architectures with thousands of threads.

State-of-the-art GPU allocators such as ScatterAlloc~\cite{6339604} and Halloc~\cite{hallocweb} reduce collisions with hashing, which scatters allocations almost randomly in the heap. This would render an SOA layout useless and defeat one of \textsc{DynaSOAr}'s main optimizations.

With such design restrictions, \textsc{DynaSOAr} is bound to have less efficient allocations than other allocators. However, as we show throughout this chapter, \textsc{DynaSOAr} can more than make up for slow allocations with more efficient access of allocated memory.


\paragraph{False Sharing}
Previous CPU memory allocator designs emphasize mechanisms for reducing false sharing, which can degrade performance~\cite{Berger:2000:HSM:378993.379232}. A memory segment $A$ is falsely shared if it is brought into a processor cache because it happens to be located on the same cache line as another memory segment $B$ that is actually accessed. If caches are \emph{coherent}, the entire cache line is invalidated if another processor modifies $A$, even though $A$ is never read or written by the original processor.

False sharing is not an issue on GPUs, because L1 caches are not coherent. Programmers must use the \texttt{volatile} keyword or atomic operations to enfore a read or write to the shared L2 cache or global memory. 

\section{Architecture Overview}
\label{sec:dynasoar_arch_sec}
\soaalloc{} manages a single, large heap in global memory on device. The heap is divided into $M$ blocks and every block has the same number of bytes. $M$ and the size of each block are compile-time constants that are calculated based on the set of classes that are under control of the allocator (template arguments to \texttt{SoaAllocator}), as described below.

A block contains only objects of the same type (class/struct), stored in SOA data layout (Figure~\ref{fig:overview_heap_soaalloc}). Once a block is initialized (\emph{allocated}) for a certain type, only objects of that type can be stored in that block until the block is deallocated.

The maximum number of objects in a block (\emph{block capacity}) depends on its type, because structs/classes may have different sizes. To improve clustering, \soaalloc{} allocates new objects in already existing, non-full blocks (\emph{fast path}). We call such blocks \emph{active}, because they participate in allocations (Figure~\ref{fig:block_states_dynasoar}). If no active block could be found, a new block is allocated and becomes active (\emph{slow path}).

\label{sec:arch_overview}
\begin{figure}
  \centering
  \includegraphics[width=\columnwidth]{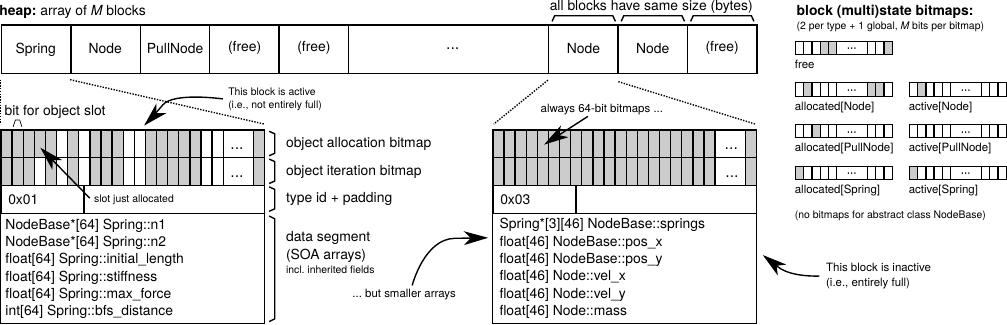}
  \caption[\textsc{DynaSOAr} heap layout of \textsf{structure}]{Example: Heap layout for an FEM simulation (Section~\ref{sec:example_structure_sec7}) of a crack in a composite material. The heap is divided into $M$ blocks of equal size. Every block has the same structure: an allocation bitmap, an iteration bitmap, and a type identifier, followed by a data segment storing objects in SOA layout.} %
  \label{fig:overview_heap_soaalloc}
\end{figure}

\begin{figure}
  \begin{minipage}[c]{0.4\textwidth}
    \includegraphics[width=\textwidth]{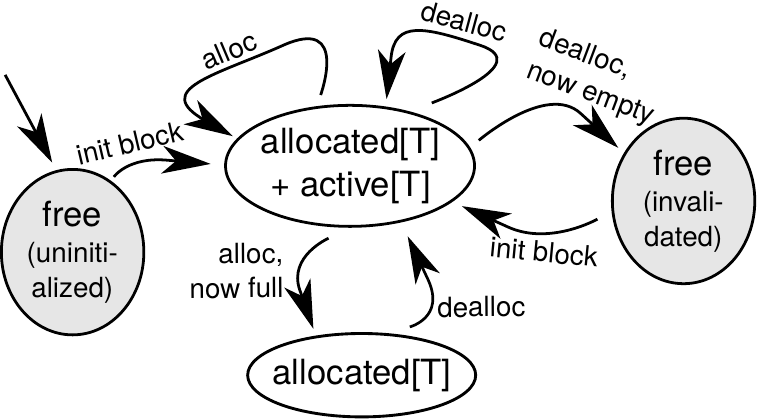}
  \end{minipage}\hfill
  \begin{minipage}[c]{0.56\textwidth}
    \caption[\textsc{DynaSOAr} block state transitions]{Block state transitions. At first, blocks are in an uninitialized state. As part of allocation, new active blocks may be initialized (\emph{allocated}). Active blocks become inactive when they are full. Inactive blocks become active again an object is deallocated. Active blocks are invalidated when their last object is deallocated. Invalidated blocks can be reinitialized (to any type) and are handled similar to uninitialized blocks.
    } \label{fig:block_states_dynasoar}
  \end{minipage}
\end{figure}

\subsection{Block Structure}
Every block has two 64-bit object bitmaps: An \emph{object allocation bitmap} and an \emph{object iteration bitmap}. The allocation bitmap tracks allocated slots in the block. The iteration bitmap is used for object enumeration and overwritten with the allocation bitmap before every parallel do-all operation. Its purpose is to ensure that objects that were created during a parallel do-all operation are not enumerated by the same parallel do-all operation; that would a race condition.

The \emph{type identifier} is a unique ID for the type $T$ of a block. The remainder of the block is occupied by padding and the \emph{data segment}, storing $1 \leq N_T \leq 64$ objects in SOA layout. The data segment begins with SOA arrays for inherited fields and ends with SOA arrays of newly introduced fields.

Slots are marked as (de)allocated with atomic AND/OR operations that change a single bit of the object allocation bitmap. Based on their return value\footnote{An atomic operation returns the value in memory before modification.}, we know ...

\begin{itemize}
  \item ... if an allocation was successful or another thread was faster allocating the same slot.
  \item ... if a particular allocation filled up a block (i.e., allocated the last slot).
  \item ... if a particular deallocation emptied a block (i.e., deallocated the last slot).
\end{itemize}

If a thread filled up a block or emptied a block, it is that thread's \emph{responsibility} to update the other internal data structures. This is a common pattern in lock-free designs~\cite{Michael:2004:SLD:996841.996848}. Note that every block has the same byte size and structure; e.g., the bitmaps are always at the same offset. This is an important property for the correctness of our (de)allocation algorithms.

\subsection{Block Capacity}
\label{sec:block_capacity_dynasoar_s5}
The capacity of a block (maximum number of objects) depends on the size (bytes) of the type of objects in the block. If \soaalloc{} manages objects of types $T_1$, $T_2$, ..., $T_n$ and $s=\argmin_{i \in 1...n} \mathit{size}(T_i)$ is the index of the smallest type, then the capacity $N_{T}$ of a block of type $T$ is determined as follows.

\begin{align*}
N_{T} = \left\lfloor{\frac{64 \cdot \mathit{size}(T_s)}{\mathit{size}(T)}}\right\rfloor \tag{\emph{block capacity}}
\end{align*}

According to this formula, a block of the smallest type $T_s$ has capacity 64. This is why the bitmaps within a block have 64 bits. Given a fixed heap size, the size of $T_s$ determines the block size in bytes and thus the number of blocks $M$.

\begin{align*}
M = \frac{\mbox{heap size (bytes)}}{\mbox{size of block of type $T_s$ (bytes)}} \tag{\emph{block capacity}}
\end{align*}

In our current implementation, programmers specify the size of the heap not in bytes but in terms of the number of objects of the smallest type $T_s$ (first template argument of \texttt{SoaAllocator}). This number must be a multiple of 64. This is merely an implementation detail and will change in future versions of \textsc{DynaSOAr}.

\paragraph{Very Small/Large Object Sizes}
As soon as a type $T$ is more than twice as big as $T_s$, the benefit of the SOA layout is starting to fade away for $T$, because the number of objects in such a block $N_T$ will be smaller than 32. However, assuming 32-bit scalar types, the maximum amount of memory coalescing can only be achieved with vector loads (cluster sizes) of 32 values (Section~\ref{sec:memory_coalescing_experiment_background}). Moreover, \soaalloc{} cannot handle cases in which a type is more than 64 times bigger than the smallest type. In reality, these limitations proved to be insignificant. None of our benchmarks experienced a slowdown due to unfavorable block sizes.

\subsection{C++ Data Layout DSL and Object Pointers}
\label{sec:dynasoar_cpp_datalayout_dsl_fp}
\begin{figure}
  \begin{minipage}[c]{0.5671195351024146\textwidth}
    \includegraphics[width=\textwidth]{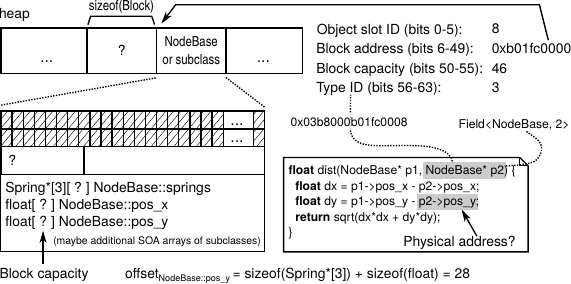}
  \end{minipage}\hfill
  \begin{minipage}[c]{0.4\textwidth}
    \caption[\textsc{DynaSOAr} fake object pointer example]{Object pointer example. The static type of \texttt{p2} is \texttt{NodeBase*}. The corresponding block has SOA arrays for \texttt{NodeBase} fields and for the additional fields of the runtime type of \texttt{p2}. The size of those arrays is not statically known and depends on the runtime type of \texttt{p2}.
    } \label{fig:global_ref}
  \end{minipage}
\end{figure}

Field access is simple in most object-oriented systems: Given an object pointer, which is a memory location, a field value is stored at a fixed offset from the memory location.

%


In \textsc{DynaSOAr}, an object pointer is not a memory location, but a combination of various components (\emph{fake pointer}), similar to \emph{global references} in \emph{Shapes}~\cite{Franco:2017:YAG:3133850.3133861}. Upon field access, the \textsc{DynaSOAr} DSL transparently converts object pointers to memory locations, without breaking C++'s OOP abstractions. We follow the implementation strategy of \textsc{Ikra-Cpp}, where fields are declared with proxy types \texttt{Field<B, N>}, which can be implicitly converted to \texttt{T\&} values~\cite{cpp_obj}, where \texttt{T} is the N-th predeclared field type of \texttt{B}. This conversion is defined by our DSL and computes the actual, physical memory location within a data segment.

Fake pointers in \textsc{DynaSOAr} are different from fake pointers in \textsc{Ikra-Cpp}. They do not just encode an object ID, but four different components. A \textsc{DynaSOAr} object pointer (Figure~\ref{fig:global_ref}) is based on the address of the block in which the object is located. All blocks are aligned to 64~bytes, so we can store the object slot ID in the 6 least significant bits\footnote{Recall that blocks never have more than 64 objects, so 6 bits are enough.}. Since recent GPU architectures have at most 24~GB of memory and no virtual memory, only the 35 least significant bits are used in memory addresses and the remaining 29 bits are always zero\footnote{We experimentally verified this on NVIDIA Maxwell and NVIDIA Pascal.}. We store additional information in these bits: The 8 most significant bits store the type identifier of the class type for fast instance-of checks. The next 6 bits store the capacity of the block. The remaining 15 bits are effectively not utilized at the moment. Note that, while C++ stores runtime types with a vtable pointer at the beginning of an object, we store runtime type information in unused pointer bits.

\paragraph{Field Access}
While in most object-oriented systems, runtime type information is only required for virtual function calls, \textsc{DynaSOAr} requires the block capacity (a property of the runtime type!) also for field accesses, because SOA array offsets within the data segment depend on it.

For example, \texttt{p2} in Figure~\ref{fig:global_ref} is statically known to be of type \texttt{NodeBase*}, but the block capacity (size of SOA arrays) depends on the runtime type, which can be any subclass of \texttt{NodeBase}. Those subclasses can have different block capacities. The size of SOA arrays and the object slot ID are required to compute the physical location of \texttt{p2->pos\_y} based on the block address, so we store both inside object pointers.

\begin{lstfloat}
\begin{lstlisting}[language=c++,caption={[\textsc{DynaSOAr}: Field address computation]Field address computation}, label={lst:dynasoar_impl_conv}, morekeywords={uint64_t}]
// Impl. conv. operator: E.g., convert Field<NodeBase, 2> to float& in Figure <@\ref{fig:global_ref}@>.
// <@\textit{BaseType}@>: N-th predeclared type in B (within <@\textit{declare\_field\_types}@>).
template<typename B, int N>
Field<B, N>::operator <@\textit{BaseType}@>&() {
  int offset = ...;  // Computed with templ. metaprog. <@$\textsf{offset}_{\textsf{B::fieldname}}$@> in Figure <@\ref{fig:global_ref}@>.
  auto obj_ptr = reinterpret_cast<uint64_t>(this) - N;
  // Bits 0-49 and clear 6 least significant bits.
  auto* block_address = reinterpret_cast<char*>(obj_ptr & 0x3FFFFFFFFFFC0);
  int obj_slot_id = obj_ptr & 0x3F;  // Bits 0-5
  int block_capacity = (obj_ptr & 0xFC000000000000) >> 50;  // Bits 50-55
  auto* soa_array = reinterpret_cast<<@\textit{BaseType}@>*>(
      block_address + field_offset * block_capacity);
  return soa_array[obj_slot_id];
}
\end{lstlisting}
\end{lstfloat}




This computation (Listing~\ref{lst:dynasoar_impl_conv}), along with bit-shifting and bit-AND operations for extracting all components from a fake object pointer, is performed on every field read/write. Compilers may eliminate redundant computations of multiple accesses with peephole optimizations, but field accesses still require more complex arithmetics compared to an AOS layout. This overhead may seem large, but arithmetic operations are much faster than memory access, even in case of an L2 cache hit. Overall, the performance benefit of SOA is much larger than the address computation overhead.


\subsection{Block Bitmaps}
\label{sec:dynsoar_block_bitmaps}
To find blocks or free memory quickly during object enumeration or object allocation, \soaalloc{} maintains three bitmaps of size $M$, where $M$ is the number of blocks on the heap.

\begin{itemize}
  \item The \emph{free block bitmap} indexes block locations that are not yet allocated. This bitmap is used to determine where new blocks are allocated. Bit $i$ is 1 iff block $i$ is free (uninitialized or invalidated). Initially, every bit is 1.
  \item There is one \emph{block allocation bitmap} for every type $T$. That bitmap indexes blocks of type $T$ and is used for enumeration of all objects. Blocks of subclasses are not included in bitmaps of the superclass. Initially, every bit is 0.
  \item There is one \emph{active block bitmap} for every type $T$, indexing allocated, non-full blocks. If a bit is 1, then the same bit in the block allocation bitmap must also be 1. This bitmap is used to find a block in which a new object can be allocated. Initially, every bit is 0.
\end{itemize}

Due to concurrent (de)allocations, block bitmaps cannot be kept consistent with the actual block states all the time, as indicated by object allocation bitmaps and type identifiers of blocks. However, we designed our algorithms in such a way that they can handle such inconsistencies and keep block states and block bitmaps \emph{eventually consistent} (Section~\ref{sec:concurrency_dynasoar}).

\subsection{Object Slot Allocation}
When a new object is created, \textsc{DynaSOAr} allocates memory and runs the constructor on the object pointer. Algorithm~\ref{alg:alloc_algo} shows how memory is allocated. This algorithm runs entirely on the GPU and is completely lock-free.

\textsc{DynaSOAr} tries to allocate memory in an already existing, active block. If no block could be found, it first initializes a new block at a location that is known to be free (\emph{slow path}). The state of the new block is \emph{allocated} and \emph{active}, so that the new block can also be found by other threads.

Once a block was selected, an object slot is reserved by atomically finding and flipping a bit from 0 to 1 in the object allocation bitmap (details in Algorithm~\ref{alg:block_allocate}). Based on the return value of the atomic operation, we know if this operation just allocated the last slot. In that case, the block is marked as \emph{inactive} (Line~12).

Since the allocator is used concurrently by many threads, we may select a block (Line~2) that is full or no longer exists when attempting to reserve an object slot (Line~8). If the block is full, object reservation fails and we retry by selecting a new active block. If the block no longer exists, we have to consider three cases\footnote{We give a more systematic correctness argument in Section~\ref{sec:concurrency_dynasoar}.}.

\begin{enumerate}
  \item There is currently no block at this location. In this case, object reserveration fails, because all slots are marked as allocated in the object allocation bitmap when a block is deleted. We call this process \emph{block invalidation} (Section~\ref{sec:safe_mem_reclam_invalidation}).
  \item The block was deleted and a new block of the same type was allocated at the same location. Such ABA problems are harmless and allocation will succeed.
  \item The block was deleted and there is now a block of different type at the same location. At this point, the constructor has not run yet, so no data in the data segment was corrupted. This is because all blocks have the same structure, i.e., the object allocation bitmap is always at the same location. We can safely rollback the allocation by running the deallocation routine.
\end{enumerate}


\SetKwComment{Comment}{$\triangleright$\ }{}
\SetAlgoVlined
\begin{algorithm}[t]
\small
 \Repeat(\Comment*[f]{\textsf{Infinite loop if OOM}}){false}{
  bid $\gets$ active[T].\emph{try\_find\_set}(); \hfill \Comment{\textsf{Find and return the position of any set bit.}}
  \If(\Comment*[f]{\textsf{Slow path}}){\emph{bid} = FAIL} {
    bid $\gets$ free.\emph{clear}();   \hfill \Comment{\textsf{Find and clear a set bit atomically, return position.}}
    \emph{initialize\_block}<T>(bid);  \hfill \Comment{\textsf{Set type ID, init. obj. alloc. bitmap. See Alg.~\ref{alg:block_init}.}}
    allocated[T].\emph{set}(bid)\;
    active[T].\emph{set}(bid)\;
  }
  alloc $\gets$ heap[bid].\emph{reserve}();  \hfill \Comment{\textsf{Reserve an object slot. See Alg.~\ref{alg:block_allocate}.}}
  \If{\emph{alloc} $\not=$ FAIL}{
    ptr $\gets$ \emph{make\_pointer}(bid, alloc.slot)\;
    t $\gets$ heap[bid].type; \hfill \Comment{\textsf{Volatile read}}
    \lIf{\emph{alloc.state} = FULL}{
      active[t].\emph{clear}(bid)
    }
    \lIf{t = T}{
      \Return ptr
    }{
      \emph{deallocate}<t>(ptr); \hfill \Comment{\textsf{Type of block has changed. Rollback.}}
    }
  }
 }
 \caption[\textsc{DynaSOAr}: DAllocatorHandle::allocate<T>]{DAllocatorHandle::allocate<T>() : T* \hfill \fbox{GPU}}
 \label{alg:alloc_algo}
\end{algorithm}
\SetKwComment{Comment}{$\triangleright$\ }{}
\begin{algorithm}[t]
\small
  bid $\gets$ \emph{extract\_block}(ptr)\;
  slot $\gets$ \emph{extract\_slot}(ptr)\;
  state $\gets$ heap[bid].\emph{deallocate}(slot); \hfill \Comment{\textsf{Opposite of slot reservation. See Alg.~\ref{alg:block_deleteslot}.}}
  \uIf(\Comment*[f]{\textsf{Deallocated first object of full block.}}){\emph{state} = FIRST}{
    active[T].\emph{set}(bid)\;
  }
  \ElseIf(\Comment*[f]{\textsf{Deallocated last object of block.}}){\emph{state} = EMPTY}{
    \If(\Comment*[f]{\textsf{Try to invalidate block. See Alg.~\ref{alg:block_invalidate}.}}){invalidate(bid)}{
      t $\gets$ heap[bid].type\;
      active[t].\emph{clear}(bid)\;
      allocated[t].\emph{clear}(bid)\;
      free.\emph{set}(bid)\;
    }
  }
 \caption[\textsc{DynaSOAr}: DAllocatorHandle::deallocate<T>]{DAllocatorHandle::deallocate<T>(T* ptr) : void \hfill \fbox{GPU}}
 \label{algo:dealloc_a}
\end{algorithm}

\subsection{Object Deallocation}
When an object is deallocated, \textsc{DynaSOAr} first determines its runtime type $T$ from the fake object pointer. Then, \textsc{DynaSOAr} runs the C++ destructor and deallocates the memory as shown in Algorithm~\ref{algo:dealloc_a}.

We first determine block and object slot IDs from the object pointer and deallocate the object slot by atomically flipping its bit in the object allocation bitmap from 1 to 0. Based on the return value of the atomic operation we know the fill level of the block right before the deallocation.

If this deallocation freed the first object slot (block previously full), we mark the block as active (Line~5), so that other threads can find it and allocate objects in it.

If this deallocation freed the last object slot (block now empty), we attempt to delete the block (Lines~7--11). Safe memory reclamation is known to be difficult in lock-free algorithms~\cite{Michael:2002:SMR:571825.571829}. The main problem is that one or more contending threads, in the course of their lock-free operations, may have selected the block that we are about to delete for new allocations.

To avoid the block from being modified by other threads, we \emph{invalidate} it (Section~\ref{sec:safe_mem_reclam_invalidation}). Block invalidation attempts to atomically flip all bits in the object allocation bitmap from 0 to 1. If this atomic operation failed to flip at least one bit from 0 to 1 (because it was already 1), another thread must have reserved an object slot in the meantime. In this case, we rollback the changes to the object allocation bitmap and abort block invalidation and deletion.

If invalidation was successful, the block is guaranteed to be empty and cannot be modified by other threads anymore because all bits in the object allocation bitmap are 1. The type of the block may have changed in the meantime (Line~8), but it is now safe to mark this block location as \emph{free}, so that a new block can be initialized at this location.




\subsection{Parallel Object Enumeration: \texttt{parallel\_do}}
\label{sec:par_do_all_sec3}
Parallel do-all is the foundation of SMMO applications. It launches a GPU kernel that runs a method \texttt{T::func} on all objects of a type $T$ (and subtypes). To avoid branch divergence, we run a separate kernel for each subtype. \texttt{T::func} may read and write fields of the object that it is bound to (\texttt{this}). The goal of parallel do-all is to assign objects to GPU threads in such a way that memory coalescing is maximized for those field accesses.

\begin{figure}
  \begin{minipage}[c]{0.52877004242\textwidth}
    \includegraphics[width=\textwidth]{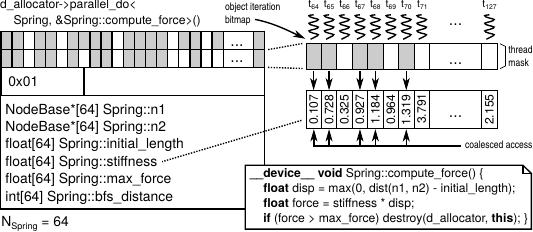}
  \end{minipage}\hfill
  \begin{minipage}[c]{0.43\textwidth}
  \caption[Thread assignment of \textsc{DynaSOAr} \texttt{parallel\_do}]{Thread assignment example. 64 threads with consecutive IDs are assigned to every allocated block of type \texttt{Spring}. Since not all object slots are in use, as indicated by the block iteration bitmap, some threads have no work to do. All other threads can benefit from memory coalescing when reading/writing fields of the object that they are assigned to.}
  \label{fig:thread_assignment_ex}
  \end{minipage}
\end{figure}

Memory coalescing is maximized when all threads of a warp access consecutive memory addresses at the same time (and addresses are properly aligned). In this case, all those memory accesses can be serviced by vector loads/writes. In CUDA, threads are identified by thread IDs. Each warp consists of a consecutive range of threads. E.g., warp 0 consists of threads $t_0, t_1, \ldots t_{31}$. Assuming a block capacity of $N_T$, \soaalloc{} assigns $N_T$ consecutive threads to the objects in a block (Figure~\ref{fig:thread_assignment_ex}). This leads to good memory coalescing on average. Perfect memory coalescing can be achieved if the following two conditions apply.

\begin{itemize}
  \item $N_T$ is a multiple of the warp size 32. If this is not the case, then there are warps whose threads process elements in two or more different blocks.
  \item Objects have good clustering, i.e., every block except for at most one is entirely full. Due to the way objects are allocated (only in active blocks), we expect a high fill level on average.
\end{itemize}

\soaalloc{} uses the block allocation bitmap to find blocks to which threads should be assigned. Assigning only one object to a thread is too inefficient if the number of objects is large. Therefore, a thread $t_\mathit{tid}$ may have to process an object slot in multiple blocks. $\mathit{num}_B(\mathit{tid})$ is the number of blocks that are assigned to a thread $t_\mathit{tid}$. Our scheduling strategy always assigns the same object slot position $\mathit{id}_O(\mathit{tid})$ to a thread, but in multiple blocks $\mathit{id}_B(\mathit{tid})$. In the following formulas, $R$ is an array of indices of all allocated blocks of type $T$, i.e., all blocks containing objects of type $T$. The total number of threads $n$ can be hand-tuned by the programmer. With those formulas, every thread can by itself determine the objects that it should process.
\begin{align*}
\mathit{id}_O(\mathit{tid}) = \mathit{tid} \,\,\, \% \,\,\, N_T \tag{\emph{assigned object slot index}}
\end{align*}
\begin{align*}
\mathit{id}_B(\mathit{tid}) = \left(R\hspace{-0.1cm}\left[\frac{\mathit{tid} + k \cdot n}{N_T}\right] \left.\right\vert k \in [0; \mathit{num}_B(\mathit{tid}))\right) \tag{\emph{assigned block indices}}
\end{align*}
\begin{align*}
\mathit{num}_B(\mathit{tid}) =  \left\lceil \frac{r\cdot N_T - \mathit{tid}}{n} \right\rceil \tag{\emph{number of assigned blocks}}
\end{align*}

\begin{figure}
  \begin{minipage}[c]{0.44187973217\textwidth}
    \includegraphics[width=\columnwidth]{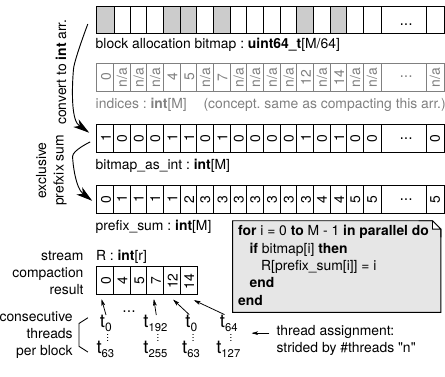}
  \end{minipage}\hfill
  \begin{minipage}[c]{0.5\textwidth}
    \caption[Bitmap compaction with prefix sum]{Example: Compacting block allocation bitmap indices and assigning $n=256$ threads to 6 allocated blocks with $N_T = 64$. The prefix sum retains the order of indices (i.e., $R$ is sorted), but this is not required for the correctness of our algorithms.
    } \label{fig:scan_enumeration}
  \end{minipage}
\end{figure}

The formulas $\mathit{id}_O$ and $\mathit{id}_B$ effectively implement a grid-stride loop (Section~\ref{sec:cuda_prog_model_backgr}). The array $R$ is required because every thread should by itself find the $\frac{\mathit{tid}}{N_T}$-th, $\frac{\mathit{tid}+n}{N_T}$-th, etc. allocated block of type $T$ quickly, without scanning the entire block allocation bitmap or communicating with other threads, which would be very slow.

\textsc{DynaSOAr} precomputes $R$ before every parallel do-all operation (Figure~\ref{fig:scan_enumeration}). Conceptually, this is an application of stream compaction~\cite{IJNC151}: Given a bitmap of size $M$ (size of heap in number of blocks), generate an \emph{indices} array of size $M$ containing $i$ at position $i$ if the $i$-th bit is set. Otherwise, store an \emph{invalid marker}. Now filter/compact the array to retain only valid values, resulting in an array $R$ of size $r$. Note that we do not care if the original ordering of indices is retained. This stream compaction could be implemented as follows. 

\begin{enumerate}
  \item Convert the bitmap into an integer array.
  \item Compute the exclusive prefix sum~\cite{Sengupta06awork-efficient, Billeter:2009:ESC:1572769.1572795} of the integer array.
  \item Compute the final result $R$: If a bit is set in the bitmap, store its index at the position that was computed by the exclusive prefix sum.
\end{enumerate}

Every step of this stream compaction implementation can run in parallel. Section~\ref{sec:hierarchical_bit} describes how this algorithm is further optimized with hierarchical bitmaps to avoid scanning empty bitmap parts.

\section{Optimizations}
\label{sec:optimizations}
This section describes performance optimizations that \soaalloc{} applies in addition to the SOA data layout to achieve good (de)allocation performance and to reduce memory fragmentation.

\subsection{Hierarchical Bitmaps}
\label{sec:hierarchical_bit}
\begin{figure}[t]
  \begin{minipage}[c]{0.44187973217\textwidth}
    \centering
    \includegraphics[width=0.90522368617\columnwidth]{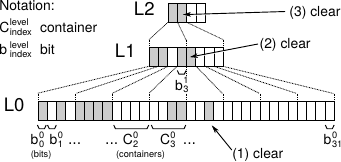}
  \end{minipage}\hfill
  \begin{minipage}[c]{0.5\textwidth}
    \caption[Hierarchical bitmap data structure]{Example: Hierarchical bitmap of size 32 with container size 4 (instead of 64). This example illustrates how (1) a \emph{clear}(18) operation triggers (2) a \emph{clear}(4) operation in the nested bitmap, which triggers (3) a \emph{clear}(1) operation in the next nested bitmap.} \label{fig:bitmap_clear_example}
  \end{minipage}
\end{figure}

\soaalloc{} uses bitmaps for finding blocks or free space for blocks. Since, with growing heap sizes, bitmaps can reach several megabytes in size, we use a hierarchy of bitmaps, such that \emph{set} bits (ones) can be found with a logarithmic order of memory accesses.

\begin{lstfloat}
\begin{lstlisting}[language=c++, numbers=none, morekeywords={uint64_t}, caption={[Structurally recursive C++ bitmap implementation]Structurally recursive C++ bitmap implementation (simplified)}, label={lst:struc_recurs_bitmap_cpp}]
template<int N, bool HasNested>
struct Bitmap;

template<int N>
struct Bitmap<N, /*HasNested=*/ false> {
  static const int kNumContainers = (N + 64 - 1) / 64;  // ceil(N / 64)
  uint64_t containers[kNumContainers];
};

template<int N>
struct Bitmap<N, /*HasNested=*/ true> {
  static const int kNumContainers = (N + 64 - 1) / 64;  // ceil(N / 64)
  static const bool kContinueHierarchy = kNumContainers > 1;

  uint64_t containers[kNumContainers];
  Bitmap<kNumContainers, kContinueHierarchy> nested;
};
\end{lstlisting}
\end{lstfloat}

Our hierarchical bitmaps are structurally recursive (nested bitmaps; Listing~\ref{lst:struc_recurs_bitmap_cpp}) and hide their hierarchy as an implementation detail from their interface. Such bitmaps are used in database systems~\cite{10.1007/978-3-540-39403-7_19} and garbage collectors~\cite{Ueno:2011:ENG:2034773.2034802}, but we do not know of any hierarchical bitmaps that support concurrent modifications.

\paragraph{Data Structure}
A hierarchical bitmap of size $N$ bits consists of two parts: an array of size $\lceil N/64\rceil$ of 64-bit \emph{containers} (\texttt{uint64\_t}), and a \emph{nested bitmap} of size $\lceil N/64 \rceil$ if $N > 64$. A container $C_i^l$ consists of bits $b_{64 \cdot i}^l$, ...,  $b_{64 \cdot i + 63}^l$ and is represented by one bit $b_{i}^{l+1}$ in the nested (higher-level) bitmap (Figure~\ref{fig:bitmap_clear_example}). That bit is set if at least one bit is set in the container.

\begin{align*}
b_i^{l+1} = \bigvee_{k=0}^{63} b_{64\cdot i + k}^{l} \tag{\emph{container consistency}}
\end{align*}

We chose a container size of 64 bits because C++ has a 64-bit integer type and CUDA (and most other architectures) provide atomic operations for modifying 64-bit values. Bits in a container are changed with atomic operations. Higher-level bits (and thus bitmaps) are \emph{eventually consistent} with their containers. Keeping both consistent all the time is impossible without locking, because two different memory locations cannot be changed together atomically. However, due to the design of the bitmap operations, the bitmap is guaranteed to be in a consistent state when all bitmap operations (of all threads) are completed, at the end of a GPU kernel. Bitmap operations retry or give up (\emph{FAIL}) to handle temporary inconsistencies. This is a key difference compared to other lock-free hierarchical data structures such as SNZI~\cite{Ellen:2007:SSN:1281100.1281106}, which have stronger runtime consistency guarantees and require more complex algorithms.

\paragraph{Operations}
All bitmap operations except for \emph{indices()} are device functions that run entirely on the GPU. All operations that modify memory are thread-safe and their semantics are atomic. Internally, they are all implemented with atomic memory operations. 

\begin{itemize}
  \item $\texttt{try\_clear}(\texttt{pos})$ atomically sets the bit at position $\texttt{pos}$ to 0. Returns \emph{true} if the bit was 1 before and \emph{false} otherwise.
  \item $\texttt{clear}(\texttt{pos})$ \emph{switches} the bit at position $\texttt{pos}$ from 1 to 0. Retries until the bit was actually changed by the current thread. This is identical to the following code snippet: \texttt{\textbf{while} (!try\_clear(pos)) \{\}}

  \item $\texttt{try\_set}(\texttt{pos})$ atomically sets the bit at position $\texttt{pos}$ to 1. Returns \emph{true} if the bit was 0 before and \emph{false} otherwise.
  \item $\texttt{set}(\texttt{pos})$ \emph{switches} the bit at position $\texttt{pos}$ from 0 to 1. Retries until the bit was actually changed by the current thread. This is identical to the following code snippet: \texttt{\textbf{while} (!try\_set(pos)) \{\}}

  \item $\texttt{try\_find\_set}()$ returns the position of an arbitrary bit that is set to 1 or \emph{FAIL} if none was found. This operation must be used with caution, because the returned bit position might already have changed when using the result.
  \item $\texttt{clear}()$ atomically clears and returns the position of an arbitrary set bit. This is a combination of \texttt{try\_find\_set} and \texttt{try\_clear} and identical to the following code snippet: \texttt{\textbf{while} ((i = try\_find\_set()) != FAIL \&\& try\_clear(i)) \{\}; \textbf{return} i;}
  \item $\texttt{get}(\texttt{pos})$ and $\texttt{operator[]}(\texttt{pos})$ return the value of the bit at position $\texttt{pos}$.
  \item $\texttt{indices}()$ returns an array of indices of all set bits. This is a host function and cannot be used in a GPU kernel.
\end{itemize}

\SetKwComment{Comment}{$\triangleright$\ }{}
\begin{algorithm}[t]
\small
 cid $\gets$ pos / 64; \hfill \Comment{\textsf{Container ID}}
 offset $\gets$ pos \% 64; \hfill \Comment{\textsf{Index of bit within the container}}
 mask $\gets$ 1 {<}{<} offset\;
 prev $\gets$ \emph{atomicAnd}(\&container[cid], $\sim$mask)\;
 success $\gets$ (prev \& mask) $\not=$ 0\;

\If{success $\wedge$ has\_nested $\wedge$ {popc}(\emph{prev})\hspace{-0.17cm}\tikz[remember picture] \node [] (d){};\hspace{-2pt} \emph{=\,1}}{
  nested.\emph{clear}(cid)\;
}
\begin{tikzpicture}[remember picture, overlay,
    every edge/.append style = { ->, thick, >=stealth,
                                  dashed, line width = 1pt },
    every node/.append style = { align = center,
                                 font=\sffamily\small, fill= gray!20},
                  text width = 2.75cm ]
  \node [notice={(-0.2,0.15)}, below right = 0.3cm and -1.5cm of d]  (D) {\mbox{\begin{varwidth}{2.75cm} \scriptsize \textbf{population count:} number of set bits\end{varwidth}}};

\end{tikzpicture} \hspace{-0.17cm} \textbf{return } success; \hfill \Comment{\textsf{Was the bit switched from 1 to 0?}}
 \caption[Hierarchical Bitmap: Bitmap::try\_clear]{Bitmap::try\_clear(pos) : void  \hfill \fbox{GPU}}
 \label{alg:clear}
\end{algorithm}

\begin{algorithm}[t]
\small
\uIf{has\_nested}{
  cid $\gets$ nested.\emph{try\_find\_set}(); \hfill \Comment{\textsf{Select container based on L+1 bitmap.}}
  \lIf{\emph{cid} = FAIL}{
    \Return \emph{FAIL}
  }
}\Else{
  cid $\gets$ 0; \hfill \Comment{\textsf{This is the top-most bitmap level. Only 1 container.}}
}
offset $\gets$ \emph{ffs}\tikz[remember picture] \node [] (d){};\hspace{-0.17cm}(container[cid])\;
\eIf{offset = \emph{NONE}}{
  \Return \emph{FAIL}\;
}{
\begin{tikzpicture}[remember picture, overlay,
    every edge/.append style = { ->, thick, >=stealth,
                                  dashed, line width = 1pt },
    every node/.append style = { align = center,
                                 font=\sffamily\small, fill= gray!20},
                  text width = 2.6cm ]
  \node [notice={(-0.35,0.075)}, below right = 0.0cm and 2.75cm of d]  (D) {\mbox{\begin{varwidth}{2.6cm} \scriptsize \textbf{find first set}: index of 1\textsuperscript{st} set bit\footnotemark\end{varwidth}}};

\end{tikzpicture}
  \Return 64*cid + offset\;
}

 \caption[Hierarchical Bitmap: Bitmap::try\_find\_set]{Bitmap::try\_find\_set() : int \hfill \fbox{GPU}}
 \label{alg:top_down_tr}
\end{algorithm}

\begin{algorithm}[t]
\small
\eIf{has\_nested}{
  selected $\gets$ nested.\emph{indices}(); \hfill \Comment{\textsf{Select containers based on L+1 bitmap.}}
}{
  selected $\gets$ [0]; \hfill \Comment{\textsf{This is the top-most bitmap level. Only 1 container.}}
}
R $\gets$ array(N)\;
r $\gets$ 0\;
\For(\Comment*[f]{\textsf{\fbox{GPU} (CUDA kernel)}}){$\mbox{cid} \in \mbox{selected}$ \emph{\textbf{in parallel}}}{
  c $\gets$ container[cid]\;
  s $\gets$ \emph{atomicAdd}(\&r, \emph{popc}(c))\;
  \For{$i \gets 0$ \KwTo \emph{\emph{popc}(c))}}{
    R[s + i] $\gets$ 64*cid + \emph{nth\_set\_bit}\tikz[remember picture] \node [] (d){};\hspace{-0.17cm}(c, i)\;
  }
}
 \begin{tikzpicture}[remember picture, overlay,
    every edge/.append style = { ->, thick, >=stealth,
                                  dashed, line width = 1pt },
    every node/.append style = { align = center,
                                 font=\sffamily\small, fill= gray!20},
                  text width = 2.95cm ]
  \node [notice={(-0.15,0.125)}, below left = 0.25cm and -2.0cm of d]  (D) {\mbox{\begin{varwidth}{2.95cm} \scriptsize idx. of $i$\textsuperscript{th} set bit in $c$\end{varwidth}}};

\end{tikzpicture} \hspace{-0.17cm} \Return R.\emph{subarray}(0, r)\;
 \caption[Hierarchical Bitmap: Bitmap::indices]{Bitmap::indices() : int[N] \hfill \fbox{CPU}}
 \label{alg:indices}
\end{algorithm}

\paragraph{Set and Clear with Atomic Operations}
As many other lock-free algorithms, our hierarchical bitmaps are based on a combination of atomic operations and retries~\cite{doi:10.1002/9781119332015.ch3}. The return value of an atomic operation indicates if a bit was actually changed and if it is this thread's responsibility to update the higher-level bitmap (Figure~\ref{fig:bitmap_clear_example}).

As an example, Algorithm~\ref{alg:clear} shows how to clear the bit at position \texttt{pos}. In Line~4, the respective container is bit-ANDed with a mask containing ones everywhere except for that position. This will clear the bit at position \texttt{pos} but leave all other bits unchanged. The current thread actually changed the bit if it is set in \texttt{prev} (Line~5). If this operation cleared the last bit (Line~6), then the bit in the higher-level bitmap must be cleared.\footnotetext{This is slightly different from CUDA, where\texttt{\_\_ffs(b)} returns the position of the 1\textsuperscript{st} set bit plus 1, or 0 if none exists. In this work, $\mathit{ffs}(b)$ returns the position of the 1\textsuperscript{st} set bit, or \emph{NONE} if none exists.}

Note that higher-level bits are always changed with \emph{clear(pos)}/\emph{set(pos)} and not with their respective \emph{try\_} versions, because other concurrently running bitmap operations may not have updated all higher-level bitmaps yet, leaving the data structure in a temporarily inconsistent state. If we were to use \emph{try\_} versions, a mandatory update of the higher-level bitmap could be accidentally dropped due to a bitmap inconsistency. However, \emph{clear(pos)}/\emph{set(pos)} ensure that the update is performed eventually by retrying (and spin-blocking the thread) until the update was successful.

\paragraph{Finding an Arbitrary Set Bit}
Instead of scanning the entire L0 bitmap, set bits can be found faster with a top-down traversal of the bitmap hierarchy, as shown in Algorithm~\ref{alg:top_down_tr}. A request is first delegated to the higher-level bitmap (Line~2) to select a container. When that call returns, a set bit is chosen in the selected container (Line~6).

This operation fails if there is no set bit in the entire bitmap data structure. However, even if the bitmap has set bits, this operation can fail if it reads an inconsistent combination of container values from different hierarchy levels. For example, consider the case where a container with exactly one set bit is chosen by the recursive call. However, before reaching Line~6, another thread clears that bit as part of a concurrent bitmap operation. In that case, \emph{try\_find\_set} fails even though there may be set bits in other containers.

\textsc{DynaSOAr} is affected by such bitmap inconsistencies when searching for active blocks (Algorithm~\ref{alg:alloc_algo}, Line~2) or free block positions (Line~4). While bitmap inconsistencies do not affect correctness, they can lead to higher fragmentation if \textsc{DynaSOAr} fails to find active blocks and instead initializes additional blocks, even though objects could have been accommodated in already existing blocks. We analyze the effect of bitmap inconsistencies in the benchmark section (Section~\ref{sec:dyna_detailed_analysis_wator}).

\paragraph{Enumerating Set Bit Indices}
\label{sec:enumerating_set_bit_indices}
Before launching a parallel do-all kernel, \soaalloc{} uses the \emph{indices} operation to generate a compact array of allocated block indices ($R$ in Figure~\ref{fig:scan_enumeration}). No GPU code is running at this time, so the bitmap is guaranteed to be in a consistent state. To ensure good scaling with increasing heap sizes, and thus increasing block bitmap sizes, \soaalloc{} utilizes the bitmap hierarchy to quickly skip containers without any set bits (Algorithm~\ref{alg:indices}).

First, an index array is generated for the higher-level bitmap (Line~2). This array is then processed in parallel; the \emph{for} loop in Line~7 is a GPU kernel and every thread processes one or multiple containers selected by the recursive call. If a container $C_i^l$ does not have any set bits, then its corresponding bit $b_i^{l+1}$ is in a cleared state in the higher-level bitmap and not included in \texttt{selected}. Every thread reserves space in the result array $R$ by increasing an atomic counter and fills its portion of the array with bit indices. This algorithm proved to be faster and requires less memory than a prefix sum algorithm, which needs multiple array copies/buffers per bitmap. Interestingly, related work has recently made the same observation with BFS algorithms on GPUs~\cite{Gaihre:2019:XER:3307681.3326606}. Note that, in contrast to the prefix sum-based implementation of Section~\ref{sec:par_do_all_sec3}, this algorithm does not necessarily retain the order of indices, i.e., the result array $R$ is not sorted.


\subsection{Reducing Thread Contention}
\label{sec:less_allocation_cont}
In Algorithms~\ref{alg:top_down_tr} and~\ref{alg:block_allocate}, threads are competing with each other for bits: Only one thread can reserve any given object slot and only a limited number of threads can succeed with allocations in a block. To guarantee correctness, our design is heavily based on atomic operations. These operations became considerably faster with recent GPU architectures~\cite{DeGonzalo:2019:AGW:3314872.3314884, warp_aggre}, but performance can still suffer when too many threads choose the same bit, because threads have to retry if allocation fails. \soaalloc{} employs two techniques to reduce such \emph{thread contention}. 

\paragraph{Allocation Request Coalescing}
Originally proposed by XMalloc~\cite{5577907}, \soaalloc{} combines memory allocation requests of the same type within a warp. One \emph{leader thread} reserves all object slots in a single block on behalf of all participating threads. If the selected active block does not have enough free object slots, \soaalloc{} reserves as many slots as possible and then chooses another active block for the remaining allocation requests. This reduces the number of atomic memory operations, because multiple bits in an object allocation bitmap are set in one operation. Furthermore, the constructor for newly allocated objects can run more efficiently, because field accesses are more likely coalesced.

Algorithm~\ref{alg:alloc_algo_with_coal} shows how allocation request coalescing is implemented in \textsc{DynaSOAr}. This algorithm is an improved version of Algorithm~\ref{alg:alloc_algo}. The following paragraph is intended for experienced CUDA programmers, as the algorithm takes advantage of CUDA warp-level primitives for intra-warp synchronization~\cite{cuda_website_using_warp_level}.

Out of all threads that are allocating an object of type $T$, the algorithm first select a leader thread. The leader attempts to reserve an object slot for every participating thread, similar to Algorithm~\ref{alg:alloc_algo}. The slot reservation algorithm was extended such that it can allocate multiple slots in one go: $n$=\emph{popc}(active) is the number of requested allocations and the algorithm attempts to reserve up to $n$ slots. The result is an \emph{allocation bitmap} with up to $n$ set bits, which were reserved in the object allocation bitmap of the block. Block ID and allocation bitmap are lane-shuffled to the other participating threads in the warp. Every thread locates the position of the $i$-th set bit in the allocation bitmap, where $i$ is the index of the thread in the warp among all participating threads (Line~20). If not enough bits were reserved for all threads (e.g., because the selected block was almost full), then these threads retry from the beginning, potentially with a new leader thread.

\SetKwComment{Comment}{$\triangleright$\ }{}
\SetAlgoVlined
\begin{algorithm}[t]
\small
 \Repeat(\Comment*[f]{\textsf{Infinite loop if OOM}}){false}{
  active $\gets$ \emph{\_\_activemask}(); \hfill \Comment{\textsf{Bitmap of active threads in warp}}
  leader $\gets$ \emph{ffs}(active); \hfill \Comment{\textsf{Leader = active thread with lowest ID}}
  rank $\gets$ \emph{\_\_lane\_id}(); \hfill \Comment{\textsf{Rank of this thread}}
  \If(\Comment*[f]{\textsf{This thread is the leader.}}){\emph{leader} = \emph{rank}}{
    bid $\gets$ active[T].\emph{try\_find\_set}()\;
    \If(\Comment*[f]{\textsf{Slow path}}){\emph{bid} = FAIL} {
      bid $\gets$ free.\emph{clear}()\;
      \emph{initialize\_block}<T>(bid)\;
      allocated[T].\emph{set}(bid)\;
      active[T].\emph{set}(bid)\;
    }
    alloc\_bitmap $\gets$ heap[bid].\emph{reserve\_multiple}(\,\emph{popc}(active)\,)\;
    \If{{popc}(\emph{alloc\_bitmap}) $>$ 0}{
      t $\gets$ heap[bid].type\;
      \lIf{\emph{alloc.state} = FULL}{
        active[t].\emph{clear}(bid)
      }
      \lIf{t $\not=$ T}{
        \emph{deallocate\_multiple}<t>(bid, alloc\_bitmap) 
      }
    }
  }
  alloc\_bitmap $\gets$ \emph{\_\_shfl\_sync}(active, alloc\_bitmap, leader)\;
  bid $\gets$ \emph{\_\_shfl\_sync}(active, bid, leader)\;
  id\_in\_active $\gets$ \emph{popc}(\emph{\_\_lanemask\_lt}() \& active)\;
  slot $\gets$ \emph{nth\_set\_bit}(alloc\_bitmap, id\_in\_active)\;
  \If(\Comment*[f]{\textsf{else: Not enough slots reserved}}){\emph{slot} $\not=$ FAIL} {
    \Return \emph{make\_pointer}(bid, slot)\;
  }
 }
 \caption[\textsc{DynaSOAr}: DAllocatorHandle::allocate<T>, with req. coalescing]{DAllocatorHandle::allocate<T>() : T* \hfill \fbox{GPU}}
 \label{alg:alloc_algo_with_coal}
\end{algorithm}

\paragraph{Bitmap Rotation}
Instead of a plain \emph{find first set} (ffs) in Algorithms~\ref{alg:top_down_tr} and~\ref{alg:block_allocate}, bitmaps are first rotating-shifted by a value depending on the warp ID and a seed that is changed with every retry. This increases the probability of threads choosing different active blocks for allocation and reduces the probability of threads trying to reserve the same object slots in a block. This is a key optimization technique that improved performance by an order of magnitude. 

\paragraph{Lookup Retry}
While bitmap traversals are relatively cheap, block initializations are expensive because in addition to initializing object bitmaps, bits in three different bitmaps (plus hierarachy) must be changed (slow path of Algorithm~\ref{alg:alloc_algo}). To avoid unnecessary block initializations, it proved beneficial to retry the search for active blocks (Line~2) a constant number of times before entering the slow path. This optimization resulted in lower fragmentation and improved performance.


\subsection{Efficient Bit Operations}
\SetKwComment{Comment}{$\triangleright$\ }{}
\begin{algorithm}[t]
\small
  \Repeat{\emph{success} $\vee$ \emph{block\_full}} {
    pos $\gets$ \emph{ffs}($\sim$bitmap)\;
    \lIf{\emph{pos} = NONE}{
      \Return \emph{FAIL}
    }
    mask $\gets$ 1 {<}{<} pos\;
    before $\gets$ \emph{atomicOr}(\&bitmap, mask)\;
    success $\gets$ (before \& mask)) = 0\;
    block\_full $\gets$ before = 0xFF...F\;
  }
  \If{{success}}{
    \eIf{popc(\emph{before}) \emph{= 63}}{
      \Return (pos, \emph{FULL})
    }{
      \Return (pos, \emph{REGULAR})
    }
  }
\Return \emph{FAIL}\;
 \caption[\textsc{DynaSOAr}: Block::reserve]{Block::reserve() : (int, state) \hfill $\triangleright$\ \textsf{Assuming block size 64.} \,\,\,\fbox{GPU}}
 \label{alg:block_allocate}
\end{algorithm}

\SetKwComment{Comment}{$\triangleright$\ }{}
\begin{algorithm}[t]
\small
    mask $\gets$ 1 {<}{<} pos\;
    before $\gets$ \emph{atomicAnd}(\&bitmap, $\sim$mask)\;
    \textcolor{gray}{success $\gets$ (before \& mask)) $\not=$ 0\;}
    \textcolor{gray}{\emph{assert}(success);} \hfill \Comment{\textcolor{gray}{\textsf{Precondition.\,\,\,\,\,\,\,\,}}}

    \uIf{popc(\emph{before}) \emph{= 1}}{
      \Return \emph{EMPTY}\;
    }
    \uElseIf{popc(\emph{before}) \emph{= 64}}{
      \Return \emph{FIRST}\;
    }
    \Else{
      \Return \emph{REGULAR}\;
    }
 \caption[\textsc{DynaSOAr}: Block::deallocate]{Block::deallocate(pos) : state \hfill $\triangleright$\ \textsf{Assuming block size 64.} \,\,\,\fbox{GPU}}
 \label{alg:block_deleteslot}
\end{algorithm}

\soaalloc{} is taking advantage of efficient bitwise operations such as \emph{ffs} (``find first set'') and \emph{popc} (``population count''). Modern CPU and GPU architectures have dedicated instructions for such operations. As an example, Algorithm~\ref{alg:block_allocate} shows how a single object slot is reserved\footnote{This algorithm was extended for allocation request coalescing (not shown here).}. Instead of checking all bits in a loop, \emph{ffs} in Line~2 is used to find a free slot (index of a cleared bit) in the object allocation bitmap and \emph{popc} in Line~10 counts the number of previously allocated slots (number of set bits) to decide if this request filled up the block. Similarly, to decide if a deallocation freed the first or last slot, \emph{popc} counts the number of bits before modifying the object allocation bitmap (Algorithm~\ref{alg:block_deleteslot}).

As another example, due to allocation request coalescing, every thread must now extract its reserved object slot from a set of allocations performed by a leader thread on behalf of the entire warp. This boils down to finding the $i$-th set bit in a 64-bit bitmap $b$ of newly reserved object slots, where $i$ is the rank of a thread among all allocating threads in the warp (Algorithm~\ref{alg:alloc_algo_with_coal}, Line~20). Instead of checking every bit in $b$ one-by-one (loop with 64 iterations in the worst case) and keeping track of the number of set bits seen so far, we clear the first $i-1$ bits with bitwise AND and then calculate \emph{ffs}(\emph{b}) (Algorithm~\ref{alg:nth_set_bit_al}).

\SetKwComment{Comment}{$\triangleright$\ }{}
\begin{algorithm}[t]
\small
    \For(\Comment*[f]{\textsf{Clear first $n-1$ bits.}}){$i\gets0$ \KwTo $n - 1$}{
      bitmap $\gets$ bitmap \& (bitmap - 1)\;
    }
    \Return \emph{ffs}(bitmap)\;
 \caption[Bit Operations: nth\_set\_bit]{nth\_set\_bit(bitmap, n) : int\hfill \fbox{GPU}}
 \label{alg:nth_set_bit_al}
\end{algorithm}

\SetKwComment{Comment}{$\triangleright$\ }{}
\begin{algorithm}[t]
\small
  heap[bid].type $\gets$ T; \hfill \Comment{\textsf{Volatile write.}}
  \emph{\_\_threadfence}()\;
  heap[bid].bitmap $\gets$ 0; \hfill \Comment{\textsf{Volatile write, assuming block capacity 64.}}
 \caption[\textsc{DynaSOAr}: DAllocatorHandle::initialize\_block<T>]{DAllocatorHandle::initialize\_block<T>(int bid) : void \hfill \fbox{GPU}}
 \label{alg:block_init}
\end{algorithm}



\section{Concurrency and Correctness}
\label{sec:concurrency_dynasoar}

CUDA has a weak consistency model for global memory access (Section~\ref{sec:cuda_prog_model_backgr}). Writes to memory performed by one thread are \emph{not} guaranteed to become visible to other threads in the same order. However, atomic writes \emph{have} that property (\emph{sequential consistency}). Furthermore, \emph{thread fences} can be used between two memory writes to enforce sequential consistency, if necessary.

Moreover, global memory reads/writes may be buffered in registers/caches, without a global memory load/store. Thus, memory writes by one thread may not become visible to other threads until the next GPU kernel, unless reads/writes are \texttt{volatile} or performed with atomic operations.

All bitmap operations are sequentially consistent and do not suffer from load/ store buffering because they are based on atomic memory operations.

\subsection{Object Slot Reservation/Freeing}
\label{sec:details_alloc_dealloc}
Inside a block, object allocations are tracked with an object allocation bitmap. Every object allocation bitmap has 64 bits, regardless of the block capacity. If a block's capacity is smaller than 64, then the last $64-N$ bits are set to 1 during block initialization to prevent threads from reserving these slots during object allocation.

Object slots are reserved/freed with atomic operations. These bypass the incoherent L1 caches and are thread-safe: E.g., based on their return value, we know if the current thread reserved a selected slot or if another contending thread was faster (Algorithm~\ref{alg:block_allocate}, Line~5). Based on their return value, we also know if the current thread reserved the last slot (Line~10), in which case the block should be marked as inactive by the allocation algorithm.

\paragraph{Slot Reservation}
\texttt{Block::reserve()} (Algorithm~\ref{alg:block_allocate}) reserves a single object slot in the block. Our actual implementation (\texttt{Block::reserve\_multiple(n)}; not discussed here) may reserve multiple slots at once due to allocation request coalescing.

\begin{enumerate}
  \item \textbf{Preconditions:} Block was initialized at least once. (Calling this method on invalidated blocks or full blocks is OK. This function will simply return FAIL.)
  \item \textbf{Postconditions:} If the result is different from FAIL, then the resulting slot position is reserved for this thread (and no other thread).
  \item \textbf{Return Value:} Success indicator, atomically reserved slot position, block state.
  \item \textbf{Linearization Point:} Atomic OR operation (Line~5).
\end{enumerate}

\paragraph{Slot Freeing}
\texttt{Block::deallocate(pos)} (Algorithm~\ref{alg:block_deleteslot}) frees a single object slot in the block. To support allocation request coalescing, we have a modified version of this algorithm that can rollback multiple slots at once (not discussed here).

\begin{enumerate}
  \item \textbf{Preconditions:} Bit \texttt{pos} is set to 1 in the object allocation bitmap. (Deleting an object multiple times or trying to delete an arbitrary pointer is illegal.)
  \item \textbf{Postconditions:} Bit \texttt{pos} is set to 0 in the object allocation bitmap.
  \item \textbf{Return Value:} Block state.
  \item \textbf{Linearization Point:} Atomic AND operation (Line~2).
\end{enumerate}

\subsection{Safe Memory Reclamation with Block Invalidation}
\label{sec:safe_mem_reclam_invalidation}
Safe memory reclamation (SMR) in lock-free algorithms is notoriously difficult. An SMR problem arises in \textsc{DynaSOAr} when deleting blocks. A block should be deleted as soon as its last object has been deleted. This by itself is easy to detect with atomic operations (Algorithm~\ref{alg:block_deleteslot}, Line~6). However, a contending thread may already have selected the now empty block in the course of its own concurrent allocate operation, before the block is actually deleted. Now it is no longer safe to delete the block, but the deleting thread is not aware of that.

Elaborate techniques for SMR such as hazard pointers~\cite{1291819} and epoch-based reclamation~\cite{UCAM-CL-TR-579} have been proposed in previous work~\cite{Brown:2015:RML:2767386.2767436, Michael:2002:SMR:571825.571829}. \textsc{DynaSOAr} is able to exploit a key characteristic of its data structure to solve this SMR problem in a simple way: Since all blocks have the same size and structure, object allocation bitmaps are always located at the same position. Therefore, we can optimistically proceed with bitmap modifications and rollback changes if necessary.

Our solution to SMR is \emph{block invalidation}. Before deleting a block, a thread tries to \emph{invalidate} (atomically set to 1) all bits in the object allocation bitmap. Bits that were already 1 are not considered invalidated because those object slots are in use. After successful invaldation, bits remain invalidated until a new block is initialized at the same location. Other threads may still be able to find the block in the active block bitmap for a while, but slot reservations can no longer succeed. 

\SetKwComment{Comment}{$\triangleright$\ }{}
\begin{algorithm}[t]
\small
  bitmap\_ptr $\gets$ \&heap[bid].bitmap\;
  before $\gets$ \emph{atomicOr}(bitmap\_ptr, 0xFF...F); \hfill \Comment{\textsf{Invalidate (set) all obj. alloc. bitmap bits.}}
  \If(\Comment*[f]{\textsf{$\geq$ 1 bit was invalidated.}}){before $\not=$ 0xFF...F}{
    t $\gets$ heap[bid].type\;
    \uIf(\Comment*[f]{\textsf{All 64 bits invalidated by this \emph{atomicOr}.}}){\emph{before = 0}}{
      \Return \emph{true}\;
    }
    \Else(\Comment*[f]{\textsf{Not all bits invalidated. Rollback.}}){
      before\_rollback $\gets$ \emph{atomicAnd}(bitmap\_ptr, before)\;
      \If(\Comment*[f]{\textsf{Other thread cleared a bit.}}){\emph{before\_rollback $\not=$ 0xFF...F}}{
            active[t].\emph{clear}(bid); \hfill \Comment{\textsf{Other thread expects an inactive block.}}
          }
      \If(\Comment*[f]{\textsf{Empty again. Retry invalidation.}}){\emph{(before\_rollback \& before) = 0}}{
        \Return \emph{invalidate}(bid)\;
      }
    }
  }
\Return \emph{false}\;
 \caption[\textsc{DynaSOAr}: DAllocatorHandle::invalidate]{DAllocatorHandle::invalidate(int bid) : bool \hfill \fbox{GPU}}
 \label{alg:block_invalidate}
\end{algorithm}

Allocating threads can detect changes in the block type. Before a previously invalidated block becomes available for allocations again (by initializing its object allocation bitmap), we update the block type. We put a thread fence between both writes to ensure that threads see the new block type before they see free slots in the bitmap (Algorithm~\ref{alg:block_init}). Threads allocate objects optimistically and rollback changes should they detect a different block type (Algorithm~\ref{alg:alloc_algo}, Line~14; also see Section~\ref{sec:cor_obj_alloc}).

\paragraph{Details}
Block invalidation\footnote{For presentation reasons, we assume a block capacity of 64 in this and other algorithms.} (Algorithm~\ref{alg:block_invalidate}) fails if a thread is unable to invalidate at least one bit. In that case, if at least one bit was changed through invalidation, this change must be rolled back (Line~8): In \texttt{before} exactly those bits are zero that were invalidated by the thread.

While a thread is running an invalidation operation, other threads may continue to concurrently reserve/free object slots in the same block, unaware of the fact that a thread is trying to invalidate the block. Those threads will update block state bitmaps based on the object allocation bitmap state that they are seeing. Therefore, block invalidation must update block bitmaps, as every invalidated bit appears to be an allocated object slot to other threads.

Since block invalidation fills up a block, the block's \emph{active[t]} state should be removed after Line~7, because, if we enter this \emph{else} branch, the thread just \emph{filled up} the block by reserving the remaining object slots (however, not all 64~slots, otherwise, we would be in  the \emph{then} branch of Line~5). However, we defer this step, as an invalidation rollback would likely have to mark the same block as \emph{active[t]} again. Unless, another thread concurrently freed an object slot in-between invalidation and invalidation rollback. For such a thread it will seem as if its deallocation just freed the first slot, causing it to activate the block (Algorithm~\ref{algo:dealloc_a}, Line~5). However, since we deferred block deactivation, this \emph{set(bid)} operation will spin until we deactivate the block (Algorithm~\ref{alg:block_invalidate}, Line~10). If invalidation rollback empties the block again, we try to invalidate the block one more time\footnote{Our actual implementation is iterative instead of recursive.}.

Note that block invalidation is independent of the type of a block. After invalidating at least one bit, the block type is fixed until invalidation rollback or block initialization, since other threads do not change invalidated bits. As such, the block cannot be deleted or reinitialized to another type by another thread. Other threads can, however, delete and initialize a block with different type after invalidation rollback. It is, nevertheless, safe to assume a block type of \emph{t} in Line~10, since this is merely an execution of a deferred operation that should have happened earlier when the block type was known to be \emph{t}.



\subsection{Object Allocation}
\label{sec:cor_obj_alloc}
The critical parts during allocations (Algorithm~\ref{alg:alloc_algo}) are \emph{block selection} (Line~2) and \emph{object slot reservation} (Line~8). Both operations by themselves are atomic, but not together. Block selection returns the index of an active block of type $T$, so we expect that after Line~8, we reserved an object slot in a block of type $T$. However, due to concurrent operations of other threads, some of these assumptions may be violated.

\begin{description}
  \item[Block Full] An active block was selected by \texttt{try\_find\_set} but the block filled up before making an allocation (i.e., the block is no longer active). In this case, object slot reservation will fail. Whenever allocation fails, it will restart from the beginning.
  \item[Block Deallocated] A block was selected by \texttt{try\_find\_set} but deallocated before reserving a slot. In this case, slot reservation will fail because the block is now in an invalidated state.
  \item[Block Replaced (ABA)] A block was selected by \texttt{try\_find\_set} but deallocated and reinitialized to a block of same type $T$. This is harmless: We do not care about block identity.
  \item[Block Replaced (Different Type)] A block was selected by \texttt{try\_find\_set} but deallocated and reinitialized to another type\footnote{Block initialization (Algorithm~\ref{alg:block_init}) has a thread fence between setting the block type and resetting the object allocation bitmap, so threads are guaranteed to read the correct type $t$ after an allocation succeeded.} $t \not= T$. In this case, the allocation must be rolled back (Line~14). All blocks have the same basic structure, so no object data can be overwritten accidentally during bitmap updates. Note that the rollback may trigger additional block bitmap updates.
  \textcolor{gray}{\item[Active Block Not Selected] A block becomes active shortly after \texttt{try\_find\_set} fails. Or, due to bitmap hierarchy inconsistencies, \texttt{try\_find\_set} fails to find an active block even though active blocks exist. This is harmless: No assumption is violated. A new block will be initialized, which merely increases fragmentation.}
\end{description}

Note that a block cannot be deallocated after an object slot was already reserved, because block invalidation would fail. Thus, the type of a block can also no longer change.


\subsection{Object Deallocation}
\label{sec:dynasoar_obj_dealloc_discussion}
The critical part during deallocations (Algorithm~\ref{algo:dealloc_a}) is consistency between \emph{object slot deallocation} (Line~3) and \emph{block state updates}. If the current thread deallocated the first object (i.e., the block was full), then the block's bit in the active block bitmap must be set. If the current thread deallocated the last object (i.e., the block is empty), then the block must be deleted. The problem is that object slot deallocation and the corresponding block state update together are not atomic.

\begin{description}
  \item[Allocate After Delete-First] A thread $t_1$ deleted the first object of a block. However, before marking the block as active (Line~6), another thread $t_2$ allocated this slot again; the block should be inactive. In this case, $t_2$ reserved the last slot, so it will mark the block as inactive (Algorithm~\ref{alg:alloc_algo}, Line~12). This operation expects the bit to be in a set state and it will retry until $t_1$ sets the bit.
  \item[Block Deleted after Delete-First] A thread $t_1$ deleted the first object of a block. However, before marking the block as active, other threads deallocated all other objects and a thread $t_2$ deleted the block. This is not possible because $t_2$ expects the block to be active (Line~9), i.e., bit set to 1, and blocks until then.
  \item[Block Replaced after Delete-First] A thread $t_1$ deleted the first object of a block. However, before marking the block as active, the block was reinitialized to another type. This is not possible because only deleted blocks can be reinitialized (see previous point).
  \item[Allocate after Delete-Last] A thread $t_1$ deleted the last object of a block. However, before deleting the block, another thread $t_2$ allocated an object again, so it is unsafe to delete the block now. This case in handled by block invalidation.
  \item[Block Deleted after Delete-Last] A thread $t_1$ deleted the last object of a block. However, before deleting the block, another thread $t_2$ allocated an object and yet another thread $t_3$ deleted that object, rendering the block empty again and deleting it. Now the block is already deleted when $t_1$ is trying to delete the block. In this case, block invalidation of $t_1$ will fail because the block is still in an invalidated state and $t_1$ fails to invalidate all object slot bits.
  \item[Block Replaced after Delete-Last] Same as before, but yet another thread $t_4$ reininitializes the block to a different type. Now $t_1$ will invalidate and delete a new block whose type is different. This is OK. Block invalidation will succeed only if the block is empty. Both block invalidation and block deletion are independent of and do not assume a certain block type.
\end{description}

\paragraph{Divergent Branch Scheduling}
The ordering of \emph{if} statement branches in Algorithm~\ref{algo:dealloc_a} is crucial. Let us assume that a thread $t_1$ deletes the first object of a full block (Line~4) and another thread $t_2$ of the \emph{same warp} deletes the last object of the same block and succeeds with block invalidation (Line~7). The \emph{clear} operation in Line~9 will block until the respective bit was actually changed to 0. Therefore, to avoid a deadlock, it is crucial that the \emph{set} operation of Line~5 is scheduled to execute before the \emph{clear} operation. Since both $t_1$ and $t_2$ belong to the same warp, both \emph{if} branches are executed sequentially (\emph{thread divergence}). Only if the CUDA compiler schedules the branch of Line~4 before the branch of Line~6 in the compiled PTX, is this implementation free of deadlocks. While CUDA leaves the order of execution of divergent branches undefined (Section~\ref{sec:cuda_prog_model_backgr}), to the best of our knowledge, all recent CUDA versions schedule branches in the order in which they appear in the source code.

\SetKwComment{Comment}{$\triangleright$\ }{}
\SetKwRepeat{Do}{do}{while}
\begin{algorithm}[t]
\small
  bid $\gets$ \emph{extract\_block}(ptr)\;
  slot $\gets$ \emph{extract\_slot}(ptr)\;
  state $\gets$ heap[bid].\emph{deallocate}(slot)\;
  active\_op $\gets$ allocated\_op $\gets$ free\_op $\gets$ 0\;
  \uIf{\emph{state} = FIRST}{
    t $\gets$ T\;
    active\_op $\gets$ 1\;
  }
  \ElseIf{\emph{state} = EMPTY}{
    \If(\Comment*[f]{\textsf{invalidate() must be inlined and changed similarly.}}){invalidate(bid)}{
      t $\gets$ heap[bid].type\;
      active\_op $\gets$ -1\;
      allocated\_op $\gets$ -1\;
      free\_op $\gets$ 1\;
    }
  }
  \While{active\_op + allocated\_op + free\_op $\not=$ 0}{
    \If{active\_op = 1}{
      \lIf{\emph{active[t].}try\_set\emph{(bid)}} {
        active\_op $\gets$ 0
      }
    }
    \If{active\_op = -1}{
      \lIf{\emph{active[t].}try\_clear\emph{(bid)}} {
        active\_op $\gets$ 0
      }
    }

    \If{allocated\_op = 1}{
      \lIf{\emph{allocated[t].}try\_set\emph{(bid)}} {
        allocated\_op $\gets$ 0
      }
    }
    \If{allocated\_op = -1}{
      \lIf{\emph{allocated[t].}try\_clear\emph{(bid)}} {
        allocated\_op $\gets$ 0
      }
    }

    \If{free\_op = 1}{
      \lIf{\emph{free[t].}try\_set\emph{(bid)}} {
        free\_op $\gets$ 0
      }
    }
    \If{free\_op = -1}{
      \lIf{\emph{free[t].}try\_clear\emph{(bid)}} {
        free\_op $\gets$ 0
      }
    }
  }
 \caption[\textsc{DynaSOAr}: DAllocatorHandle::deallocate<T>, w/o deadlock]{DAllocatorHandle::deallocate<T>(T* ptr) : void \hfill \fbox{GPU}}
 \label{algo:dealloc_a_improved}
\end{algorithm}

Algorithm~\ref{algo:dealloc_a_improved} is an alternative implementation of Algorithm~\ref{algo:dealloc_a} that does not depend on the ordering of \emph{if} branches. Block bitmaps are not changed immediately. Instead, we maintain helper variables that indicate if and how a bit should be changed: -1 indicates \emph{clear}, 0 indicates \emph{no change}, 1 indicates \emph{set}. At first, all participating threads in a warp determine the necessary changes, potentially causing thread divergence between Line~5 and Line~13. However, after Line~13, all threads converge again and modify their respective bits with \emph{try\_} bitmap operations. These operations return immediately, regardless of their success. Only if a thread performed all necessary bitmap changes can a thread exit the loop.

Note that block invalidation (Line~9) also modifies block bitmaps. Therefore, block invalidation must be incorporated into Algorithm~\ref{algo:dealloc_a_improved} and changed in a similar way. Only for presentation reasons, we keep block invalidation in a separate algorithm.

\subsection{Correctness of Hierarchical Bitmap Operations}
A container $C_i^l$ consists of bits $b_{64 \cdot i}^l$, ...,  $b_{64 \cdot i + 63}^l$ and is represented by one bit $b_{i}^{l+1}$ in the nested (higher-level) bitmap. That bit is set if and only if at least one bit is set in the container.

\begin{definition}[Consistency]
\label{def:consistency_crit}
A bit in level $b_{i}^{l+1}$ is \textbf{consistent} with its corresponding container $C_i^l$ in the lower-level bitmap if and only if:

\begin{align*}
b_i^{l+1} = \bigvee_{k=0}^{63} b_{64\cdot i + k}^{l}\,\, \textcolor{gray}{= \mathds{1}\left(\sum C_{\lfloor i/64 \rfloor}^l > 0\right)}
\end{align*}
\end{definition}

In Definition~\ref{def:consistency_crit}, $\mathds{1}$ is the indicator function and $\sum C_{i}^l$ is the number of set bits in $C_i^l$, as computed by \texttt{popc} (population count) in the algorithms.

We say that the $L_{l+1}$ bitmap is in a consistent state with the $L_l$ bitmap if all bits $b_i^{l+1}$ in the $L_{l+1}$ bitmap satisfy the consistency criterion. The bitmap data structure as a whole is in a consistent state if all bitmap levels $L_i$ satisfy the consistency criterion.

We do not make any consistency guarantees during the execution of a GPU program. Instead, we guarantee eventual consistency: If the bitmap data structure is in a consistent state before a GPU kernel launch, it is guaranteed to be in a consistent state after the GPU kernel finished running.

\begin{definition}[Semantics of Bitmap Operations]
\label{def:semant_bitmap_ops}
Every bitmap $L_l$ provides operations for setting and clearing bits (Section~\ref{sec:hierarchical_bit}). These operations may update bits in the higher-level bitmap $L_{l+1}$ if they \textbf{s}et the \textbf{f}irst bit ($\mathit{SF}_{\lfloor i/64 \rfloor}^l$) or \textbf{c}lear the \textbf{l}ast bit ($\mathit{CL}_{\lfloor i/64 \rfloor}^l$) of a container $C^l_i$, respectively:

\begin{align*}
\underbrace{\textsf{set}(b^l_i) \text{ and } \mathds{1}\left(\sum C_{\lfloor i/64 \rfloor}^l = 0\right)}_{\text{set-first: } \mathit{SF}_{\lfloor i/64 \rfloor}^l} & \text{ then } \textsf{set}(b_{\lfloor i/64 \rfloor}^{l+1}) & \,\,\,\,\,\, \forall i \in [0; \mathit{num\_bits}) \\
\underbrace{\textsf{clear}(b^l_i) \text{ and } \mathds{1}\left(\sum C_{\lfloor i/64 \rfloor}^l = 1\right)}_{\text{clear-last: } \mathit{CL}_{\lfloor i/64 \rfloor}^l} & \text{ then } \textsf{clear}(b_{\lfloor i/64 \rfloor}^{l+1}) & \,\,\,\,\,\, \forall i \in [0; \mathit{num\_bits})
\end{align*}
\end{definition}

We would like to show that, assuming that a bitmap data structure is initially in a consistent state and given a multiset of bitmap operations $O_0$ on the $L_0$ bitmap, the entire bitmap data structure is in a consistent state after executing all operations. In contrast to other concurrent data structures~\cite{Ellen:2007:SSN:1281100.1281106}, we do not guarantee consistency during runtime.

\begin{definition}[Legal Bitmap Operations]
\label{def:correctness_multiset_op}
Let $\#\textsf{set}(b^l_i)$ and $\#\textsf{clear}(b^l_i)$ be the number of set and clear operations of $b^l_i$ in a multiset of bitmap operations $O_l$. We call $\mathds{S}(b^l_i) = \#\textsf{set}(b^l_i) - \#\textsf{clear}(b^l_i)$ the \textbf{set-surplus} of $b^l_i$. $O_l$ is \textbf{legal} if it satifies the following conditions for every bit $b^l_i$ in $L_l$.

\begin{enumerate}
  \item Overall bit operation is clear, remain or set: $\mathds{S}(b^l_i) \in \{-1, 0, 1\}$.
  \item Bit is in a cleared or set state afterwards: $b^l_i + \mathds{S}(b^l_i) \in \{0, 1\}$ (denoted by $b'^l_i$).
\end{enumerate}
\end{definition}

Intuitively, an already cleared bit cannot be cleared again and an already set bit cannot be set again\footnote{\textsf{try\_clear} and \textsf{try\_set} allow clearing/setting already cleared/set bits.}. For example:

\begin{itemize}
  \item $C^l_0 = \texttt{01001...}$, $O_l = \{\textsf{set}(b^l_2), \textsf{set}(b^l_2), \textsf{clear}(b^l_2)\}$. $\mathds{S}(b^l_2) = 2 - 1 = 1$. $b'^l_i = 0 + 1 = 1$. Therefore, $O_l$ is legal.
  \item $C^l_0 = \texttt{01001...}$, $O_l = \{\textsf{set}(b^l_2), \textsf{set}(b^l_2), \textsf{clear}(b^l_2), \textsf{set}(b^l_2)\}$. $O_l$ is illegal because $\mathds{S}(b^l_2) = 3 - 1 = 2$.
\end{itemize}

Note that illegal bitmap operations deadlock in our implementation because \textsf{set} and \textsf{clear} spin-block and retry until they acutally changed the bit. If a legal bitmap operations multiset is executed fully concurrent (i.e., one thread per operation), then there is always a thread/operation that can make progress.

\begin{hypothesis}
Let us assume that a multiset of bitmap operations $O_l$ on the $L_l$ bitmap is legal according to Definition~\ref{def:correctness_multiset_op} for an arbitrary $l$ and that $L_l$ is initially consistent with $L_{l+1}$.
\end{hypothesis}

\begin{lemma}
Under the induction hypothesis, the bitmap operations multiset $O_{l+1}$ that is generated by the operations in $O_l$ according to Definition~\ref{def:semant_bitmap_ops} is also legal. Furthermore, after executing $O_l$, $L_l$ is still consistent with $L_{l+1}$.
\end{lemma}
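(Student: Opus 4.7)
The plan is to analyze the lemma container by container, exploiting the fact that each container $C^l_j$ is updated by a well-defined subset of $O_l$, and that the induced operations in $O_{l+1}$ on the single bit $b^{l+1}_j$ are generated solely from the \emph{set-first} and \emph{clear-last} events on $C^l_j$. The heart of the argument is that the number of set bits $\sum C^l_j$, as a function of a serialization of the operations in $O_l$, performs a walk on the non-negative integers in which set-first events are exactly the $0 \to 1$ transitions and clear-last events are exactly the $1 \to 0$ transitions.

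First I would fix a container index $j$ and restrict $O_l$ to operations on the bits $b^l_{64j},\ldots,b^l_{64j+63}$. Since the induction hypothesis guarantees that $O_l$ is legal, the final count $\sum C^l_j$ after applying all operations in any valid serialization is a well-defined non-negative integer $n'_j$; call the initial count $n_j$. By an interleaving argument on a serialized (linearization) order of the concurrent atomic bit updates, the trajectory of $\sum C^l_j$ never goes negative, so the $0\!\to\!1$ and $1\!\to\!0$ crossings must alternate. Hence the number of set-first events $f_j$ and clear-last events $c_j$ satisfy $f_j - c_j = \mathds{1}(n'_j > 0) - \mathds{1}(n_j > 0) \in \{-1,0,1\}$.

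Next I would derive legality of $O_{l+1}$ at bit $b^{l+1}_j$. By Definition~\ref{def:semant_bitmap_ops}, the restriction of $O_{l+1}$ to $b^{l+1}_j$ consists of $f_j$ sets and $c_j$ clears, so $\mathds{S}(b^{l+1}_j) = f_j - c_j \in \{-1,0,1\}$, giving condition~1. By the induction hypothesis and the initial consistency of $L_{l+1}$ with $L_l$, we have $b^{l+1}_j = \mathds{1}(n_j > 0)$. Therefore $b^{l+1}_j + \mathds{S}(b^{l+1}_j) = \mathds{1}(n_j > 0) + \mathds{1}(n'_j > 0) - \mathds{1}(n_j > 0) = \mathds{1}(n'_j > 0) \in \{0,1\}$, giving condition~2. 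Consistency after execution follows immediately: the new value of $b^{l+1}_j$ is $\mathds{1}(n'_j > 0) = \mathds{1}(\sum C^l_j > 0)$, which matches Definition~\ref{def:consistency_crit} for the updated $L_l$.

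The main obstacle is justifying the alternation argument rigorously in the concurrent setting, since operations on bits of $C^l_j$ and the corresponding set-first/clear-last triggers on $b^{l+1}_j$ are interleaved across threads. I would address this by appealing to the linearization points established earlier (the atomic AND/OR in Algorithms~\ref{alg:clear} and the analogous set path): because each bit flip and the associated test of $\sum C^l_j$ occur at the same atomic instruction (the \texttt{atomicOr}/\texttt{atomicAnd} whose return value reveals both the prior container state and whether this thread performed the critical transition), any linearization of $O_l$ yields a well-defined totally ordered sequence of container states on which the walk argument applies. A minor subtlety is that several threads may simultaneously attempt set-first or clear-last events on the same $b^{l+1}_j$; but Definition~\ref{def:semant_bitmap_ops} emits a higher-level operation only for the unique thread whose atomic instruction performed the boundary transition, so the counts $f_j$ and $c_j$ are unambiguous and the argument closes.
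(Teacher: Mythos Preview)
Your proof is correct, and in several respects it is cleaner than the paper's. Both arguments localize to a single container $C^l_j$ and reduce the claim to controlling $\#\mathit{SF}^l_j - \#\mathit{CL}^l_j$. The paper obtains this control by asserting, ``without loss of generality,'' that all set-first and clear-last events on $C^l_j$ are triggered by operations on a single bit $b^l_k$, and then invoking the legality of $O_l$ at that bit to conclude $\mathds{S}(b^{l+1}_j)\in\{-1,0,1\}$; it then does a separate two-case analysis on the initial value of $b^{l+1}_j$ to verify the second legality condition and post-execution consistency. That WLOG is not really justified in the paper; it is tacitly relying on the same alternation intuition you make explicit. Your route---linearizing at the atomic \texttt{atomicAnd}/\texttt{atomicOr}, viewing $\sum C^l_j$ as a walk on the non-negative integers, and reading off $f_j - c_j = \mathds{1}(n'_j>0) - \mathds{1}(n_j>0)$ from the alternation of $0\leftrightarrow 1$ boundary crossings---dispenses with the WLOG and yields the exact value of $\mathds{S}(b^{l+1}_j)$ directly, from which both legality conditions and consistency follow in one line rather than via case analysis. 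The cost is that you have to invoke linearization of the concurrent operations, which you handle correctly by pointing to the single atomic instruction that both flips the bit and reveals the prior container state.
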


\begin{proof}
Let us first consider the bitmap operations of a single container $C^l_i$. Let $\#\mathit{SF}_i^l$ be the number of times a first bit is set in the container and $\#\mathit{CL}_i^l$ be the number of times a last bit is cleared in the container. Then, according to Definition~\ref{def:semant_bitmap_ops}, $b_{\lfloor i/64 \rfloor}^{l+1}$ is set $\#\mathit{SF}_i^l$ times and cleared $\#\mathit{CL}_i^l$ times. We have to prove that the set-surplus $\mathds{S}(b^{l+1}_{\lfloor i/64 \rfloor}) = \#\mathit{SF}_i^l - \#\mathit{CL}_i^l$ satisfies the legality criteria of Definition~\ref{def:correctness_multiset_op}.

Without loss of generality, let us assume that all set-first and clear-last operate on the same bit $b^l_k$. Then, $\mathds{S}(b^{l+1}_{\lfloor i/64 \rfloor}) = \mathds{S}(b^l_k) \in \{-1, 0, 1\}$. Hence, the generated bitmap operations $O_{l+1}$ for any bit on the $L_{l+1}$ bitmap satisfy the first legality condition of Definition~\ref{def:correctness_multiset_op}.

Now we have to show that also the second legality condition holds and that $b^{l+1}_{\lfloor i/64 \rfloor}$ is consistent with $C^l_i$ after executing $O_l$. We consider two cases.

\begin{enumerate}
  \item The higher-level bit is initially in a cleared state: $b^{l+1}_{\lfloor i/64 \rfloor} = 0$. Therefore, due to initial consistency, $\sum C_{\lfloor i/64 \rfloor}^l = 0$. Therefore, $\#\mathit{SF}_i^l - \#\mathit{CL}_i^l \in \{0, 1\}$, otherwise, $O_l$ would not be legal. Therefore, $b^{l+1}_{\lfloor i/64 \rfloor} + \mathds{S}(b^{l+1}_{\lfloor i/64 \rfloor}) \in \{0, 1\}$.
  \begin{enumerate}
    \item If $\#\mathit{SF}_i^l - \#\mathit{CL}_i^l = 0$, then $\vee_{k=0}^{63} b_{64\cdot i + k}^{l} = 0$ after $O_l$. At the same time, $\mathds{S}(b^{l+1}_{\lfloor i/64 \rfloor}) = 0$, so $b^{l+1}_{\lfloor i/64 \rfloor} = 0$ after $O_l$, which is consistent with the state of $C^l_i$ after $O_l$.
    \item If $\#\mathit{SF}_i^l - \#\mathit{CL}_i^l = 1$, then $\vee_{k=0}^{63} b_{64\cdot i + k}^{l} = 1$ after $O_l$. At the same time, $\mathds{S}(b^{l+1}_{\lfloor i/64 \rfloor}) = 1$, so $b^{l+1}_{\lfloor i/64 \rfloor} = 1$ after $O_l$, which is consistent with the state of $C^l_i$ after $O_l$.
  \end{enumerate}
  \item The higher-level bit is initially in a set state: $b^{l+1}_{\lfloor i/64 \rfloor} = 1$. Therefore, due to initial consistency, $\sum C_{\lfloor i/64 \rfloor}^l > 0$. Therefore, $\#\mathit{SF}_i^l - \#\mathit{CL}_i^l \in \{-1, 0\}$, otherwise, $O_l$ would not be legal. Therefore, $b^{l+1}_{\lfloor i/64 \rfloor} + \mathds{S}(b^{l+1}_{\lfloor i/64 \rfloor}) \in \{0, 1\}$.
  \begin{enumerate}
    \item If $\#\mathit{SF}_i^l - \#\mathit{CL}_i^l = -1$, then $\vee_{k=0}^{63} b_{64\cdot i + k}^{l} = 0$ after $O_l$. At the same time, $\mathds{S}(b^{l+1}_{\lfloor i/64 \rfloor}) = -1$, so $b^{l+1}_{\lfloor i/64 \rfloor} = 0$ after $O_l$, which is consistent with the state of $C^l_i$ after $O_l$.
    \item If $\#\mathit{SF}_i^l - \#\mathit{CL}_i^l = 0$, then $\vee_{k=0}^{63} b_{64\cdot i + k}^{l} = 1$ after $O_l$. At the same time, $\mathds{S}(b^{l+1}_{\lfloor i/64 \rfloor}) = 0$, so $b^{l+1}_{\lfloor i/64 \rfloor} = 1$ after $O_l$, which is consistent with the state of $C^l_i$ after $O_l$.
  \end{enumerate}
\end{enumerate}

If all containers in $L_l$ are consistent with their respective bits in $L_{l+1}$, then the entire $L_l$ bitmap is consistent with the $L_{l+1}$ bitmap. Futhermore, all generated bitmap operations $O_{l+1}$ are legal because they satisfy both legality criteria.
\end{proof}

\begin{basecase}
The bitmap data structure is initially in a consistent state. Furthermore, $O_0$ is legal. Otherwise, programmers use the bitmap data structure incorrectly.
\end{basecase}

\section{Related Work}
\label{sec:related_work}
\begin{table}[b]
\caption[Comparison of dynamic memory allocators]{Comparison of allocators. \emph{Coal.} means \emph{allocation request coalescing}.}
\label{fig:baselines}
\begin{tabularx}{\columnwidth}{Xcccc}
\hline \hline
 \footnotesize \narrowstyle \textbf{Allocator} & \footnotesize \narrowstyle \textbf{Coal.} & \footnotesize \textbf{SOA} & \footnotesize \textbf{Container} & \footnotesize \textbf{Finding Free Memory} \\
\Xhline{2\arrayrulewidth}
 \footnotesize \narrowstyle \soaalloc{} & \footnotesize\ding{51} & \footnotesize\ding{51} & \footnotesize Block & \footnotesize Hierarchical Bitmap \\ \hline
 \footnotesize\narrowstyle \soaalloc{}-NoCoal & \footnotesize \ding{55} & \footnotesize \ding{51} & \footnotesize Block & \footnotesize \footnotesize Hierarchical Bitmap \\ \hline
 \footnotesize \narrowstyle BitmapAlloc & \footnotesize\ding{55} & \footnotesize\ding{55} & \footnotesize\ding{55} & \footnotesize Hierarchical Bitmap \\ \hline
 \textcolor{gray}{\footnotesize\narrowstyle CircularMalloc} & \textcolor{gray}{\footnotesize \ding{55}} & \textcolor{gray}{\footnotesize \ding{55}} & \textcolor{gray}{\footnotesize \ding{55}} & \textcolor{gray}{\footnotesize Linked List, Ring Buffer} \\ \hline
 \footnotesize\narrowstyle Default CUDA Allocator & \footnotesize \ding{55} & \footnotesize \ding{55} & \footnotesize (Unknown) & \footnotesize (Unknown) \\ \hline
 \textcolor{gray}{\footnotesize \narrowstyle FDGMalloc} & \textcolor{gray}{\footnotesize \ding{51}} & \textcolor{gray}{\footnotesize \ding{55}} &  \textcolor{gray}{\footnotesize Priv. Heap, Superblock} &  \textcolor{gray}{\footnotesize Linked List}\\ \hline
 \footnotesize \narrowstyle Halloc & \footnotesize \ding{55} & \footnotesize \ding{55} & \footnotesize Slab & \footnotesize Bitmap, Hashing \\ \hline
 \footnotesize \narrowstyle mallocMC (ScatterAlloc) & \footnotesize \ding{51} & \footnotesize \ding{55} & \footnotesize Superblock, Region, Page & \footnotesize Hashing \\ \hline
 \textcolor{gray}{\footnotesize \narrowstyle XMalloc} & \textcolor{gray}{\footnotesize \ding{51}} & \textcolor{gray}{\footnotesize \ding{55}} & \textcolor{gray}{\footnotesize (4 block hierarchies)} & \textcolor{gray}{\footnotesize Lock-free Free Lists}  \\
 \hline
\hline
\end{tabularx}
\normalstyle
\end{table}

CUDA provides an on-device dynamic memory allocator, but it is unoptimized and slow. To solve this issue, multiple custom allocators have been developed in the last years (Table~\ref{fig:baselines}). These allocators achieve good performance by exploiting an allocation pattern that many applications on massively parallel SIMD architectures exhibit: Most allocations are small in size and due to mostly regular control flow, many allocations have the same byte size.

Halloc~\cite{hallocweb} is one of these allocators. It is a slab allocator and can allocate only a few dozen predetermined byte sizes between 16~bytes and 3~KB. This is fast but can lead to internal fragmentation. \textsc{DynaSOAr} can avoid such internal fragmentation because allocation sizes are determined from compile-time type information of the application. A slab in Halloc contains same-size allocations and tracks allocations with a bitmap. To avoid scanning large bitmaps, a hash function determines which bits to check during allocations. Only one slab can be active per allocation size and if the active slab becomes too full, it is replaced with a new one. In contrast, more than one block per type can be active in \textsc{DynaSOAr} and blocks are filled up entirely.

XMalloc~\cite{5577907} was the first allocator with allocation request coalescing, which was adopted by many other allocators, including \textsc{DynaSOAr}. Coalesced requests are served from \emph{basicblocks}, which are organized in one of multiple lock-free free lists depending on their size.

FDGMalloc maintains a private heap for every warp~\cite{Widmer:2013:FDM:2458523.2458535}, similar to Hoard~\cite{Berger:2000:HSM:378993.379232}. It does not have a general \emph{free} operation and can only deallocate entire heaps, so it is not suitable for SMMO applications.

CircularMalloc (CMalloc)~\cite{Vinkler:2015:RED:3071494.3071506} allocates memory in a ring buffer. Every allocation has a pointer to the next allocation or free chunk, wrapping around at the end of the buffer. CMalloc traverses the linked list for free chunks during allocations. To reduce allocation contention, every multiprocessor starts its traversal at a different location. This is similar to \textsc{DynaSOAr}'s \emph{bitmap rotation} technique for reducing thread contention.

ScatterAlloc~\cite{6339604} hashes allocation requests to memory \emph{pages} depending on their allocation size and the multiprocessor ID. Pages hold allocations of the same size, but slightly smaller requests can be accommodated, leading to internal fragmentation. While \soaalloc{} uses hierarchical bitmaps, ScatterAlloc uses hashing with linear probing for finding pages during allocations. For benchmarks, we use mallocMC~\cite{eckert_carlchristian_helmut_johannes_2014_34461}, a reimplementation of ScatterAlloc that is still maintained.


Both Halloc and ScatterAlloc maintain fill levels to quickly skip congested memory areas that are above a certain threshold, because the performance of any hashing technique degrades with an increasing number of collisions. In \soaalloc{}, temporary inconsistencies in bitmap hierarchies increase with the number of concurrent allocations, but \soaalloc{} can dynamically adapt to such cases by initializing additional blocks.

\section{Benchmarks}
\label{sec:benchmark}
We evaluated \soaalloc{} with multiple real-world SMMO applications that exhibit different memory allocation patterns (Table~\ref{tab:smmo_apps}). We describe the SMMO structure of these applications in detail in Chapter~\ref{chap:smmo_examples}. All benchmarks were run on a computer with an Intel Core i7-5960X CPU, 32~GB main memory and an NVIDIA TITAN Xp GPU (12~GB device memory), and compiled with nvcc (-O3) from the CUDA Toolkit~9.1 on Ubuntu~16.04.4.

\begin{table}[!htp]
\caption{Description of SMMO benchmark applications}
\label{tab:smmo_apps}
\begin{tabularx}{\textwidth}{cXllllll}
\hline\hline
 \narrowstyle & \footnotesize \textbf{Benchmark Description} & \footnotesize \rotatebox[origin=l]{90}{\narrowstyle\textbf{\#par. do-all}} & \footnotesize \rotatebox[origin=l]{90}{\textbf{\narrowstyle\#classes}\,\,\,\,} & \footnotesize \rotatebox[origin=l]{90}{\narrowstyle\textbf{alloc./dealloc.}} & \footnotesize \rotatebox[origin=l]{90}{\narrowstyle\textbf{smallest class\,\,\,\,}} & \footnotesize \rotatebox[origin=l]{90}{\narrowstyle\textbf{largest class}} \\
\Xhline{2\arrayrulewidth}
\narrowstyle \raisebox{-0.95\totalheight}{\includegraphics[width=0.165\textwidth]{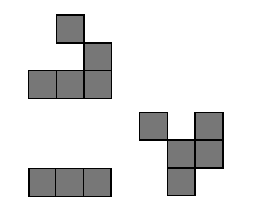}} &
\footnotesize \textbf{Game of Life:} A CA due to J. H. Conway. This version has a time complexity of O(\#alive cells) instead of the standard O(\#cells) algorithm. Cells can be dead, alive or alive-candidates. Alive-candidates are dead cells that may become alive in the next iteration. Only alive-candidates and alive cells are processed.
& \footnotesize\rotatebox[origin=r]{90}{4 / iteration} & \footnotesize\rotatebox[origin=r]{90}{4 (2 dyn.)} & \footnotesize\rotatebox[origin=r]{90}{\ding{51} / \ding{51}} & \footnotesize\rotatebox[origin=r]{90}{5B, 2 fields} & \footnotesize\rotatebox[origin=r]{90}{8B, 1 field} \\
\hline
\narrowstyle\raisebox{-0.9\totalheight}{\includegraphics[width=0.165\textwidth]{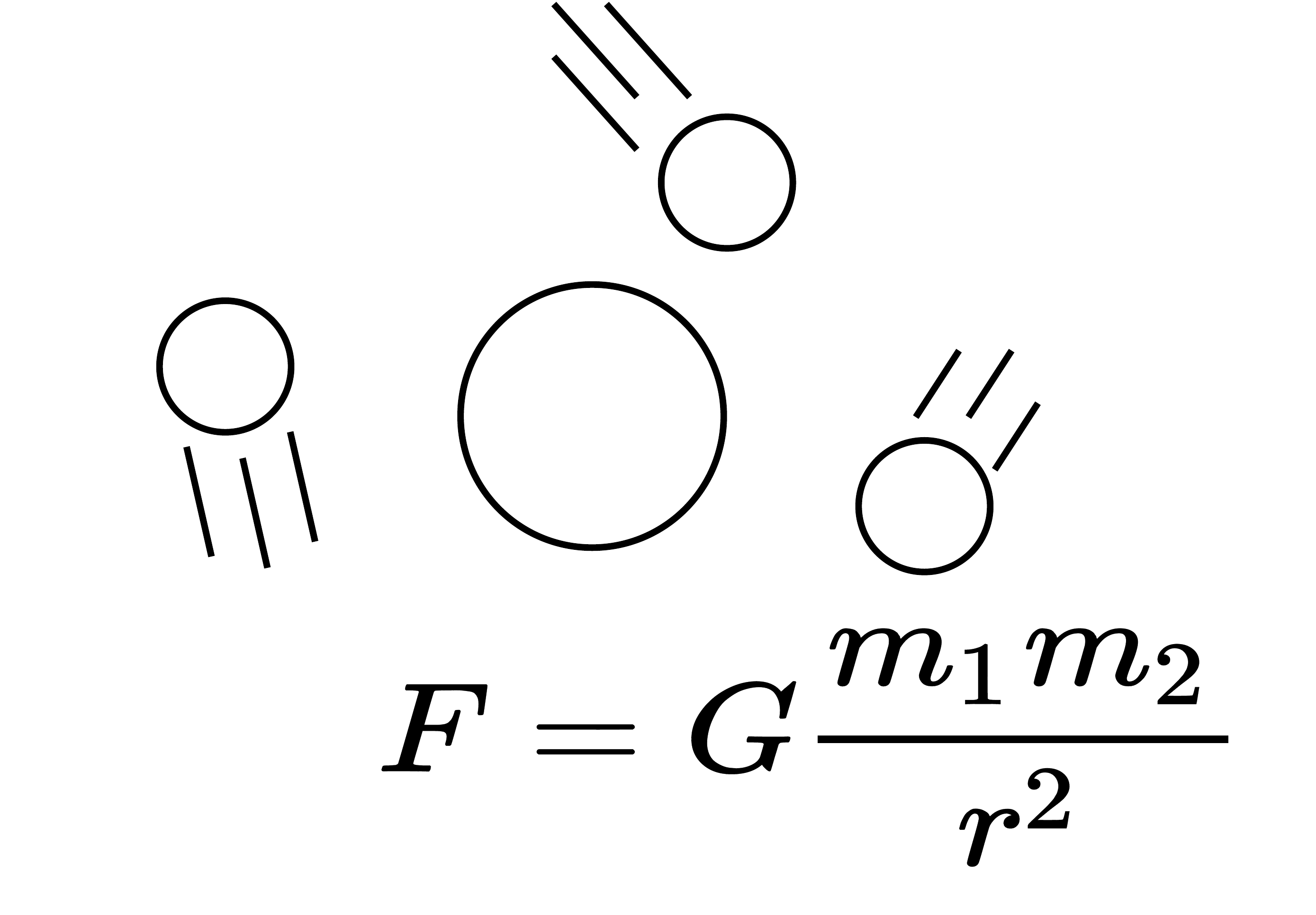}} & \footnotesize \textbf{N-Body:} Simulates the movement of particles according to gravitational forces. A \texttt{device\_do} operation is required to calculate (and then sum up) the gravitational force between every pair of particles. This benchmark has no dynamic object (de)allocation. & \footnotesize\rotatebox[origin=r]{90}{2 / iteration}  & \footnotesize \rotatebox[origin=r]{90}{1 (0 dyn.)} & \footnotesize\rotatebox[origin=r]{90}{\ding{55} / \ding{55}} & \footnotesize\rotatebox[origin=r]{90}{\,\,\,28B, 7 fields} & \footnotesize\rotatebox[origin=r]{90}{(same)} \\
\hline
\narrowstyle\raisebox{-1\totalheight}{\includegraphics[width=0.165\textwidth]{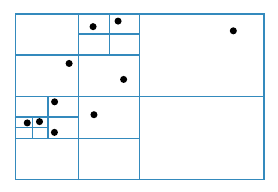}} & \footnotesize \textbf{Barnes-Hut:} An extension of N-Body in which bodies are stored in a quad tree~\cite{BURTSCHER201175}, to evaluate \textsc{DynaSOAr} with dynamic tree data structures. The running time is dominated by the construction and maintenance (i.e., frequent node inserts and removals) of the quad tree. Tree nodes are dynamically (de)allocated.
& \footnotesize\rotatebox[origin=r]{90}{10 / iteration} & \footnotesize\rotatebox[origin=r]{90}{3 (1 dyn.)} & \footnotesize\rotatebox[origin=r]{90}{\ding{51} / \ding{51}} & \footnotesize\rotatebox[origin=r]{90}{68B, 9 fields} & \footnotesize\rotatebox[origin=r]{90}{\,\,\,102B, 12 fields}\\
\hline
\raisebox{-0.925\totalheight}{\includegraphics[width=0.165\textwidth]{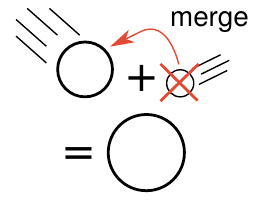}} & \footnotesize \textbf{Particle Collisions:} Similar to N-Body, but particles are merged according to perfectly inelastic collision when they are getting too close. The number of particles decreases gradually. This benchmark has dynamic object deallocation but no dynamic object allocation. & \footnotesize\rotatebox[origin=r]{90}{6 / iteration} & \footnotesize\rotatebox[origin=r]{90}{1 (1 dyn.)} & \footnotesize\rotatebox[origin=r]{90}{\ding{55} / \ding{51}} & \footnotesize\rotatebox[origin=r]{90}{\,\,38B, 10 fields} & \footnotesize\rotatebox[origin=r]{90}{(same)} \\
\hline
\raisebox{-1.05\totalheight}{\includegraphics[width=0.165\textwidth]{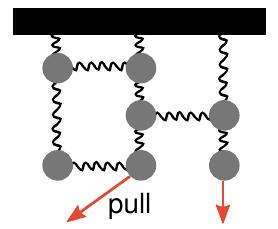}} &\footnotesize \textbf{Structure:} Simulates a fracture in a composite material, modeled as an FEM. Intuitively, the mesh is a graph and edges between nodes are springs. When pulling the mesh on one side, the material starts to break eventually. Isolated nodes are detected with a BFS~\cite{Harish:2007:ALG:1782174.1782200} and removed. Literature describes extensions that would benefit from dynamic allocation~\cite{LU2018240}. & \footnotesize\rotatebox[origin=r]{90}{3 / iteration} & \footnotesize\rotatebox[origin=r]{90}{5 (4 dyn.)} & \footnotesize\rotatebox[origin=r]{90}{\ding{55} / \ding{51}} & \footnotesize\rotatebox[origin=r]{90}{32B, 6 fields} & \footnotesize\rotatebox[origin=r]{90}{46B, 7 fields} \\
\hline
\raisebox{-0.975\totalheight}{\includegraphics[width=0.165\textwidth]{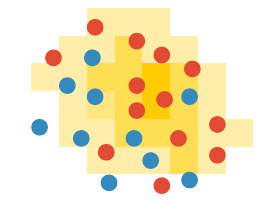}} & \footnotesize \textbf{Sugarscape:} An agent-based social simulation~\cite{RePEc:mtp:titles:0262550253}. Agents inhabit a 2D grid and can move to neighboring cells. Cells contain sugar which is consumed by agents. Sugarscape can simulate a variety social dynamics (e.g., trade, war, environmental pollution). We simulate resource consumption, ageing and mating.  & \footnotesize\rotatebox[origin=r]{90}{12 / iteration} & \footnotesize\rotatebox[origin=r]{90}{4 (2 dyn.)} &  \footnotesize\rotatebox[origin=r]{90}{\ding{51} / \ding{51}} & \footnotesize\rotatebox[origin=r]{90}{52B, 7 fields} & \footnotesize\rotatebox[origin=r]{90}{74B, 11 fields} \\
\hline
\raisebox{-0.95\totalheight}{\includegraphics[width=0.165\textwidth]{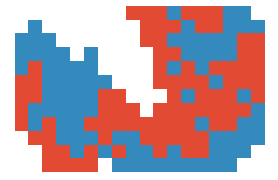}} & \footnotesize \textbf{Wa-Tor:} An agent-based predator-prey simulation~\cite{10.2307/24969495}. Fish/sharks occupy a 2D grid of cells and can move to neighboring cells. Fish and sharks reproduce after some iterations. Fish die when they are eaten and sharks die when they run out of food. & \footnotesize\rotatebox[origin=r]{90}{8 / iteration} & \footnotesize\rotatebox[origin=r]{90}{4 (2 dyn.)} & \footnotesize\rotatebox[origin=r]{90}{\ding{51} / \ding{51}} & \footnotesize\rotatebox[origin=r]{90}{60B, 4 fields} & \footnotesize\rotatebox[origin=r]{90}{\,\,64B, 5 fields} \\
\hline
\raisebox{-0.93\totalheight}{\includegraphics[width=0.165\textwidth]{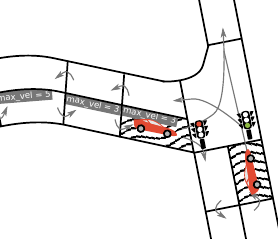}} & \footnotesize \textbf{Nagel-Schreckenberg:} A traffic flow simulation on a street network~\cite{nagel_schr} that can reproduce traffic jams and other phenomena. Streets are modeled as a network of cells, with at most one vehicle per cell. New vehicles are continuously added to the simulation and existing vehicles are removed at their final destination. & \footnotesize\rotatebox[origin=r]{90}{3 / iteration}  & \footnotesize\rotatebox[origin=r]{90}{4 (1 dyn.)} & \footnotesize\rotatebox[origin=r]{90}{\ding{51} / \ding{51}} & \footnotesize\rotatebox[origin=r]{90}{97B, 10 fields} & \footnotesize\rotatebox[origin=r]{90}{124B, 6 fields}\\
\hline
& \footnotesize \textbf{Linux Scalability:} Not an SMMO application. This microbenchmark allocates, then deallocates a fixed number of same-size objects in each thread, without accessing the memory~\cite{Lever:2000:MPM:1267724.1267780}. & \footnotesize \rotatebox[origin=r]{90}{n/a} & \footnotesize\rotatebox[origin=r]{90}{1 (1 dyn.)} & \footnotesize\rotatebox[origin=r]{90}{\ding{51} / \ding{51}} & \footnotesize\rotatebox[origin=r]{90}{\,\,\,4B, 1 field} & \footnotesize\rotatebox[origin=r]{90}{(same)} \\
\hline\hline
\end{tabularx}
\normalstyle \normalsize
\end{table}

We compare the running time with different allocators. If possible, we also measured the running time of \emph{baseline} implementations that do not use any dynamic memory management.

\paragraph*{Benchmark Applications}
Our benchmarks are from different domains and fall into four categories.

\begin{enumerate}
  \item Objects allocated up front, no deallocation: \textsf{nbody}
  \item Objects allocated up front, then only deallocation: \textsf{collision}, \textsf{structure}
  \item Cellular automaton (CA) with static cells network: \textsf{sugarscape}, \textsf{traffic}, \textsf{wa-tor}
  \item Other: \textsf{barnes-hut}, \textsf{game-of-life}
\end{enumerate}

Baselines (SOA/AOS) are application variants without any dynamic memory allocation. Baselines of category (1) are trivial to implement with static allocation. In category (2) applications, every object has a boolean \texttt{active} flag to prevent deleted objects from being enumerated in the future. In category (3) applications, classes are merged with the underlying static cell data structure, which wastes memory in case of empty cells (Section~\ref{sec:bench_mem_usage_sec}). Category (4) applications cannot be implemented with only static allocation, unless the application is changed fundamentally.

\paragraph*{Parallel Do-All in Custom Allocators}
Other allocators do not provide do-all operations, which are required for SMMO applications. To compare \soaalloc{} with other allocators, we developed standalone \texttt{parallel\_do} and \texttt{device\_do} implementations that can be used with any allocator.


These implementations maintain arrays for allocated and deleted object pointers of each type. Pointers are added to these arrays with atomic operations. At the end of a parallel do-all operation, deleted pointers are removed from the array of allocated pointers. This process is non-trivial because the same memory location/pointer could be allocated and deleted multiple times throughout a parallel do-all operation. After all deleted pointers were removed, the array of allocated pointers is compacted with a prefix sum operation (same as Figure~\ref{fig:scan_enumeration}).

Depending on the number of (de)allocations, this mechanism may take a long time. A better allocator-specific mechanism could likely be developed with some reverse engineering. For that reason, we break down running times into \emph{enumeration} time and remaining time. Enumeration time should not be taken into account when comparing the performance of different allocators.

\paragraph*{BitmapAlloc}
To analyze the performance of pure bitmap-based object allocation without SOA layout, blocks and fake object pointers, we developed a second allocator \emph{BitmapAlloc}. This allocator treats the entire heap as one large object array, whose slots are managed by hierarchical bitmaps, similarly to \soaalloc{}: one \emph{allocation bitmap} per type and one \emph{free slot bitmap}. Allocation bitmaps are also used in \texttt{parallel\_do} and \texttt{device\_do} implementations.

The main downside of BitmapAlloc is its inefficient memory usage. It supports only a single allocation size, potentially leading to high internal fragmentation.

\subsection{Performance Overview}
Figure~\ref{fig:bench_overview} shows the running time of all benchmarked SMMO applications. We gave each allocator some extra memory to avoid memory scarcity slowdowns: The heap size is 8~GiB, at least 4 times bigger than the maximum amount of all allocated memory at any point throughout the program execution. \textsc{DynaSOAr} achieves superior performance over other allocators due to the SOA layout, a dense object allocation policy and an efficient parallel do-all operation.

All applications except for \textsf{structure} see a speedup by switching from AOS to SOA (compare baselines). In \textsf{structure}, most fields are used together, so SOA does not pay off for this benchmark.

Despite having no dynamic (de)allocation during the benchmark, \textsf{nbody} can see a slight speedup with dynamic memory allocation. This is likely due to fewer cache associativity collisions compared to a denser allocation within an array~\cite{7853809}.

In \textsf{collision}, \soaalloc{}/BitmapAlloc enumerate objects with a bitmap scan (\texttt{device\_do}; 1 bit/object). This is more efficent than in other allocators, which must read object pointers from an array (8 bytes/object). The baseline versions must read an \texttt{active} flag (1 byte/object) from every object, including deleted ones. 

\textsf{game-of-life} and \textsf{wa-tor} are applications that (de)allocate a large number of objects, so enumeration time dominates the running time with custom allocators. \soaalloc{} and BitmapAlloc have much more efficient parallel do-all operations than other allocators.

\textsf{sugarscape} and \textsf{wa-tor} have a 2D grid structure of cells. Baseline versions take advantage of this geometric structure, leading to more coalesced memory accesses. In contrast, programmers have no control over where dynamic allocators place objects in memory. For this reason, the baseline versions are faster than the versions with dynamic memory management. 

In general, in applications with dynamic memory management, objects are always referred to with 64-bit object identifiers/pointers, while all baseline versions use 32-bit integer indices. This especially penalizes benchmarks with small objects; they grow considerably just by switching from 32-bit integers indices to 64-bit pointers.

\begin{figure}
  \includegraphics[width=\textwidth]{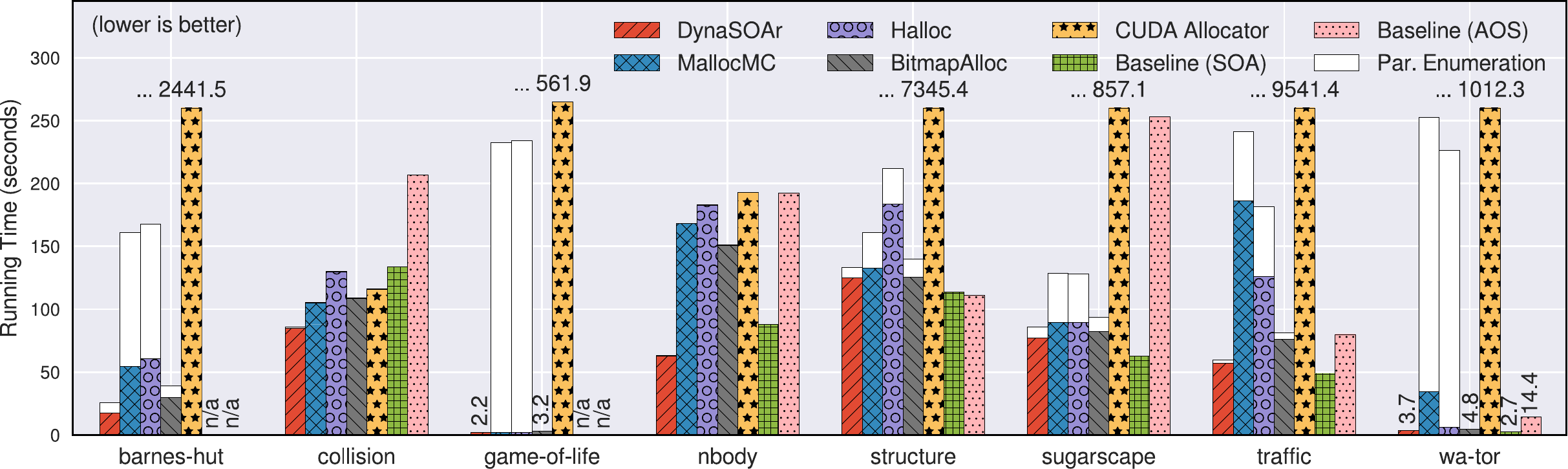}
  \caption[Running time of SMMO applications]{Running time of SMMO application benchmarks}
  \label{fig:bench_overview}
  \vspace{25pt}

  \includegraphics[width=\textwidth]{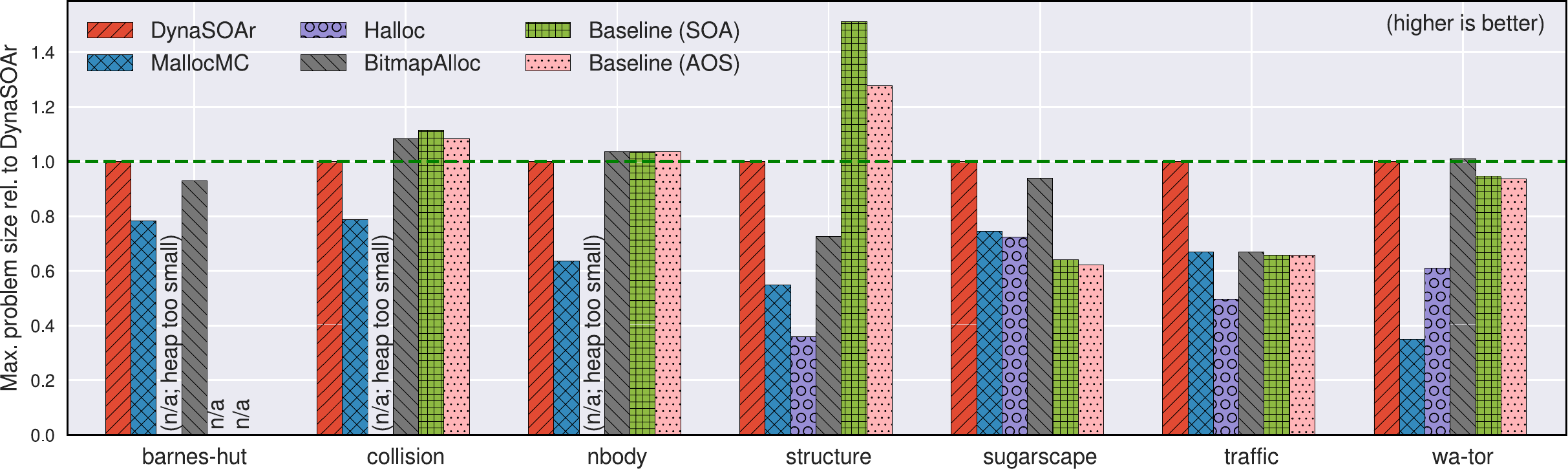}
  \caption[Space efficiency of SMMO applications]{Space efficiency of SMMO application benchmarks (without object enumeration arrays, relative to \textsc{DynaSOAr}}
  \label{fig:memusage}
  \vspace{10pt}

    \subfloat[Comparison with other allocators]{\includegraphics[width=0.485\textwidth]{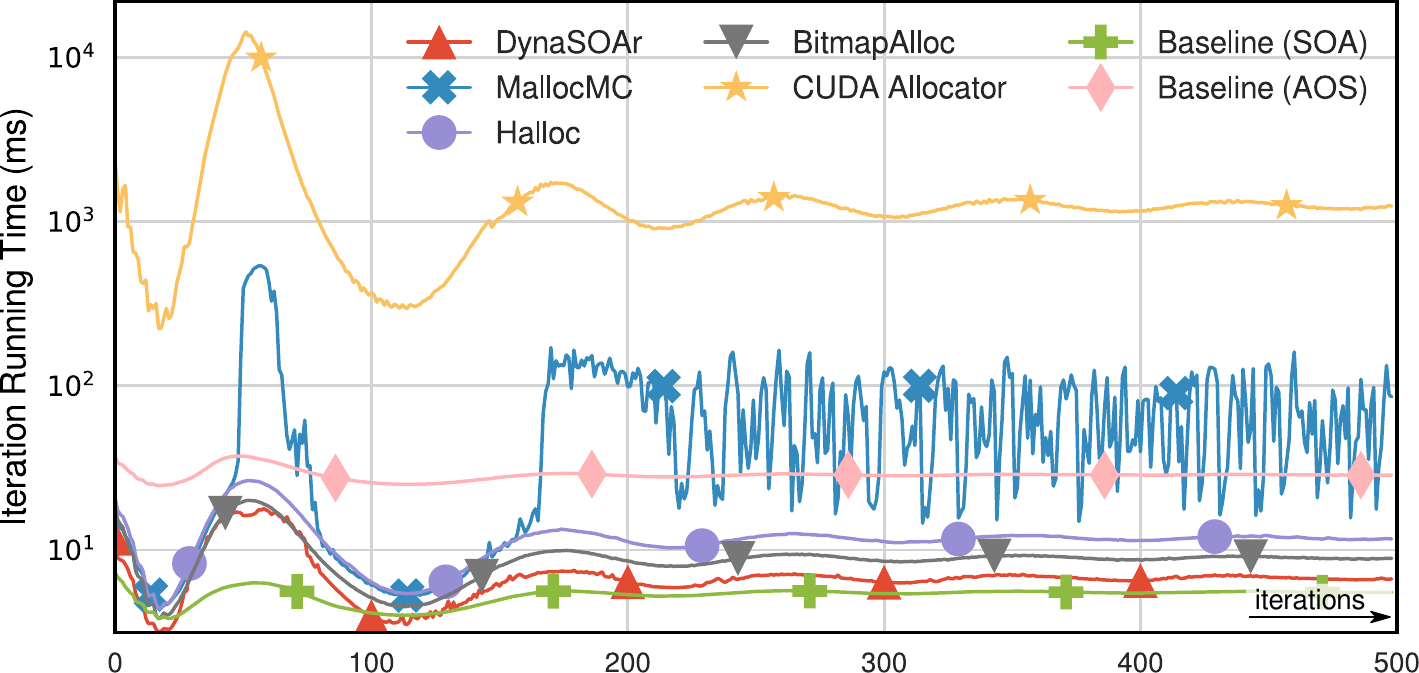}}\hfill
    \subfloat[Fixed heap size, increasing problem size]{\includegraphics[width=0.485\textwidth]{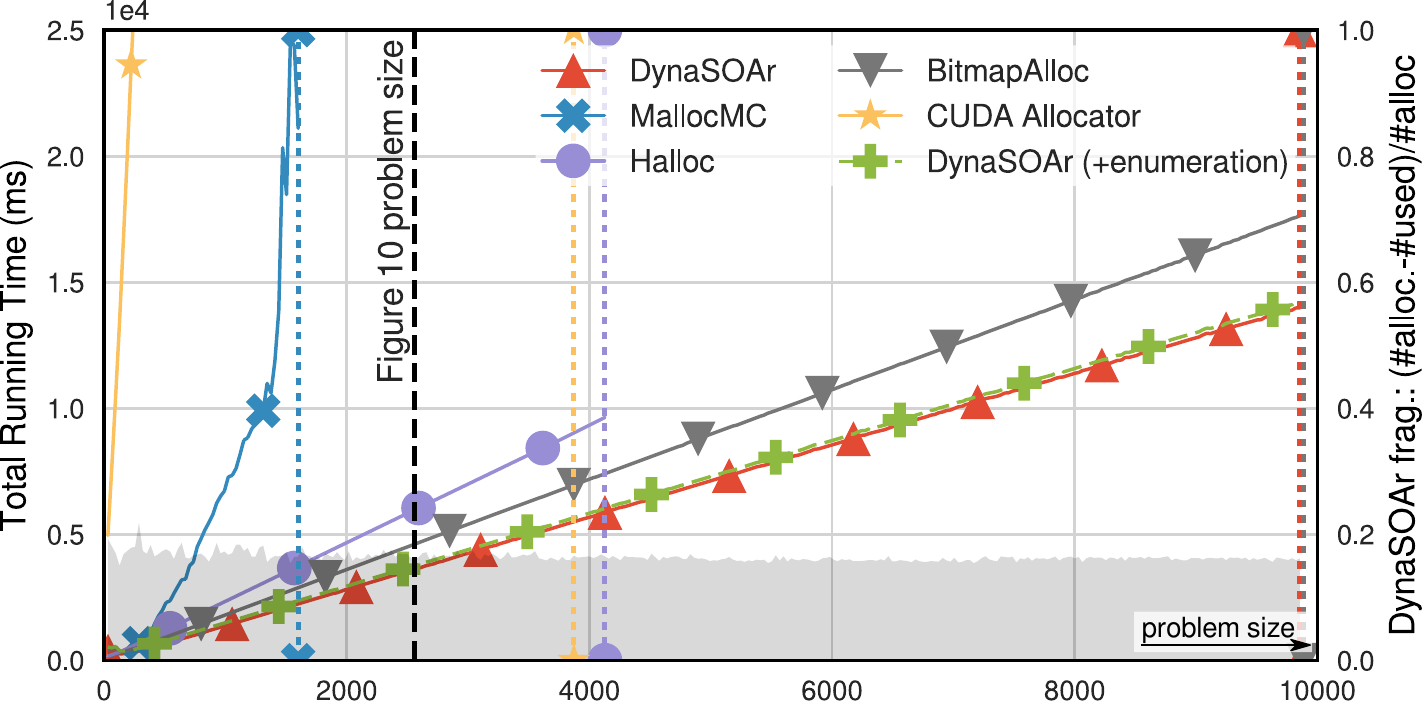}}\\
    
    \subfloat[Isolating single \soaalloc{} optimizations]{\includegraphics[width=0.485\textwidth]{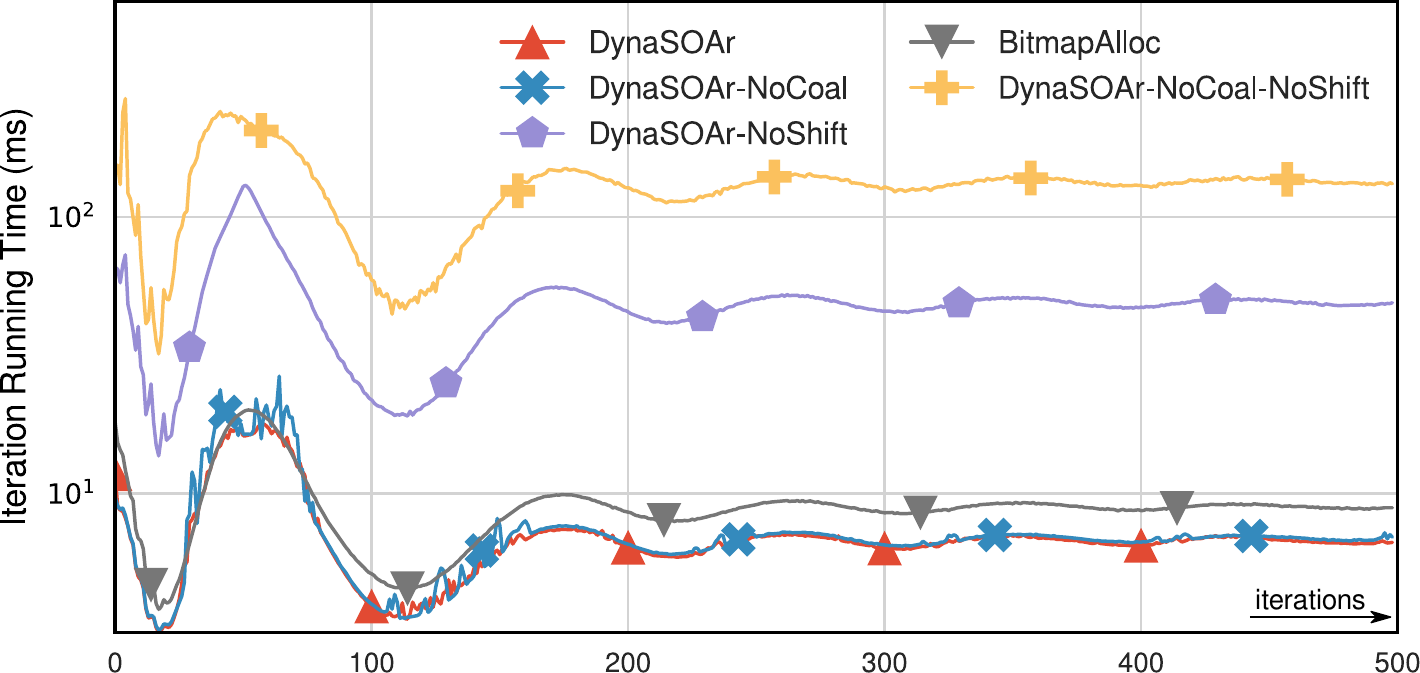}}\hfill
    \subfloat[Number of (de)allocations and fragmentation]{\includegraphics[width=0.485\textwidth]{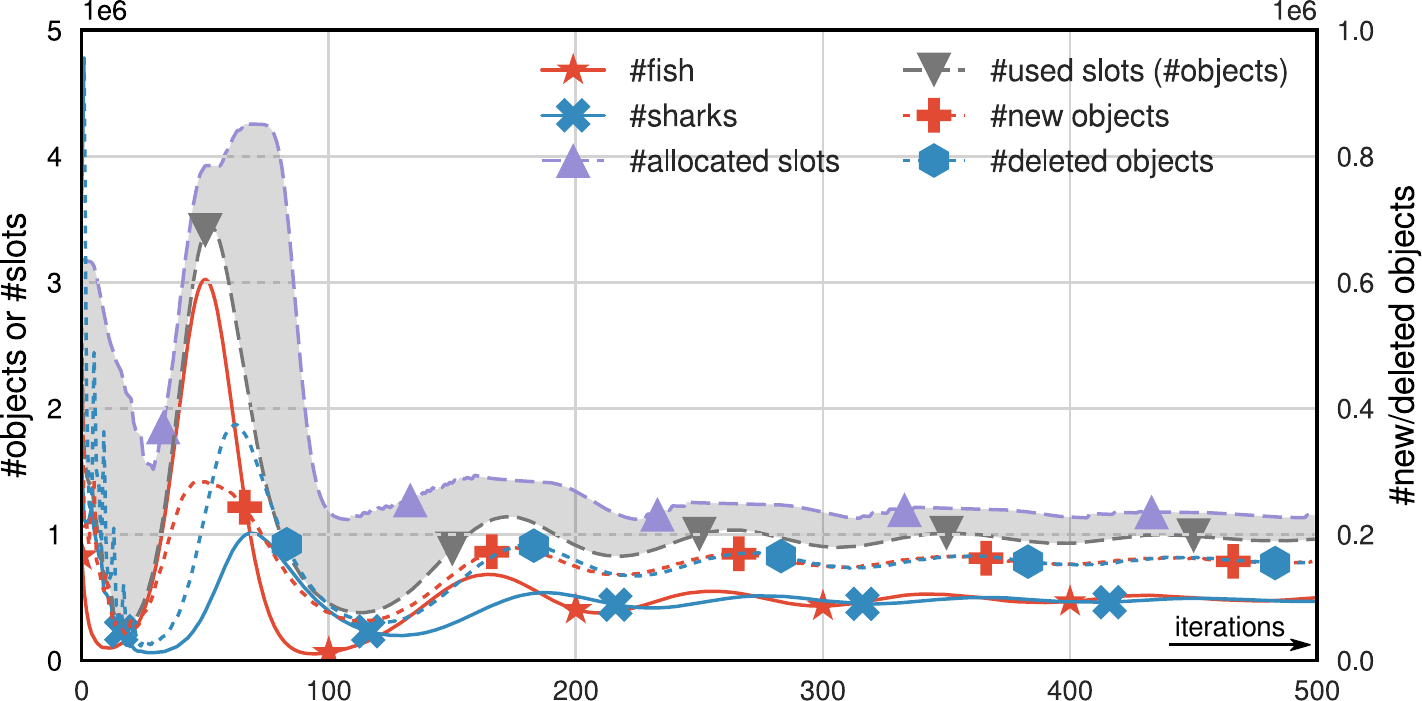}}

  \caption[Detailed analysis of \textsf{wa-tor} benchmark]{Detailed analysis of \textsf{wa-tor} (without enumeration time, unless indicated)}\label{fig:detailed_wa_tor}

  \label{fig:benchmark_scaling}
\end{figure}


\subsection{Space Efficiency}
\label{sec:bench_mem_usage_sec}
To evaluate how efficiently allocators manage memory, we gave them the same heap size and experimentally determined the maximum problem size before running out of memory (Figure~\ref{fig:memusage}).

For category (1) and category (2) applications that allocate all memory during startup (\textsf{collision}, \textsf{nbody}, \textsf{structure}), the baseline versions are more space-efficient. The exact number of objects per type is known ahead of time, so placing objects in memory is trivial. However, even though category (2) applications delete objects throughout their runtime, the memory consumption of the baseline versions does not decrease over time. This is a problem even for \textsc{DynaSOAr} because blocks can only be deleted when they are entirely empty, which can take some time. This problem can be solved with memory defragmentation (Chapter~\ref{sec:chap_gpu_mem_defrag}).

Category (3) applications (\textsf{sugarscape}, \textsf{traffic}, \textsf{wa-tor}) exhibit a fixed grid/network structure of cells, upon which a dynamic set of agents is moving. The baseline versions allocate the fields of agents directly inside cells. Classes for agents are combined with the respective cell class and some fields have \texttt{null} values (or garbage) if they are not used. This wastes memory because not all cells are occupied by agents all the time. In those applications, \soaalloc{} is not as fast as optimized SOA baseline implementations, but it can handle significantly larger problem sizes.

Out of all dynamic memory allocators, \soaalloc{} is most space-efficient. MallocMC and Halloc are based on a hashing approach. With rising heap fill levels, it becomes increasingly difficult to find free memory for allocations, so they fail to use the entire heap memory. \soaalloc{} and BitmapAlloc can avoid this problem with bitmaps, which act as an index for free memory.

Albeit negligible in these benchmarks, \soaalloc{} and Baseline (SOA) also benefit from slightly smaller object sizes: Only SOA arrays must be aligned/padded and not every single object.

\subsection{Detailed Analysis of \textsf{wa-tor}}
\label{sec:dyna_detailed_analysis_wator}
\textsf{wa-tor} (Section~\ref{sec:smmo_ex_wa_tor_sec}) is a particularly interesting benchmark. It exhibits a massive number of (de)allocations in waves, until an equilibrium between fish and sharks is reached. This allows us to measure the performance at a massive and at a lower number of concurrent (de)allocations. For a fair comparison of allocators, we do not include the time spent on enumeration in this section, unless indicated.

Figure~\ref{fig:detailed_wa_tor}\textsc{a} shows that \soaalloc{} always provides superior performance compared to other allocators; during (de)allocation spikes (around iteration 50), as well as if fewer concurrent (de)allocations take place. The performance of mallocMC degrades after a few iterations and does not recover, possibly due to a highly fragmented heap.

In Figure~\ref{fig:detailed_wa_tor}\textsc{b}, all allocators were given a heap size of 1~GB and the problem size increases gradually on the x-axis. mallocMC performs well at first, but its performance drops rapidly as soon as the heap starts filling up. \soaalloc{} can handle much larger problem sizes, given the same amount of heap memory. The running time grows linearly with the problem size, showing that recent GPU architectures can handle atomic operations quite well.

Fragmentation in \soaalloc{} is different from other allocators: \soaalloc{} does not have internal or external fragmentation by design, but memory within allocated blocks is only available for a certain type. This sort of fragmentation decreases with better clustering. In \textsc{DynaSOAr}, fragmentation $F$ is the relative number of unused objects slots among all allocated blocks $\mathit{Blocks}$ (gray area in Figure~\ref{fig:detailed_wa_tor}\textsc{b}, \textsc{d}).

\begin{align*}
F = \frac{\sum_{b \in \mathit{Blocks}} (N_{\mathit{type}(b)} - \mathit{used}(b))}{\sum_{b \in \mathit{Blocks}}  N_{\mathit{type}(b)}}  \approx \frac{1}{\mbox{\#blocks}} \sum_{b \in \mathit{Blocks}} \frac{\mbox{\#free slots}(b)}{\mbox{\#slots}(b)} \tag{\emph{fragmentation}}
\end{align*}

At iterations~60--80 in Figure~\ref{fig:detailed_wa_tor}\textsc{d}, \soaalloc{} has high fragmentation because many fish objects were deallocated. However, a block can only be deallocated when \emph{all} of its objects were deallocated. The fragmentation level decreases gradually because more fish/shark objects are deallocated over time and new allocations are performed in existing (active) blocks. Therefore, new blocks are rarely allocated and there is a chance that an active block will eventually run empty. As can be seen in Figure~\ref{fig:detailed_wa_tor}\textsc{b}, fragmentation is independent of the problem size and constant at around 18\% (gray area) after 500 Wa-Tor iterations. 

We implemented multiple \soaalloc{} variants to pinpoint the source of \soaalloc{}'s speedup over other allocators (Figure~\ref{fig:detailed_wa_tor}\textsc{c}). The most important optimization is the rotation-shifting of bitmaps. Without shifting (\textsf{*-NoShift} variants), performance degrades severely due to high thread contention, because (1) all threads are trying to allocate objects in the same few active blocks and (2) all threads compete for the same few free block locations instead of choosing free block locations in different parts of the heap. Allocation request coalescing is another optimization that reduces thread contention significantly (compare \textsf{DynaSOAr-NoCoal-NoShift} and \textsf{DynaSOAr-NoShift}), but it cannot improve performance much further if we are already rotation-shifting bitmaps (compare \textsf{DynaSOAr} and \textsf{DynaSOAr-NoCoal}).

In Figure~\ref{fig:frag_param}, we experiment with the number of active block lookup attempts before entering the slow path, which strongly affects fragmentation. By default, \textsc{DynaSOAr} attempts to locate an active block five times ($r=5$) before initializing a new active block. This is close to the lowest achievable fragmentation level (i.e., without any thread contention). Due to unfortunate allocate-deallocate patterns, a fragmentation rate of 0\% is not achievable without manually relocating objects through memory defragmentation or predicting future (de)allocations.

\begin{figure}
  \begin{minipage}[c]{0.485\textwidth}
    \includegraphics[width=\textwidth]{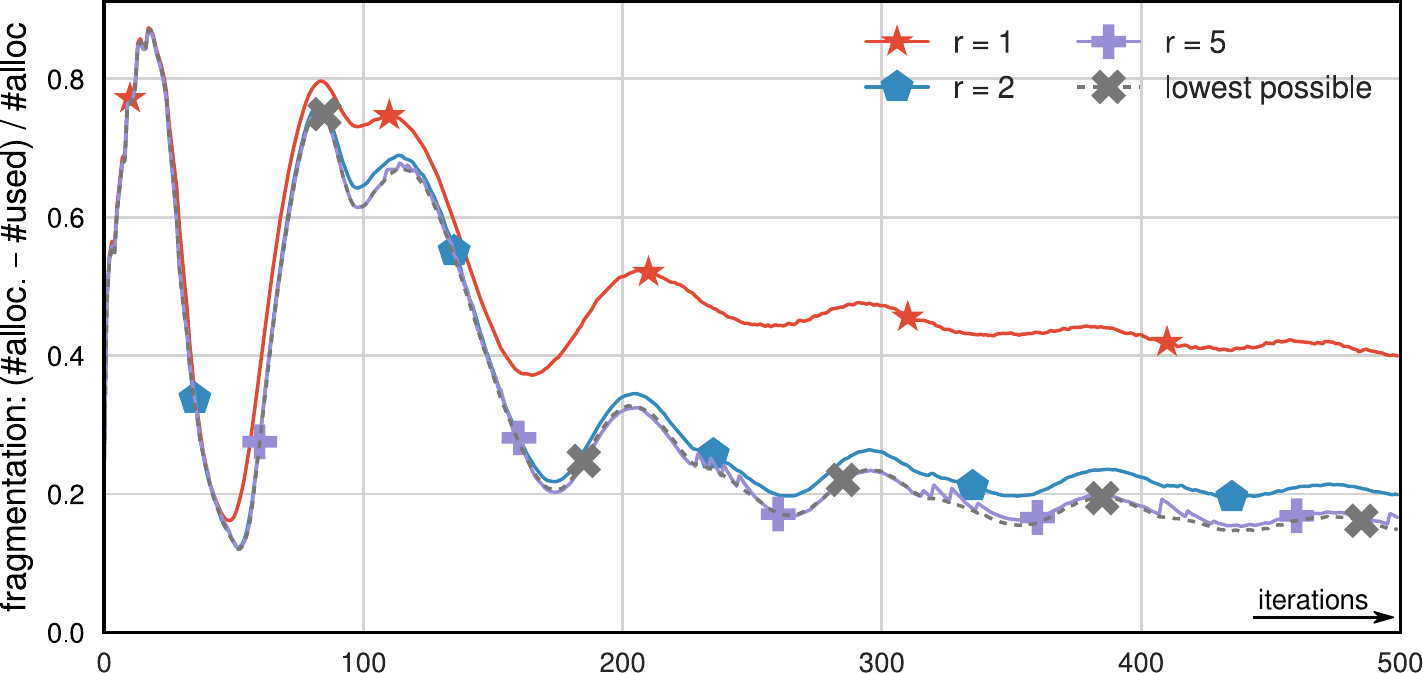}
  \end{minipage}\hfill
  \begin{minipage}[c]{0.475\textwidth}
    \caption[Memory fragmentation experiment]{Memory fragmentation of \textsf{wa-tor} by the number of active block lookup attemps $r$ (Algorithm~\ref{alg:alloc_algo}, Line~2). The x-axis denotes iterations and the y-axis denotes the fragmentation rate. With only 1 retry ($r=2$), fragmentation is reduced by 50\%.} \label{fig:frag_param}
  \end{minipage}
\end{figure}

\subsection{Raw Allocation Performance}
The \emph{Linux Scalability} microbenchmark~\cite{Lever:2000:MPM:1267724.1267780} measures the raw (de)allocation time of memory allocators. We set the heap size to 1~GiB and one CUDA kernel allocates $n$ 64-byte objects in each of the 16,384 threads. A second CUDA kernel deallocates all objects. In Figure~\ref{fig:benchmark_scaling_2}\textsc{a}, the x-axis denotes the number of allocations $n$ per thread and the y-axis shows the total benchmark running time divided by $n$.

We chose the size of the heap such that it can hold exactly $16384 \times n$ objects with $n=1024$ (100\% heap utilization). No allocator can reach perfect utilization because some memory is used for internal data structures such as bitmaps.


Halloc is the fastest allocator. Both Halloc and mallocMC fail to allocate more than 510 objects (49.8\% utilization). This is better than in some other benchmarks, probably because only objects of one size are allocated. \soaalloc{} (96.9\% utilization), BitmapAlloc (98.4\% utilization) and Halloc scale almost perfectly with the number of allocations.

\begin{figure}
    \subfloat[Linux Scalability: Increasing \#allocations]{\includegraphics[width=0.485\textwidth]{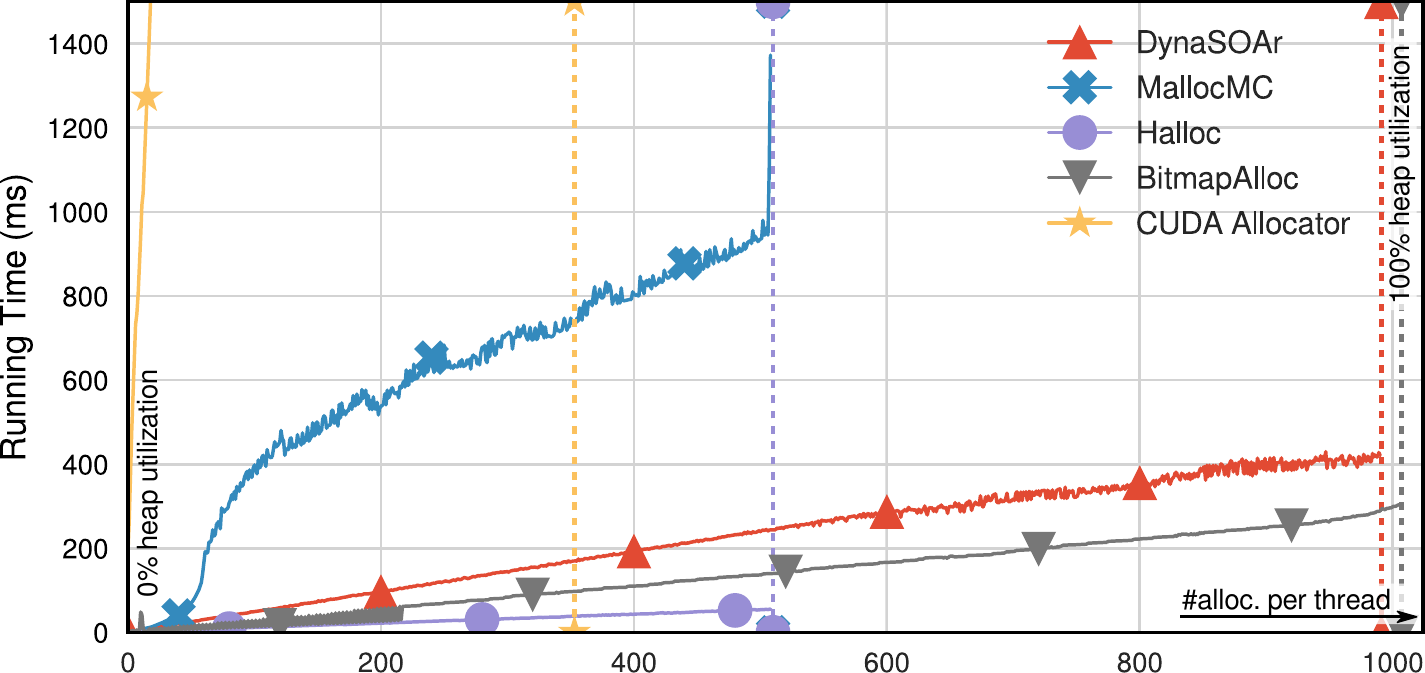}}\hfill
    \subfloat[Scaling study: Heap size (\textsf{wa-tor})]{\includegraphics[width=0.485\textwidth]{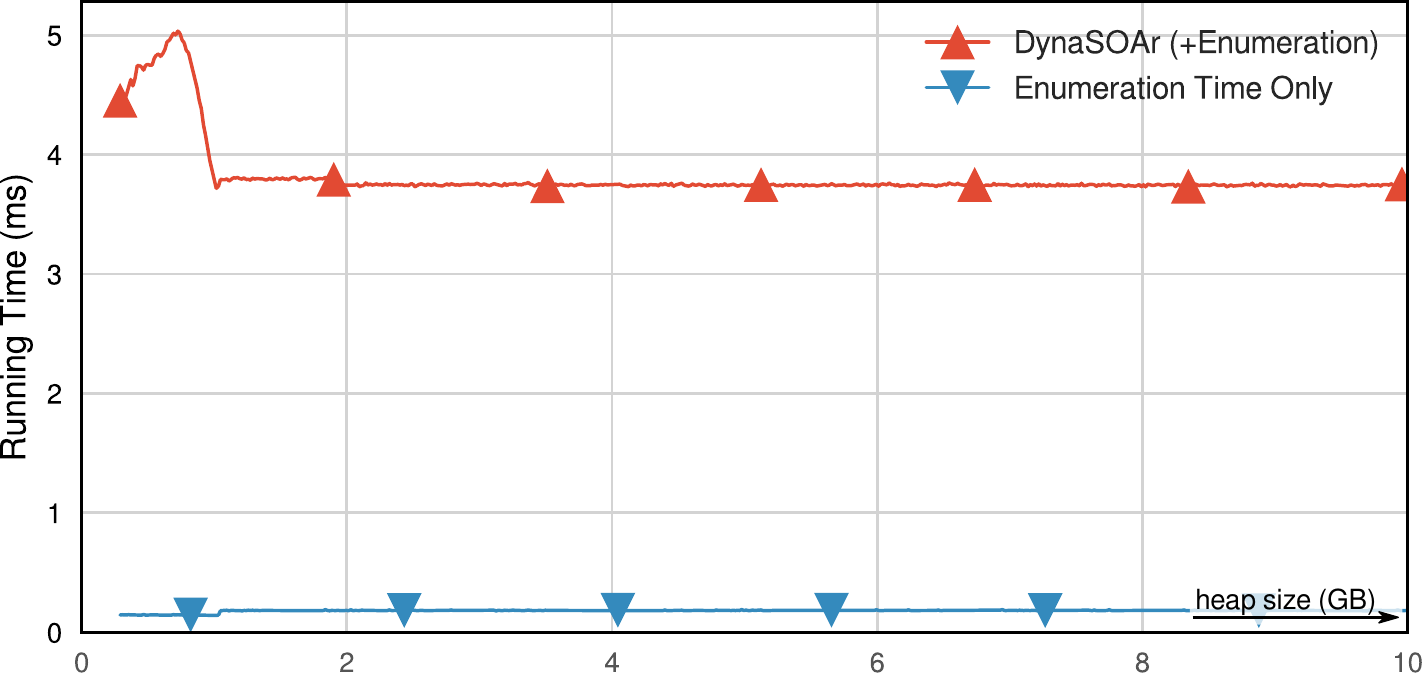}}\\
  \caption[Scaling study with different memory allocators]{Scaling study: Number of allocations and heap size}
  \label{fig:benchmark_scaling_2}
\end{figure}

\subsection{Parallel Object Enumeration}
The overhead of object enumeration in \textsc{DynaSOAr} is negible in most benchmarks (Figure~\ref{fig:bench_overview}, Figure~\ref{fig:benchmark_scaling}\textsc{b}). In Figure~\ref{fig:benchmark_scaling_2}\textsc{b}, the problem size is fixed but the heap size increases on the x-axis. \textsc{DynaSOAr}'s performance (and that of object enumeration) is independent of the size of the heap, if enough memory is available for the application. This shows that our hierarchical bitmaps work well with various heap sizes.



\section{Conclusion}
\label{sec:conclusion}
We presented \soaalloc{}, a new dynamic object allocator for SIMD architectures. The main insight of our work is that memory allocators should not only aim for good raw (de)allocation performance, but also optimize the usage of allocated memory. \soaalloc{} was designed for GPUs, but its basic ideas are applicable to other architectures and systems with good or guaranteed vectorization such as the Intel SPMD compiler (ispc)~\cite{6339601}.

\soaalloc{} achieves good memory access performance by controlling (a) memory allocation and (b) memory access with a parallel do-all operation. \textsc{DynaSOAr}'s main speedup over other allocators is due to an SOA-style object layout, which can benefit memory bandwidth utilization (through coalesced memory access) and cache utilization. To allow for dynamic (de)allocation of objects, \textsc{DynaSOAr} allocates objects in blocks instead of a plain SOA layout. \textsc{DynaSOAr} utilizes hierarchical bitmaps for fast and compact allocations with low fragmentation.

Our benchmarks show that \soaalloc{} can achieve significant speedups over state-of-the-art allocators of more than 3x in SMMO application code due to better memory access performance. \soaalloc{} also has a significantly lower memory footprint than other allocators, mainly because by design \textsc{DynaSOAr} has no internal fragmentation and is not based on hashing. Our work also shows how an SOA layout can support class inheritance without wasting memory: by allocating objects in blocks and encoding block sizes in object pointers.


\chapter{GPU Memory Defragmentation}
\label{sec:chap_gpu_mem_defrag}
Memory fragmentation is a challenging problem of dynamic memory allocators and has been widely studied on single-core and multi-core CPU systems (MIMD architectures). On such systems, dynamic memory allocators can achieve low memory fragmentation with good allocation policies~\cite{Johnstone:1998:MFP:286860.286864} and compacting garbage collectors.

However, memory fragmentation has not been studied thoroughly on SIMD architectures, including GPUs. In this chapter, we present the design and implementation of \textsc{CompactGpu}, a memory defragmentation system for the \textsc{DynaSOAr} dynamic GPU memory allocator.

\minitoc

\paragraph{Outline}
This chapter is organized as follows. Section~\ref{sec:why_mem_defrag} describes the effect of memory fragmentation on GPUs. Section~\ref{sec:dynasoar_overview} gives a high-level overview of the \textsc{DynaSOAr} memory allocator (see Chapter~\ref{sec:chapter_dynasoar} for details). Section~\ref{sec:memory_defrag} describes the design and implementation of \textsc{CompactGpu}. Section~\ref{sec:alternative_designs_sec} discusses alternative designs that utilize CUDA shared memory. Section~\ref{sec:evaluation} evaluates the defragmentation quality and performance impact of \textsc{CompactGpu} with real and synthetic benchmarks. Finally, Sections~\ref{sec:related} and~\ref{sec:conclusion_compactgpu} explore related work and conclude this chapter.

\paragraph{Overview}
Despite the recent popularity of massively parallel single-instruction multiple-data (SIMD) architectures, the memory fragmentation problem has not been studied thoroughly on such architectures. This is because dynamic memory allocators for SIMD architectures such as GPUs have just been developed recently~\cite{5577907, 6339604, Gelado:2019:TGM:3293883.3295727} and not been around long enough yet. 

We need to study memory (de)fragmentation on massively parallel SIMD architectures because allocations follow different patterns on such architectures. Most allocations are small in size\footnote{If thousands of threads were to request large allocations, a GPU would run out of memory immediately.} and due to mostly regular control flow, many allocations have the same byte size. Such patterns are reflected in the design of state-of-the-art GPU allocators. For example, Halloc~\cite{hallocweb}, one of the fastest GPU allocators can allocate only a few dozen predetermined byte sizes between 16~bytes and 3~KB. Such specialties must be exploited by memory defragmentation systems to achieve good performance.

In this chapter, we present \textsc{CompactGpu}, an incremental, fully parallel, in-place memory defragmentation system for GPUs. \textsc{CompactGpu} is implemented as an extension to \textsc{DynaSOAr}, but it could also be implemented in other systems. GPUs/SIMD architectures are predominantly programmed in a C++ dialect (e.g., CUDA, OpenCL, ispc~\cite{6339601}, Sierra~\cite{LeiBa:2014:SSE:2568058.2568062}) and memory management in C++ is manual, so we cannot rely on a garbage collector to collect metainformation for us. 



\textsc{CompactGpu} is an efficient GPU memory defragmentation system that is optimized for GPU-specific allocation patterns. It is fully parallel and in many cases the performance gain of defragmentation is much larger than the defragmentation overhead. This is due to careful design and engineering efforts: \textsc{CompactGpu} is based on parallel block merging, utilizes bitmaps to speed up pointer rewriting, exhibits mostly uniform control flow and requires no synchronization between GPU threads.

We evaluated \textsc{CompactGpu} with synthetic and real benchmarks. \textsc{CompactGpu} can improve application performance by up to 16\% and reduce the overall memory consumption of an application, while incurring minimal runtime overheads.

\paragraph{Publications}
This chapter is in part based on the following papers.
\begin{itemize}
  \item Matthias Springer, Hidehiko Masuhara. \textbf{``Massively Parallel GPU Memory Compaction.''} In: \emph{Proceedings of the ACM SIGPLAN International Symposium on Memory Management.} ISMM 2019. ACM, 2019, pp.~14--26. \\ \texttt{\doi{10.1145/3315573.3329979}}
  \item Matthias Springer. \textbf{``CompactGpu: Massively Parallel Memory Defragmentation on GPUs.''} Extended Abstract. In: \emph{ACM Student Research Competition at PLDI 2019.} 3 pages. (reviewed, no formal proceedings)
\end{itemize}

\section{Why GPU Memory Defragmentation?}
\label{sec:why_mem_defrag}
Before introducing the design of our system, we review important performance characteristics of SIMD architectures.

\paragraph{Effects of Fragmentation}
Fragmentation measures the degree of scattering of allocations across the heap and is caused by unfortunate allocate-deallocate patterns. High fragmentation leads to three main disadvantages.

\begin{itemize}
  \item \textbf{Premature Out-of-Memory:} Large allocations cannot be accommodated even if there is enough free memory overall (\emph{external fragmentation}).
  \item \textbf{Low Cache Hit Rate:} Poor data locality causes poor cache performance~\cite{Grunwald:1993:ICL:155090.155107}, because fragmented data occupies more cache lines.
  \item \textbf{Low Vector Load/Store Efficiency:} SIMD vector load/store instructions are less efficient, because accessing fragmented data requires more vector transactions than accessing compact data.
\end{itemize}

While the first two points are well-established and apply to most architectures, the specific effects on SIMD architectures have received little attention.


\paragraph{Effect of Fragmentation on Vectorized Access}
SIMD architectures achieve parallelism by executing instructions on a vector register. However, vector load/store operations are less efficient with higher fragmentation. When threads in a GPU application simultaneously access different memory addresses, the GPU \emph{coalesces} accesses from the same SIMD work group (\emph{warp} in CUDA, every 32 consecutive threads) into one physical memory transaction if the addresses are on the same 128-byte cache line (Section~\ref{sec:background_mem_coal}). More fragmentation leads to more scattered memory addresses, resulting in poorer performance due to a higher number of memory transactions.


Memory fragmentation can greatly affect vectorized access, even if data is stored in a Structure of Arrays (SOA) data layout. Recent NVIDIA architectures coalesce simultaneous accesses of consecutive memory addresses into 128-byte vector transactions. Accessing fragmented data requires more vector transactions than accessing the same amount of dense data. This reduces the overall performance of memory-bound applications because memory bandwidth is limited~\cite{nvidia_bound}.





\paragraph{Memory Defragmentation}
To optimize the memory access of global memory, we are developing a memory defragmentation system for GPUs in this chapter. In essence, every memory defragmentation system has to solve four basic problems.
\begin{enumerate}
  \item Determine which parts of the heap are fragmented.
  \item Based on that information, decide which objects\footnote{We use the term \emph{object} instead of \emph{allocation} throughout this chapter because we are focusing on object-oriented systems.} to move (relocate) and where to move them.
  \item Physically relocate objects in memory.
  \item Find and rewrite pointers to relocated objects. (Alternative: Ensure that objects can still be accessed through their old pointers.)
\end{enumerate}
Most memory defragmentation systems are part of a garbage collector. Since garbage collectors have to scan large parts of the heap anyway, they can gather additional metainformation almost for free. This information can be used to select memory areas for compaction~\cite{Ossia:2004:MCC:1029873.1029877, Kermany:2006:CCI:1133981.1134023} or to determine which parts of the heap contains pointers that must be rewritten~\cite{Veldema:2012:PMD:2247684.2247693}.

\section{Heap Layout and Data Structures}
\label{sec:dynasoar_overview}
A variety of dynamic memory allocators for GPUs have been developed in recent years. \textsc{CompactGpu} is implemented in \textsc{DynaSOAr}, but its basic ideas can be adapted to other dynamic memory allocators as long as they follow a few basic design requirements.

\begin{itemize}
  \item The heap is divided into fixed-size \textbf{memory blocks}, in which objects are allocated. Every \textsc{DynaSOAr} block has a constant block capacity (based on its type), regardless of the number of allocated objects.
  \item A block contains only \textbf{objects of the same size}. This requirement is crucial. Same-size objects can be compacted much more easily than objects of different size. Every \textsc{DynaSOAr} block contains objects of only one type, so all objects of a block have the same size.
  \item Blocks of the same object size have the \textbf{same capacity}. All \textsc{DynaSOAr} blocks have the same size in bytes, so all blocks of the same type/object size have the same capacity. Section~\ref{sec:block_capacity_dynasoar_s5} describes how exactly block capacities are determined in \textsc{DynaSOAr}.
  \item The allocator maintains \textbf{fill levels} for each block. The fill level of every \textsc{DynaSOAr} block can be determined by counting the number of set bits in the object allocation bitmap.
\end{itemize}

\textsc{DynaSOAr}, Halloc~\cite{hallocweb} and UAlloc~\cite{Gelado:2019:TGM:3293883.3295727} are three examples of allocators that satisfy these requirements. \textsc{DynaSOAr} is the only memory allocator with an SOA data layout, which allows for efficient vectorized memory access of allocated memory. While all allocators would benefit from better cache performance and more space-efficient memory usage, memory defragmentation in \textsc{DynaSOAr} additionally leads to more efficient vectorized memory accesses and thus better memory bandwidth utilization.

\subsection{Running Example}
We use a simple fish-and-sharks simulation (\textsf{wa-tor}) as a running example to describe the data structures of \textsc{CompactGpu}. Fish and sharks inhabit a 2D grid of cells in a predator-prey relationship. This application has four classes: \texttt{Cell}, \texttt{Agent}, \texttt{Fish} and \texttt{Shark}. The last two classes are subclasses of the abstract class \texttt{Agent}. We describe \textsf{wa-tor} in more detail in Section~\ref{sec:smmo_ex_wa_tor_sec}.

\textsf{wa-tor} exhibits a large number of allocations and deallocations in GPU code, which leads to memory fragmentation. By reducing memory fragmentation, \textsc{CompactGpu} can reduce the overall memory consumption of \textsf{wa-tor}.

\subsection{Overview of the \textsc{DynaSOAr} Allocator}
Chapter~\ref{sec:chapter_dynasoar} described the \textsc{DynaSOAr} memory allocator in detail. In the following paragraphs, we give a simplified summary of \textsc{DynaSOAr} and describe which parts of the allocator had to be modified to implement \textsc{CompactGpu}.

\begin{figure}
  \centering
  \includegraphics[width=\textwidth]{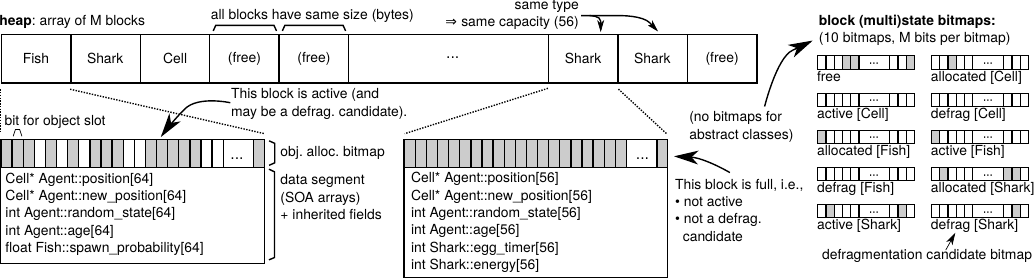}
  \caption[\textsc{DynaSOAr} heap layout of \textsf{wa-tor}]{Example: Heap layout for \textsf{wa-tor}. The heap consists of equally sized blocks. Up to 64 objects can be stored in a block, as indicated by the \emph{object allocation bitmap}. A block can be in one or multiple of 10 possible (multi)states, as indicated by the state bits shown for every block. There are \emph{allocated}, \emph{active} and \emph{defrag} states for the three classes \texttt{Fish}, \texttt{Shark} and \texttt{Cell}, but not for class \texttt{Agent} because it is an abstract class.}
  \label{fig:heap_layout_defrag}
\end{figure}

\begin{figure}
  \centering
  \includegraphics[width=0.7\columnwidth]{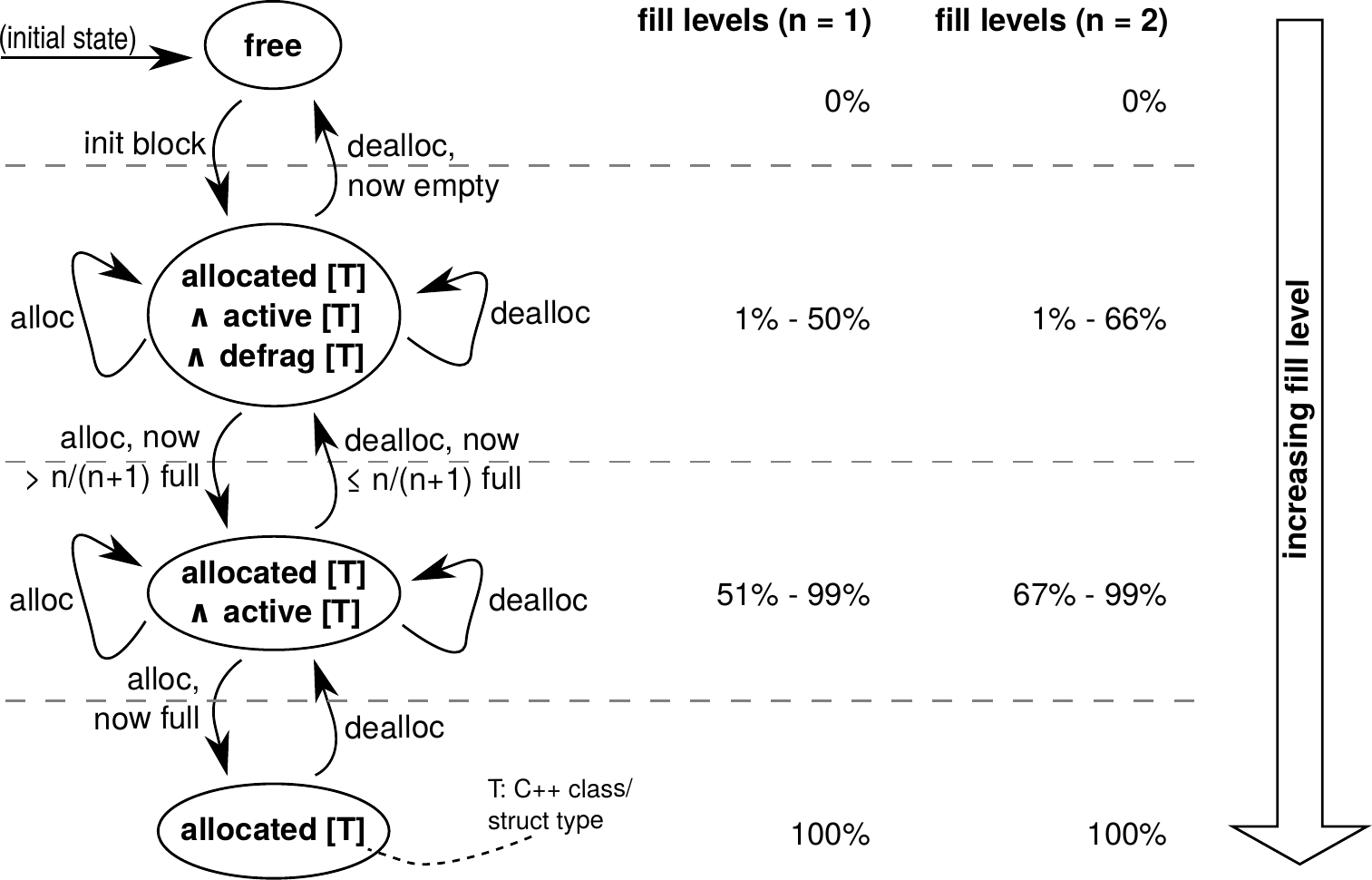}
  \caption[\textsc{CompactGpu} block state transitions]{Block states. Initially, every block is \emph{free}. New objects are allocated in \emph{active} blocks of the corresponding type. We introduced a new state \emph{defrag} to indicate defragmentation candidates. Only objects from such blocks are relocated during defragmentation.}
  \label{fig:block_states}
\end{figure}

\textsc{DynaSOAr} is a slab allocator~\cite{Bonwick:1994:SAO:1267257.1267263}. It divides the heap into $M$ blocks of equal byte size, each of which can contain up to 64 objects (\emph{capacity}) of the same C++ class/struct type, depending on the size of the type (Figure~\ref{fig:heap_layout_defrag}). A position where an object can be stored is called an \emph{object slot}. A 64-bit \emph{object allocation bitmap} keeps track of allocations. Objects are stored in the \emph{data segment} in an SOA data layout: one \emph{SOA array} per field.

\paragraph{Block States}
A block can be in one or more \emph{multistates}. There are $3 \times \#\mathit{types} + 1$ possible states: one global \emph{free} state and three states for each type $T$ in the system (Figure~\ref{fig:block_states}).
\begin{itemize}
  \item \textbf{free:} The block is empty and does not contain any objects. No type is specified for this block.
  \item \textbf{allocated[T]:} The block may contain objects only of type $T$. No other objects can be stored in the block.
  \item \textbf{active[T]:} The block contains objects of type $T$. It is not full yet, i.e., it has space for at least one more object. \emph{active[T]} $\Rightarrow$ \emph{allocated[T]}.
  \item \textbf{defrag[T]:} The block is considered for defragmentation. We call such a block a \emph{defragmentation candidate}. We introduced this state to support defragmentation in \textsc{DynaSOAr} and will describe its purpose in the next section. \\ \emph{defrag[T]} $\Rightarrow$ \emph{allocated[T]} $\wedge$ \emph{active[T]}.
\end{itemize}

Allocation, deallocation and defragmentation routines frequently lookup blocks by state. For that reason, block states are indexed by hierarchical bitmaps of size $M$, as described in Section~\ref{sec:dynsoar_block_bitmaps}: one bitmap per state. The bitmap hierarchy is currently not utilized by \textsc{CompactGpu}, so \emph{defrag[T]} does not necessarily have to be hierarchical.



\paragraph{Fragmentation}
Our definition of fragmentation $F$ differs from other systems. We define it as the fraction of allocated but unused memory. If a block is in an \emph{allocated} state, we consider all of its object slots as \emph{allocated}. However, only object slots that actually contain an object, as indicated by the object allocation bitmap, are \emph{used}. Fragmentation is defined as the average \emph{free level} among all allocated blocks.

\begin{align*}
F = \frac{1}{\mbox{\#blocks}} \sum_{b \in \mathit{Blocks}} \frac{\mbox{\#free slots}(b)}{\mbox{\#slots}(b)} \tag{\emph{fragmentation}}
\end{align*}

Our goal is to reduce $F$ as much as possible. Zero fragmentation means that all allocated blocks are 100\% full and the other blocks are empty (\emph{free}). In that case, vectorized memory access is most efficient. Conversely, a vector load on a block that is 60\% full will on average read 40\% garbage. Moreover, unused memory in an allocated block is not available for objects of other types. This leads to less space-efficient memory usage.

The blocks themselves may be widely \emph{scattered} in the heap. For example, in Figure~\ref{fig:heap_layout_defrag}, three allocated blocks are stored at the beginning of the heap and two are stored towards the end. This does not affect cache utilization or vector load/store efficiency, because with up to 64 objects per blocks, most SOA arrays are much larger than a cache line or the size of a vector load/store (128~bytes on NVIDIA GPUs). Neither can it lead to external fragmentation and premature out-of-memory errors, because all blocks have the same byte size.




\paragraph{Object Allocation}
To reduce fragmentation, even without active memory defragmentation, \textsc{DynaSOAr} allocates new objects of type $T$ always in \emph{active[T]} blocks. These are blocks that have space for at least one more object. Only if no active block could be found, \textsc{DynaSOAr} locates a free block and turns it into an \emph{allocated[T]} and \emph{active[T]} block (\emph{slow path}).

\textsc{DynaSOAr} then reserves an object slot inside the block by atomically flipping a bit in the object allocation bitmap from 0 to 1. If this operation was successful, the block state may have changed, so we may have to update the respective bits in the block state bitmaps.

If the number of objects of a type drops, fragmentation can increase, because a block is deallocated only if all of its objects are deallocated. This kind of fragmentation can be eliminated with \textsc{CompactGpu}.

\paragraph{Programming Interface}
\textsc{DynaSOAr} provides an embedded C++ DSL for defining classes/fields. Through this DSL, \textsc{CompactGpu} can programmatically \emph{reflect} on the classes/fields that are defined in an application, somewhat similar to the Java Reflection API or metaobject protocols~\cite{Chiba:1995:MPC:217838.217868}. This functionality is used in the pointer rewriting step to restrict heap scans to a smaller part of the heap (Section~\ref{sec:rewriting_ptrs}) by excluding parts of the heap that are guaranteed to be free of pointers that must be rewritten.

\section{Defragmentation with \textsc{CompactGpu}}
\label{sec:memory_defrag}
\textsc{CompactGpu} is a memory defragmentation system for GPUs, implemented as a \textsc{DynaSOAr} extension. \textsc{CompactGpu} is:

\begin{itemize}
  \item \textbf{Configurable:} The desired target fragmentation rate can be tuned with parameters. Better defragmentation can lead to more space savings and better memory access performance, but also has a higher defragmentation overhead.
  \item \textbf{Incremental:} A single defragmentation pass is very fast and compacts only a fraction of the heap. Compacting the entire heap requires multiple passes.
  \item \textbf{In-place:} No auxiliary storage is necessary and the entire heap remains usable.
  \item \textbf{A stop-the-world approach:} A defragmentation pass can run only when no other GPU code is running. This is because, in current GPU architectures, there is no efficient way of interrupting a kernel to run a defragmentation pass, should the allocator run out of memory during the kernel. Many GPU programs (including all SMMO examples in Section~\ref{chap:smmo_examples}) are a sequence of GPU kernel invocations~\cite{shen2015study}, so there are usually plenty of opportunities to run a defragmentation pass in-between.
  \item \textbf{Fully parallel:} Every step is implemented as a perfectly parallel CUDA kernel. No synchronization among threads is necessary for defragmentation.
  \item \textbf{Not order preserving:} After defragmentation, objects are likely arranged in a different order on the heap.
\end{itemize}

Programmers initiate defragmentation manually, typically after a parallel do-all operation, and specify the C++ type that should be defragmented. \textsc{CompactGpu} extends \textsc{DynaSOAr} with an additional host allocator handle function for initiating defragmentation.

\begin{itemize}
  \item \texttt{HAllocatorHandle::parallel\_defrag<T, k1, k2>()}: Initiate memory defragmentation for objects of type $T$. Internally, this function may run multiple defragmentation passes. $k_1$ and $k_2$ are parameters that control the number of defragmentation passes and are described later.
\end{itemize}

\noindent \textsc{CompactGpu} is based on three fundamental ideas.

\begin{itemize}
  \item \textbf{Block Merging:} The heap is defragmented by moving/relocating objects from source blocks to target blocks.
  \item \textbf{Forwarding Pointers:} After relocating objects, pointers to the new object locations are stored in source blocks.
  \item \textbf{Bitmaps:} To speed up pointer rewriting, bitmaps are utilized to quickly decide whether a pointer must be rewritten.
\end{itemize}

We considered various alternative designs (Section~\ref{sec:alternative_designs_sec}), but the combination of forwarding pointers with bitmaps proved to be most performant on GPUs.

\paragraph{Block Merging}
\textsc{CompactGpu} compacts the heap by merging blocks of the same type. Blocks of the same type have the same capacity, so \emph{a source block can be merged into a target block of the same type if both blocks are no more than 50\% full}. This is to ensure that all objects of the source block fit into the target block. While it would be sufficient to require that both fill levels \emph{together} are no more than 100\%, this techique is easier to implement and more space-efficient, because the fill level of a block can then be encoded in a single bit (i.e., $1$ means $\leq 50\%$ and $0$ means $> 50\%$).

We can extend this idea to higher fill levels: A source block can be merged into two target blocks if none of the three blocks is more than 66\% full. Or in general: A source block can be merged into $n$ target blocks if none of the $n+1$ blocks is more than $\frac{n}{n+1}$ full. In each case, the source block is eliminated and the number of allocated blocks is reduced by one.

\paragraph{Defragmentation Factor}
We call $n$ the \emph{defragmentation factor}. This value is problem-specific and must be chosen by the programmer at compile time. Blocks that are no more than $\frac{n}{n+1}$ full are \emph{defragmentation candidates}. Only those blocks are considered during defragmentation. Higher defragmentation factors increase the overhead of memory defragmentation but can lead to a lower final fragmentation rate. Whether a higher defragmentation factor pays off depends on the application.

During defragmentation, all objects from a source block are moved to one or multiple target blocks. The source block is deleted. Target blocks lose their defragmentation candidate state \emph{defrag[T]} if they are now more than $\frac{n}{n+1}$ full. They also lose their active state \emph{active[T]} if they are now entirely full.

Given a defragmentation factor of $n$, \textsc{CompactGpu} is guaranteed to bring down fragmentation to $1 - \frac{n}{n+1} = \frac{1}{n+1}$, if all defragmentation candidates are eliminated. This may require multiple defragmentation passes, as will be described later. For example, for $n=2$, all blocks with $\leq 66\%$ fill level are eliminated during defragmentation\footnote{Since every source block must be matched with $n=2$ target blocks, up to two defragmentation candidates may be left over.}. Only blocks with a higher fill level are left over. Consequently, the final fragmentation level is guaranteed to be less than $1 - 66\% = 33\%$. 

\subsection{Defragmentation Candidate Bitmaps}
\label{sec:compact_gpu_extend_defrag}
A defragmentation pass must be able to quickly find all defragmentation candidates in order to choose source and target blocks efficiently. \textsc{CompactGpu} extends object allocation and deallocation routines of \textsc{DynaSOAr} to keep track of defragmentation candidates. \textsc{CompactGpu} maintains \emph{defrag[T]} bitmaps (one per type) in which a bit is set if the corresponding block is a defragmentation candidate. Every defragmentation candidate is by definition also an active block, because it is not entirely full. Similar to other block state bitmaps, defragmentation candidate bitmaps may have to be updated whenever block fill levels change. An alternative implementation could scan the object allocation bitmaps of every block and generate \emph{defrag[T]} on demand.

\SetKwComment{Comment}{$\triangleright$\ }{}
\SetAlgoVlined
\begin{algorithm}[t]
\small
 \Repeat(\Comment*[f]{\textsf{Infinite loop if OOM}}){false}{
  bid $\gets$ active[T].\emph{try\_find\_set}(); \hfill \Comment{\textsf{Find and return the position of any set bit.}}
  \If(\Comment*[f]{\textsf{Slow path}}){\emph{bid} = FAIL} {
    bid $\gets$ free.\emph{clear}();   \hfill \Comment{\textsf{Find and clear a set bit atomically, return position.}}
    \emph{initialize\_block}<T>(bid)\;
    allocated[T].\emph{set}(bid)\;
    \fbox{defrag[T].\emph{set}(bid)\;} \\
    active[T].\emph{set}(bid)\;
  }
  alloc $\gets$ heap[bid].\emph{reserve}();  \hfill \Comment{\textsf{Reserve an object slot. See Alg.~\ref{alg:block_allocate_defrag_ext}.}}
  \If{\emph{alloc} $\not=$ FAIL}{
    ptr $\gets$ \emph{make\_pointer}(bid, alloc.slot)\;
    t $\gets$ heap[bid].type\;
    \fbox{\lIf{\emph{alloc.state} = LEQ}{
      defrag[t].\emph{clear}(bid)
    }} \\
    \lIf{\emph{alloc.state} = FULL}{
      active[t].\emph{clear}(bid)
    }
    \lIf{t = T}{
      \Return ptr
    }{
      \emph{deallocate}<t>(ptr); \hfill \Comment{\textsf{Type of block has changed. Rollback.}}
    }
  }
 }
 \caption[\textsc{CompactGpu} Extension: DAllocatorHandle::allocate<T>]{DAllocatorHandle::allocate<T>() : T* \hfill \fbox{GPU}}
 \label{alg:alloc_algo_defrag_extended}
\end{algorithm}

\SetKwComment{Comment}{$\triangleright$\ }{}
\begin{algorithm}[t]
\small
  \vspace{0.05cm}
  \nosemic\nonl \textsf{(same as Algorithm~\ref{alg:block_allocate}, Lines 1--8)}\;
  \vspace{0.1cm}
  \setcounter{AlgoLine}{8}
  \If{{success}}{
    \uIf{popc(\emph{before}) \emph{= 63}}{
      \Return (pos, \emph{FULL})
    }
    \uElseIf(\Comment*[f]{\textsf{E.g., \emph{popc}(before) = 33 for $n=2$}}){\fbox{popc(\emph{before}) \emph{= $64 \cdot \frac{n}{n+1} + 1$}}}{
      \fbox{\Return (pos, \emph{LEQ})}
    }
    \Else{
      \Return (pos, \emph{REGULAR})
    }
  }
\Return \emph{FAIL}\;
 \caption[\textsc{CompactGpu} Extension: Block::reserve]{Block::reserve() : (int, state) \hfill $\triangleright$\ \textsf{Assuming block size 64.} \,\,\,\fbox{GPU}}
 \label{alg:block_allocate_defrag_ext}
\end{algorithm}

\SetKwComment{Comment}{$\triangleright$\ }{}
\begin{algorithm}[t]
\small
  bid $\gets$ \emph{extract\_block}(ptr)\;
  slot $\gets$ \emph{extract\_slot}(ptr)\;
  state $\gets$ heap[bid].\emph{deallocate}(slot)\;
  \uIf(\Comment*[f]{\textsf{Deallocated first object of full block.}}){\emph{state} = FIRST}{
    active[T].\emph{set}(bid)
  }
  \uIf(\Comment*[f]{\textsf{Now $\leq \frac{n}{n+1}$ full}}){\fbox{\emph{state} = LEQ}}{
    \fbox{defrag[T].\emph{set}(bid)\;}\\
  }
  \ElseIf(\Comment*[f]{\textsf{Deallocated last object of block.}}){\emph{state} = EMPTY}{
    \If{invalidate(bid)}{
      t $\gets$ heap[bid].type\;
      active[t].\emph{clear}(bid)\;
      \fbox{defrag[t].\emph{clear}(bid)\;}\\
      allocated[t].\emph{clear}(bid)\;
      free.\emph{set}(bid)\;
    }
  }
 \caption[\textsc{CompactGpu} Extension: DAllocatorHandle::deallocate<T>]{DAllocatorHandle::deallocate<T>(T* ptr) : void \hfill \fbox{GPU}}
 \label{algo:dealloc_a_extended_defrag}
\end{algorithm}

Algorithms~\ref{alg:alloc_algo_defrag_extended}, \ref{alg:block_allocate_defrag_ext} and~\ref{algo:dealloc_a_extended_defrag} are extended versions of Algorithms~\ref{alg:alloc_algo}, \ref{alg:block_allocate} and~\ref{algo:dealloc_a}. They show which parts of object (de)allocation\footnote{We are not taking into account allocation request coalescing or block invalidation here.} we changed to keep track of defragmentation candidates. Recall that the ordering of \emph{if} branches is crucial to avoid deadlocks in Algorithm~\ref{algo:dealloc_a_extended_defrag} (Section~\ref{sec:dynasoar_obj_dealloc_discussion}). Assuming that CUDA schedules branches in the order in which they appear in the source code, the \emph{if} branches must be ordered from \emph{high fill level} to \emph{low fill level}.

\subsection{Defragmentation Pass}
A single defragmentation pass consists of four main steps. Every step is implemented as a CUDA kernel and runs in parallel.

\begin{enumerate}
  \item Copy objects from source to target locations.
  \item Store forwarding pointers in source blocks.
  \item Scan the heap and rewrite pointers to source locations.
  \item Update block state bitmaps.
\end{enumerate}

\noindent In the following sections, we describe each of these steps.

\subsection{Copying Objects}
\textsc{CompactGpu} copies objects from source to target blocks. Those blocks must be defragmentation candidates, to ensure that all objects of a source block fit into the corresponding target blocks.

\paragraph{Choosing Source/Target Blocks}
We utilize the defragmentation candidate bitmap to quickly find and assign target blocks to source blocks (Figure~\ref{fig:compacting_prefix_defrag}). We first generate an \emph{indices} array of size $M$ that contains $i$ at position $i$ if the $i$-th bit is set. Otherwise, we store an \emph{invalid marker}. Now we filter/compact the array to retain only valid values, resulting in array $R$ of size $r$. This \emph{stream compaction}~\cite{IJNC151} is implemented with a parallel prefix sum operation (CUB library~\cite{nvidiaprefixsum}).

\begin{figure}
  \begin{minipage}[c]{0.48755605987\textwidth}
    \centering
    \includegraphics[width=0.906315742\textwidth]{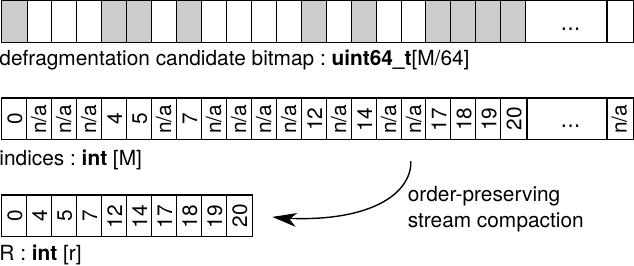}
  \end{minipage}\hfill
  \begin{minipage}[c]{0.47\textwidth}
    \caption[Bitmap compaction with prefix sum (simplified)]{Example: Compacting a bitmap of defragmentation candidates (\emph{defrag[T]}). There is such a bitmap for every type. See Figure~\ref{fig:scan_enumeration} for full details. Must preserve order of indices.} \label{fig:compacting_prefix_defrag}
  \end{minipage}
\end{figure}

\begin{figure}
  \begin{minipage}[c]{0.48755605987\textwidth}
    \includegraphics[width=\textwidth]{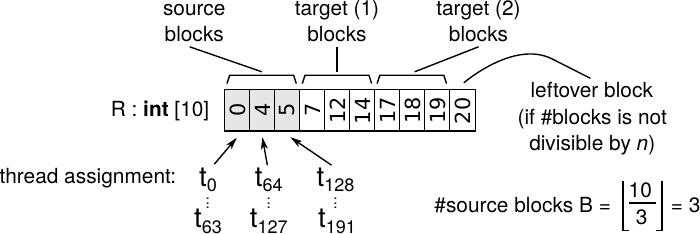}
  \end{minipage}\hfill
  \begin{minipage}[c]{0.47\textwidth}
    \caption[\textsc{CompactGpu}: Assigning source and target blocks]{Example: Assigning source and target blocks ($n=2$). E.g., objects from source block 5 are moved into blocks 14 and 19. One block is left over.} \label{fig:assign_source_target_defrag}
  \end{minipage}
\end{figure}

Based on array $R$, each GPU thread can later by itself (without synchronization) efficiently determine its assigned source block and corresponding target blocks (Figure~\ref{fig:assign_source_target_defrag}). Given a defragmentation factor $n$, the $B=\left\lfloor\frac{r}{n+1}\right\rfloor$ blocks with indices $R[0]$ through $R[B - 1]$ are source blocks. Given a source block $R[\mathit{s\_rid}]$, its corresponding target blocks are:

\begin{align*}
\big\{\,R[\mathit{s\_rid} + i \cdot B] \,\,\,\big|\,\,\, i \in 1...n \,\big\} \tag{\emph{target block IDs}}
\end{align*}

Note that the first $B$ blocks in $R$ are source blocks. This fact will be used later to optimize pointer rewriting.

\paragraph{Copying Objects}
Objects are copied in parallel. We assign 64 consecutive threads to every source block (Figure~\ref{fig:assign_source_target_defrag}) and every thread copies at most one object. Some threads will have no work to do, because the capacity of the block may be less than 64 and because not all slots in a source block are occupied. However, by assigning 64 threads, we are assigning exactly two full warps, which is reduces thread divergence.

\begin{figure}
  \begin{minipage}{0.59375\columnwidth}
    \centering
    \includegraphics[width=1.21879627077\columnwidth]{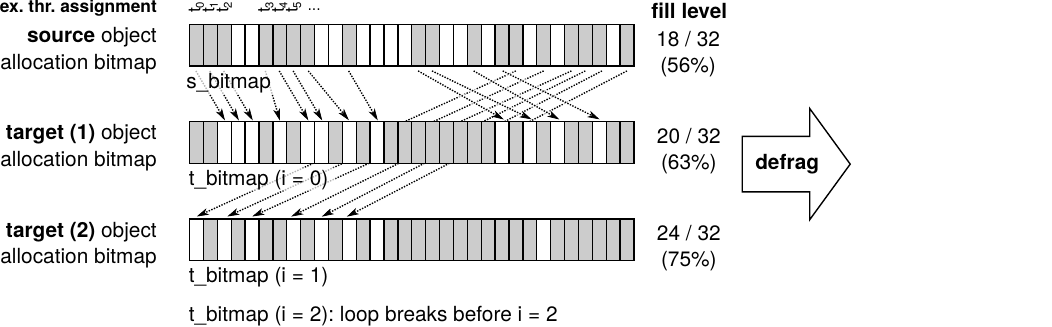}
    \caption*{\footnotesize \textbf{(a)} before relocation}   
  \end{minipage}
  \begin{minipage}{0.37867314781\columnwidth}
    \centering
    \includegraphics[width=\columnwidth]{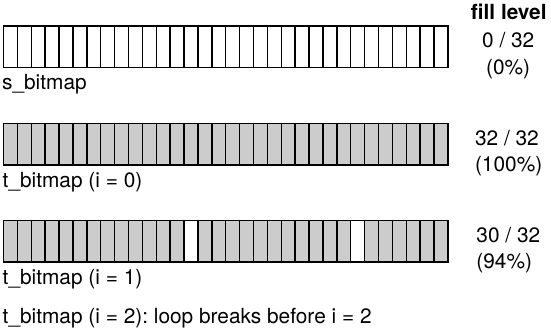}
    \caption*{\footnotesize \textbf{(b)} after relocation} 
  \end{minipage}\\
  \begin{minipage}{\columnwidth}
  \vspace{0.5cm}
  \caption[\textsc{CompactGpu}: Relocating objects]{Example: Relocating objects ($n=3$). Assuming block size 32 instead of 64. All 18 objects fit into the first two selected target blocks, so no third block is needed. The first 12 objects fit into the first target block. The remaining 6 objects fit into the second one.}
  \label{fig:move_objects}
  \end{minipage}
\end{figure}

Algorithm~\ref{alg:move_objects} shows how objects are copied (Figure~\ref{fig:move_objects}). There are 64 threads for every source block. A thread $t_{\mathit{tid}}$ copies the ($\textsf{s\_loc} = \mathit{tid}\,\,\%\,\, 64$)-th object of the source block (ID \textsf{s\_bid}). Let \textsf{s\_oid} be the slot ID of this object. The target slot is the \textsf{s\_loc}-th free slot among all target blocks. Let \textsf{t\_oid} and \textsf{t\_bid} be the slot ID and block ID of that slot. To determine the target slot, we may have to examine the object allocation bitmap of multiple or all $n$ target blocks (Line~7). This causes some \emph{thread divergence} because the number of \emph{for} loop iterations differs among threads, but $n$ is usually small. Furthermore, note that no synchronization is required among threads.

On GPUs, memory accesses have a much higher latency than arithmetic instructions~\cite{Volkov:EECS-2016-143}. Therefore, we have to keep the number of extra accesses in addition to the field copies low. Object copies cannot be avoided without changing the defragmentation strategy and these extra accesses represent the overhead of our copy phase implementation. Since all threads in a warp copy from/to the same blocks, we require at most $1+n$ read transactions from the array $R$ and the same number read transactions of object allocation bitmaps per warp. Since all threads in a warp have the same source/target blocks, they access the same array slot of $R$ and the same object allocation bitmaps, so these accesses can be coalesced. 

\paragraph{Better Source/Target Choices?}
The number of object copies (and pointer rewritings) could be reduced by selecting less full defragmentation candidates as source blocks. Such an optimization does not pay off for three reasons.

First, selecting source blocks becomes much more difficult. How would a GPU thread know which block is less full? We would either have to sort the array $R$ with a comparator function that counts the set bits in each block's object allocation bitmap (a random memory access!). Or we would have to maintain additional \emph{defrag[T]} bitmaps for various fill levels. Both variants would greatly reduce performance.

Second, since memory is accessed in 128-byte vector transactions, reading/writing a slightly lower number of scalar values (that are likely scattered within an SOA array) is unlikely to reduce the number of memory transactions.

And third, there would be more threads without work, but since all threads in a warp must execute the same instructions, these threads nevertheless have to \emph{wait} for the copying threads in the warp (\emph{warp divergence}).

\LinesNumbered
\SetAlgoVlined
\SetKwComment{Comment}{$\triangleright$\ }{}
\begin{algorithm}[t]
\small
\For(\textcolor{black}{\Comment*[f]{\textsf{\fbox{GPU} CUDA kernel}}}){$\mathit{tid} \gets0$ \KwTo $64 \cdot B$  \emph{\textbf{in parallel}}}{
  s\_rid $\gets$ tid / 64;  \,\,  s\_bid $\gets R[\mathit{s\_rid}]$; \hfill \textcolor{black}{\Comment{\textsf{Source block ID}}}
  s\_loc $\gets$ tid $\%$ 64; \hfill \textcolor{black}{\Comment{\textsf{This thread copies s\_loc\textsuperscript{th} obj.}}}
  s\_bitmap $\gets$ heap[s\_bid].bitmap\;
  \uIf{s\_loc $<$ popc(s\_bitmap)}{
    t\_loc $\gets$ s\_loc; \hfill \textcolor{black}{\Comment{\textsf{This thread copies to t\_loc\textsuperscript{th} free slot.}}}
    \For(\textcolor{black}{\Comment*[f]{\textsf{Thr. divergence increases with $n$.}}}){$i \gets 0$ \KwTo $n$}{
      t\_rid $\gets$ s\_rid + $i \cdot B$\;
      t\_bid $\gets R[\mathit{t\_rid}]$; \hfill \textcolor{black}{\Comment{\textsf{Target block ID}}}
      t\_bitmap $\gets$ $\sim$heap[t\_bid].bitmap\;
      t\_slots $\gets$ \emph{popc}(t\_bitmap)\;
      \eIf{t\_loc $<$ t\_slots}{
        \textbf{break}; \hfill \textcolor{black}{\Comment{\textsf{Target block t\_bid determined.}}}
      }{
        t\_loc $\gets$ t\_loc $-$ t\_slots\;
      }
    }
    s\_oid $\gets$ \emph{nth\_set\_bit}(s\_bitmap, s\_loc)\;
    t\_oid $\gets$ \emph{nth\_set\_bit}(t\_bitmap, t\_loc)\;
    s\_ptr $\gets$ \emph{make\_pointer}(s\_bid, s\_oid)\;
    t\_ptr $\gets$ \emph{make\_pointer}(t\_bid, t\_oid)\;
    *t\_ptr $\gets$ *s\_ptr; \hfill \textcolor{black}{\Comment*[f]{\textsf{Copy all fields.}}}
  }
  \textbf{end} \hfill \textcolor{black}{\Comment*[f]{\textsf{else: No work for this thread.}}}
}
 \caption[\textsc{CompactGpu}: move\_objects<T>]{move\_objects<T>() : void \hfill \fbox{CPU}}
 \label{alg:move_objects} \normalstyle
\end{algorithm}

\LinesNumbered
\SetKwComment{Comment}{$\triangleright$\ }{}
\begin{algorithm}[t]
\small
\For(\textcolor{black}{\Comment*[f]{\textsf{\fbox{GPU} CUDA kernel}}}){$\mathit{tid} \gets0$ \KwTo $64 \cdot B$  \emph{\textbf{in parallel}}}{
  \nosemic\nonl \textsf{(same as in \emph{move\_objects}<T>(), lines 2--19)}\;
  \setcounter{AlgoLine}{19}\mbox{\,\,\,}|\,\,\,\,\,\,heap[s\_bid].data.forwarding\_ptr[s\_oid] $\gets$ t\_ptr; \;
  \textbf{end} \hfill \textcolor{black}{\Comment*[f]{\textsf{else: No work for this thread.}}}
}
 \caption[\textsc{CompactGpu}: place\_forwarding\_ptrs<T>]{place\_forwarding\_ptrs<T>() : void \hfill \fbox{CPU}}
 \label{alg:place_fwd_ptr} \normalstyle
\end{algorithm}

\LinesNumbered
\SetKwComment{Comment}{$\triangleright$\ }{}
\begin{algorithm}[t]
\small
  s\_bid $\gets$ \emph{extract\_block\_id}(ptr)\;
  \eIf{s\_bid < $R[B]$ $\wedge$ defrag[T][s\_bid]}{
    s\_oid $\gets$ \emph{extract\_object\_id}(ptr)\;
    \Return heap[s\_bid].data.forwarding\_ptr[s\_oid]\;
  }{
    \Return n/a\;
  }
 \caption[\textsc{CompactGpu}: rewrite\_pointer<T>]{rewrite\_pointer<T>(T* ptr) : T* \hfill \fbox{GPU}}
 \label{alg:rewrite_ptr}
\end{algorithm}

\subsection{Storing Forwarding Pointers}
After copying objects, pointers to the old memory location must be updated (\emph{rewritten}). Most memory defragmentation systems do this with \emph{forwarding pointers}: A pointer to the object's new memory location is stored at its old location.

We extended \textsc{DynaSOAr} to store forwarding pointers inside blocks. Every block may contain either a data segment or forwarding pointers. Listing~\ref{lst:block_str_defrag_fish} shows the data structure of a block of type \texttt{Fish} (also see lower left part of Figure~\ref{fig:heap_layout_defrag}).

\begin{lstfloat}
\begin{lstlisting}[caption={[Example: \textsc{CompactGpu} block structure for class of \textsf{wa-tor}]Example: Block structure for class \texttt{Fish}}, language=c++, label={lst:block_str_defrag_fish}, morekeywords={uint64_t}]
template<> struct Block<Fish> {
  uint64_t bitmap;

  union {
    struct {
      Cell* position[64];
      Cell* new_position[64];
      int random_state[64];
      int age[64];
      float spawn_probability[64];
    } data_segment;  /* SOA arrays */

    Fish* forwarding_ptr[64];
  } data;
};
\end{lstlisting}
\end{lstfloat}

Algorithm~\ref{alg:place_fwd_ptr} shows how the forwarding pointers array is populated. This algorithm is identical to Algorithm~\ref{alg:move_objects}, except for Line~20. These two algorithms cannot be merged into a single kernel because a forwarding pointer would then overwrite parts of the data segment\footnote{If the size of the first field is 8 bytes (same as a pointer), as in the example here, this is harmless. Otherwise, a thread would store a forwarding pointer into a memory location that is copied by another thread (race condition).}.

\subsection{Rewriting Pointers}
\label{sec:rewriting_ptrs}
The heap is now scanned for pointers that must be rewritten. If a pointer points to a location with a forwarding pointer, it is replaced with the forwarding pointer. We utilize the defragmentation candidate bitmap to quickly decide if a pointer must be rewritten, without reading the memory at the pointer location. \emph{We read that location only if we are sure that it contains a forwarding pointer.} This is a key difference compared to other defragmentation systems.

Given a pointer \textsf{ptr}, Algorithm~\ref{alg:rewrite_ptr} returns the corresponding forwarding pointer or \emph{n/a} if no forwarding pointer exists for \textsf{ptr}. We first extract the block ID \textsf{s\_bid} of the object that \textsf{ptr} points to. This block is a source block if it is a defragmentation candidate (i.e., bit set in the defragmentation candidate bitmap) and if the block ID is smaller than $R[B]$ (Line~2). Recall that the blocks with IDs $R[i]$ with $i \in [0; B-1]$ are source blocks and $R$ is sorted (Figure~\ref{fig:assign_source_target_defrag}). Large parts of the \emph{defrag[T]} bitmap will likely be cached by the L1/L2 caches, so we expect these bitmap lookups to be fast.

If the block is a source block, we extract the object ID \textsf{s\_oid} from \textsf{ptr} and return the corresponding forwarding pointer. It is crucial that Algorithm~\ref{alg:rewrite_ptr} is efficient because it is executed for every pointer that is found during a heap scan.

\paragraph{Limiting Heap Scans}
Recent GPUs have up to 32~GB of memory, so scanning all allocated memory for pointers to rewrite is expensive. However, since classes and fields of application code are defined with \textsc{DynaSOAr}'s DSL, \textsc{CompactGpu} can reflect on application classes and determine which parts of the heap may contain pointers that must be rewritten. As such, only a small part of the heap is scanned.

For example, when defragmenting \texttt{Fish} objects, we can avoid looking into blocks of type \texttt{Fish} or \texttt{Shark}, because those classes do not have fields of type \emph{pointer to \texttt{Fish}} or \emph{pointer to a superclass of \texttt{Fish}}. Only class \texttt{Cell} has a field of type \texttt{Agent*}, so we only scan the corresponding SOA arrays in the data segment of allocated blocks of type \texttt{Cell}. These are the pointers that are rewritten according to Algorithm~\ref{alg:rewrite_ptr}.

In general, when defragmenting objects of type $T$, we first determine all classes $U$ that have at least one field of type \emph{pointer to S}, where $S :> T$ is a supertype of $T$ or equal to $T$\footnote{We also consider fields of type \emph{array of pointer to S} etc. For simplicity, we mention only simple pointer types here.}. For every such type $U$, we scan allocated blocks of type $U$. For every allocated block, we scan the SOA arrays of fields of type \emph{pointer to S}. Only these pointers are scanned and rewritten. This can exclude more than 95\% of the heap. Moreover, reading the values from an SOA array is fast because those field reads are coalesced by the GPU.

Most other allocators have limited information about the structure of their allocations. If we were to implement \textsc{CompactGpu}'s technique in such allocators, we cannot restrict the search space as described here. Previous work describes alternative techniques for limiting the search space based on intermediate results from a garbage collector~\cite{Veldema:2012:PMD:2247684.2247693}.

\subsection{Updating Block State Bitmaps}
\label{sec:compactgpu_updating_block_st}
Finally, the state of source and target blocks must be updated in the corresponding block state bitmaps. Source blocks lose their \emph{active}, \emph{allocated} and \emph{defrag} states and become \emph{free}. Target blocks may lose their \emph{active} and/or \emph{defrag} states depending on their new fill level.

Moreover, the object allocation bitmaps of all target blocks must be updated. If $m$ objects were relocated to a given target block, one thread flips the $m$ first cleared bits to 1, using an algorithm that is similar to Algorithm~\ref{alg:nth_set_bit_al}.

\subsection{Multiple Defragmentation Passes}
A single defragmentation pass is guaranteed to delete all source blocks, i.e., $\frac{1}{n+1}$ of all defragmentation candidates. In addition, some target blocks may lose their defragmentation candidate state. However, a fragmentation level of $\frac{1}{n+1}$ can be achieved only if all candidates were eliminated (Section~\ref{sec:memory_defrag}). This may require multiple defragmentation passes.

The efficiency of defragmentation passes decreases with a decreasing number of defragmentation candidates. For example, for $n=1$, a single pass is guaranteed to eliminate 500 out of 1,000 total candidates. However, the next pass is only guaranteed to eliminate 250 out of 500 remaining candidates. Moreover, too much defragmentation can make allocations more expensive because \textsc{DynaSOAr} has to (re)initialize new blocks if there are not enough active blocks. To reduce runtime overheads, defragmentation should stop before eliminating all defragmentation candidates.

\textsc{CompactGpu} runs multiple defragmentation passes until all but $k_1$ defragmentation candidates were eliminated. The value of $k_1$ can be configured. Since defragmentation passes are very fast, the value of $k_1$ matters only in cases with a large number defragmentation passes or a massive number of (de)allocations.

\paragraph{Worst-case Analysis}
Let $d$ be the number of defragmentation candidates. A single defragmentation pass reduces $d$ at least by a fraction of $\frac{1}{n+1}$, so no more than $\frac{n}{n+1}$ of defragmentation candidates are left over. We can bound the number of defragmentation passes that are necessary in the worst case to eliminate all defragmentation candidates by:

\begin{align*}
\log_{\frac{n + 1}{n}} d  \tag{\emph{worst-case number of defrag. passes}}
\end{align*}

If all but $k_1 \geq 1$ defragmentation candidates should be eliminated, we can bound the number of defragmentation passes by:

\begin{align*}
\log_{\frac{n + 1}{n}} d - \log_{\frac{n + 1}{n}} k_1 = \log_{\frac{n + 1}{n}} \frac{d}{k_1} \tag{\emph{keep $k_1$ defrag. candidates}}
\end{align*}

In reality, the number of required defragmentation passes is usually much lower. We experimentally analyze the actual number of defragmentation passes in Section~\ref{sec:number_of_defrag_passes_eval}.

\subsection{Defragmentation Frequency}
Memory defragmentation must be initiated by the programmer explicitly. Once initiated, \textsc{CompactGpu} may run multiple defragmentation passes depending on $n$, $k_1$ and the number of defragmentation candidates. We suggest one of the two following defragmentation policies for initiating defragmentation.

\begin{description}
  \item[Every $m$ Iterations] Many GPU programs run a number of CUDA kernels iteratively in a loop. This policy initiates defragmentation every $m$ iterations.
  \item[After Massive Deallocations] Initiate defragmentation if there are at least $k_2$ many defragmentation candidates\footnote{There is no global object counter. The number of defragmentation candidates approximates the number of allocated but unused object slots.}, where $k_2$ should be a large enough value. \textsc{CompactGpu} provides a helper function that lets programmers specify this threshold as an absolute number or as a percentage of the heap size and then initiates defragmentation if necessary. Internally, \textsc{CompactGpu} scales $k_2$ by $\frac{n}{n+1}$ to account for the fact that a larger value of $n$ usually leads to more defragmentation candidates.
\end{description}

As a rule of thumb, we use the first policy for applications that experience a speedup from defragmentation, because even small compactions can lead to a performance gain. The second policy is useful for applications that  mainly benefit from better space efficiency or see a slowdown from defragmentation. Future work will investigate how to automate defragmentation (choosing policies, parameters, etc.).

\section{Pointer Rewriting Alternatives}
\label{sec:alternative_designs_sec}
Pointer rewriting is the most time-consuming step in applications with a large object set. In Algorithm~\ref{alg:rewrite_ptr}, reading the forwarding pointer in Line~4 is a random memory access that cannot be coalesced and thus the most expensive operation of the algorithm. To get rid of this memory access, we implemented two alternatives that recompute forwarding pointers on-the-fly.

\begin{figure}
  \centering
  \includegraphics[width=0.9\columnwidth]{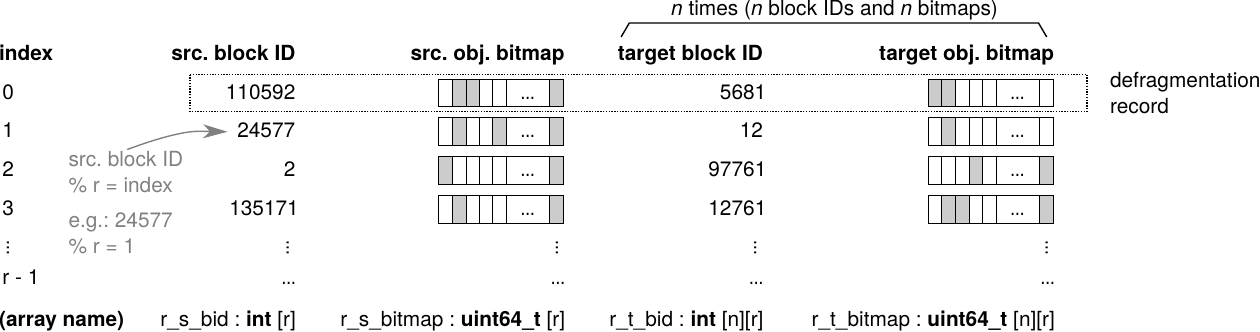}
  \caption[\textsc{CompactGpu}: Defrag. records for pointer rewriting alternatives]{Example: Defragmentation records. Stored in SOA layout.}
  \label{fig:defrag_records}
\end{figure}

\subsection[\textsc{Recompute-Global}]{\textsc{Recompute-Global}: Recompute Forwarding Pointers}
Instead of storing forwarding pointers in objects, we maintain \emph{defragmentation records} (Figure~\ref{fig:defrag_records}). A defragmentation record is a tuple of source block ID, source object bitmap (i.e., object allocation bitmap), target block IDs and target object bitmaps. Based on a defragmentation record, all forwarding pointers for objects from the respective source block can be recomputed. Defragmentation records are stored in SOA layout (4 arrays for $n=1$).

Source block IDs are stored in shared memory\footnote{Shared memory is an explicitly programmable part of the L1 cache.} and the remaining 3 arrays are stored in global memory. Current NVIDIA GPUs have 48~KB of shared memory, so we can have at most $r =$ 48~KB / \textsf{sizeof(\textbf{int})} = 12288 defragmentation records per defragmentation pass.

\begin{align*}
r = \frac{48 \cdot 1024 \mbox{ bytes}}{4 \mbox{ bytes}} = 12288 \tag{\emph{number of defrag. record slots}}
\end{align*}

\paragraph{Choosing Source/Target Blocks}
During source/target block selection, we hash as many defragmentation records to defragmentation record array slots as possible and proceed with object copying. The defragmentation records data structure is effectively a hash table with the source block ID as key.

\begin{align*}
h(\mathit{s\_bid}) = \mathit{s\_bid} \,\,\,\%\,\,\, r \tag{\emph{hash function}}
\end{align*}

The main challenge of selecting source/target blocks is to avoid hash collisions. Only a few defragmentation candidates have a suitable block ID \emph{s\_bid}, such that the block can be a source block stored in a given defragmentation record slot \emph{rid}: Namely, those blocks whose hash code $h(\mathit{s\_bid})$ is \emph{rid}. If all of those blocks are assigned as target blocks, then this defragmentation record slot remains unused. Too many unused defragmentation record slots increase the number of required defragmentation passes.

\LinesNumbered
\SetAlgoVlined
\SetKw{KwBy}{by}
\SetKwComment{Comment}{$\triangleright$\ }{}
\begin{algorithm}[t]
\small
max\_blocks $\gets \left\lfloor\frac{d}{n+1}\right\rfloor$; \hfill \Comment{\textsf{Leave enough blocks for target blocks.}}
\For(\Comment*[f]{\textsf{\fbox{GPU} CUDA kernel}}){$\mathit{tid} \gets0$ \KwTo $r$  \emph{\textbf{in parallel}}}{
  s\_bid $\gets$ \emph{n/a}\;
  \For(\Comment*[f]{\textsf{Invariant: bid \% r = tid}}){$\mathit{bid} \gets \mathit{tid}$ \KwTo $m$ \KwBy $r$}{
    \If{\emph{defrag[T][bid]}}{
      s\_bid $\gets$ bid; \hfill \Comment{\textsf{Choose this block as source.}}
      \textbf{break}\;
    }
  }
  \eIf{\emph{s\_bid $\not=$ \emph{n/a}}}{
    \eIf{atomicSub\emph{(\&max\_blocks, 1) > 0}}{
      r\_s\_bid[tid] $\gets$ s\_bid\;
      r\_s\_bitmap[tid] $\gets$ heap[s\_bid].bitmap\;
      defrag[T].\emph{clear}(s\_bid); \hfill \Comment{\textsf{Block cannot be a target.}}
    }{
      r\_s\_bid[tid] $\gets$ \emph{n/a}\;
    }
  }{
      r\_s\_bid[tid] $\gets$ \emph{n/a}\;
    }
}
 \caption[\textsc{CompactGpu} Ptr. Rewr. Alternative: choose\_source\_blocks<T>]{choose\_source\_blocks<T>() : void \hfill \fbox{CPU}}
 \label{alg:alternative_choose_src_blocks}
\end{algorithm}

\LinesNumbered
\SetAlgoVlined
\SetKw{KwBy}{by}
\SetKwComment{Comment}{$\triangleright$\ }{}
\begin{algorithm}[t]
\small
\For(\Comment*[f]{\textsf{\fbox{GPU} CUDA kernel}}){$\mathit{tid} \gets0$ \KwTo $r$  \emph{\textbf{in parallel}}}{
  \If(\Comment*[f]{\textsf{\textbf{else:} Slot not in use.}}){\emph{r\_s\_bid[tid] $\not=$ \emph{n/a}}}{
    \For(\Comment*[f]{\textsf{Choose $n$ target blocks.}}){$i \gets 0$ \KwTo $n$}{
      r\_t\_bid[i][tid] $\gets$ defrag[T].\emph{clear}()\;
      r\_t\_bitmap[tid] $\gets$ heap[r\_t\_bid[i][tid]].bitmap\;
    }
  }
}
 \caption[\textsc{CompactGpu} Ptr. Rewr. Alternative: choose\_target\_blocks<T>]{choose\_target\_blocks<T>() : void \hfill \fbox{CPU}}
 \label{alg:alternative_choose_target_blocks}
\end{algorithm}

To utilize as many defragmentation record slots as possible and to make best use of the limited amount of shared memory, we first assign source blocks and later target blocks. Algorithm~\ref{alg:alternative_choose_src_blocks} shows how source blocks are assigned. We run a CUDA kernel with one GPU thread for each defragmentation record array slot \emph{tid} (thread ID). Each thread iteratively checks in the defragmentation candidate bitmap the bits of all blocks whose hash code equal the thread's assigned array slot. The algorithm chooses the first suitable block (Line~6). We should ensure that no more than $\frac{d}{n+1}$ are selected, where $d$ is the number of defragmentation candidates; otherwise, there would not be enough target blocks for some source blocks. To that end, we maintain an atomic counter (Line~9). Finally, after storing the source block ID and bitmap in the defragmentation records data structure, we clear the block's defragmentation candiate bit\footnote{In contrast to our original forwarding pointer technique, we clear this bit during source block selection instead of in a separate step (Section~\ref{sec:compactgpu_updating_block_st}).}.

Finally, we assign target blocks (Algorithm~\ref{alg:alternative_choose_target_blocks}). We assign $n$ blocks to each source block by atomically finding and clearing a set bit in the defragmentation candidate bitmap. If this block is still a defragmentation candidate after this defragmentation pass, the bit must be set again.

\paragraph{Rewriting Pointers}
\LinesNumbered
\SetKwComment{Comment}{$\triangleright$\ }{}
\begin{algorithm}[t]
\small
  s\_bid $\gets$ \emph{extract\_block\_id}(ptr)\;
  \eIf(\Comment*[f]{\textsf{Matching defrag. record in shared memory.}}){\emph{r\_s\_bid[\emph{h}(s\_bid)] = s\_bid}}{
    s\_oid $\gets$ \emph{extract\_object\_id}(ptr)\;
    s\_bitmap $\gets$ r\_s\_bitmap[\emph{h}(s\_bid)]\;
    s\_loc $\gets$ \emph{popc}(((1 {<}{<} s\_oid) - 1) \& s\_bitmap); \hfill \Comment{\textsf{Bit s\_oid is the s\_loc-th set bit.}}
      t\_loc $\gets$ s\_loc\;
      \For{$i \gets 0$ \KwTo $n$}{
        t\_bid $\gets$ r\_t\_bid[i][\emph{h}(s\_bid)]\;
        t\_bitmap $\gets$ $\sim$r\_t\_bitmap[i][\emph{h}(s\_bid)]\;
        t\_slots $\gets$ \emph{popc}(t\_bitmap)\;
        \eIf{\emph{t\_loc $<$ t\_slots}}{
          \textbf{break}; \hfill \Comment{\textsf{Target block t\_bid determined.}}
        }{
          t\_loc $\gets$ t\_loc $-$ t\_slots\;
        }
      }
      t\_oid $\gets$ \emph{nth\_set\_bit}(t\_bitmap, t\_loc)\;
      t\_ptr $\gets$ \emph{make\_pointer}(t\_bid, t\_oid)\;
      \Return t\_ptr\;
  }{
    \Return n/a\;
  }
 \caption[\textsc{CompactGpu} Ptr. Rewr. Alternative: rewrite\_pointer<T>]{rewrite\_pointer<T>(T* ptr) : T* \hfill \fbox{GPU}}
 \label{alg:rewrite_ptr_alternative}
\end{algorithm}

We place no forwarding pointers in blocks. During pointer rewriting (Algorithm~\ref{alg:rewrite_ptr_alternative}), we check in shared memory if there is a defragmentation record for the block of \textsf{ptr} instead of checking the bit in the defragmentation candidate bitmap: We traded a global memory access for a much faster shared memory access. However, this approach limits the number of source blocks per defragmentation pass and, therefore, may increase the number of required defragmentation passes. Furthermore, we now have to recompute the forwarding pointer, which requires additional global memory accesses for reading the remaining defragmentation record values in case \texttt{ptr} must be rewritten. The computation of target pointers in Algorithm~\ref{alg:rewrite_ptr_alternative} is similar to Algorithm~\ref{alg:move_objects} and follows the same notation and variable names.


\subsection[\textsc{Recompute-Shared}]{\textsc{Recompute-Shared}: Defrag. Records in Shared Memory}
To further reduce the number of global memory accesses, we modified \textsc{Recompute-Global} to store the entire defragmentation records data structure in shared memory. The size of a defragmentation record is then $12 \cdot (n+1)$ bytes (4-byte block IDs and 8-byte bitmaps), so the shared memory can hold only 2048 defragmentation records in shared memory for $n=1$ (and even less for larger $n$). This further increases the number of required defragmentation passes, but also reduces the number of global memory accesses during pointer rewriting.

\begin{align*}
r = \frac{48 \cdot 1024 \mbox{ bytes}}{(n+1) \cdot 12 \mbox{ bytes}} \tag{\emph{number of defrag. record slots}}
\end{align*}

\section{Evaluation}
\label{sec:evaluation}
We evaluated \textsc{CompactGpu} with an NVIDIA TITAN Xp GPU (12 GB device memory). We compiled the programs with nvcc (-O3) from the CUDA Toolkit 10.1 on Ubuntu 16.04.4.

\subsection{Defragmentation Quality}
We first investigate how much fragmentation \textsc{CompactGpu} can eliminate. As described in Section~\ref{sec:memory_defrag}, given a defragmentation factor $n$, the fragmentation level is guaranteed to be less than $\frac{1}{n+1}$ after defragmentation. However, in reality the fragmentation level is even lower.

\begin{figure}
  \subfloat[Defrag. quality by initial fragmentation]{\includegraphics[width=0.48\columnwidth]{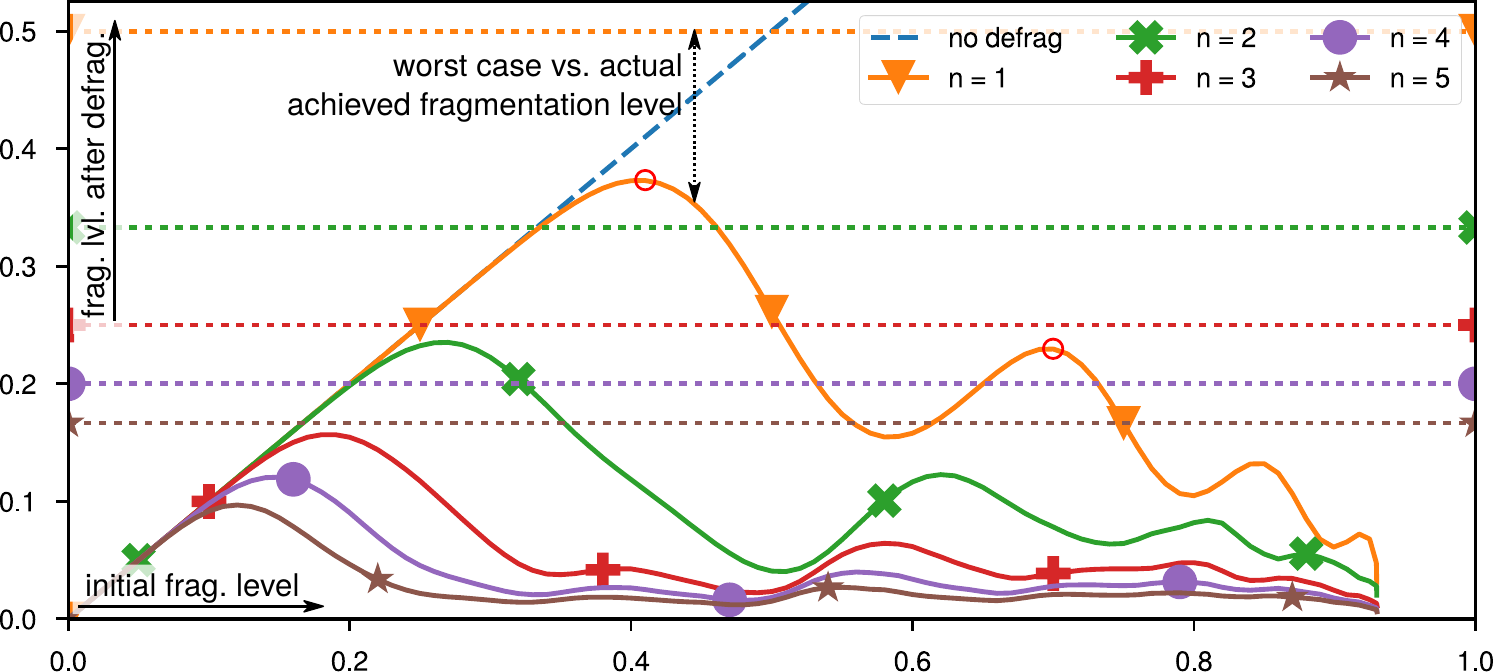}}\hfill
  \subfloat[Number of object relocations]{\includegraphics[width=0.48\columnwidth]{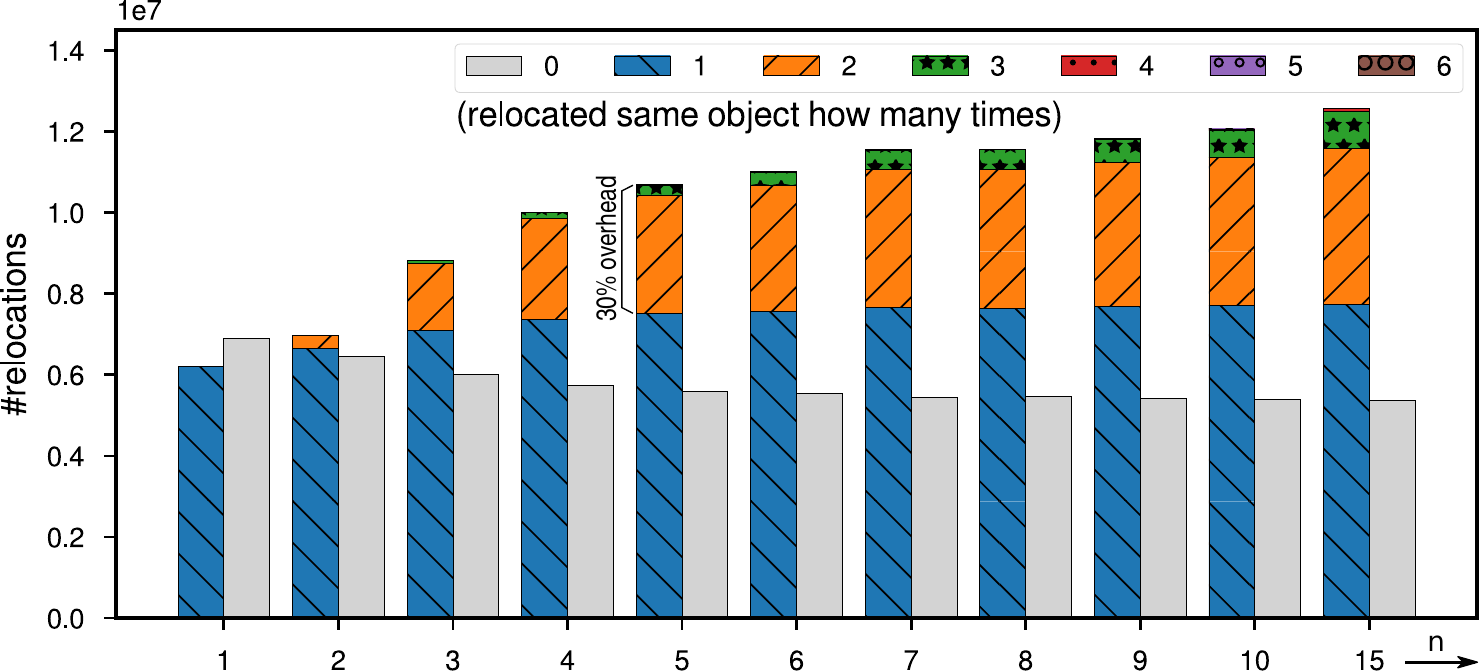}}

  \caption[\textsc{CompactGpu} experiment: Achieved fragmentation level]{Achieved fragmentation level and number of object relocations}
  \label{fig:frag_by_init}
\end{figure}
\begin{figure}
  \includegraphics[width=\textwidth]{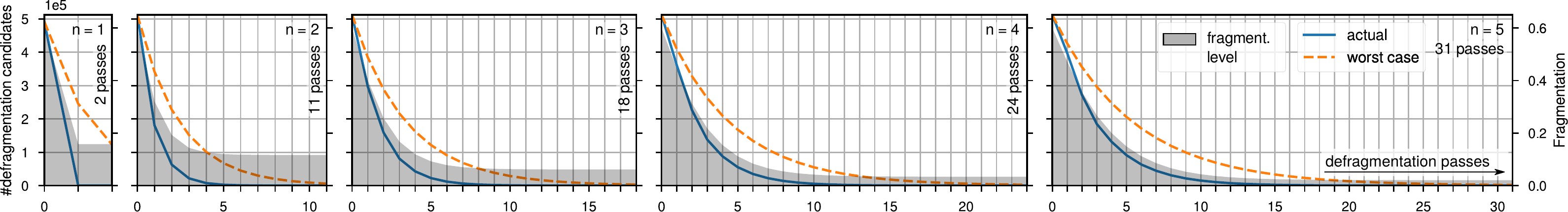}
  \caption[\textsc{CompactGpu} experiment: Number of defragmentation passes]{Number of defragmentation candidates by number of defragmentation passes}
  \label{fig:bench_num_passes}
\end{figure}

We ran a synthetic benchmark that first allocates a very large number of objects and then randomly deallocates some objects. \textsc{CompactGpu} then defragments the heap with $k_1=0$. We measured the fragmentation level after defragmentation for different values of $n$. In Figure~\ref{fig:frag_by_init}\textsc{a}, the x-axis denotes the initial heap fragmentation level (i.e., the percentage of deallocated objects) and the y-axis denotes the fragmentation level after defragmentation. Lower is better. The dotted lines indicate guaranteed fragmentation levels after defragmentation (fragmentation level $\frac{1}{n+1}$).

\textsc{CompactGpu} achieves its worst defragmentation quality at an initial fragmentation level that is slightly smaller than $\frac{n}{n+1}$ (e.g, 45\% for $n=1$). In this case, many blocks are at the boundary of becoming defragmentation candidates. There are a few more points with bad defragmentation quality. For example, around 70\% for $n=1$. In this case, a number of defragmentation candidates were eliminated in the first defragmentation pass, but the resulting blocks have unfortunate fill levels at the boundary of becoming a defragmentation candidate.

\subsection{Number of Defragmentation Passes}
\label{sec:number_of_defrag_passes_eval}
We now investigate the number of defragmentation passes that are necessary to reach good fragmentation levels. The number of defragmentation passes is bounded by $\log_{\frac{n+1}{n}} d$, but in reality fewer passes are needed because some target blocks lose their state as defragmentation candidates.

We ran the same synthetic benchmark, but fixed the initial fragmentation level at 60\%. Figure~\ref{fig:bench_num_passes} shows the number of defragmentation candidates and the fragmentation level after every defragmentation pass. The length of the x-axis indicates the number of defragmentation passes required to eliminate all defragmentation candidates. The y-axis shows the number of remaining defragmentation candidates and the fragmentation level (gray area). Only a few passes bring down fragmentation to very low levels. Moreover, the number of required passes to eliminate all candidates is significantly lower than the theoretical upper bound.

Figure~\ref{fig:frag_by_init}\textsc{b} shows the total number of object relocations for the synthetic benchmark (60\% initial fragmentation level) at various defragmentation factors (x-axis). \textsc{CompactGpu} runs multiple defragmentation passes, so some objects may be relocated multiple times. The stacked bars classify relocations by the number of times an object is relocated (e.g., \emph{0} = object not relocated, \emph{1} = object relocated for the first time, etc.). All bars above ``1'' indicate an overhead of \textsc{CompactGpu} and could potentially be avoided by choosing different source/target blocks or with a different defragmentation strategy.

E.g., for $n=5$, in 30.0\% of all object relocations, an object was copied already for the second time or even more often. Even though 31 defragmentation passes are required for $n=5$, no object was relocated more than 5 times.

\subsection{Benchmark Applications}
\label{sec:benchmark_application}
We evaluated \textsc{CompactGpu} with four SMMO applications (with $k_1=16$). Since dynamic memory allocation is not widely used on GPUs yet, there are no suitable standard benchmark suites. Our benchmarks are a subset of the \textsc{DynaSOAr} benchmarks and exhibit varying allocation patterns.

We measured the defragmentation quality and running time with different defragmentation factors. The defragmentation factor must be smaller than the capacity of a block, so every problem has a different maximum defragmentation factor. The dashed red lines in the running time graphs are baseline running times without defragmentation.

For every application, we also show a memory profile. The shaded area indicates the number of allocated objects (used object slots). Different colors indicate different C++ classes in the application. The lines indicate the actual memory usage (allocated object slots). The gap between the shaded area and a line is memory that is wasted due to fragmentation.

All applications except for \textsf{wa-tor} experience a speedup with memory defragmentation. \textsf{wa-tor} experiences a slowdown, but space savings.

\paragraph{\textsf{collision}}
\begin{figure}
  \includegraphics[width=0.48\columnwidth]{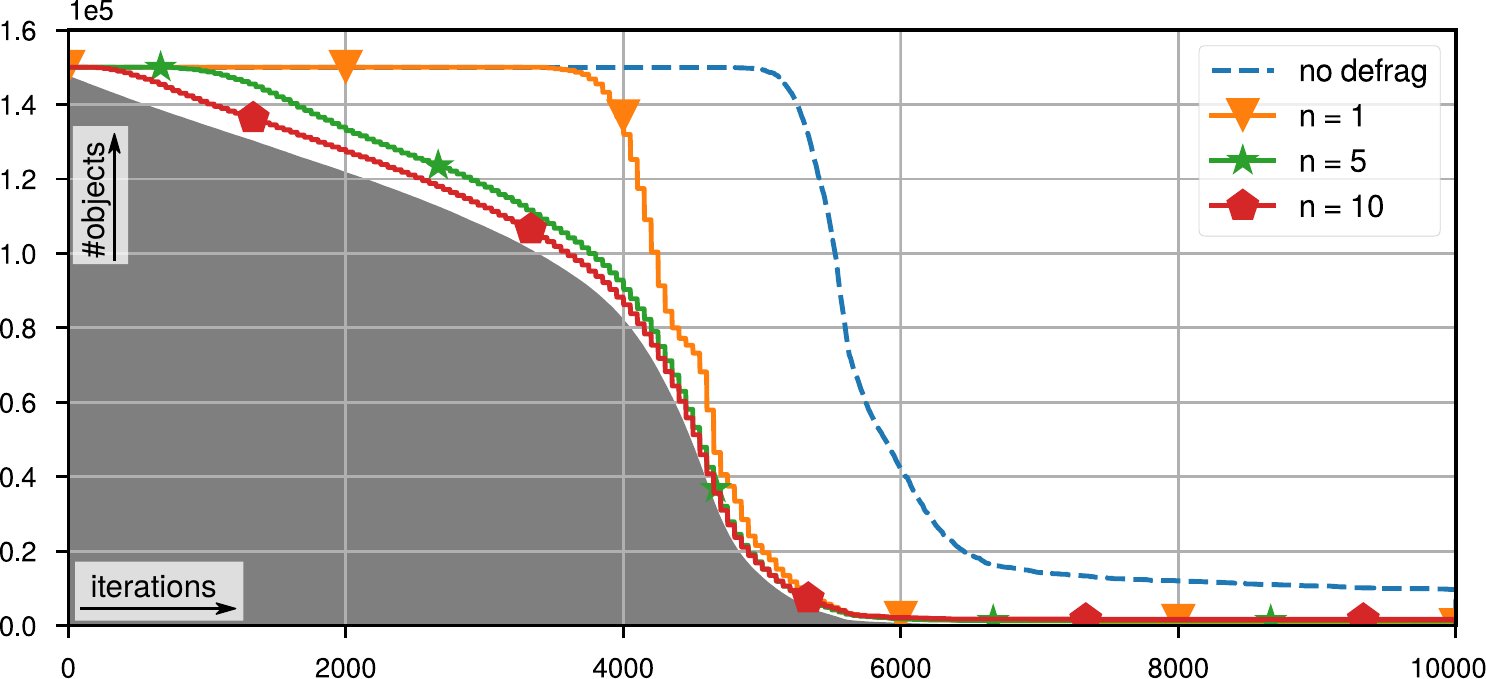}\hfill
  \includegraphics[width=0.48\columnwidth]{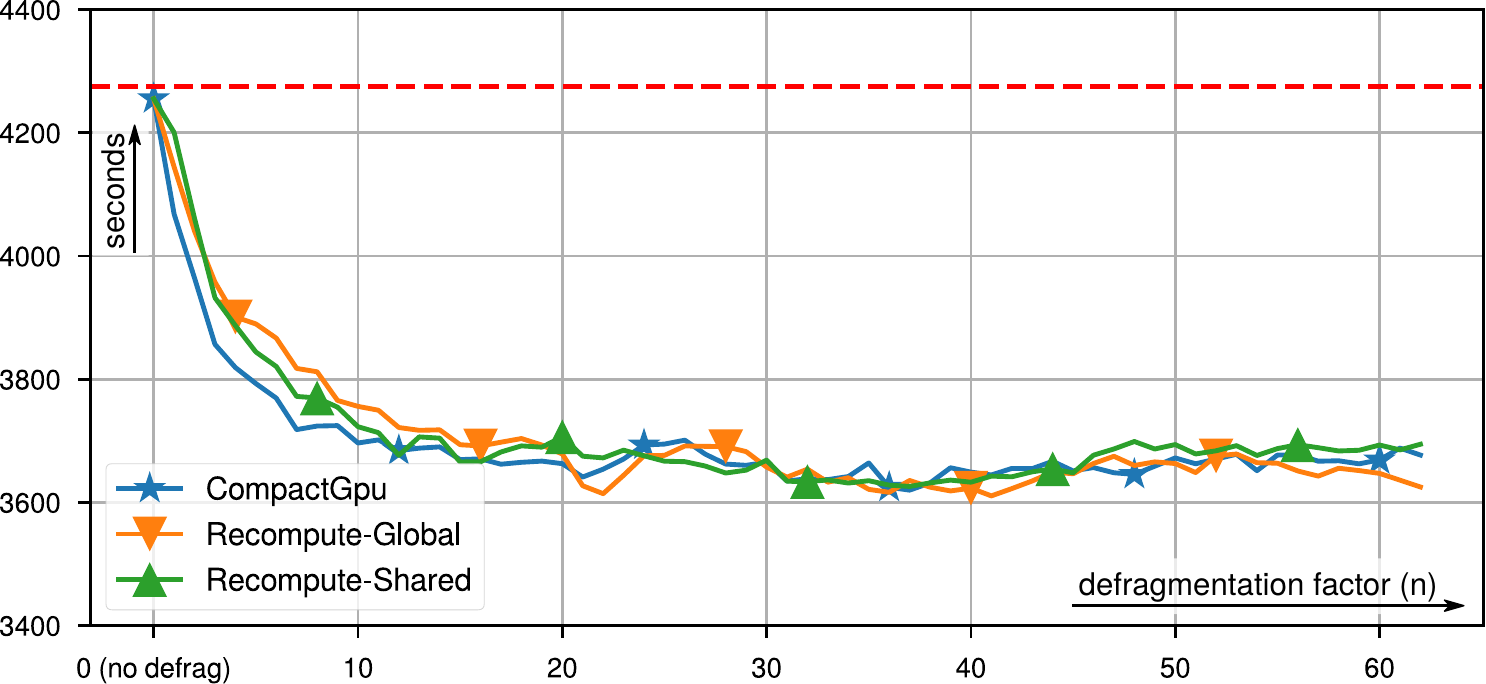}
  \centering
  \caption[\textsc{CompactGpu} benchmark: \textsf{collision}]{\textsf{collision}: N-body simulation with collisions}
  \label{fig:nbody}
\end{figure}

This is an n-body simulation with collisions (Figure~\ref{fig:nbody}). A large number of body objects is allocated at the beginning. No other objects are allocated. When two body objects collide, they are merged and one object is deallocated. The fragmentation level increases gradually with every deallocated body. The worst fragmentation is reached around iteration 5,000, when most objects were already deallocated but most blocks are still allocated due to a few remaining objects in each block.

We initiated defragmentation every 50 iterations. Defragmentation had a very small overhead and led to a performance improvement of 12.2\% for $n=36$. This is because of more efficient vector load/store instructions (more coalescing) and due to better cache utilization. The initial dataset size is 5.7~MB. Towards the end of the simulation, only few objects remain, and if they are stored in a dense way, they fit into GPU caches.

\paragraph{\textsf{structure}}
\begin{figure}
  \includegraphics[width=0.48\columnwidth]{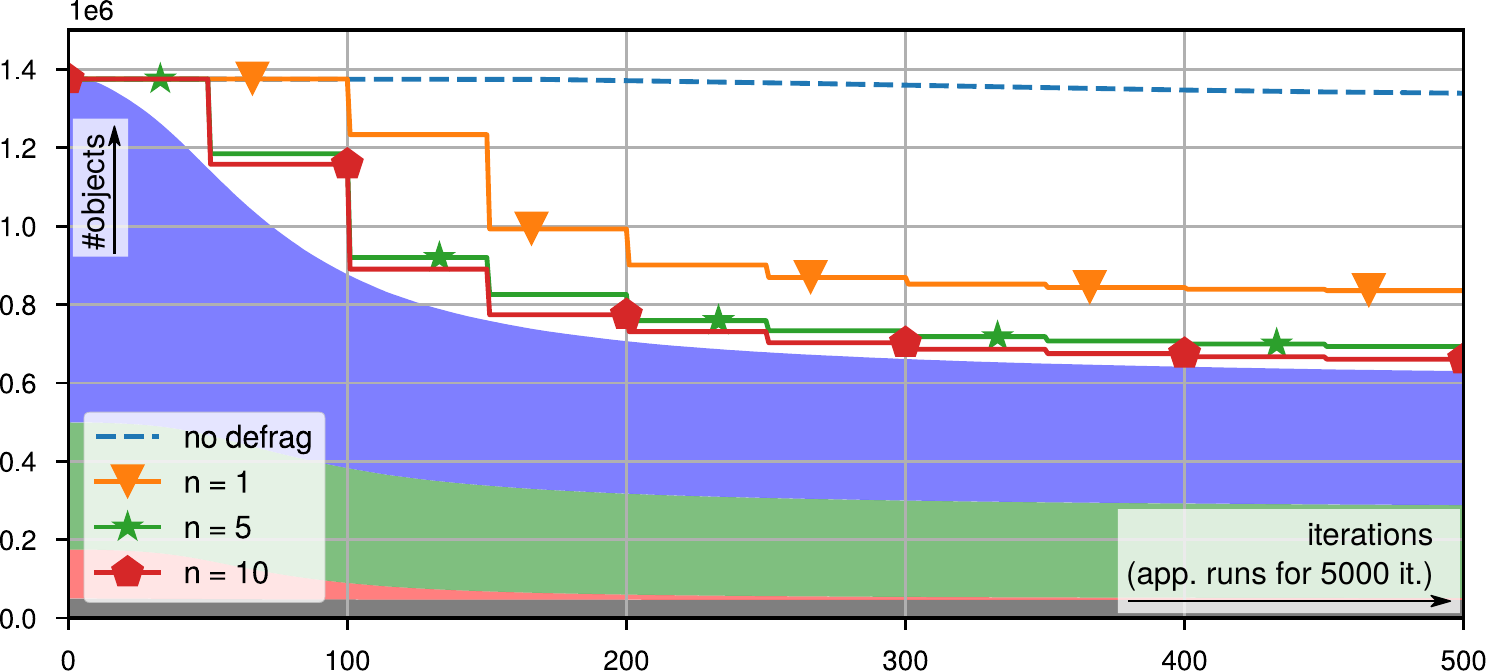}\hfill
  \includegraphics[width=0.48\columnwidth]{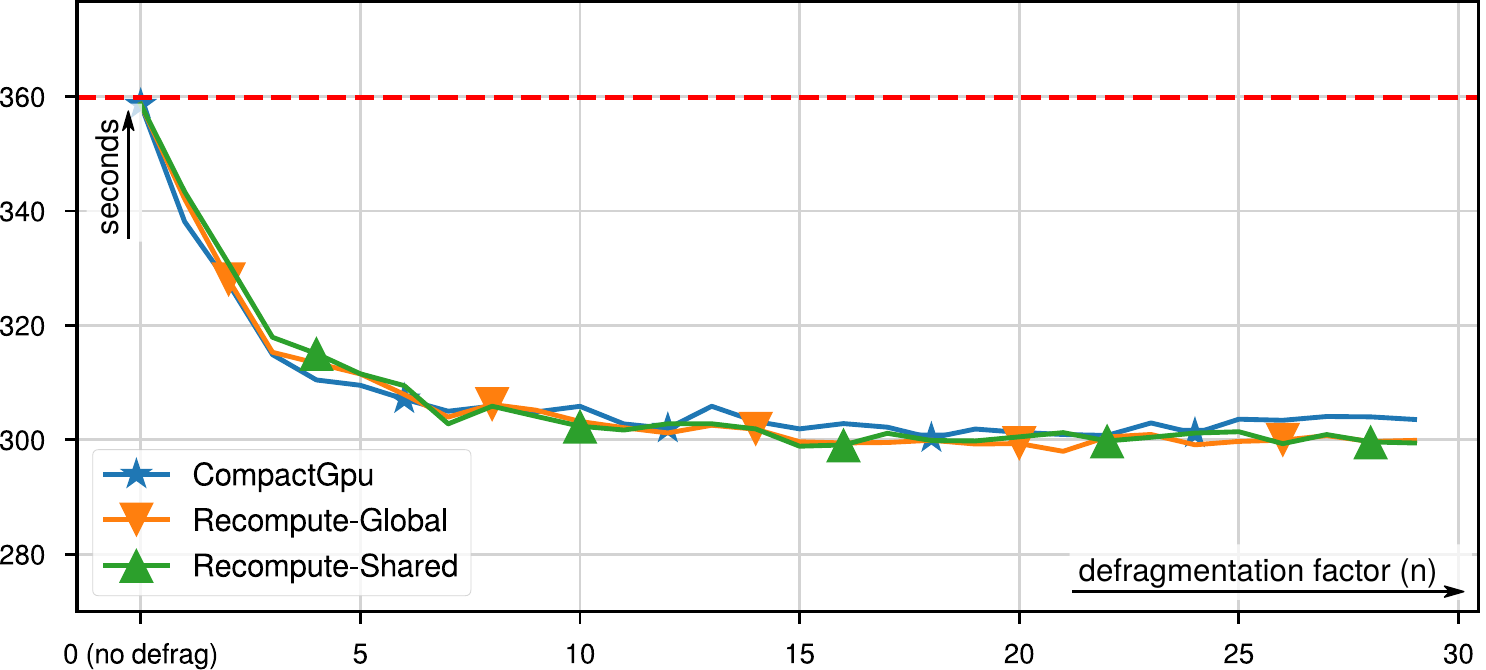}
  \centering
  \caption[\textsc{CompactGpu} benchmark: \textsf{structure}]{\textsf{structure}: Simulation of a fracture in a composite material (FEM)}
  \label{fig:structure}
\end{figure}

This is a simulation of a fracture in a composite material (Figure~\ref{fig:structure}), modeled as a mesh of finite elements. The simulation exerts a force on some elements and connections between two elements break if the force between them exceeds a certain threshold. A BFS pass identifies elements that are disconnected from the remaining simulation and deallocates them.

Similar to \textsf{collision}, this simulation exhibits only deallocations. However, this simulation has four classes. Even though many objects are already deallocated at the end of the simulation, most blocks are still allocated and overall memory consumption has barely decreased.

We initiated defragmentation every 50 iterations. Defragmentation achieved a peak speedup of 16.3\% for $n=18$.

\paragraph{\textsf{generation}}
\begin{figure}
  \includegraphics[width=0.48\columnwidth]{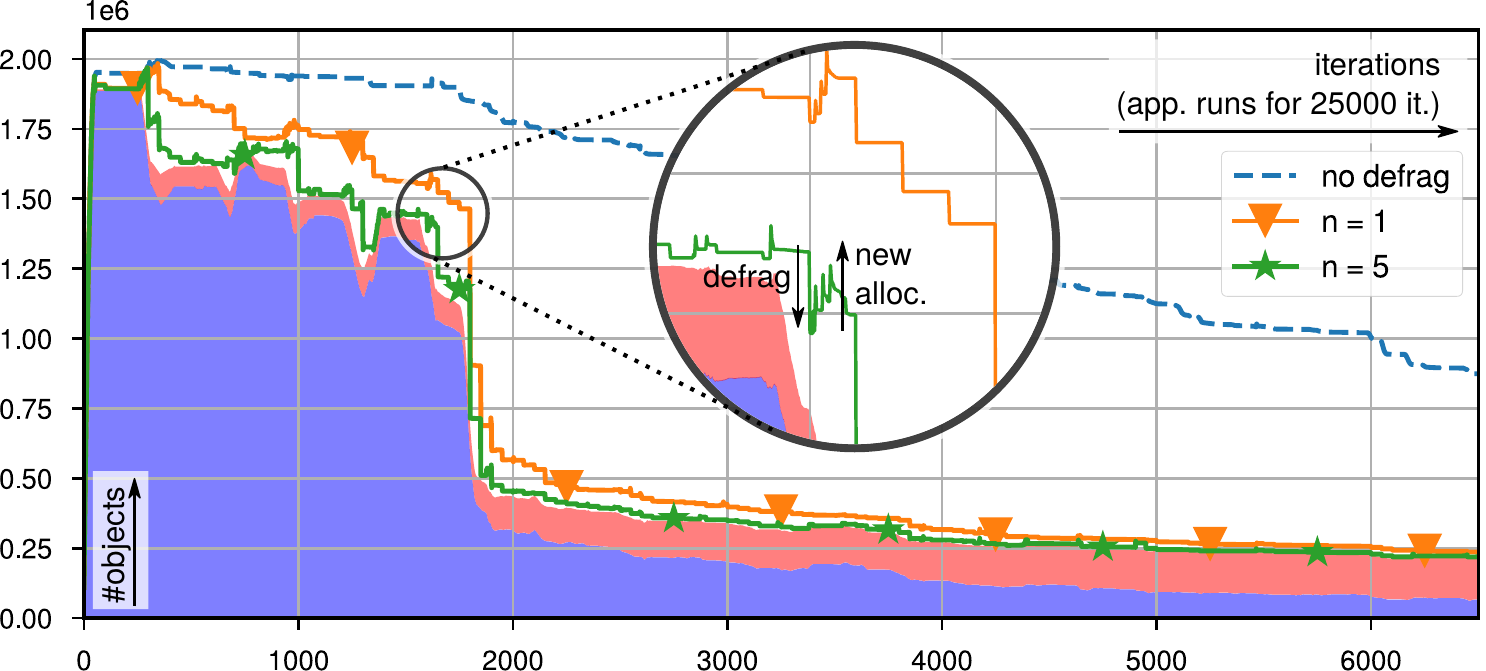}\hfill
  \includegraphics[width=0.48\columnwidth]{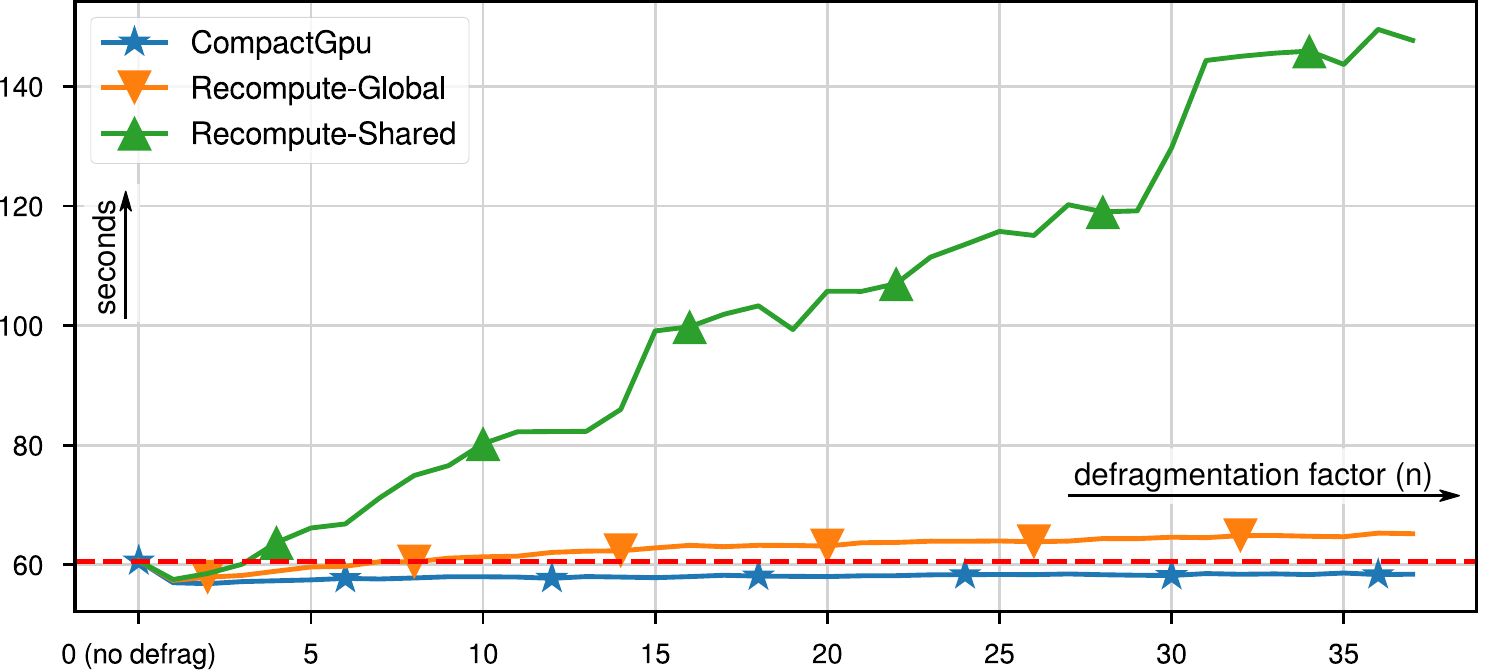}
  \centering
  \caption[\textsc{CompactGpu} benchmark: \textsf{generation}]{\textsf{generation}: Generational cellular automaton}
  \label{fig:genca}
\end{figure}

This is an adaptation of Game of Life (rule 0235678/3468/255, similar to ``Burst''~\cite{gol_burst}; Figure~\ref{fig:genca}). After a cell dies, it stays around for a few more iterations and blocks the cell. This implementation simulates only alive cells by allocating objects for alive cells and cells that may become alive in the next iteration.

This simulation contains both allocation and deallocation of objects. After iteration 2,000, most cells are dead and the simulation converges into a mostly static pattern.

We initiated defragmentation every 50 iterations for a speedup of 6.3\% ($n=2$). Higher defragmentation factors led to \emph{overfitting}: E.g., the line for $n=5$ follows the number of allocated objects very closely, even into small local minima. This does not give any additional performance benefit.

We chose $k_1 = 16$, i.e., 16 defragmentation candidates are excluded from defragmentation. It is important to retain a few fragmented (non-full) blocks because \textsc{DynaSOAr} first looks for active (non-full) blocks during allocations (\emph{fast path}) and has to initialize a new block if none were found (\emph{slow path}). Too small values of $k_1$ led to \emph{overcompaction}: Consider the enlarged part of the memory profile in Figure~\ref{fig:genca}. At first, defragmentation lowers the overall memory usage. However, allocations in the next few iterations immediately increase the fragmentation level again due to new block initializations, bringing it almost back to the initial fragmentation level. A higher value of $k_1$ would likely speed up allocations and increase the performance of the overall application a little bit.

\paragraph{\textsf{wa-tor}}
\begin{figure}
  \includegraphics[width=0.32\textwidth]{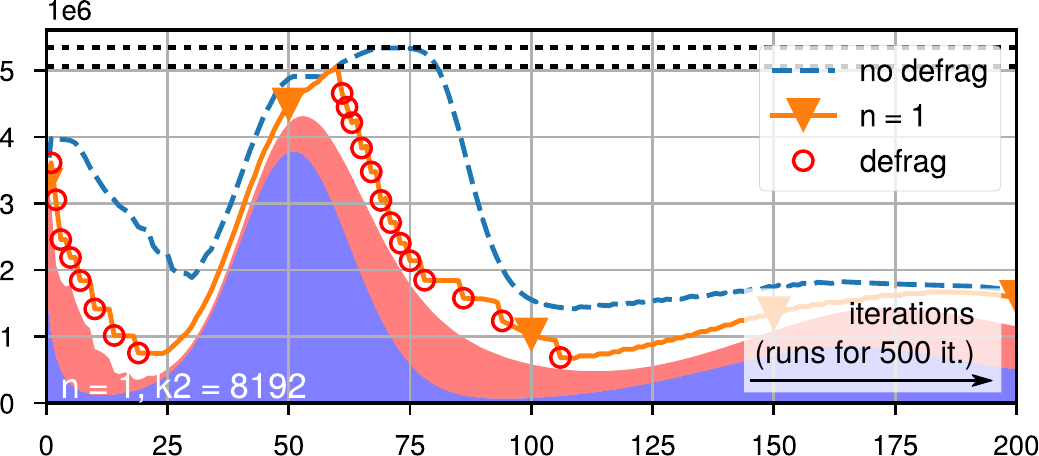}\hfill
  \includegraphics[width=0.32\textwidth]{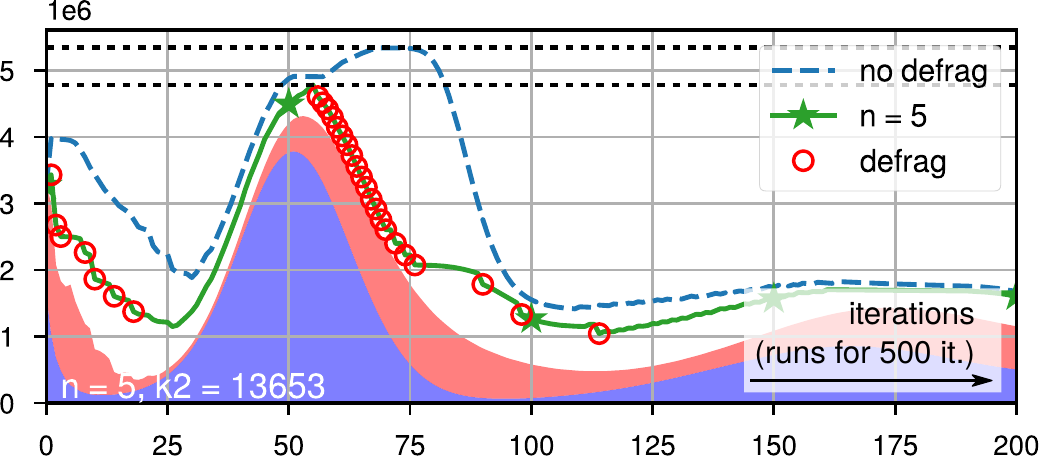}\hfill
  \includegraphics[width=0.32\textwidth]{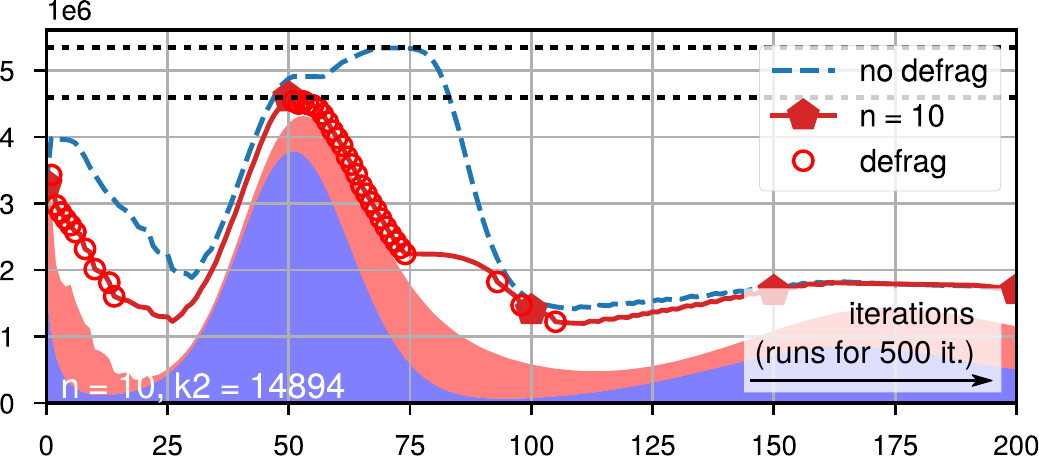}
  \centering
  \caption[\textsc{CompactGpu} benchmark: \textsf{wa-tor} (memory profile)]{\textsf{wa-tor}: An agent-based fish-and-sharks simulation}
  \label{fig:wator_mem}
\end{figure}

\begin{figure}
  \includegraphics[width=0.48\columnwidth]{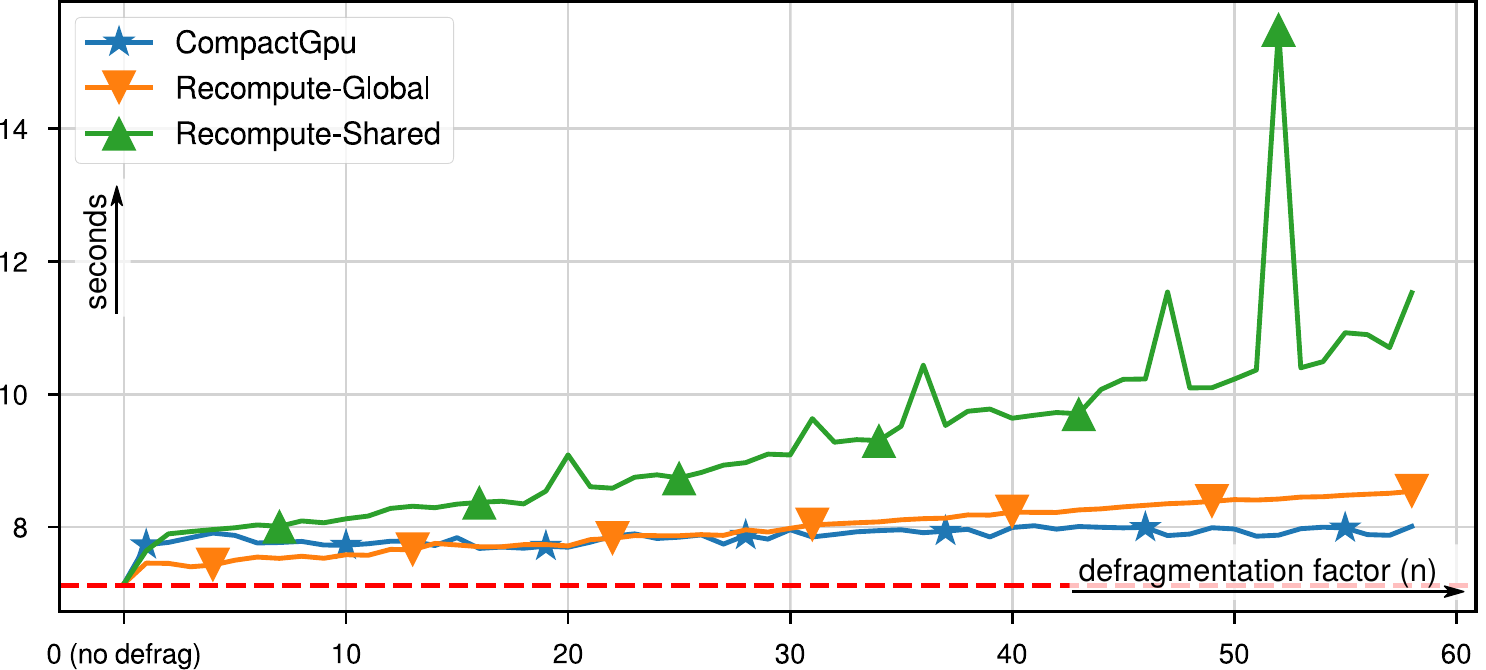}\hfill
  \includegraphics[width=0.48\columnwidth]{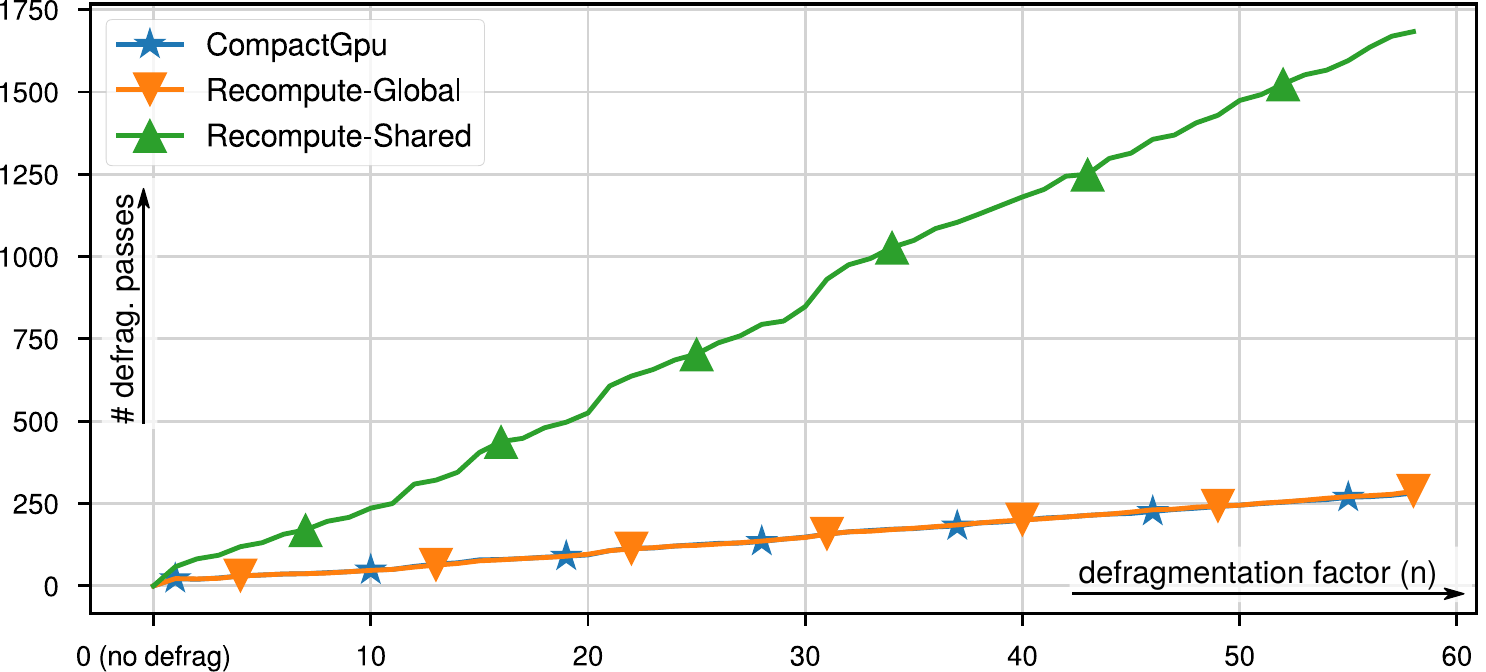}
  \caption[\textsc{CompactGpu} benchmark: \textsf{wa-tor} (running time)]{Running time and number of defragmentation passes for \textsf{wa-tor}}
  \label{fig:wator_slow}
\end{figure}

This is the fish-and-sharks running example. Fish and sharks appear in waves until an equilibrium is reached~\cite{10.2307/24969495}.

This simulation experiences a slowdown from defragmentation (Figure~\ref{fig:wator_slow}). At $n=10$, the slowdown is 8.1\%. This slowdown is \emph{not} due to defragmentation runtime overhead. The main reason is that \textsc{CompactGpu} is not order-preserving: Objects from a source block are scattered into multiple target blocks. This leads to less coalesced memory accesses when certain fields are accessed. Similar slowdowns have been reported on certain benchmarks in CPU systems~\cite{Abuaiadh:2004:EPH:1028976.1028995}. We will further investigate this effect in future work.

The benefit of defragmentation in \textsf{wa-tor} is a lower memory footprint. In Figure~\ref{fig:wator_mem}, the dotted lines indicate the maximum memory usage throughout the simulation. Circles indicate defragmentation runs. For $n=10$, the overall memory consumption is reduced by 14\%, so that programmers can run larger problem sizes on the same hardware.

We use the \emph{After Massive Deallocations} heuristic to initiate defragmentation. Initiating defragmentation every few iterations would incur a higher slowdown. To save memory, defragmentation is most important around iteration 50. At that time, many fish objects (red area) are deallocated and a large number of shark objects are allocated. New shark objects can reuse deallocated memory locations of fish objects as soon as the corresponding blocks are deallocated. Defragmentation eliminates many non-full fish blocks by compaction.

\paragraph{Pointer Rewriting Alternatives}
\textcolor{black}{\textsc{CompactGpu} is faster than \textsc{Recompute} variants in most cases. The high number of bitwise operations for recomputing a memory pointer (Algorithm~\ref{alg:rewrite_ptr_alternative}), combined with divergent execution (some pointer are rewritten, some are not) led to a high slowdown compared to our forwarding pointer method.}

\textcolor{black}{Furthermore, with increasing $n$, \textsc{Recompute-Shared} requires a larger number of defragmentation passes (Figure~\ref{fig:wator_slow}) because the shared memory is very small, limiting the number of defragmentation candidates per pass.}

\textcolor{black}{For \textsf{collision} and \textsf{structure}, this slowdown is negible because very little time is spent on defragmentation overall, but we can see clear difference for \textsf{generation} and \textsf{wa-tor}.}

\subsection{Runtime Overhead}
\begin{table*}
\setlength\tabcolsep{3pt}
\footnotesize \narrowstyle 
    \begin{tabularx}{\textwidth}{| X | r | r | r | r | r || r | r | r | r | r |}
        \hline \hline
        \footnotesize \narrowstyle  \textbf{Benchmark} & \begin{tabular}{@{}c@{}} \footnotesize \narrowstyle  \textbf{Alloc.} \\ \footnotesize \narrowstyle  \textbf{Size}\end{tabular} & \begin{tabular}{@{}c@{}} \footnotesize \narrowstyle  \textbf{\#Rewr.} \\ \footnotesize \narrowstyle  \textbf{Fields}\end{tabular} & \emph{\textbf{n}} & \textbf{\#Defrag} & \textbf{\#Passes} & \begin{tabular}{@{}c@{}}\textbf{Total} \\ \textbf{Runtime}\end{tabular} & \textbf{Defrag} & \textbf{Scan} & \textbf{Copy} & \textbf{Rewrite}     \\ \hline
        Synthetic (60\% frag.) & 2,097.2 MB & 1 & 3 & 1 & 18 & n/a & 44.4 & 4.0 & 6.7 & 33.3 \\ \hline
        \textsf{collision}  & 5.7 MB & 1 & 10 & 200 & 186 & 3,698,945 & 36 & 17 & 7 & 8 \\ \hline
        \textsf{generation} & 57.4 MB & 1 & 2 & 500 & 537 & 56,830 & 191 & 80 & 17 & 85 \\ \hline
        \textsf{structure}   & 58.9 MB & 3 & 10 & 100 & 368 & 305,846 & 140 & 54 & 16 & 65 \\ \hline
        \textsf{wa-tor}       & 1,107.6 MB & 1 & 9 & 38 & 43 & 7,729 & 49 & 7 & 14 & 20 \\ \hline \hline
    \end{tabularx}
    \normalstyle
    \caption[\textsc{CompactGpu} benchmark characteristics and running time]{Benchmark characteristics and running time (right side; milliseconds) for selected defragmentation factors}
    \vspace{-0.35cm}
    \label{fig:overheads}
    \setlength\tabcolsep{6pt} \normalstyle
\end{table*}
To evaluate the efficiency of our implementation, we measured the runtime overhead of \textsc{CompactGpu}. There are two kind of overheads.

First, \textsc{CompactGpu} extends (de)allocation procedures of \textsc{DynaSOAr} to maintain \emph{defrag[T]} bitmaps (Section~\ref{sec:compact_gpu_extend_defrag}). To measure this overhead, we compare the running time without defragmentation (red dashed line) and \emph{no defrag} values in the running time graphs. In \emph{no defrag}, we maintain a defragmentation candidate bitmap but never initiate defragmentation. There is almost no measurable overhead for maintaining these bitmaps.

Second, \textsc{CompactGpu} has three potentially expensive steps: (a) Generating/ compacting an indices array $R$ from a defragmentation candidate bitmap (\emph{scan}), (b) copying objects and placing forwarding pointers (\emph{copy}) and (c) scanning the heap and rewriting pointers (\emph{rewrite}). In Table~\ref{fig:overheads}, we show the time spent in each step, as well as the overall time spent on defragmentation (\emph{defrag}), which includes additional overheads such as block state updates. If \emph{\#Passes $<$ \#Defrag}, the programmer initiated defragmentation but there were not enough defragmentation candidates to start a defragmentation pass.

In every benchmark, defragmentation takes only a very small fraction of the overall application running time. \textsf{wa-tor} has the largest overhead: The application spends 0.6\% of its running time in defragmentation.

\paragraph{Synthetic Benchmark}
The synthetic benchmark isolates the runtime overhead for one defragmentation. We added a second class to the benchmark and made objects of both classes point to each other randomly. There are initially 32,768,000 objects of each class (object size 32~bytes). The benchmark deletes 60\% of the objects of one class and then initiates defragmentation.

The performance of the \emph{scan} phases mainly depends on the efficiency of the prefix sum operations (CUB library) and can thus not be further optimized.

The \emph{copy} phases copy (read+write) 282.1~MB of object data and write 70.5~MB of forwarding pointers in 6.7 milliseconds (94.7~GB/s). This is 17.3\% of the global memory bandwidth of our TITAN Xp GPU.

The \emph{rewrite} step is most time consuming: \textsc{CompactGpu} has to check 32,768,000 pointers (262.1~MB) per pass (18 $\times$ 262.1~MB = 4,717.2~MB in total). Out of these pointers, \textsc{CompactGpu} rewrites only a small part (read+write 70.3~MB of forwarding pointers) because many objects were deleted. \textsc{CompactGpu} finishes the rewrite step in 33.3 milliseconds, resulting in a memory transfer rate of 145.9~GB/s (not taking into account other memory accesses). This is 26.6\% of the global memory bandwidth of our TITAN Xp GPU. The Nvidia Profiler shows that 32\% of all global memory accesses in this step hit the L1 cache (64~KB) and 63\% hit the L2 cache (3,072~KB), indicating that defragmentation candidate bitmaps are largely cached.

\textsc{CompactGpu} achieves a high performance because most memory reads/writes have good coalescing. Overall, our benchmark results show that \textsc{CompactGpu} is highly optimized with little room for improvement.

\section{Related Work}
\label{sec:related}
A vast number of memory defragmentation systems have been developed for CPU systems in the past. A main difference on GPU architectures is that it is easier to decide where to relocate objects to, because there are only a small number of object sizes. This pattern is reflected in the design of many GPU dynamic memory allocators: Many allocators maintain containers for objects of the same size~\cite{hallocweb,Gelado:2019:TGM:3293883.3295727}. On CPU systems, there are typically many different allocation sizes.

The only existing GPU memory defragmentation system was developed by Veldema and Philippsen~\cite{Veldema:2012:PMD:2247684.2247693}. Their work consists of an allocator and a defragmentation system\footnote{The source code of this system is not available, so we could not compare it with \textsc{CompactGpu}.}. To compact the memory, their defragmentation system selects 10\% of all memory regions that are less than 75\% full as source regions. They use their memory allocator to allocate a target location in another region. This is problematic because allocation is expensive and requires some sort of synchronization between threads. Runtime overheads of their defragmentation system range from 0.5\% to 33\%, higher than the overhead of \textsc{CompactGpu}.

Veldema and Philippsen also propose a technique for limiting the search space during pointer rewriting based on additional data collected by a garbage collector. This technique could be used in \textsc{CompactGpu} instead of relying on class structure metainformation of \textsc{DynaSOAr}.

To the best of our knowledge, there are no other defragmentation systems for GPUs. We believe that this is because of limited support for dynamic memory allocation. The default CUDA dynamic memory allocator is known to be slow and unreliable~\cite{6339604}, so most programmers avoid dynamic memory management entirely. It is still a common practice to allocate a large chunk of memory statically and manage it manually. Out of the few custom memory allocators that exist, many (e.g., ScatterAlloc~\cite{6339604}, Halloc~\cite{hallocweb}) use a hashing approach to scatter allocations in the heap almost randomly, in order to avoid collisions among allocating threads. Not only do they miss important opportunities for vectorization (e.g., SOA layout), but they are also known to incur the negative effects of high fragmentation~\cite{hallocweb}.


Many efficient CPU memory defragmentation systems divide the heap into two areas: Objects are copied from a \emph{from-space} to a \emph{to-space}~\cite{Kermany:2006:CCI:1133981.1134023, Lang:1987:IIC:29650.29677}. Both spaces are swapped before every defragmentation pass. In such an approach, only half of the memory space is usable by the allocator. This is acceptable on virtual memory architectures because the virtual memory space is much larger than the physical memory space. Current GPU architectures do not have virtual memory and even the amount of physical memory is much smaller than on CPU systems. Cutting the available memory by half would be unacceptable on GPUs.

\paragraph{Pointer Rewriting without Forwarding Pointers}
Some memory defragmentation systems use data structures other than forwarding pointers~\cite{Abuaiadh:2004:EPH:1028976.1028995, Kermany:2006:CCI:1133981.1134023}. For example, the \emph{Compressor} uses a markbit vector to recompute forwarding pointers on-the-fly during pointer rewriting~\cite{Kermany:2006:CCI:1133981.1134023}. A markbit vector is a bit vector where bits for the first and last heap word of an allocated object are set. Since forwarding pointers are not read from memory, only two accesses (read pointer, replace with new pointer) are required to rewrite a pointer, assuming the markbit vector is cached. We experimented with similar techniques (\textsc{Recompute-Global}, \textsc{Recompute-Shared}; Section~\ref{sec:alternative_designs_sec}), but they did not lead to a performance improvement.




\section{Conclusion}
\label{sec:conclusion_compactgpu}
We presented \textsc{CompactGpu}, a memory defragmentation system for GPUs. \textsc{CompactGpu} is able to (a) speed up applications through better cache utilization and vector load/store efficiency on allocated memory and (b) lower the overall memory consumption of an application.

\textsc{CompactGpu} achieves low runtime overheads through careful SIMD-friendly design considerations and implementation efforts: \textsc{CompactGpu} utilizes bitmaps to select source/ target blocks and to quickly decide if a pointer must be rewritten. Furthermore, \textsc{CompactGpu} exhibits mostly regular control flow, accesses memory in coalescing-friendly patterns and requires no synchronization between threads.

Our main takeaways are that (a) memory defragmentation on GPUs is feasible and able to deliver speedups, (b) too much defragmentation does not pay off (due to overfitting and overcompaction) and can even be detrimental to performance due to less efficient allocations, and (c) careful design considerations are necessary to achieve good performance on GPUs; many good CPU designs, such as recomputing forwarding pointers on-the-fly, are not efficient on GPUs.

\chapter{SMMO Examples}
\label{chap:smmo_examples}
In this chapter, we present examples of Single-Method Multiple-Objects (SMMO) applications. We used these applications to evaluate \textsc{DynaSOAr} and \textsc{CompactGpu}. We show the data structure of each application and highlight their SMMO structure, i.e., which parallel do-all operations they consist of.

\setcounter{minitocdepth}{1}
\minitoc

\paragraph{Benefits of Object-oriented Programming}
To analyze the performance of \textsc{DynaSOAr}, we implemented most SMMO applications with and without dynamic memory allocation (\emph{baseline} versions; Section~\ref{sec:benchmark}). SOA baselines store objects in a hand-written SOA data layout (Listing~\ref{lbl:soa_layput}). We no longer consider such a layout as \emph{object-oriented} because C++ abstractions for object-oriented programming cannot be used. We noticed that missing OOP abstractions made the development of the SOA baseline versions more tedious than their counterpart versions that utilize the \textsc{DynaSOAr} data layout DSL and dynamic memory management, in particular:

\begin{figure}
\centering
\includegraphics[scale=0.75]{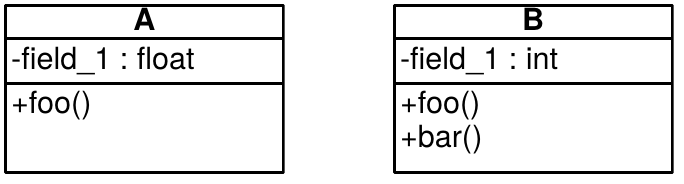}
\caption{Example: Dummy classes}

\vspace{10pt}

\begin{lstlisting}[language=c++, numbers=none, caption={Example: Dummy classes with hand-written SOA layout (no OOP)}, label={lst:dummy_soa_ex}]
float A_field_1[10000];
int B_field_1[10000];

void A_foo(int id) { printf("A: %f\n", A_field_1[id]); }
void B_foo(int id) { printf("B: %d\n", B_field_1[id]); }
void B_bar(int id) { printf("B_bar: %d\n", B_field_1[id]); }

int get_A() { /* Return ID of an object of type A */ }

int main() {
  A_foo(get_A());   // OK
  B_foo(get_A());   // Typo: B_foo instead of A_foo. But code compiles and runs.
  B_bar(get_A());   // Calling opeation of B on A. But code compiles and runs.
}
\end{lstlisting}

\begin{lstlisting}[language=c++, numbers=none, caption={Example: Dummy classes in AOS layout (with OOP)}, label={lst:dummy_aos_ex}]
class A {
 public:
  float field_1;
  void foo() { printf("A: %f\n", field_1); }
};
A objects_A[10000];

class B {
 public:
  int field_1;
  void foo() { printf("A: %d\n", field_1); }
  void bar() { printf("B_bar: %d\n", field_1[id]); } 
};
B objects_B[10000];

A* get_A() { /* Return pointer to an object of type A */ }

int main() {
  get_A()->foo();   // OK
  get_A()->foo();   // It is impossible to make the same mistake as in Listing <@\ref{lst:dummy_soa_ex}@>.
  get_A()->bar();   // Compile error
}
\end{lstlisting}
\end{figure}

\begin{description}
\item[No Type Safety] With respect to typing, the implementation of SOA baselines felt like programming in an untyped/dynamically-typed language because all object references are of type \emph{integer}. This has two disadvantages. First, many programming errors are not caught by the type checker at compile time, which slowed down the development process. To make matters worse, some programming errors do not even cause the program to crash at runtime, but simply produce a wrong result (Listing~\ref{lst:dummy_soa_ex} and~\ref{lst:dummy_aos_ex}). Second, types are also a form of code documentation and their absence makes code harder to read and understand~\cite{Endrikat:2014:ADS:2568225.2568299}.
\item[No Dynamic Memory Allocation] Many applications are more difficult to implement without dynamic allocation. Category~2 applications require an additional boolean field to keep track of active/allocated objects (Section~\ref{sec:benchmark}). Category~3 applications required changes to the data structures within the application, which breaks abstractions.
\item[Notation of Chained Field Accesses] Some applications exhibit a pattern of chained field accesses in their source code, e.g.: \texttt{obj->f1->f2}. The order in which field name tokens appear in hand-written SOA code is inversed and counter-intuitive: \texttt{B\_f2[A\_f1[obj]]}. Moreover, this notation requires programmers to repeat the type/class of objects/entities (\texttt{A} and \texttt{B} in this example).
\end{description}

In this chapter, we present the design and implementation of various SMMO applications with \textsc{DynaSOAr}. These implementations are object-oriented and do not suffer from the above mentioned shortcomings. To further highlight the benefits of object-oriented programming, we discuss interesting design and implementation choices of SOA baseline versions for certain SMMO applications.

\paragraph{Publications}
This chapter is in part based on the following papers.
\begin{itemize}
  \item Matthias Springer, Hidehiko Masuhara. \textbf{``DynaSOAr: A Parallel Memory Allocator for Object-oriented Programming on GPUs with Efficient Memory Access (Artifact).''} In: \emph{Dagstuhl Artifacts Series.} Vol. 5, Iss. 2, Art. 2. Leibniz-Zentrum f{\"u}r Informatik, Dagstuhl Publishing, 2019. \texttt{\doi{10.4230/DARTS.5.2.2}}.
\end{itemize}

\section{\textsf{nbody}: N-body Simulation}
\label{sec:smmo_nbody_sec71}
\textsf{nbody} is a 2D particle system simulation. Such simulations are used by astronomers to simulate the collision of galaxies or the formation of planets~\cite{ALEXANDER1998113}. This example is a very simple SMMO application with only one class. The entire application consists of only around 100 lines of code.

\textsf{nbody} simulates a large number of bodies. Each body has a position, velocity and mass. According to Newton's theory of gravity, two bodies with a masses $m_1$ and $m_2$ and a distance of $r$ pull each other closer with a gravitational force $F$.

\begin{align*}
F = G \frac{m_1 m_2}{r^2} \tag{\emph{gravitational force}}
\end{align*}

The goal of \textsf{nbody} is to calculate the future position and velocity of all bodies. \textsf{nbody} is an iterative algorithm: Each iteration advances the simulation by a small time step $\Delta t$. We assume that the gravitational forces remain constant during an iteration, so we can compute each body's new velocity and position as follows.

\begin{align*}
v \gets v + \frac{F}{m} \cdot \Delta t \tag{\emph{simulation time step}} \\
x \gets x + v \cdot \Delta t
\end{align*}

\begin{figure}
  \subfloat[Data structure]{\includegraphics[scale=0.75]{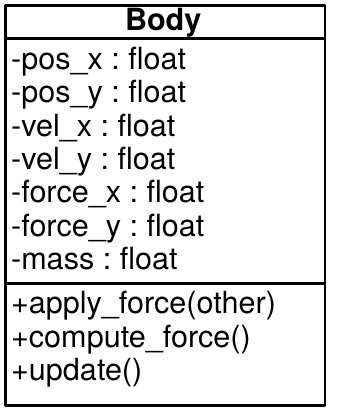}}\qquad
  \subfloat[Screenshot]{\includegraphics[scale=0.25]{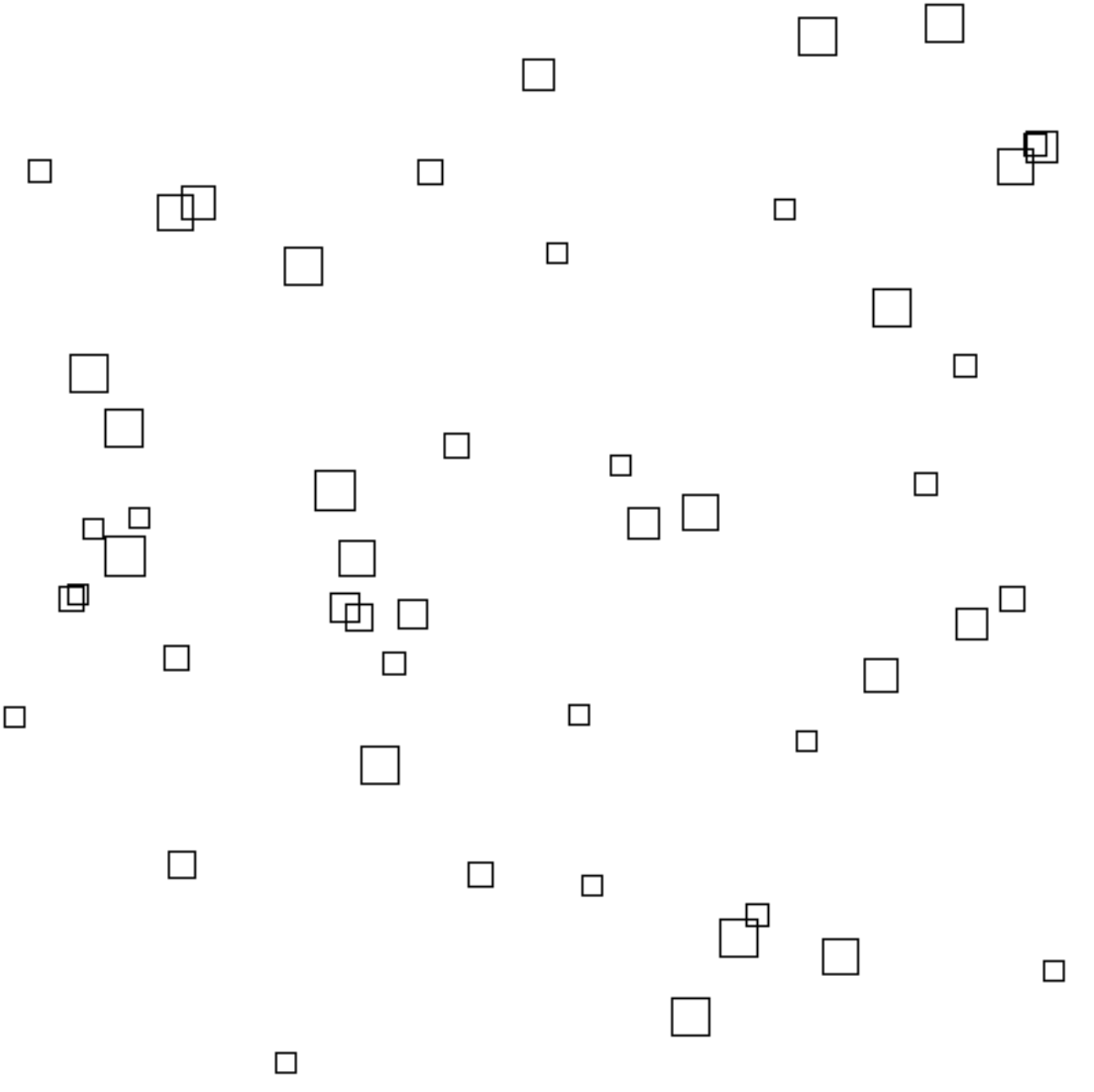}}
  \centering
  \caption[\textsf{nbody}: Data structure and screenshot]{Data structure and screenshot of \textsf{nbody}. Bodies pull each other closer with graviational force. Apart from the gravitational force, there is no interaction between bodies, making this example the simplest one in this chapter.}
  \label{fig:data_s_nbody}
\vspace{10pt}

\begin{lstlisting}[language=c++, caption={[\textsf{nbody}: Data structure]Data structure of \textsf{nbody}}, label={lst:header_nbody}, morekeywords={__device__}]
#include "dynasoar.h"

class Body;
using AllocatorT = SoaAllocator<kNumObjects, Body>;

class Body : public AllocatorT::Base {
 public:
  declare_field_types(
      Body,
      float,          // pos_x_
      float,          // pos_y_
      float,          // vel_x_
      float,          // vel_y_
      float,          // force_x_
      float,          // force_y_
      float)          // mass_

 private:
  Field<Body, 0> pos_x_;
  Field<Body, 1> pos_y_;
  Field<Body, 2> vel_x_;
  Field<Body, 3> vel_y_;
  Field<Body, 4> force_x_;
  Field<Body, 5> force_y_;
  Field<Body, 6> mass_;

 public:
  __device__ Body(float pos_x, float pos_y, float vel_x, float vel_y, float mass);

  // Constructor for parallel_new.
  __device__ Body(int index);

  __device__ void apply_force(Body* other);

  __device__ void compute_force();

  __device__ void update();
};
\end{lstlisting}
\end{figure}

\subsection{Data Structure}
Every body is an object of class \texttt{Body} (Figure~\ref{fig:data_s_nbody}, Listing~\ref{lst:header_nbody}). This class has fields for position, velocity, mass, as well as helper variables for accumulating the force that acts on the body.

\subsection{Application Implementation}
This application consists of two parallel do-all operations and one parallel new operation (Listing~\ref{lst:code_nbody}). All member functions of \texttt{Body} are device functions (annotated with CUDA keyword \texttt{\_\_device\_\_}) because they are executed on the GPU.

\begin{enumerate}
  \item \textbf{Initialization:} Create a few random \texttt{Body} objects. Parallel new: \texttt{Body}.
  \item \textbf{Iterative Algorithm:} Each iteration consists of the following steps.
  \begin{enumerate}
    \item \textbf{Computing Forces:} Compute forces between all pairs of bodies with a nested (sequential) do-all iteration. Parallel do-all: \texttt{Body::compute\_force}.
    \item \textbf{Acceleration and Movement:} Compute a new velocity and position for each body. Parallel do-all: \texttt{Body::update}.
  \end{enumerate}
\end{enumerate}

In the following paragraphs, we describe each step of this simulation in more detail.

\paragraph{Allocator Initialization}
Before the actual application code begins, we have to create an allocator. We first create a new \emph{allocator handle} which can be used from host (CPU) code. This internally allocates a large chunk of global GPU memory which contains the heap of our allocator. We then store a device allocator pointer in variable \texttt{device\_allocator}. This variable is a device variable, so it is only visible from GPU code. We use this pointer to interact with \textsc{DynaSOAr} from GPU code. Every \textsc{DynaSOAr} program initializes the memory allocator in this way, so we will skip over this part in the remaining examples.

\paragraph{Step~1: Simulation Initialization}
At the beginning, we initialize the simulation with 65,536 random \texttt{Body} objects. The parallel new operation invokes the second constructor of \texttt{Body} in parallel, passing an index value between 0 and 65,535 as argument. This constructor initializes the position, velocity and mass of the \texttt{Body} object with random values using the cuRAND library.

\paragraph{Step~2: Iterative Algorithm}
The simulation invokes two parallel do-all operations iteratively. The first method \texttt{compute\_force()} computes the total force that is acting on a body and stores it in the fields \texttt{force\_x\_} and \texttt{force\_y\_}. This requires a nested \emph{for} loop because we have to compute and sum the forces between all pairs of bodies. In \textsc{DynaSOAr}, we express such a nested \emph{for} loop with \texttt{device\_do}, which is similar to \texttt{parallel\_do} but runs sequentially.

Listing~\ref{lst:nbody_nested_loops} illustrates this nested loop structure. The outer loop is a CUDA kernel and runs in parallel. The inner loop runs sequential. While the implementation of \texttt{b2->apply\_force(b1)} could conceptually sum the force in either \texttt{b1} or \texttt{b2} (the same force acts in both directions), it has to modify \texttt{b1} to avoid race conditions. If \texttt{apply\_force} were to modify \texttt{b2}, then there would be multiple GPU threads modifying \texttt{b2} concurrently, because the outer loop runs in parallel. To avoid race conditions, this would require an atomic add operation. Instead, \texttt{apply\_force} modifies the force of \texttt{b1}, which is now only accessed by one GPU thread.

\begin{lstfloat}
\begin{lstlisting}[language=c++, caption={[\textsf{nbody}: Conceptual nested loop structure]Nested loop structure (conceptually)}, label={lst:nbody_nested_loops}]
for (Body* b1 : AllocatorT::all_objects<Body>) {    // parallel_do
  for (Body* b2 : AllocatorT::all_objects<Body>) {  // device_do
    // Compute gravitational force between b1 and b2. Accumulate the total
    // gravitational force (of all bodies) in force_x_ and force_y_ values.
    // But should apply_force modify b1 or b2?
    b2->apply_force(b1);  // Or: b1->apply_force(b2)
  }
}
\end{lstlisting}
\end{lstfloat}

The second method \texttt{update()} computes a new velocity and position based on the force value that was computed by the first method. If a body goes out of bounds of the simulation space, we invert its velocity. This only for visualization reasons; a real n-body simulation would not do this.

Both methods run in separate parallel do-all operations (and thus separate CUDA kernels) to ensure that the simulation is free of race conditions and to ensure that the application produces the same result on every run, assuming the random number generator is initialized with the same seed. If both methods were to run in one CUDA kernel, the implemention of \texttt{apply\_force} may read an updated or not-yet updated position of another body, depending on the scheduling of threads.

\begin{lstfloat}
\begin{lstlisting}[language=c++, caption={[\textsf{nbody}: Application logic]Application logic of \textsf{nbody}}, label={lst:code_nbody}, morekeywords={__device__}]
// Allocator handles.
AllocatorHandle<AllocatorT>* allocator_handle;
__device__ AllocatorT* device_allocator;

__device__ Body::Body(float pos_x, float pos_y,
                      float vel_x, float vel_y, float mass)
    : pos_x_(pos_x), pos_y_(pos_y), vel_x_(vel_x), vel_y_(vel_y), mass_(mass) {}

__device__ Body::Body(int index) {
  curandState rand_state;
  curand_init(kSeed, index, 0, &rand_state);
  // Initialize with random float between -1 and 1.
  pos_x_ = 2 * curand_uniform(&rand_state) - 1;
  /* Similarly for the pos_y_, vel_x_, vel_y_, mass_... */
}

__device__ void Body::apply_force(Body* other) {
  if (other != this) {
    // To avoid race conditions: Update other instead of this.
    float dx = pos_x_ - other->pos_x_;
    float dy = pos_y_ - other->pos_y_;
    float dist = sqrt(dx*dx + dy*dy);
    float F = kGravityConstant * mass_ * other->mass_ / (dist * dist);
    other->force_x_ += F*dx / dist;
    other->force_y_ += F*dy / dist;
  }
}

__device__ void Body::compute_force() {
  force_x_ = force_y_ = 0.0f;
  device_allocator->device_do<Body>(&Body::apply_force, this);
}

__device__ void Body::update() {
  vel_x_ += force_x_ * kDt / mass_;
  vel_y_ += force_y_ * kDt / mass_;
  pos_x_ += vel_x_ * kDt;
  pos_y_ += vel_y_ * kDt;

  // Bounce off the walls.
  if (pos_x_ < -1 || pos_x_ > 1) { vel_x_ = -vel_x_; }
  if (pos_y_ < -1 || pos_y_ > 1) { vel_y_ = -vel_y_; }
}

int main() {
  // Create new allocator.
  allocator_handle = new AllocatorHandle<AllocatorT>();
  AllocatorT* dev_ptr = allocator_handle->device_pointer();
  cudaMemcpyToSymbol(device_allocator, &dev_ptr, sizeof(AllocatorT*), 0,
                     cudaMemcpyHostToDevice);

  // Create 65536 new Body objects.
  allocator_handle->parallel_new<Body>(65536);

  // Simulation loop.
  for (int i = 0; i < kNumIterations; ++i) {
    allocator_handle->parallel_do<Body, &Body::compute_force>();
    allocator_handle->parallel_do<Body, &Body::update> 
  }
}
\end{lstlisting}
\end{lstfloat}

\subsection{Further Optimizations}
Related work describes three additional techniques to further optimize this n-body simulation~\cite{Nguyen:2007:GG:1407436}. To keep this benchmark simple, these optimizations are not implemented in our n-body simulation.

\begin{itemize}
  \item \textbf{Shared Memory:} Since we compute forces between all pairs of bodies, every CUDA thread reads the position and mass fields of all bodies. To reduce the amount of data read from global memory, we can read those fields only once per CUDA block and store the values in shared memory.
  \item \textbf{Nested Parallelism:} Instead of parallelizing only the outer \emph{for} loop, also partly parallelize the inner \emph{for} loop. This technique is particular useful for improving occupancy if the number of bodies is small.
  \item \textbf{Loop Unrolling:} Especially on older GPU architectures, unrolling the inner \emph{for} loop can improve instruction-level parallelism (ILP) and thus better utilize the GPU's resources (e.g., through \emph{dual-issue}, Section~\ref{sec:backg_parall_exec}).
\end{itemize}

To be able to implement these optimizations, we have to extend \textsc{DynaSOAr}'s API in future work. In particular, there is currently no way of unrolling \texttt{device\_do} iterations. This could be simplified if range-based \emph{for} loops could be used in lieu of \texttt{device\_do} iterations (Section~\ref{sec:cpp_range_baed_for}). Moreover, the current API makes it difficult to partly parallelize object enumeration. In essence, what we require is a mixture of \texttt{parallel\_do} and \texttt{device\_do}. Such an API and its implementation could be inspired by virtual warp-centric programming~\cite{Hong:2011:ACG:1941553.1941590}.

\section{\textsf{collision}: N-Body Simulation with Collisions}
\label{sec:nody_with_coll}
We now extend the n-body simulation of Section~\ref{sec:smmo_nbody_sec71} with collisions. Two bodies are merged into one large body if their distance is below a certain threshold. This example is interesting because it exhibits more complex object interactions and stores a pointer to another object in a field, as opposed to only primitively-typed values in the n-body simulation of Section~\ref{sec:smmo_nbody_sec71}.

\begin{figure}
  \centering
  \subfloat[Data structure]{\includegraphics[scale=0.75]{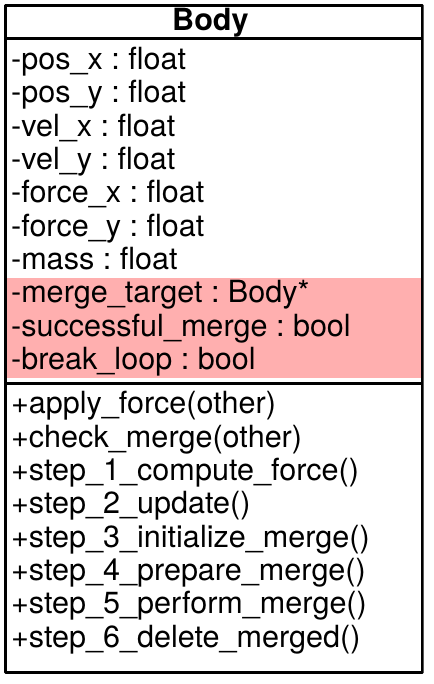}} \qquad
  \subfloat[Screenshot]{\includegraphics[scale=0.35]{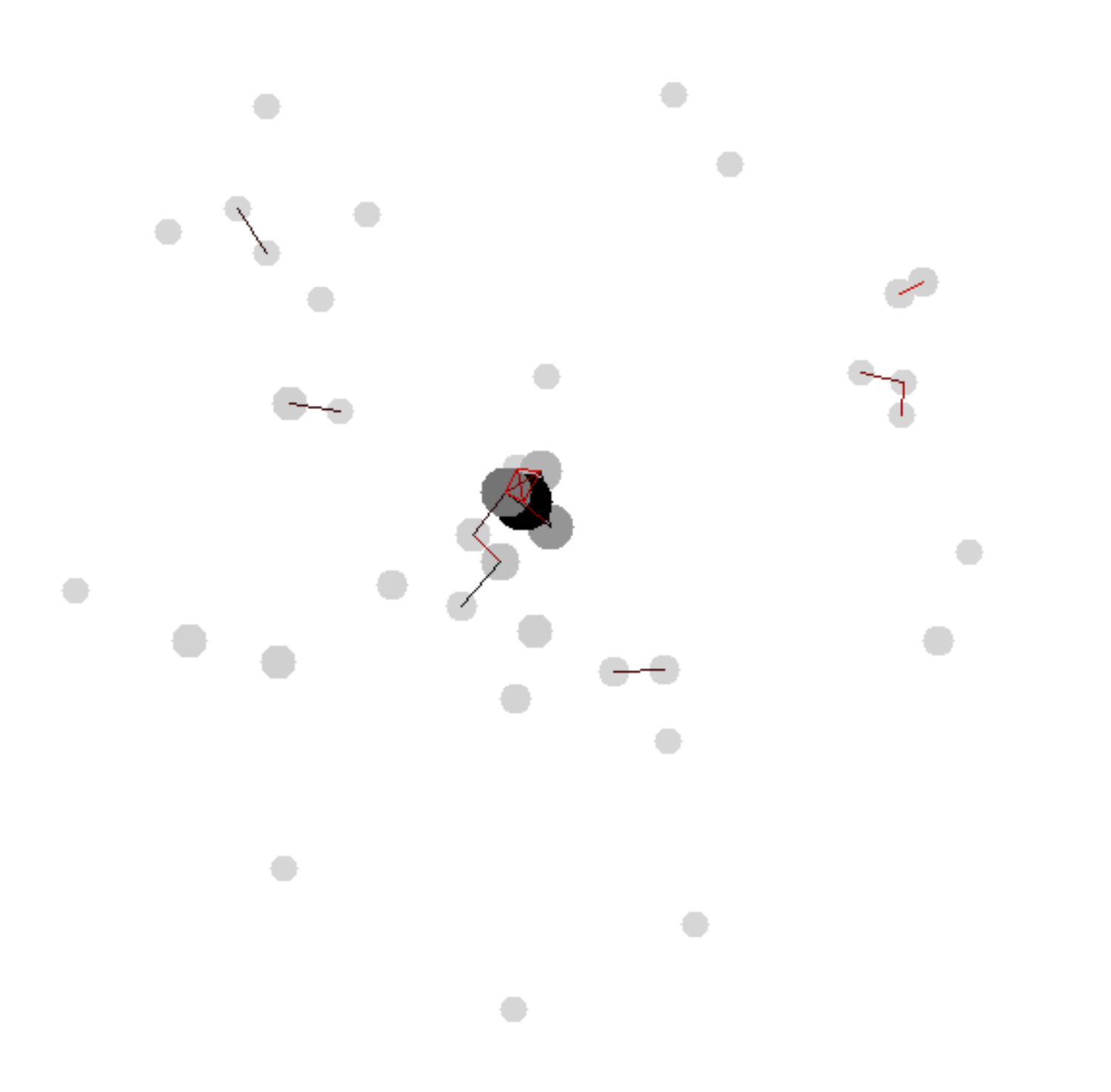}}
    \caption[\textsf{collision}: Data structure and screenshot]{Data structure and screenshot of \textsf{collision}, an extension of \textsf{nbody}. In the screenshot, the darkness of a body indicates its mass. When bodies are getting close, we connect them with lines. A red line indicates that two bodies are almost within merging distance.} \label{fig:data_s_collision}
\end{figure}

This simulation is an extension of the previous n-body simulation. It computes forces between bodies and accelerates/moves bodies in the same way. However, two bodies are merged according to the physical law of perfectly inelastic collision if they become too close. In that case, the lighter body $b_2$ is merged into the heavier one $b_1$ ($m_1 > m_2$). According to the law of perfectly inelastic collision, the new velocity of the heavier body $b_1$ is the weighted sum of the velocities of both bodies.

\begin{align*}
v_1 & \gets \frac{v_1 \cdot m_1 + v_2 \cdot m_2}{m_1 + m_2} \\
x_1 & \gets \frac{x_1 + x_2}{2}  \tag{\emph{perfectly inelastic collision}} \\
m_1 & \gets m_1 + m_2
\end{align*}

\subsection{Data Structure}
\label{sec:cpp_range_baed_for}
We added three fields to class \texttt{Body} to implement body merging semantics (Figure~\ref{fig:data_s_collision}, red color). \texttt{merge\_target} points to the \texttt{Body} object into which a given body should be merged. \texttt{successful\_merge} is set to \emph{true} if a body was successfully merged and can be deleted. \texttt{break\_loop} is another boolean flag for breaking out of a \texttt{device\_do} iteration. This flag is necessary because the C++ \texttt{break} keyword, which is used to break out of loops, cannot be used with \texttt{device\_do} iterations.

We will provide a C++ iterator for enumerating allocated objects in future versions of \textsc{DynaSOAr}. Programmers can than use range-based \emph{for} loops (Listing~\ref{lst:code_collision_for_loop}) instead of \texttt{device\_do}. In that case, the field \texttt{break\_loop} will no longer be necessary.

\subsection{Application Implementation}
The initialization of the simulation is identical to the previous n-body simulation and omitted in this section. The actual simulation invokes six parallel do-all operations iteratively. The first two parallel do-all operations are identical to the previous n-body simulation.

\begin{enumerate}
  \item \textbf{Compute Forces:} Compute and accumulate the force exerted by all other bodies on a given body. Parallel do-all: \texttt{Body::step\_1\_compute\_force}.
  \item \textbf{Acceleration and Movement:} Accelerate and move a body. Parallel do-all: \texttt{Body::step\_2\_update}.
  \item \textbf{Reset Merge Fields:} Initialize the three new fields that will be used during merging. Parallel do-all: \texttt{Body::step\_3\_initialize\_merge}.
  \begin{enumerate}
    \item \texttt{merge\_target} $\gets$ \texttt{nullptr}: No merge target was selected yet for this body.
    \item \texttt{successful\_merge} $\gets$ \texttt{false}: This body was not merged yet.
    \item \texttt{break\_loop} $\gets$ \texttt{false}
  \end{enumerate}
  \item \textbf{Prepare Merge:} For a given body, check if and which other body can be merged into it. This step implements a \emph{pull} semantics: Instead of looking for body into which a given body can be merged, we are looking for a body that can be merged into a given body. Parallel do-all: \texttt{Body::step\_4\_prepare\_merge}.
  \item \textbf{Perform Merge:} If a merge target was selected for a given body in the previous step, perform the perfectly inelastic collision. This step implements a \emph{push} semantics. Parallel do-all: \texttt{Body::step\_5\_update\_merge}.
  \item \textbf{Delete Body:} If a given body was merged in the previous step, delete it. Parallel do-all: \texttt{Body::step\_6\_delete\_merged}.
\end{enumerate}

Steps 4 and 5 (Listing~\ref{lst:code_collision}) are the interesting ones and we will describe them in more detail in the following paragraphs.

\begin{lstfloat}
\begin{lstlisting}[language=c++, caption={[\textsf{collision}: Application logic]Additional application logic of \textsf{collision}}, label={lst:code_collision}, morekeywords={__device__}]
__device__ void Body::check_merge(Body* other) {
  // Only merge into larger body.
  if (!other->break_loop_ && mass_ < other->mass_) {
    float dx = pos_x_ - other->pos_x_;
    float dy = pos_y_ - other->pos_y_;
    float dist_square = dx * dx + dy * dy;

    if (dist_square < kMergeThreshold * kMergeThreshold) {
      // Try to merge this into other. There is a race condition here:
      // Multiple threads may try to merge this body. Only one can win.
      this->merge_target_ = other;
      other->break_loop_ = true;
    }
  }
}

__device__ void Body::step_4_prepare_merge() {
  device_allocator->device_do<Body>(&Body::check_merge, this);
}

__device__ void Body::step_5_update_merge() {
  Body* m = merge_target_;
  if (m != nullptr) {
    if (m->merge_target_ == nullptr) {
      // Perform merge.
      float new_mass = mass_ + m->mass_;
      float new_vel_x = (vel_x_ * mass_ + m->vel_x_ * m->mass_) / new_mass;
      float new_vel_y = (vel_y_ * mass_ + m->vel_y_ * m->mass_) / new_mass;
      m->mass_ = new_mass;
      m->vel_x_ = new_vel_x;
      m->vel_y_ = new_vel_y;
      m->pos_x_ = (pos_x_ + m->pos_x_) / 2;
      m->pos_y_ = (pos_y_ + m->pos_y_) / 2;

      successful_merge_ = true;
    }
  }
}

__device__ void Body::step_6_delete_merged() {
  if (successful_merge_) { destroy(device_allocator, this); }
}
\end{lstlisting}

\begin{lstlisting}[language=c++, caption={[\textsf{collision}: Range-based \emph{for} loop instead of \texttt{device\_do}]Alternative: C++ range-based \emph{for} loop instead of \texttt{device\_do}}, label={lst:code_collision_for_loop}, morekeywords={__device__}]
__device__ void Body::step_4_prepare_merge() {
  for (Body& other : device_allocator->iterator<Body>()) {
    if (other.mass_ < mass_) {
      float dx = other.pos_x_ - pos_x_;
      float dy = other.pos_y_ - pos_y_;
      float dist_square = dx * dx + dy * dy;

      if (dist_square < kMergeThreshold * kMergeThreshold) {
        other.merge_target_ = this;
        break;
      }
    }
  }
}
\end{lstlisting}
\end{lstfloat}

\paragraph{Step 4: Prepare Merge}
This method determines if another body should be merged into a given body. A body $b_y$ can be merged into $b_x$ if the distance between $b_y$ and $b_x$ is below a certain threshold and if $m_x > m_y$. This method operates from the perspective of a \emph{receiving} body $b_x$ (\emph{pull} semantics) and iterates over all other bodies (\texttt{device\_do}) to find a body that satisfies these requirements. If such a body $b_y$ was found, the merge target of $b_y$ is set to $b_x$. Furthermore, the \texttt{break\_loop} flag of $b_x$ is set. This does not really stop the loop, but future iterations of \texttt{check\_merge} will immediately return.

\begin{figure}
    \subfloat[$b_5$ selects $b_4$ to merge.]{\includegraphics[width=0.45\textwidth]{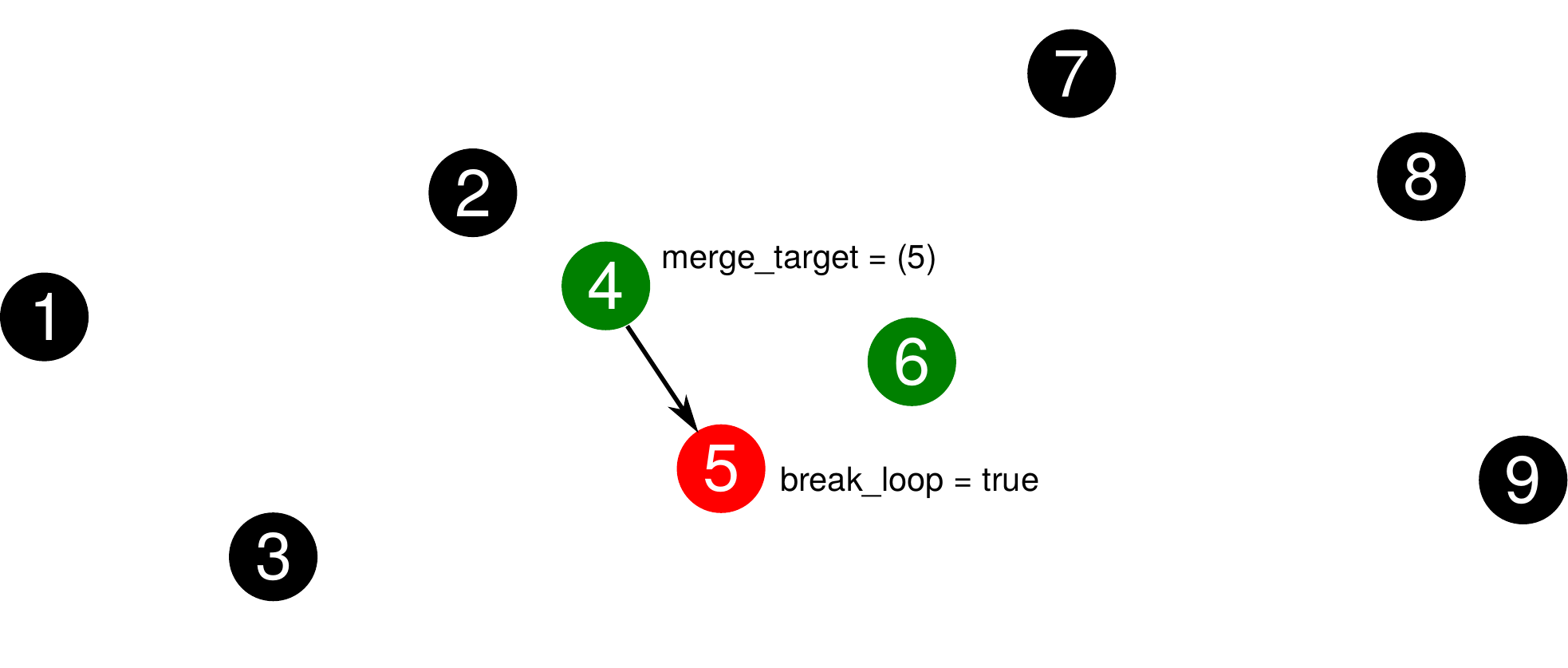}}\hfill
    \subfloat[Afterwards, $b_2$ also selects $b_4$.]{\includegraphics[width=0.45\textwidth]{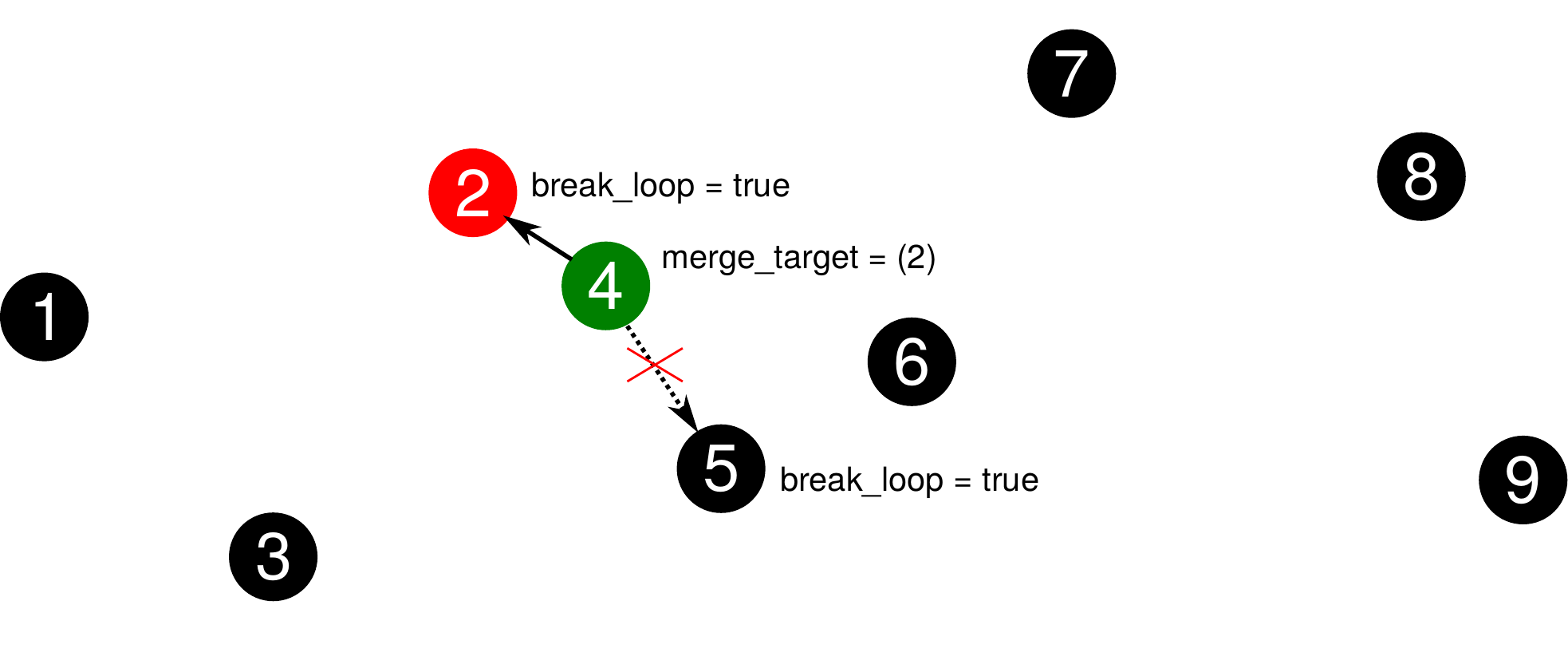}}
    \caption[\textsf{collision}: Preparing a body merge operation]{Preparing a body merge operation}
    \label{fig:ex_coll_prepare_merge}

  \includegraphics[width=0.45\textwidth]{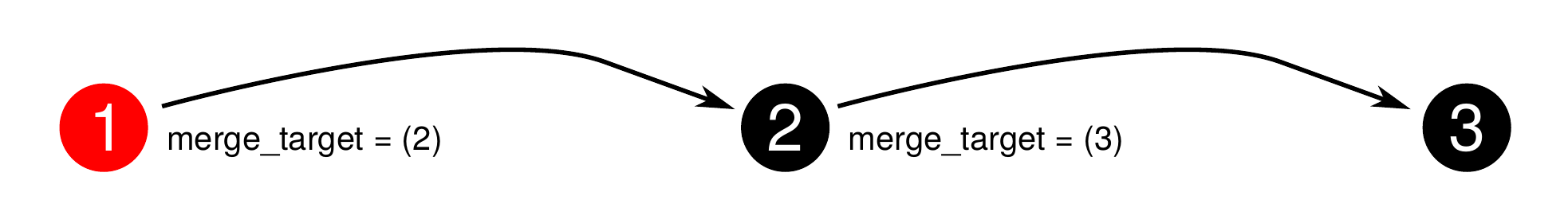}
  \centering
  \caption[\textsf{collision}: Detecting a merge race condition]{Merge target being merged itself}
  \label{lst:col_merge_merged_itself}
\end{figure}

Consider the example in Figure~\ref{fig:ex_coll_prepare_merge}a. We observe the process from the perspective of the red body $b_5$, i.e., \texttt{step\_4\_prepare\_merge} is bound to $b_5$. The green bodies are within merging range. $b_5$ checks $b_4$ and decides to merge it into itself. The \texttt{device\_do} loop breaks at this point and $b_5$ does not even consider $b_6$, which is also in range.

Since \texttt{step\_4\_prepare\_merge} runs as a parallel do-all operation, multiple bodies are concurrently looking for merge partners. In Figure~\ref{fig:ex_coll_prepare_merge}b, $b_2$ selects the same body $b_4$ as $b_5$ and writes a pointer to itself into \texttt{merge\_target}. This is a race condition: Which body $b_4$ will be merged into depends on thread scheduling. In this example, $b_2$ wins. Note that $b_5$ cannot select a new body now because it already decided to break from the loop. It would be difficult to let $b_5$ select a new body in such cases. First, $b_5$ may already be done executing the \texttt{device\_do}, so this would potentially require another \texttt{device\_do} iteration. Second, $b_5$ is not even guaranteed to notice that \texttt{merge\_target} was overwritten by another thread because GPU caches are not coherent.

We accept the race condition in this step. If the merging threshold is sufficiently small, then such race conditions are less likely to appear. Furthermore, even if a body is not merged due to such a race condition, it is likely getting merged in one of the next iterations, because a body will likely spend multiple iterations within the merging threshold.

\paragraph{Step 5: Perform Merge}
This method performs a merge of a given body in case a merge target was selected in the previous step. It operates from the perspective of a \emph{giving} body (\emph{push} semantics).

There is another potential race condition in this method. Consider the example in Figure~\ref{lst:col_merge_merged_itself}. $b_1$ should be merged into $b_2$ and $b_2$ should be merged into $b_3$. Since \texttt{step\_5\_perform\_merge} runs as a parallel do-all operation, both bodies are merged concurrently. This is problematic because the thread of $b_1$ writes fields of $b_2$ and the thread of $t_2$ reads those fields concurrently. Depending on the thread scheduling, the thread of $t_2$ may read the original or updated values. This race condition is more problematic than the one in the previous step because a body could effectively disappear from the simulation.

To avoid this race condition, we merge $b_1$ into $b_2$ only if $b_2$ does not have a merge target itself. After the merge, the \texttt{successful\_merge} flag of $b_1$ is set to \emph{true}, so that it will be deleted from the simulation in \texttt{step\_6\_delete\_merged}.

\subsection{Benefits of Object-oriented Implementation}
We would like to highlight two benefits of object-oriented programming in this implementation.

\begin{itemize}
  \item \textbf{Field Access Notation:} C++'s object field access notation is more readable compared to a hand-written SOA layout. For example, Listing~\ref{lst:nbody_col_field_not} shows Line~27 of Listing~\ref{lst:code_collision} in a hand-written SOA layout. There are two problems with this notation: First, we cannot use C++'s member access (arrow) operator. Second, we have to specify an index for each SOA array access, whereas methods are bound to an object and field accesses without an explictly specified object refer to the bound object (\texttt{this}/\texttt{id}).
  \item \textbf{Active Flag:} Objects that were deleted (Listing~\ref{lst:code_collision}, Line~41) are no longer enumerated by subsequent parallel do-all operations. To implement the same semantics without dynamic memory allocation (baselines AOS/SOA), we had to add an extra field \texttt{active} to class \texttt{Body}, which is initially \emph{true} but later set to \emph{false} if the \texttt{Body} object was merged. CUDA kernels process only active objects. 
\end{itemize}

\begin{lstfloat}
\begin{lstlisting}[language=c++, numbers=none, caption={[\textsf{collision:} Field access notation in hand-written SOA layout]Field access notation in hand-written SOA layout}, label={lst:nbody_col_field_not}]
float new_vel_x = (Body_vel_x[id] * Body_mass[id] + Body_vel_x[m] * Body_vel_x[m]) / new_mass;
\end{lstlisting}
\end{lstfloat}

\subsection{Further Optimizations}
The most time-consuming steps of this n-body simulation are \emph{computing forces} (Step~1) and \emph{finding merge partners} (Step~4). Both steps are implemented with a \texttt{device\_do} iteration, resulting in $N^2$ body-body computations, where $N$ is the number of bodies. Step~1 can be approximated with the Barnes-Hut algorithm (Section~\ref{sec:example_barnes_hut_sec}). Step~4 is a common problem in physical simulations and could be optimized with spatial subdivision~\cite{Nguyen:2007:GG:1407436coll}.

\section{\textsf{barnes-hut}: Approximating N-Body with a Quad Tree}
\label{sec:example_barnes_hut_sec}
\begin{figure}
  \centering
  \subfloat[Quad tree structure]{\includegraphics[width=0.7\textwidth]{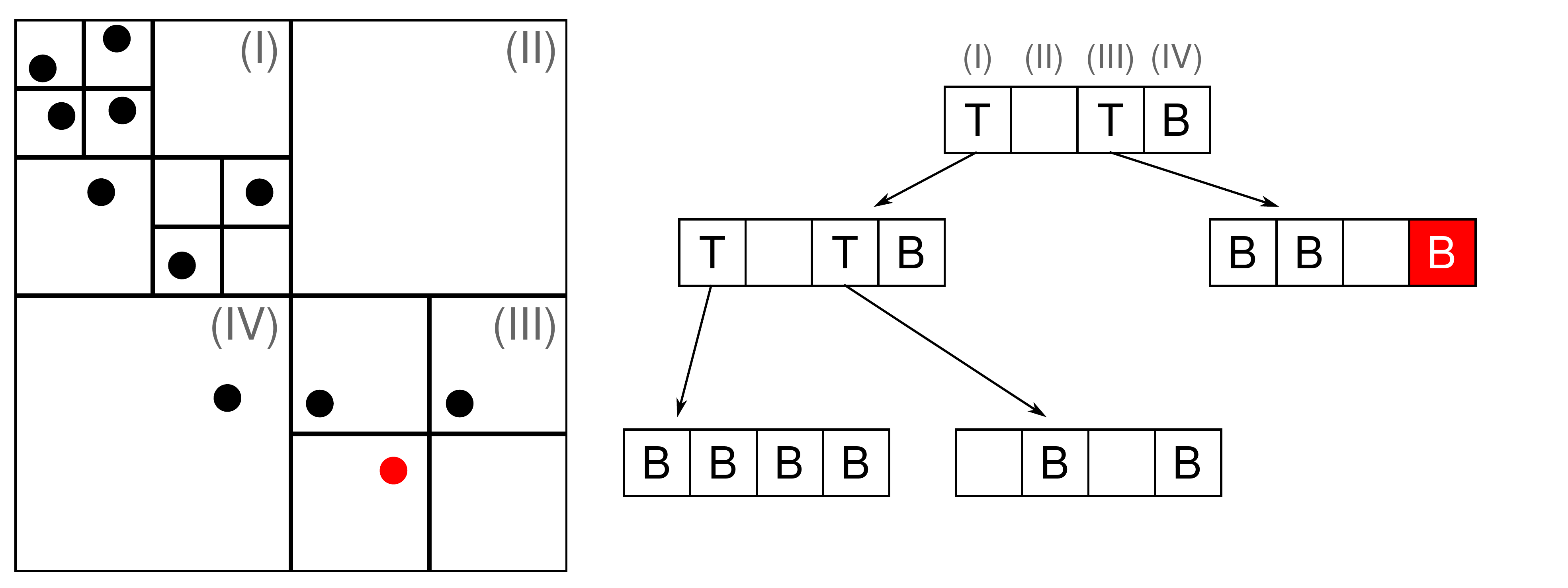}}\hfill\subfloat[Screenshot]{\includegraphics[width=0.27\textwidth]{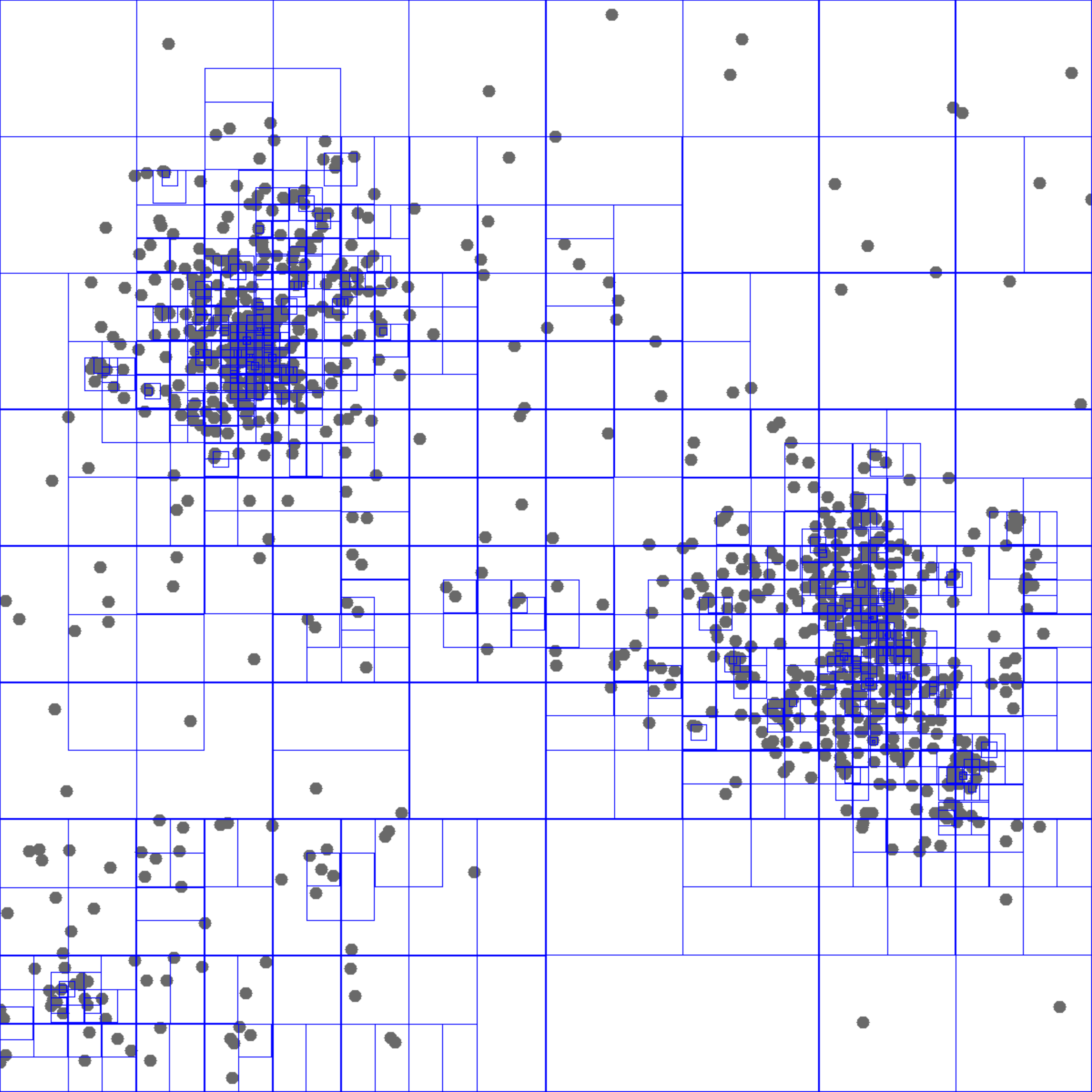}}
  \caption[\textsf{barnes-hut}: Quad tree structure and screenshot]{Barnes-Hut quad tree}
  \label{fig:ex_barnes_hut_quad_tree}
\end{figure}

Barnes-Hut~\cite{Barnes1986} is an approximation of the n-body simulation in Section~\ref{sec:smmo_nbody_sec71}. N-body is computationally expensive because it computes gravitational forces between all pairs of \texttt{Body} objects. Barnes-Hut recursively divides the simulation space into quadrants by building a quad tree (Figure~\ref{fig:ex_barnes_hut_quad_tree}). If a body is far enough away from a quadrant, the force of the quadrant on the body can be approximated by treating the entire quadrant as a single larger body instead of computing exact forces with every body inside the quadrant. The \texttt{device\_do} iteration of the original n-body simulation is then replaced with a quad tree traversal. We are designing a 2D n-body simulation in this section. A 3D n-body simulation would utilize an octree instead of a quad tree.

\subsection{Data Structure}
This application has three classes. An abstract class \texttt{NodeBase} and two subclasses \texttt{BodyNode} and \texttt{TreeNode}.

\begin{figure}
  \includegraphics[scale=0.75]{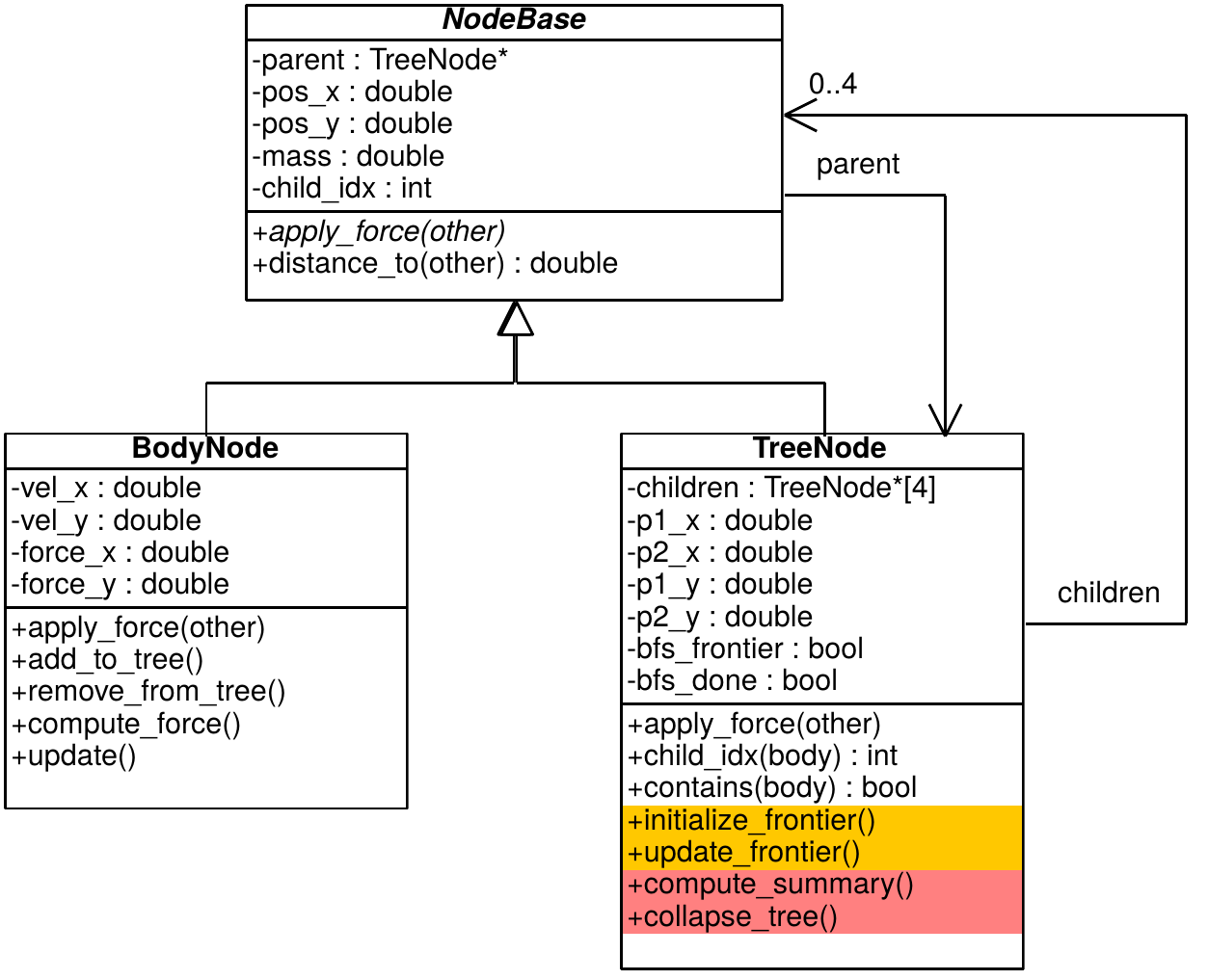}
  \centering
  \caption[\textsf{barnes-hut}: Data structure]{Data structure of \textsf{barnes-hut}}
  \label{lst:data_s_barnes_hut}
\end{figure}

The tree is made up of \texttt{TreeNode}s. Every \texttt{TreeNode} stores up to four children in the \texttt{children} array field. This array has one slot per quadrant (e.g., quadrant 1 = index 0). A child can be either another \texttt{TreeNode} or a \texttt{BodyNode}. Furthermore, every node stores a pointer to its parent \texttt{TreeNode}. \texttt{child\_idx} is the position of the node within its parent's \texttt{children} array. The root of the tree has no parent (\texttt{nullptr}).

Every node has a position and a mass. In case of a \texttt{BodyNode}, this is the position and mass of the body. In case of a \texttt{TreeNode}, this is the center of gravity and accumulated mass of all bodies in the subtree (\emph{summary}). As the position of bodies changes throughout the simulation, their position within the tree also changes. Therefore, tree summaries must be recomputed from time to time.

\texttt{TreeNode}s store two additional coordinates: \texttt{p1} is the left upper corner and \texttt{p2} is the lower right corner of the node. Moreover, \texttt{TreeNode}s have two boolean flags \texttt{bfs\_frontier} and \texttt{bfs\_done} that are used to implement bottom-up tree traversals, as described later. These field names start with \texttt{bfs\_} because the tree traversal is conceptually similar to the parallel frontier-based breadth-first search algorithm.

\subsection{Application Implementation}
This simulation consists of a larger number of parallel do-all operations. We first give a high-level overview of the simulation and then discuss selected do-all operations in more detail in the following paragraphs.

\begin{enumerate}
  \item \textbf{Tree Initialization:} Create the root of the tree (\texttt{TreeNode} object) and store it in a top-level variable \texttt{tree}. The tree has no children yet at this point.
  \item \textbf{Body Initialization:} Create a few random \texttt{BodyNode} objects with a parallel new operation, similar to the previous simulations. These bodies are not yet part of the tree, as indicated by \texttt{parent = nullptr}.
  \item \textbf{Inserting Bodies:} Insert those bodies into the tree that are not yet part of the tree (all bodies at this point). Parallel do-all: \texttt{BodyNode::add\_to\_tree}.
  \item \textbf{Iterative Algorithm:} Each iteration consists of the following steps.
  \begin{enumerate}
    \item \textbf{Computing Summaries:} Compute \texttt{TreeNode} summaries with a bottom-up tree traversal. Multiple parallel do-all operations.
    \item \textbf{Computing Forces:} Compute forces between bodies with top-down tree traversal. Parallel do-all: \texttt{BodyNode::compute\_force}.
    \item \textbf{Acceleration and Movement:} Compute a new velocity and position for each body. This step is identical to the second step of the original n-body simulation. Parallel do-all: \texttt{BodyNode::update}.
    \item \textbf{Removing Bodies:} Remove bodies from the tree if they moved into a different quadrant. Parallel do-all: \texttt{BodyNode::remove\_from\_tree}.
    \item \textbf{Reinserting Bodies:} Insert those bodies back into the tree that were just removed. Parallel do-all: \texttt{BodyNode::add\_to\_tree}.
    \item \textbf{Collapsing Tree:} Remove empty \texttt{TreeNode}s and collapse \texttt{TreeNode}s with only one \texttt{BodyNode} child. This is another bottom-up tree traversal and implemented with multiple parallel do-all operations.
  \end{enumerate}
\end{enumerate}

In the remainder of this section, we describe the design and implementation of all steps that require tree traversals or tree modifications. The latter ones are difficult to implement because multiple threads may be concurrently modifying the tree.

\paragraph{Step 4a: Computing Tree Summaries}
Before gravitational forces can be computed, we have to ensure that tree summaries are up to date. Step~4a and Step~4f are both implemented with a bottom-up tree traversal. Listing~\ref{lst:barnes_bottom_up_trav} shows the structure of such traversals.

Many parallel tree/graph traversals (e.g., BFS) are iterative, frontier-based algorithms. In such a traversal, each node has a boolean \emph{frontier flag} which indicates if a node is part of the frontier. Only frontier nodes are processed in an iteration. When all frontier nodes finished processing, the frontier is advanced, usually based on the results of the processing step. Alternatively, instead of a boolean flag, the frontier is sometimes defined as all nodes that have a certain property; e.g., all nodes with a certain distance (Section~\ref{sec:inner_arrays_perf_eval}).

In our bottom-up tree traversal, every node has two flags: A boolean frontier flag \texttt{bfs\_frontier} and a boolean \emph{done flag} \texttt{bfs\_done} which is set to \emph{true} after processing to ensure that a node is not processed multiple times. Our traversal algorithm visits only \texttt{TreeNodes}. All \texttt{TreeNode}s without \texttt{TreeNode} children (tree leaves) are part of the initial frontier. A processing step computes tree summaries for all frontier nodes and sets their done flag to \emph{true}. If a \texttt{TreeNode} has not been processed yet but all of its children are done processing, it will become part of the next frontier. As a side note, this is a common pattern of parallel graph processing and fits well with the vertex-based \emph{bulk-synchronous model}~\cite{Valiant:1990:BMP:79173.79181}, which is the foundation of graph processing frameworks such as Gunrock~\cite{Wang:2016:GHG:2851141.2851145}.

\begin{figure}
    \subfloat[\texttt{TreeNode::initialize\_frontier()}]{\includegraphics[width=0.45\textwidth]{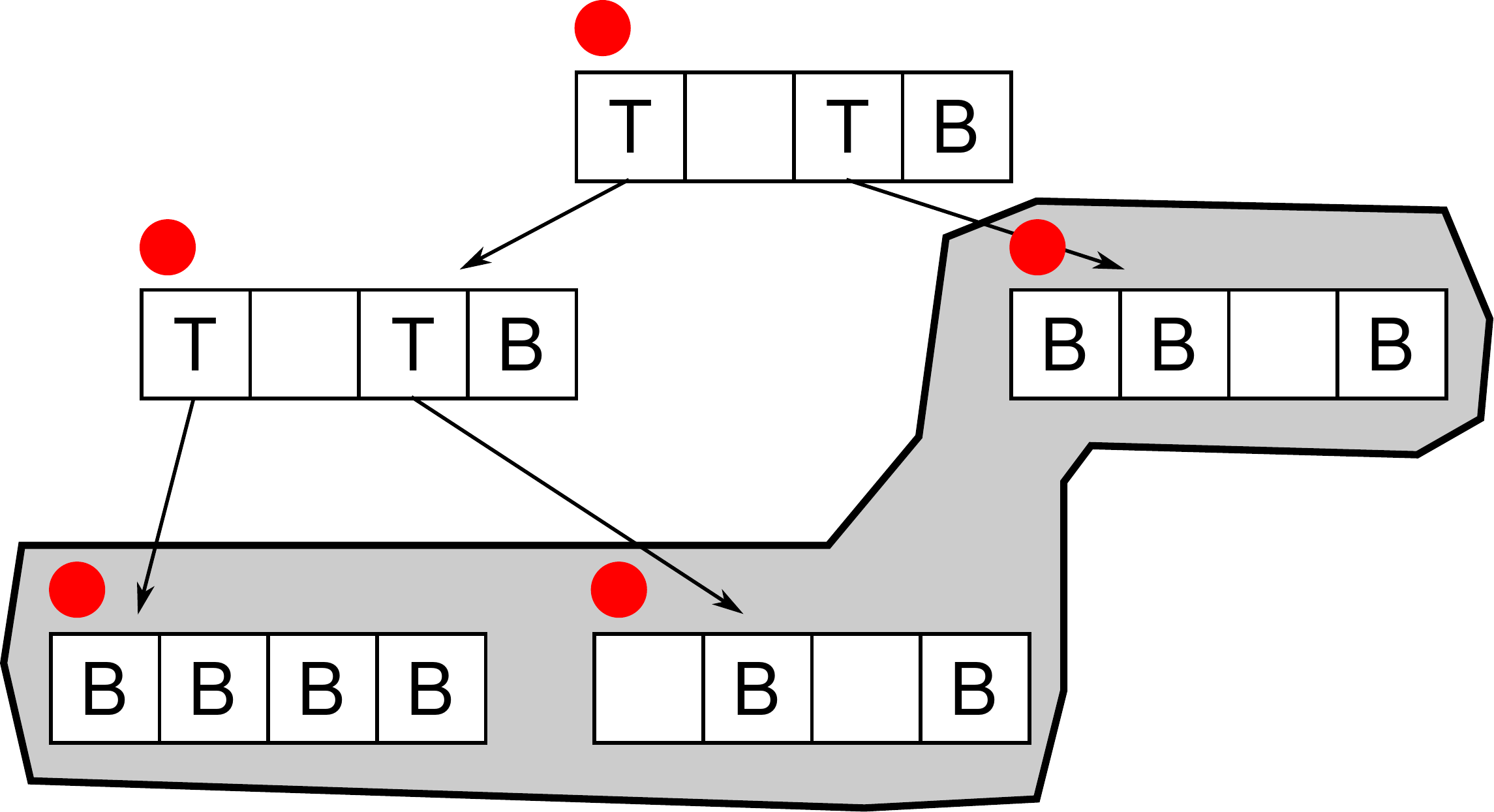}}\hfill
    \subfloat[\texttt{TreeNode::compute\_summary()}]{\includegraphics[width=0.45\textwidth]{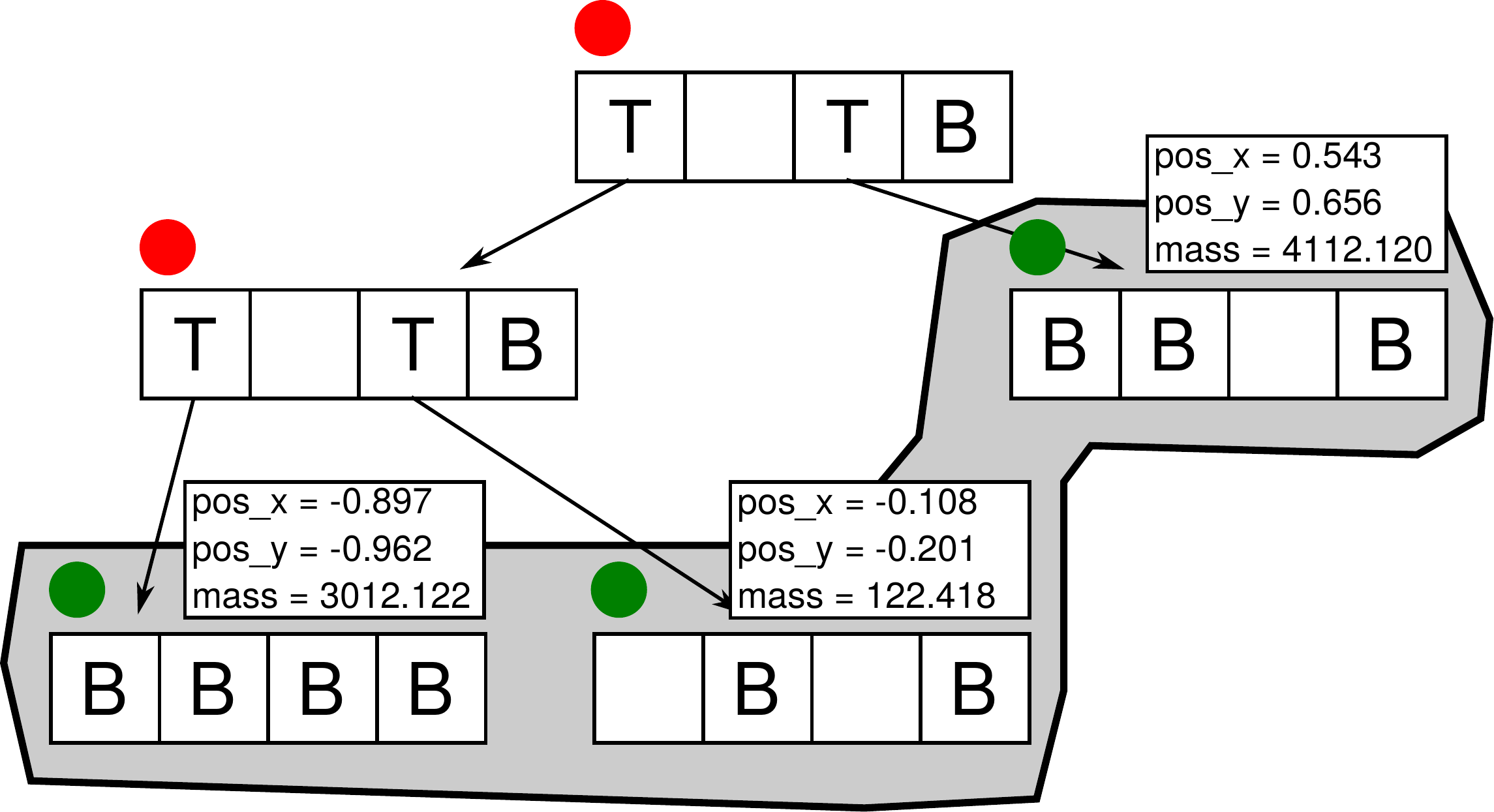}} \\
    \subfloat[\texttt{TreeNode::advance\_frontier()}]{\includegraphics[width=0.45\textwidth]{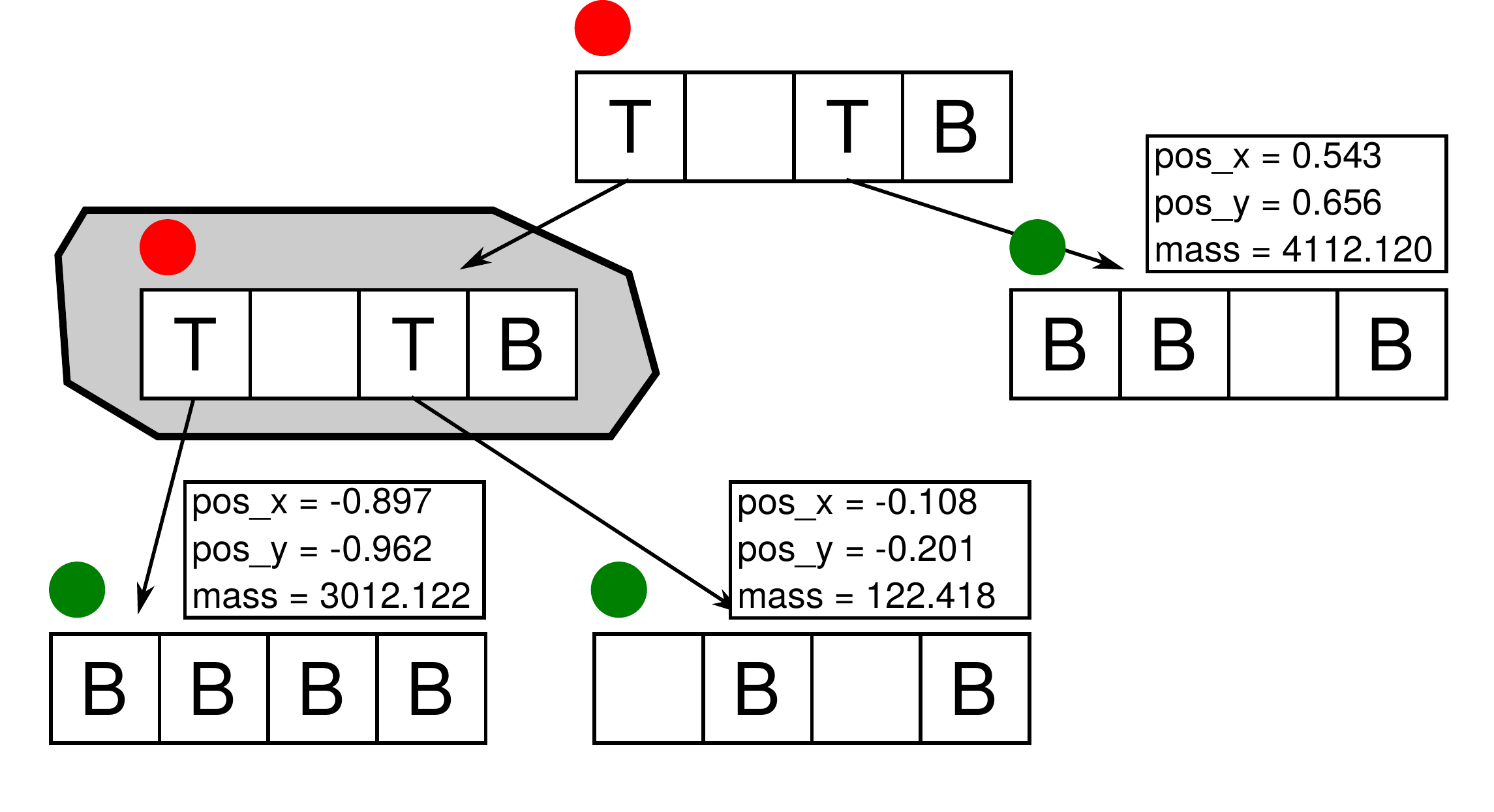}}\hfill
    \subfloat[\texttt{TreeNode::compute\_summary()}]{\includegraphics[width=0.45\textwidth]{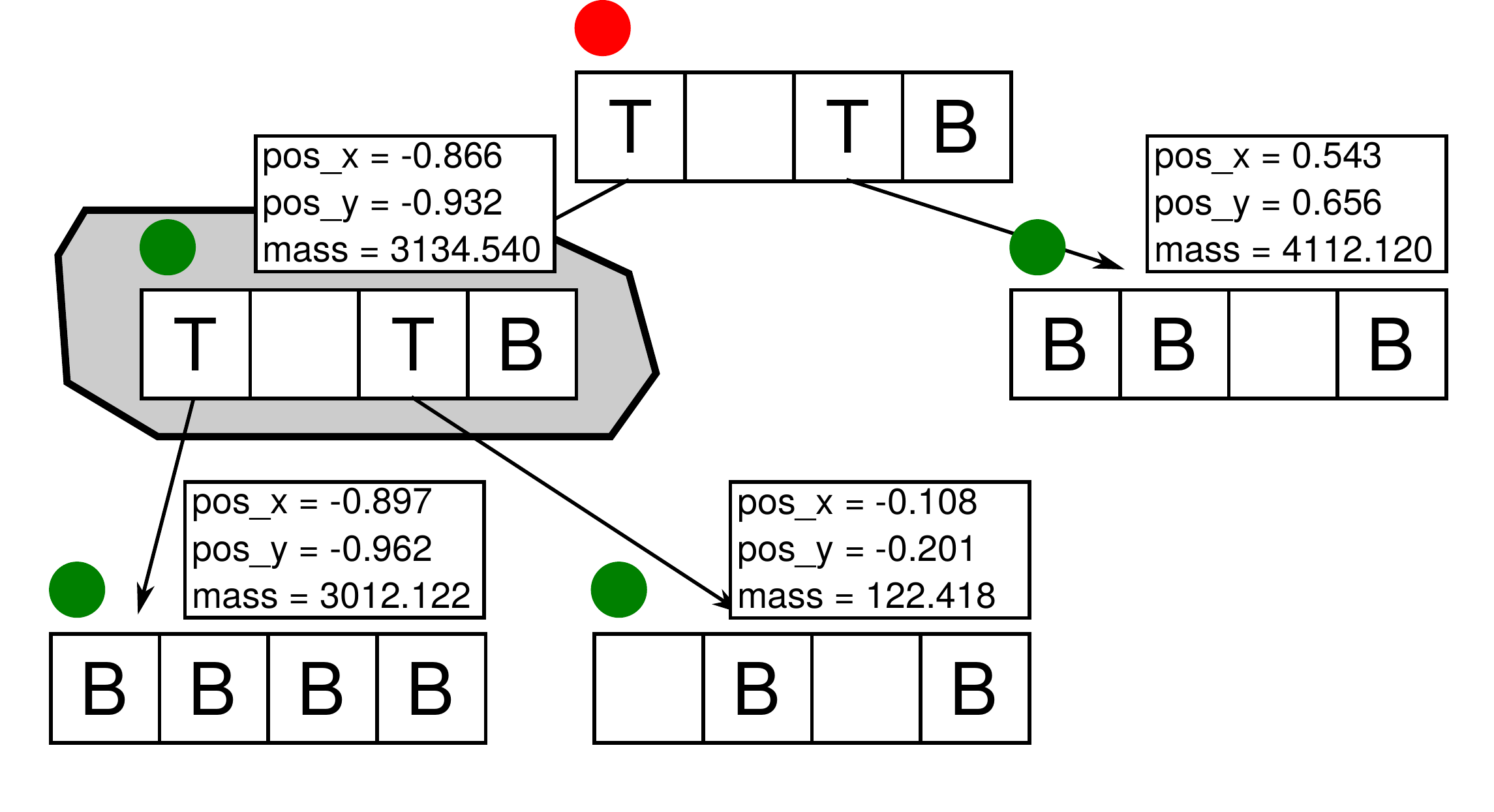}} \\
    \subfloat[\texttt{TreeNode::advance\_frontier()}]{\includegraphics[width=0.45\textwidth]{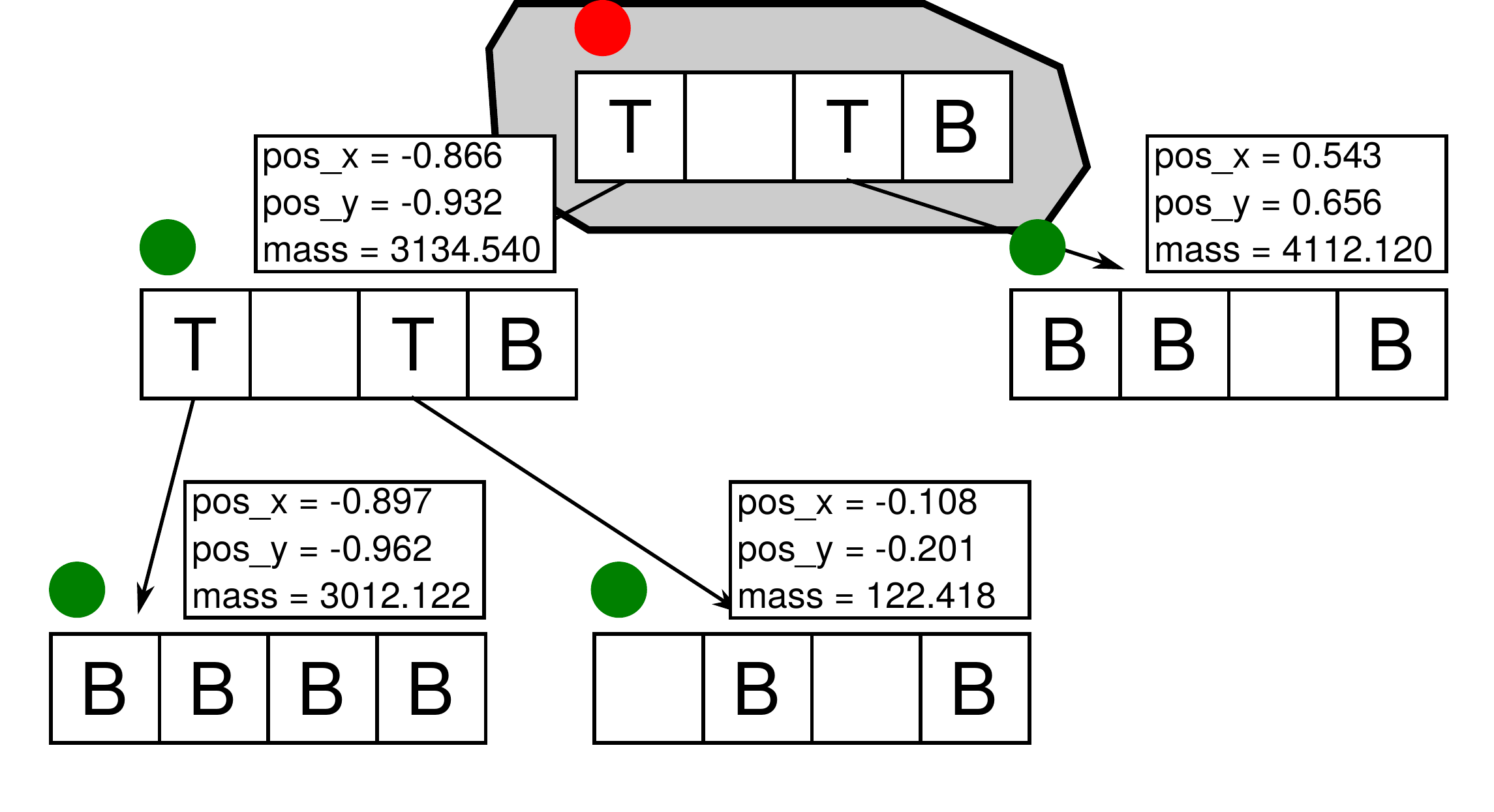}}\hfill
    \subfloat[\texttt{TreeNode::compute\_summary()}]{\includegraphics[width=0.45\textwidth]{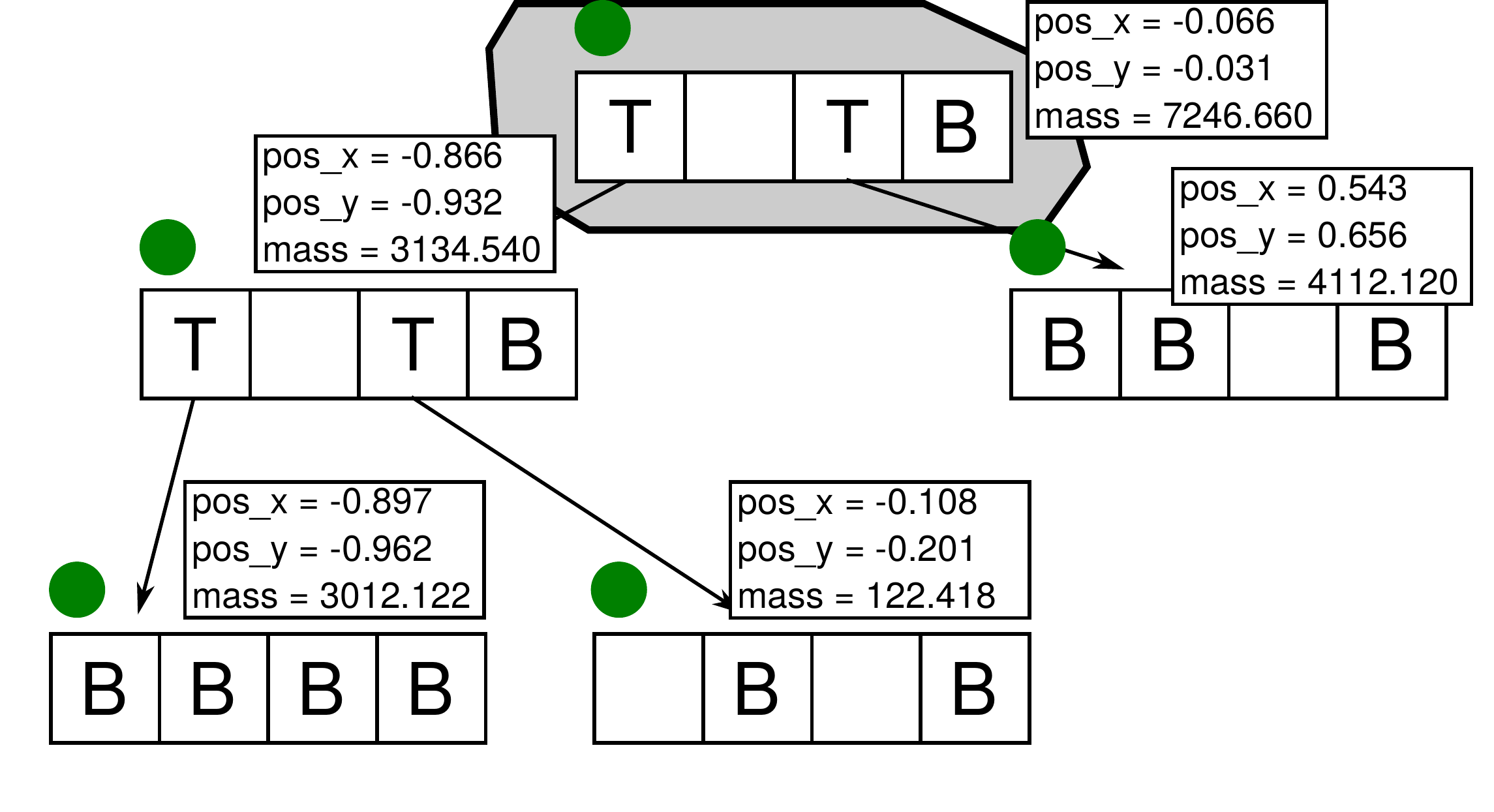}}
    \caption[\textsf{barnes-hut}: Computing quad tree summaries]{Compute summaries of quad tree}
    \label{fig:compute_summaries_of_quad}

\begin{lstlisting}[language=c++, caption={[\textsf{barnes-hut}: Bottom-up quad tree traversal]Bottom-up quad tree traversal pattern}, label={lst:barnes_bottom_up_trav}, morekeywords={__device__, nullptr}, numbers=none]
__device__ void TreeNode::initialize_frontier() {
  bfs_frontier_ = true;  bfs_done_ = false;
  for (int i = 0; i < 4; ++i) {
    if (children_[i]->cast<TreeNode>() != nullptr) { bfs_frontier_ = false; break; }
  }
}

__device__ void TreeNode::advance_frontier() {
  if (!bfs_done_) {
    for (int i = 0; i < 4; ++i) {
      TreeNode* child = children_[i]->cast<TreeNode>();
      if (child != nullptr && !child->bfs_done_) return;  // Unprocessed child found
    }
    bfs_frontier_ = true;
  } else { bfs_frontier_ = false; }
}

void bottom_up_traversal() {
  allocator_handle->parallel_do<TreeNode, &TreeNode::initialize_frontier>();
  do {
    // Do something: e.g., compute TreeNode summary or collape TreeNode
    allocator_handle->parallel_do<TreeNode, &TreeNode::advance_frontier>();
  } while (!host_tree->bfs_done_);
}
\end{lstlisting}
\end{figure}

Figure~\ref{fig:compute_summaries_of_quad} illustrates this algorithm with an example. The circles indicate the state of the \texttt{bfs\_done} flag (red = \emph{false}, green = \emph{true}). Initially, only nodes without \texttt{TreeNode} children are part of the frontier (Subfigure~\textsc{a}), as indicated by the shaded area. The first processing step computes tree node summaries of those nodes and sets their done flag (Subfigure~\textsc{b}). Then, the frontier is advanced: Only the left middle node becomes part of the next frontier because it is the only unprocessed node whose children are all processed (Subfigure~\textsc{c}). The algorithm proceeds in this pattern until the tree root was processed (Subfigure~\textsc{f}).

\paragraph{Step 4b: Computing Forces}
Barnes-Hut is faster than a regular n-body simulation because the computation of gravitational forces has a lower computational complexity. In n-body, the total force of all other bodies acting on a given body is computed with a \texttt{device\_do} operation (complexity $\theta(n)$, where $n$ is the number of bodies). In Barnes-Hut, this force is computed with a top-down tree traversal, which has the same worst-case complexity but a lower expected complexity. Our implementation uses a \emph{preorder depth-first traversal}, but other traversals would also work. 

\begin{lstfloat}
\begin{lstlisting}[language=c++, caption={[\textsf{barnes-hut}: Force computation via top-down quad tree traversal]Force computation via quad tree traversal}, label={lst:barnes_hut_apply_force}, morekeywords={__device__, nullptr}]
__device__ TreeNode* tree;  // Pointer to tree root
TreeNode* host_tree;        // Pointer to tree root in host memory

__device__ void BodyNode::apply_force(BodyNode* other) {
  // Same as Body::apply_force in n-body.
}

__device__ void TreeNode::apply_force(BodyNode* other) {
  if (contains(other) || distance_to(other) <= kDistThreshold) {
    // Too close or inside. Recurse.
    for (int i = 0; i < 4; ++i) {
      if (children_[i] != nullptr) { children_[i]->apply_force(other); }
    }
  } else {
    // Far enough away to use approximation. Same as BodyNode::apply_force.
  }
}

__device__ void BodyNode::compute_force() {
  force_x_ = force_y_ = 0.0f;
  tree->apply_force(this);
}
\end{lstlisting}
\end{lstfloat}

Listing~\ref{lst:barnes_hut_apply_force} shows how gravitational forces are computed with a quad tree traversal. We assume that all \texttt{TreeNode} summaries (i.e., fields \texttt{pos\_x}, \texttt{pos\_y}, \texttt{mass}) are up to date. Instead of a \texttt{device\_do}, we start the tree traversal at the root of the tree (Line~21).

\texttt{NodeBase::apply\_force} is a pure virtual function. Let \texttt{this} be the object that the function is bound to. If \texttt{this} is a \texttt{BodyNode}, the function adds the gravitational force induced by the body to the \texttt{force\_} fields of \texttt{other}, similar to the original n-body simulation. If \texttt{this} is a \texttt{TreeNode}, there are two possibilites:

\begin{enumerate}
  \item If \texttt{other} is contained in \texttt{this} or if \texttt{other} is close to the \texttt{this}'s center of gravity, we have to recurse and do a more accurate computation with every child of \texttt{this}. Using a Barnes-Hut approximation would result in a too large error.
  \item If \texttt{other} is not within the bounds of \texttt{this} and far enough away from its center of gravity, we can approximate the gravitational force induced by the entire \texttt{TreeNode} using the \texttt{this}'s summary.
\end{enumerate}

Since \textsc{DynaSOAr} does not support virtual function calls yet, we had to implement this function with a hand-written \emph{switch-case} statement (Section~\ref{sec:virtual_func_calls_barnes}). We plan to auto-generate such code in future versions of \textsc{DynaSOAr}.

\paragraph{Step 4d: Removing Bodies}
We are now removing \texttt{BodyNode}s from the tree if they moved into a different quadrant of their parent \texttt{TreeNode} or if they moved into a different \texttt{TreeNode}. This is easy to detect based on the position of the body and corner points \texttt{p1}, \texttt{p2} of the \texttt{TreeNode}. Such bodies are removed from the \texttt{children} array and their \texttt{parent} pointer is reset to \texttt{nullptr}.

\begin{figure}
  \centering
  \includegraphics[width=0.7\textwidth]{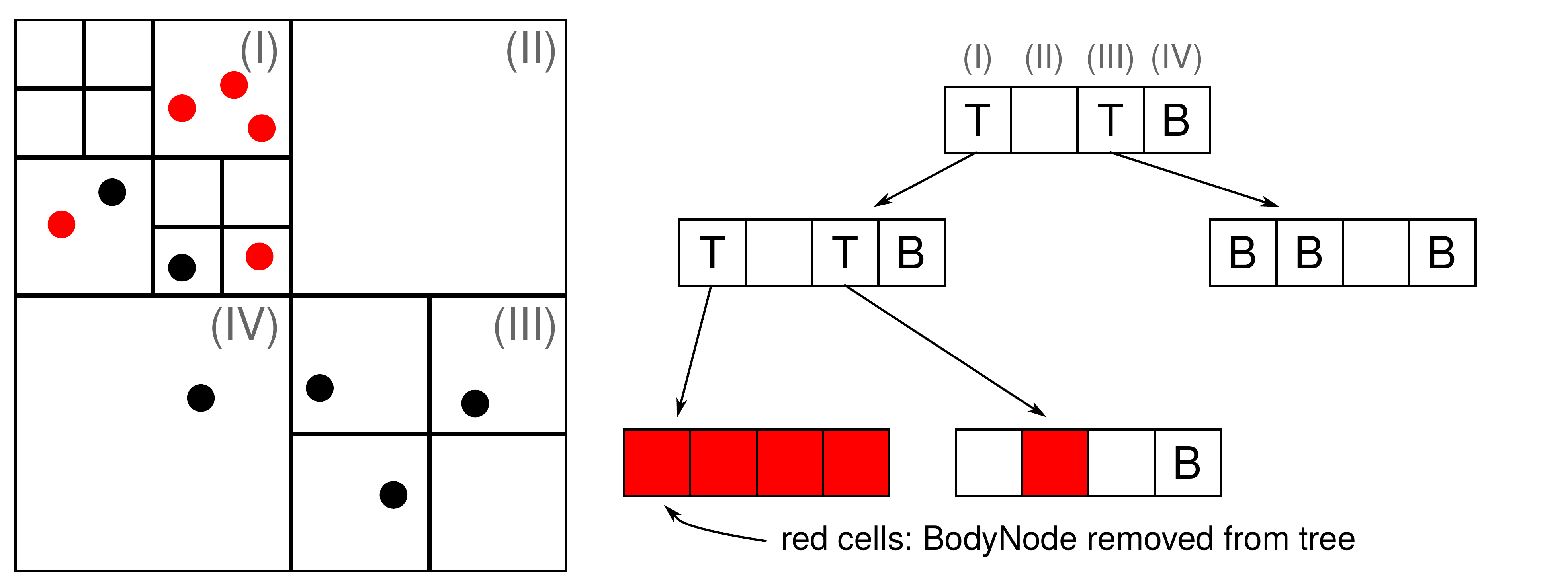}
  \caption[\textsf{barnes-hut}: Removing bodies from the quad tree]{Removing \texttt{BodyNode}s from the quad tree}
  \label{fig:ex_barnes_hut_quad_tree_remove}
\end{figure}

Figure~\ref{fig:ex_barnes_hut_quad_tree_remove} shows the quad tree of Figure~\ref{fig:ex_barnes_hut_quad_tree}\textsc{a} after bodies were moved (Step~4c). The red bodies moved into a different quadrant or parent and are removed from the tree. This may leave some \texttt{TreeNode}s empty.

\paragraph{Step 3 / Step 4e: Inserting Bodies}
We are now (re)inserting \texttt{BodyNode}s into the quad tree which have no parent. In Step~3, these are all newly created bodies. In Step~4e, these are all bodies that were removed in the previous step. This step is the most challenging part of \textsf{barnes-hut}, because the structure of the quad tree is modified concurrently by multiple threads.

Listing~\ref{lst:barnes_insertion_ex} shows the insertion algorithm. Only \texttt{BodyNode}s without a parent are processed. The algorithm maintains a \texttt{current} pointer to the \texttt{TreeNode} into which the body should be inserted. This pointer is initialized to the root of the tree. The \emph{while} loop of Line~4 traverses the tree structure by determining the \texttt{children} array slot into which the body should be inserted (Line~5) and updating the \texttt{current} pointer if necessary. At this point, we have to consider three cases.

\begin{enumerate}
  \item The selected \textbf{array slot is empty} (Line~7, Figure~\ref{fig:ex_barnes_hut_quad_tree_insert1}). In this case, we can directly insert the body into the array. However, since multiple threads may be attempting to insert into the same slot, we insert the body with an atomic compare-and-swap operation, such that only one thread can succeed. Afterwards, we set the \texttt{parent} pointer with an atomic exchange operation. This is to ensure that the modification to the \texttt{parent} pointer is guaranteed to become visible to other threads in the CUDA kernel (\emph{volatile write}). If a thread fails to insert a body because another thread succeeded with a contending compare-and-swap operation, the thread retries with another \emph{while} loop iteration.
  \item The selected \textbf{array slot contains a \texttt{TreeNode}} (Line~12). In this case, we try to insert the body into that \texttt{TreeNode} in the next \emph{while} loop iteration.
  \item The selected \textbf{array slot contains a \texttt{BodyNode}}, denoted by \texttt{other} (Line~14, Figure~\ref{fig:ex_barnes_hut_quad_tree_insert2}). In this case, we have to insert a new \texttt{TreeNode} into the tree. The method \texttt{make\_child\_ tree\_node} creates a new \texttt{TreeNode} with the correct \texttt{p1}, \texttt{p2} and \texttt{parent} values, but does not insert it into \texttt{current} yet. We first insert \texttt{other} into the newly created (empty) \texttt{TreeNode}\footnote{The thread fence in Line~21 ensures that other threads see \texttt{other}'s slot in the \texttt{children} array of the new \texttt{TreeNode} as occupied before the new \texttt{TreeNode} becomes visible through the CAS in Line~23.}. Now we swap in the new \texttt{TreeNode} with an atomic compare-and-swap operation. Similar to Case~1, only one thread can succeed in modifying the respective \texttt{children} array slot. If successful, we update the \texttt{parent} pointer of \texttt{other}. However, this body may still be in the process of insertion (Case~1 or \texttt{other} in Case~3) and its \texttt{parent} pointer may not have been set yet. Therefore, we require another \emph{while} loop to retry until \texttt{parent} was updated. Now we can insert the original body into the new \texttt{TreeNode} by running another iteration of the outer \emph{while} loop.
\end{enumerate}

Note that the outer \emph{while} loop of Listing~\ref{lst:barnes_insertion_ex} maintains a boolean flag \texttt{is\_true} instead of an infinite loop with a break or return statement. This is to avoid deadlocks, since the structure of this loop is similar to the one of the critical section implementation of Section~\ref{sec:cuda_prog_model_backgr}.

\begin{figure}
  \centering
  \includegraphics[width=0.7\textwidth]{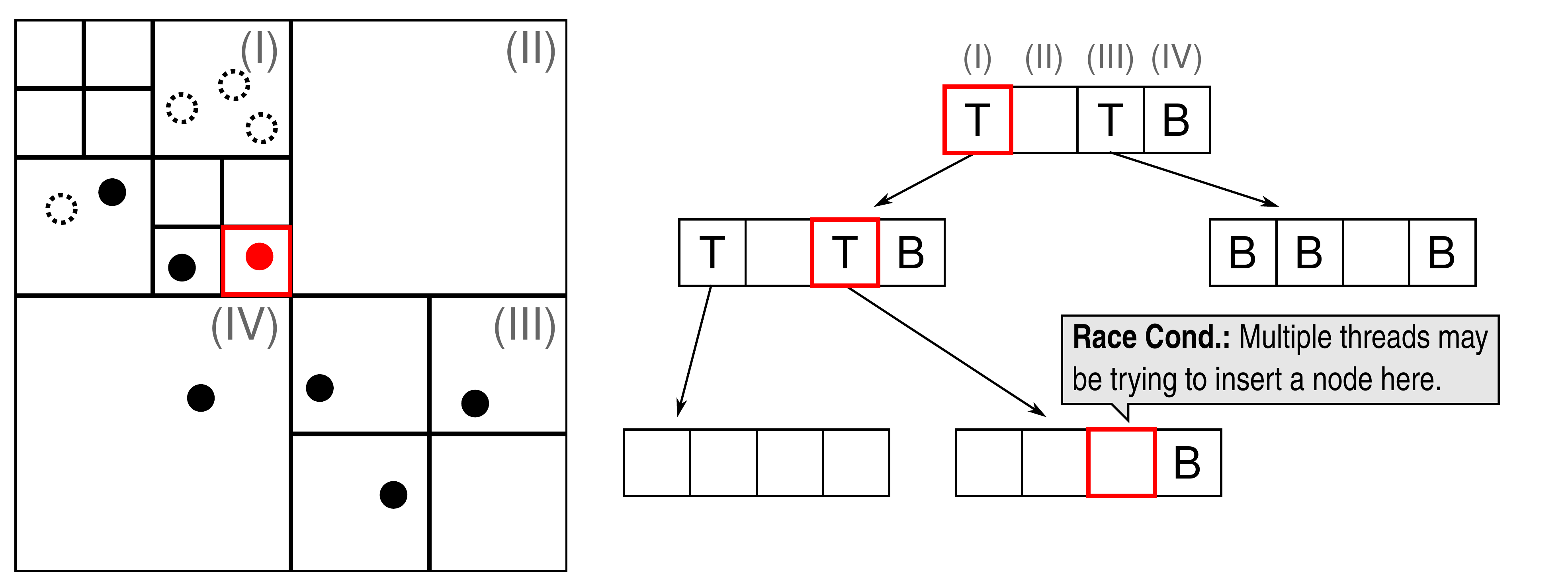}
  \caption[\textsf{barnes-hut}: Inserting a body in an empty slot]{Inserting a \texttt{BodyNode} into an empty slot}
  \label{fig:ex_barnes_hut_quad_tree_insert1}
  \vspace{10pt}

  \centering
  \includegraphics[width=0.7\textwidth]{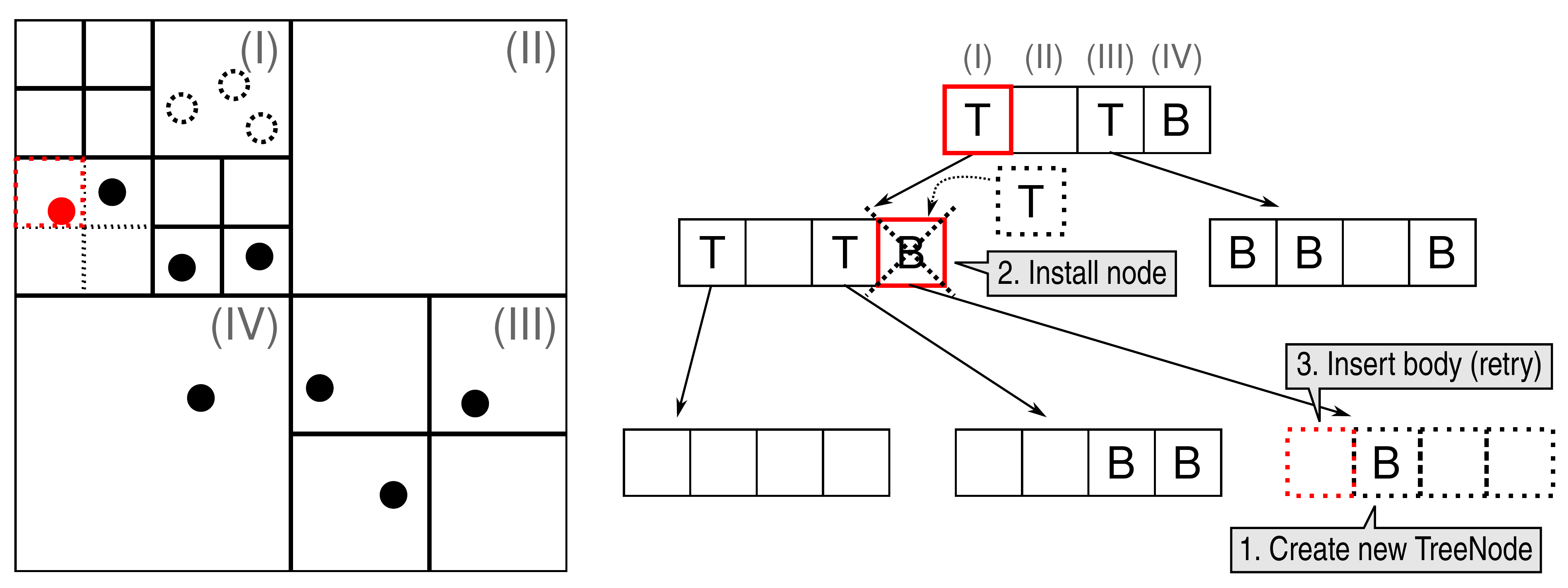}
  \caption[\textsf{barnes-hut}: Creating a new quad tree node]{Inserting a \texttt{BodyNode} into a slot with another \texttt{BodyNode}}
  \label{fig:ex_barnes_hut_quad_tree_insert2}

\begin{lstlisting}[language=c++, caption={[\textsf{barnes-hut}: Inserting a body into the quad tree]Inserting a body into the quad tree}, label={lst:barnes_insertion_ex}, morekeywords={__device__, nullptr}]
__device__ void BodyNode::add_to_tree() {
  if (parent_ == nullptr) {
    TreeNode* current = tree;  bool is_done = false;
    while (!is_done) {  // Check where to insert in this node.
      int c_idx = current->child_idx(this);
      auto*& child_ptr = current->children_[c_idx];
      if (child == nullptr) {  // Slot not in use.
        if (atomicCAS<NodeBase>(&child, nullptr, this) == nullptr) {
          atomicExch(&parent_, current);
          child_index_ = c_idx;  is_done = true;  // Done inserting.
        }
      } else if (child->cast<TreeNode>() != nullptr) {  // There is a subtree here.
        current = static_cast<TreeNode*>(child);
      } else {  // There is a body "other" here.
        BodyNode* other = static_cast<BodyNode*>(child);
        // Replace BodyNode with TreeNode.
        auto* new_node = current->make_child_tree_node(c_idx);
        // Insert other into new node.
        int other_c_idx = new_node->child_idx(other);
        new_node->children_[other_c_idx] = other;
        __threadfence();
        // Try to install the new TreeNode. (Retry.)
        if (atomicCAS<NodeBase>(&child, other, new_node) == other) {
          // It may take a while until we see the correct parent, because
          // another may not be done inserting this node yet.
          TreeNode* parent_before = nullptr;
          do {
            parent_before = atomicCAS<TreeNode>(&other->parent_, current, new_node);
          } while (parent_before != current);
          other->child_index_ = other_c_idx;
          current = new_node;  // Now insert body into new_node.
        } else { destroy(device_allocator, new_node); }  // Rollback.
      }
    }
  }
}
\end{lstlisting}
\end{figure}

\paragraph{Step 4f: Collapsing Tree}
As the final step of a \textsf{barnes-hut} iteration, we clean up the quad tree. Empty \texttt{TreeNode}s are removed and \texttt{TreeNode}s with only one child, which is a \texttt{BodyNode}, are collapsed with the parent node. This step is implemented as a bottom-up tree traversal, similar to Step~4a. Listing~\ref{lst:barnes_tree_collapse} shows the \emph{processing} step of the tree traversal. Only frontier \texttt{TreeNode}s $f$ are processed. We now have to consider four cases.

\begin{enumerate}
  \item The node $f$ has \textbf{zero children} (Line~16). In this case, we unregister $f$ from its parent and delete it.
  \item The node $f$ has \textbf{one child $c$ which is a \texttt{BodyNode}} (Line~24). In this case, we store the $c$ in the \texttt{children} array slot of the parent where $f$ is currently stored. We then delete $f$.
  \item The node $f$ has \textbf{one child $c$ which is a \texttt{TreeNode}}. We cannot collapse this node. The subtree $c$ must contain more than one \texttt{BodyNode}; otherwise, it would have been collapsed already by the previous bottom-up iteration. Since every \texttt{TreeNode} level divides the space into four equally-sized quadrants, collapsing $f$ would require changes to the \texttt{TreeNode}s within $c$. Moreover, at least two \texttt{BodyNode}s would end up as direct children in the same quadrant of some \texttt{TreeNode} within $c$, which is forbidden.
  \item The node $f$ has \textbf{more than one child}. We cannot collapse this node.
\end{enumerate}

Note that removing nodes from a tree is much simpler than adding nodes. No atomic operations or other synchronization primitives are necessary.

\begin{lstfloat}
\begin{lstlisting}[language=c++, caption={[\textsf{barnes-hut}: Collapsing the quad tree]Collapsing a \texttt{TreeNode}}, label={lst:barnes_tree_collapse}, morekeywords={__device__, nullptr}]
__device__ void TreeNode::collapse_tree() {
  if (bfs_frontier_) {
    bfs_frontier_ = false;

    // Count children.
    int num_children = 0;
    NodeBase* single_child = nullptr;

    for (int i = 0; i < 4; ++i) {
      if (children_[i] != nullptr) {
        ++num_children;
        single_child = children_[i];
      }
    }

    if (num_children == 0) {  // Remove node without children.
      if (parent_ != nullptr) {  // Do not remove the root.
        parent_->children_[child_idx_] = nullptr;
        destroy(device_allocator, this);
        return;
      }
    } else if (num_children == 1) {  // Collapse TreeNodes with 1 child...
      if (parent_ != nullptr) {
        if (single_child->cast<BodyNode>() != nullptr) {
          // ... but only if the node is a body.
          single_child->parent_ = parent_;
          single_child->child_index_ = child_index_;
          parent_->children_[child_index_] = single_child;

          destroy(device_allocator, this);
          return;
        }  // else: TreeNode child cannot be collapsed.
      }
    }

    // Done processing this node.
    bfs_done_ = true;
  }
}
\end{lstlisting}
\end{lstfloat}

\subsection{Virtual Function Calls}
\label{sec:virtual_func_calls_barnes}
Unfortunately, \textsc{DynaSOAr} does not yet support C++ abstractions for virtual functions. Therefore, we have to implement the dispatch logic of the virtual function \texttt{NodeBase::apply\_force} in Step~4b by hand. Our implementation consists of an explicit runtime type check (\texttt{cast<T>}) and an \emph{if-then-else} statement that dispatches to the correct method implementation (Listing~\ref{lst:handwritten_virt_meth_call}). \texttt{cast<T>} is a method provided by \textsc{DynaSOAr}'s data layout DSL. Its semantics are similar to C++'s \texttt{dynamic\_cast<T*>}, but it determines the runtime type of an object from its address (fake pointer; Section~\ref{sec:dynasoar_cpp_datalayout_dsl_fp}). This implementation is more efficient than C++'s \texttt{dynamic\_cast<T*>} because it does not have to read a vtable pointer from memory.

\begin{lstfloat}
\begin{lstlisting}[language=c++, numbers=none, caption={[\textsf{barnes-hut}: Handwritten virtual method call]Handwritten virtual method call}, label={lst:handwritten_virt_meth_call}, morekeywords={__device__, assert, nullptr}]
__device__ void NodeBase::apply_force(BodyNode* other) {
  if (cast<BodyNode>() != nullptr) {
    static_cast<BodyNode*>(this)->apply_force(other);
  } else {
    assert(cast<TreeNode>() != nullptr);
    static_cast<TreeNode*>(this)->apply_force(other);
  }
}
\end{lstlisting}
\end{lstfloat}

A \emph{switch-case} statement-based implementation of a virtual function call is compiled to code that is more efficient than a vtable-based implementation, because it allows compilers to inline virtual method calls. This results in more efficient GPU binary code.

GPU programs usually consist of a single compilation unit. Therefore, it is possible to enumerate and inline all possible runtime types (subtypes of the static receiver type) in the source code. We plan to automate this process in the future by auto-generating \emph{switch-case} statements as part of the compilation process.

\subsection{Benefits of Object-oriented Programming}
Implementing \textsf{barnes-hut} without object-oriented programming is tedious. One particular problem is the dynamic allocation of new \texttt{TreeNode}s in Line~17 of Listing~\ref{lst:barnes_insertion_ex}. Our implementation optimistically alloates a new \texttt{TreeNode} and tries to insert it with an atomic compare-and-swap operation. If this operation fails, we delete the object again. Related work describes an alternative implementation without dynamic memory allocation but with a lightweight lock that prevents two threads from inserting a node into the same location~\cite{BURTSCHER201175}. While such an implementation may be a bit faster than our implementation, locking is problematic on GPUs and such locks must be carefully designed and implemented to avoid deadlocks (Section~\ref{sec:cuda_prog_model_backgr}).

Implicit type information is another benefit of object-oriented programming in this application. A non-OOP implementation would have to store an explicit type identifier in addition to an object ID for each child in \texttt{TreeNode::children}. This is necessary because a child could be a \texttt{BodyNode} or another \texttt{TreeNode}. We do not have native support for virtual functions in \textsc{DynaSOAr} yet, but virtual function calls would be another benefit of an OOP-based implementation.

Finally, consider the field access in Line~28 of Listing~\ref{lst:barnes_tree_collapse}. Listing~\ref{lst:field_acc_not_barnes_hut} shows the same functionality in a hand-written SOA layout. Without OOP abstractions, such code is much harder to write/maintain, especially due to the nested array accesses.

\begin{lstfloat}
\begin{lstlisting}[language=c++, numbers=none, caption={\textsf{barnes-hut}: Field access notation in hand-written SOA layout}, label={lst:field_acc_not_barnes_hut}]
TreeNode_children[TreeNode_child_idx[id]][TreeNode_parent[id]] = single_child;
\end{lstlisting}
\end{lstfloat}

\subsection{Further Optimizations}
Related work describes an optimization to speed up Step~4b, which computes forces with a quad tree traversal. The performance of this step can be improved by assigning spatially local bodies to the same warp~\cite{BURTSCHER201175}. This could be achieved by physically rearranging/sorting \texttt{Body} objects. Bodies which are spatially local traverse similar parts of the quad tree, which improves memory accesses (threads of a warp access similar addresses) and reduces warp divergence (similar tree traversals result in similar control flow). Apart from \textsc{CompactGpu}'s memory defragmentation, \textsc{DynaSOAr} does currently not allow programmers to rearrange allocations in memory. Future work could extend \textsc{DynaSOAr} in that direction.

Another problem of our \textsf{barnes-hut} implementation is the recursive nature of Step~4b. If the mass and position of a \texttt{TreeNode} cannot be approximated, we have to recursively visit up to 4 child nodes (Listing~\ref{lst:barnes_hut_apply_force}, Line~12). Unfortunately, this pattern is \emph{not} tail recursive. Therefore, the compiler has to allocate stack frames for the recursive method calls in local memory, which resides in the (slow) global memory. However, the only state that we conceptually have to maintain is a pointer to the current \texttt{TreeNode} and an array index into the \texttt{children} array. This information is sufficient for implementing a preorder tree traversal, so it should be possible to transform this recursion into an iterative implementation that does not allocate new stack frames.

\section{\textsf{structure}: Finite Element Method}
\label{sec:example_structure_sec7}
\textsf{structure} is a finite element method (FEM), inspired by a problem in material science: Simulating the formation of a crack in a composite material~\cite{LU2018240}. \textsf{structure} simulates a mesh of elements, which can be seen as an undirected graph. Every graph node is connected by springs with up to three other nodes.

The simulation has three types of nodes: Ordinary \emph{nodes}, \emph{anchor nodes} and \emph{pull nodes}. Pull nodes move into a certain direction with a constant velocity and \emph{anchor nodes} are fixed at a certain position. Ordinary nodes can move freely. Springs have an initial length and a stiffness $k$. If a spring is stretched beyond its initial length, it exerts a pulling force $F$ on both node endpoints according to Hooke's law.

\begin{align*}
F = k \cdot \Delta x \tag{\emph{Hooke's law}}
\end{align*}

Pull nodes stretch the entire mesh in a certain direction. As soon as the force $F$ between two nodes exceeds a certain threshold, the spring breaks, forming a crack in the material.

\subsection{Data Structure}
\begin{figure}
  \centering
  \includegraphics[scale=0.75]{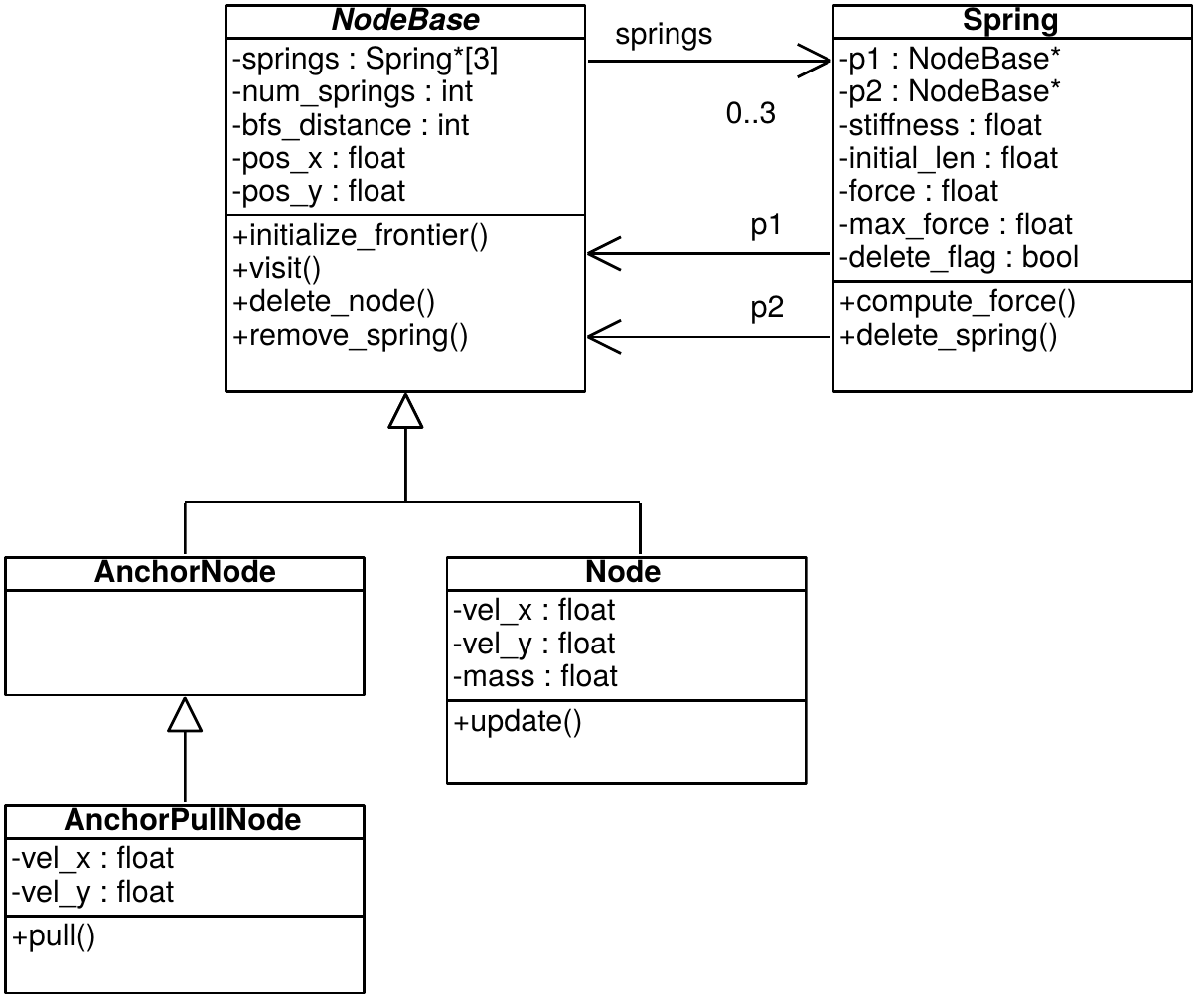}
  \caption[\textsf{structure}: Data structure]{Data structure of \textsf{structure}}
  \label{lst:data_s_structure}
\end{figure}

This application consists of five classes (Figure~\ref{lst:data_s_structure}): An abstract class \texttt{NodeBase} with three subclasses \texttt{AnchorNode}, \texttt{AnchorPullNode} and \texttt{Node}, and a class \texttt{Spring}.

Every node is connected with up to three springs. These are the edges of the graph and stored in an array \texttt{springs} (adjacency list), along with the actual number of springs \texttt{num\_springs}. Every node has a 2D position. In addition, \texttt{AnchorPullNode}s and \texttt{Node}s have a velocity. In the former case, this velocity is constant. In the latter case, the velocity changes based on the forces induced by the springs.

\texttt{Spring}s store pointers to their node endpoints. If at any point the spring is stretched so far that \texttt{force} $>$ \texttt{max\_force}, the spring breaks.

\subsection{Application Implementation}
Before running the main simulation loop, we have to load or generate a mesh network. Our current implementation generates a random graph with a configurable percentage of each node type. \textsf{structure} is an iterative algorithm and consists of the following steps.

\begin{enumerate}
  \item \textbf{Pull Nodes:} Update the position of pull nodes based on their velocity. Parallel do-all: \texttt{AnchorPullNode::pull}.
  \item \textbf{Compute Steps:} Repeat 40 times.
  \begin{enumerate}
    \item \textbf{Compute Forces:} Compute the force of each spring based on Hooke's law. If this force exceeds a spring's threshold, delete the spring from the simulation. Parallel do-all: \texttt{Spring::compute\_force}.
    \item \textbf{Update Velocity/Position:} Sum computed forces exerted by springs for each node. Accelerate and move nodes based on this force. Parallel do-all: \texttt{Node::update}.
  \end{enumerate}
  \item \textbf{Remove Disconnected Nodes:} Find disconnected nodes with a BFS and remove them from the simulation. Implemented with multiple parallel do-all operations.
\end{enumerate}

Step~2 is somewhat similar to an n-body simulation. We are computing forces between elements, however, according to Hooke's law instead of Newton's gravitational law. Furthermore, we only consider direct neighbors in the graph structure. 

\begin{lstfloat}
\begin{lstlisting}[language=c++, caption={[\textsf{structure}: Removing disconnected nodes with BFS]Removing disconnected nodes with BFS}, label={lst:smmo_ex_bfs_structure}, morekeywords={__device__, nullptr}]
__device__ bool continue_bfs;

__device__ void NodeBase::initialize_frontier() {
  distance_ = this->cast<AnchorNode>() == nullptr ? kMaxDistance : 0;
}

__device__ void NodeBase::visit(int distance) {
  if (distance == distance_) {
    continue_bfs = true;
    for (int i = 0; i < kMaxDegree; ++i) {
      auto* spring = springs_[i];
      if (spring != nullptr) {
        // Neighboring vertex.
        auto* n = spring->p1() == this ? spring->p2() : spring->p1();
        // Set distance on neighboring vertex if unvisited.
        if (n->distance_ == kMaxDistance) { n->distance_ = distance + 1; }
      }
    }
  }
}

__device__ void NodeBase::delete_node() {
  if (distance_ == kMaxDistance) {
    for (int i = 0; i < 3; ++i) {
      if (springs_[i] != nullptr) { springs_[i]->delete_flag_ = true; }
    }
  }
}

__device__ void Spring::delete_spring() {
  if (delete_flag_) {
    p1_->remove_spring(this);
    p2_->remove_spring(this);
    destroy(device_allocator, this);
  }
}

__device__ void NodeBase::remove_spring(Spring* s) {
  for (int i = 0; i < 3; ++i) {
    if (springs_[i] == s) {
      springs_[i] = nullptr;
      if (atomicSub(&num_springs_, 1) == 1) { destroy(device_allocator, this); }
      return;
    }
  }
}

void delete_disconnected_nodes() {
  allocator_handle->parallel_do<NodeBase, &NodeBase::initialize_frontier>();

  for (int dist = 0; continue_bfs /*read with cudaMemcpyFromSymbol*/; ++dist) {
    continue_bfs = false;  // write with cudaMemcpyToSymbol
    allocator_handle->parallel_do<NodeBase, int, &NodeBase::visit>(dist);
  }

  allocator_handle->parallel_do<NodeBase, &NodeBase::delete_node>();
  allocator_handle->parallel_do<Spring, &Spring::delete_spring>();
}
\end{lstlisting}
\end{lstfloat}

\paragraph{Step~3: Removing Disconnected Nodes}
This step is the most challenging part. \texttt{Node}s and \texttt{AnchorPullNode}s are removed from the simulation if they are no longer connected to an \texttt{AnchorNode}. \texttt{Spring}s whose endpoint(s) were deleted are also removed from the simulation.

This step is based on a parallel breadth-first graph traversal. The traversal starts at \texttt{AnchorNode}s (excl. \texttt{AnchorPullNode}s) and marks every visited node. Unvisited nodes are removed after the traversal. We use the standard frontier-based BFS algorithm, which is known to work well on GPUs (Section~\ref{sec:inner_arrays_perf_eval}).

Listing~\ref{lst:smmo_ex_bfs_structure} shows how disconnected nodes are detected and removed. Nodes do not have a boolean frontier flag in our implementation. Instead, every node has a \texttt{distance} field, which is initialized to infinity (\texttt{kMaxDistance}). In BFS iteration~$i$, the frontier consists of the nodes with \texttt{distance}~$i$.

In the beginning, all \texttt{AnchorNode}s are initialized to a distance of 0. A BFS iteration (\texttt{NodeBase::visit}) iterates over the neighboring nodes of frontier nodes and updates their distance if they are still unvisited (Line~16). We keep iterating until no vertices were visited in an iteration, as indicated by the \texttt{continue\_bfs} flag.

Finally, we process all unvisited nodes with \texttt{NodeBase::delete\_node}. This method does not delete those nodes yet, but sets a flag on springs that are connected to them. Springs are deleted in \texttt{Spring::delete\_spring}. Before a spring deletes itself, it unregisters itself from its endpoints (\texttt{remove\_spring}). As part of this process, we atomically decrement the spring counter of the node. Once the counter reaches zero, the node is deleted. This process requires an atomic operation, because multiple GPU threads may be concurrently unregistering springs from a node.

\section{\textsf{traffic}: Traffic Flow Simulation}
\label{sec:smmo_traf_flow_sia}
Traffic flow simulations are important tools in transportation planning~\cite{2005physics...7127M}. They can guide the design and construction of city street networks. \textsf{traffic} is an agent-based microsimulation that simulates single vehicles (\emph{agents}) which move on a street network. It is based on the Nagel-Schreckenberg model~\cite{nagel_schr}, a simple model based on cellular automata which can reproduce real-world traffic phenomena~\cite{10.1007/978-1-4471-1281-5_17, WAHLE2001719} such as traffic jams. Compared to other SMMO applications, \textsf{traffic} is quite complex: It does not only model agents, but also more advanced street features such as traffic lights or yield signs. We describe this application only on a high level. Related work describes many variantions and how to implement them~\cite{primer_street}.

By default, this application generates a random street network upon startup. Alternatively, real-world street networks can be imported from OpenStreetMap (OSM) dumps in GraphML file format. Such dumps include street properties such as position, shape, connections to other streets, speed limits and OSM street type.

\subsection{Data Structure}
In the Nagel-Schreckenberg model, a street (\emph{link}) is divided into equally-sized \texttt{Cell}s, each of which can contain up to one agent (Figure~\ref{fig:traffic_cells_and_inters}). An agent (class \texttt{Car}; Figure~\ref{fig:traffic_architecture}) can move onto a neighboring cell only if it is free. Every agent has a velocity, measured in cells per iteration. Both agents and cells have a maximum velocity; the latter one can be used to model speed limits on streets.

Agents precompute their \texttt{path} of movement per iteration. The \texttt{path} array contains the next \texttt{velocity} many cells onto which the agent is about to move. All of these cells must be empty. This is to ensure that the agent does not crash into other agents.

\paragraph{Intersections and Traffic Lights}
Every cell has zero, one, or multiple outgoing cells, forming a directed graph. In the first case, the cell is a \emph{sink}, i.e., a street leaves the simulation area. Agents entering a sink will be randomly redistributed. The second case is most common and represents a regular street cell. The third case appears at intersections, where a cell is connected to the first cell of every outgoing\footnote{Streets in this simulation are one-way streets. Two-way streets consists of one incoming and one outgoing street.} street.

\begin{figure}
  \centering
  \subfloat[Example: Cells and intersections]{\includegraphics[width=0.5\columnwidth]{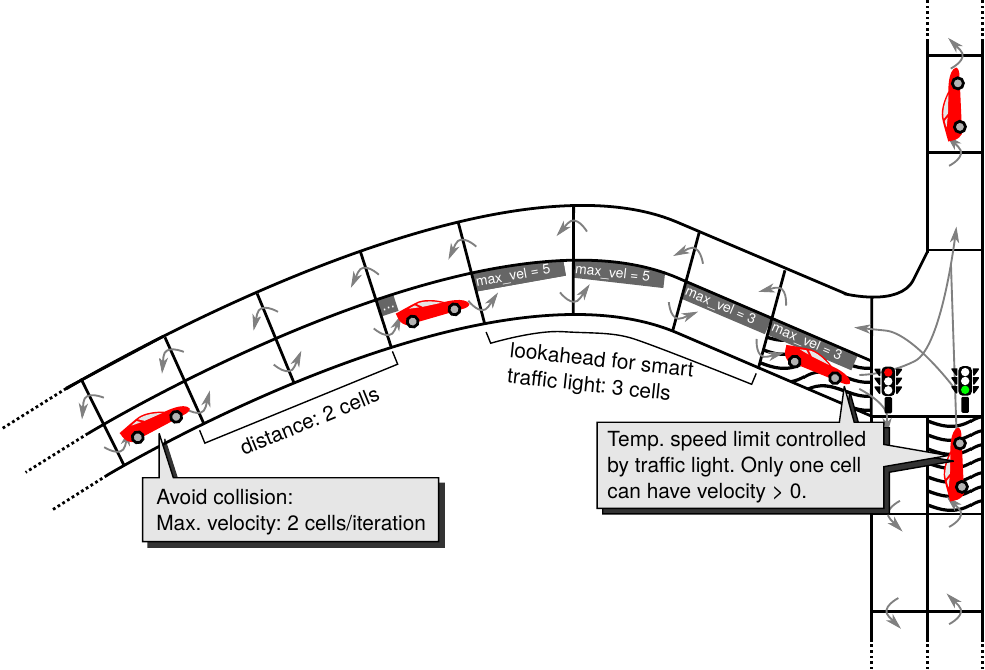}}\hfill
  \subfloat[Screenshot: Imported OpenStreetMap data]{\includegraphics[width=0.45\columnwidth]{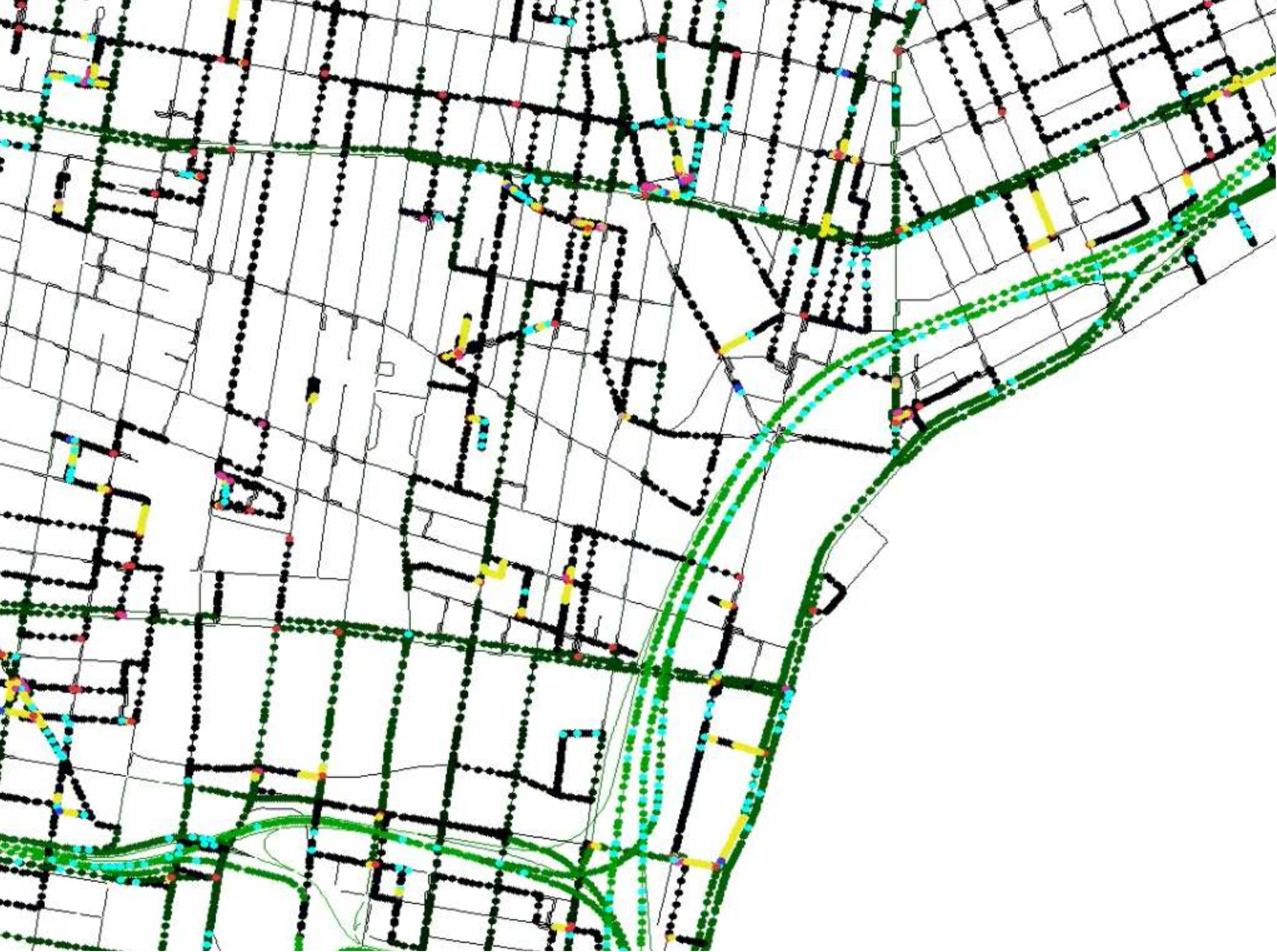}}
  \caption[\textsf{traffic}: Representation of street networks]{Representation of street networks in \textsf{traffic}}
  \label{fig:traffic_cells_and_inters}
  \vspace{30pt}

  \includegraphics[scale=0.75]{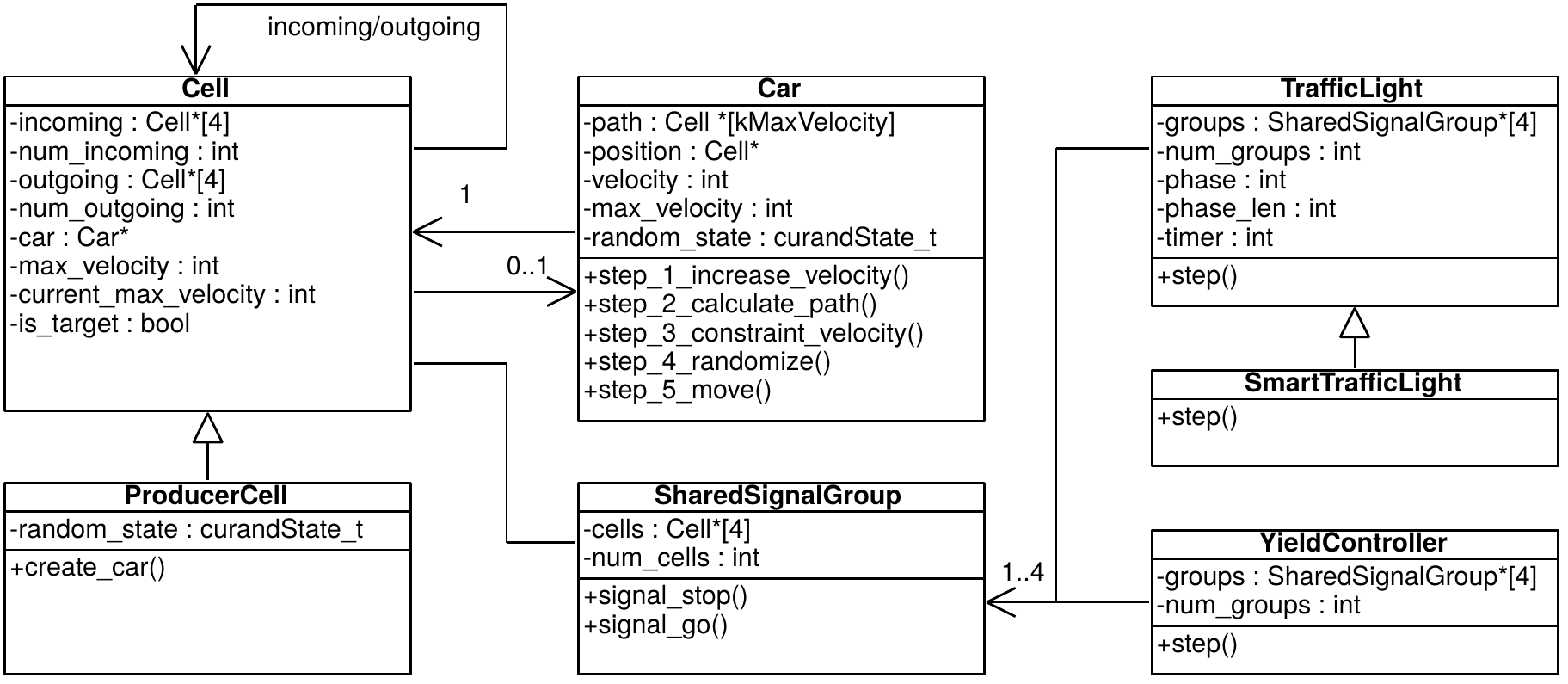}
  \caption[\textsf{traffic}: Data structure]{Data structure of \textsf{traffic}}
  \label{fig:traffic_architecture}
\end{figure}

It is important to ensure that only one car enters an outgoing street (i.e., first cell of the street) at an intersection in one iteration, even if multiple cars from different incoming streets are waiting. To that end, a \emph{traffic controller} can impose temporary speed limits on cells, e.g., a speed limit of zero, corresponding to a red light~\cite{doi:10.1142/S0129183197000904}. Traffic controllers set and remove speed limits for the last cells of all incoming streets such that only one incoming street has a green light at a time. We implemented three kinds of controllers.
\begin{itemize}
  \item A \emph{traffic light} imposes a temporary speed limit of zero on all incoming streets, except for one street which has a green \emph{phase} for a certain number of iterations (\emph{phase length}). Green phases are scheduled round-robin among all incoming streets.
  \item A \emph{smart traffic light} works like a normal traffic light but assigns a green phase to an incoming street immediately if this street is the only incoming street with a waiting car. Real traffic lights have sensors/cameras to provide such behavior. All traffic lights in the benchmark section are smart.
  \item A \emph{yield controller} corresponds to a yield traffic sign, which is often found at the end of merge lanes of highway entrances. Given $n$ incoming streets, it assigns a temporary speed limit of zero to all streets $i > s$ if street $s$ has a car, i.e., incoming streets/cells in the controller should be ordered by priority. 
\end{itemize}
The last two traffic controllers should ensure that traffic does not have to stop or slow down in front of an intersection. Thus, controllers must check all cells from which a car could cross an intersection in one iteration (not only the closest incoming cell), as indicated by the maximum allowed speed limit on a street (\emph{lookahead}). This is done with a graph traversal on back edges (\emph{incoming cells}), which terminates when a car was found within lookahead range.

\paragraph{Turn Lanes}
To allow the traffic to flow more smoothly, we generate turn lanes from each incoming street to each outgoing street at intersections. To implement a red traffic light signal for an incoming street, we now have to impose a speed limit on the last cell of each turn lane. Those cells are grouped in a \texttt{SharedSignalGroup} because they should all have the same traffic light signal\footnote{Traffic lights could be extended such that more than one street (for certain directions) has a green light at a time.}.


\subsection{Application Implementation}
This application consists of the following parallel do-all operations.


\begin{enumerate}
  \item \textbf{Advance Traffic Light State:} Increment a timer. Once it reaches the phase length, propagate a red signal to the currently green shared signal group, as indicated by \texttt{phase} (index into \texttt{groups}). Now, the next shared signal group receives a green light phase and we reset the counter. Smart traffic lights function similar but can change their phase even before the timer reaches the phase length, in case a car is waiting at only one shared signal group with a red signal. Parallel do-all: \texttt{TrafficLight::step}
  \item \textbf{Advance Yield Controller State:} Among all shared signal groups, find the first one that has a car that can cross within the next iteration. That signal group receives a green light and all other signal groups receive a red light.  Parallel do-all: \texttt{YieldController::step}
  \item \textbf{Nagel-Schreckenberg Iteration:} In the following steps, we denote the agent that is processed in a parallel do-all operation by $a$.
  \begin{enumerate}
    \item \textbf{Acceleration:} Increase the agent's velocity unless it is already driving at its maximum velocity: $v_a \gets \mathit{min}(v_a + 1, \mathit{v\_max}_a)$. \\ Parallel do-all: \texttt{Car::step\_1\_increase\_velocity}
    \item \textbf{Compute Path:} Determine the agent's path of movement of length $v_a$, i.e., the next $v_a$ many cells that it will pass through. A \emph{navigation strategy} determines the next cell at an intersection with multiple outgoing cells. Our current implementation uses random walk, biased towards large streets. Parallel do-all: \texttt{Car::step\_2\_calculate\_path}
    \item \textbf{Adjust Velocity:} Avoid collisions with other agents and enforce speed limits. To avoid collisions, reduce the speed $v_a$ to the largest possible value such that the first $v_a$ many cells on the calculated path are free. To follow speed limits, reduce $v_a$ to the largest possible value, such that $v_a \leq \mathit{v\_max}_c$ for each cell $c$ among the first $v_a$ many cells on the calculated path. \\ Parallel do-all: \texttt{Car::step\_3\_constraint\_velocity}
    \item \textbf{Randomization:} Reduce the agent's velocity by one unit with a probability of 20\%. Parallel do-all: \texttt{Car::step\_4\_randomize}
    \item \textbf{Update Position:} Move the agent from its current location by $v_a$ many cells according to the calculated path. Parallel do-all: \texttt{Car::step\_5\_move}
  \end{enumerate}
\end{enumerate}

In our \textsc{DynaSOAr} benchmarks, we, furthermore, distinguish between three kinds of cells: Regular cells, producer cells and sink cells. Producer cells create new agents with given probability if the cell is empty. Sink cells remove agents with a given probability. This is to simulate dynamic (de)allocation.

In a real traffic simulation, we would generate new agents in certain city areas (e.g., residential areas). Agents would move to certain waypoints according to a \emph{schedule} which can be generated from real-world traffic data. Agents would be removed from the simulation when they reach their final waypoint.



\paragraph{Step~3c: Adjust Velocity}
Listing~\ref{lst:traffic_avoid_crash} illustrates how \textsf{traffic} avoids collisions and speed limit violations in a Nagel-Schreckenberg iteration. Let us assume that we decided to move the agent at a certain \texttt{velocity} in Step~3a and precomputed the same number of cells in \texttt{path} as part of Step~3b. Now, we check every cell on the precomputed path. The speed limit of an agent has to be reduced in two cases.

\begin{enumerate}
  \item A cell on the path is occupied by another agent (Line~7). In this case, the speed limit has to be reduced such that the cell is not entered.
  \item A cell on the path has a speed limit of less than the agent's intended velocity (Line~13). Depending on which option would make more progress, we either reduce the speed limit to the cell's speed limit (Line~15) or decide to stop before entering the cell (Line~18). We make more progress in the latter case if the cell has such a low speed limit that we would not even reach the cell in this iteration. E.g., this is the case when approaching a red traffic light (speed limit zero).
\end{enumerate}

\begin{lstfloat}
\begin{lstlisting}[language=c++, caption={[\textsf{traffic}: Avoiding collisions and speed limit violations]Avoiding collisions and speed limit violations}, label={lst:traffic_avoid_crash}, morekeywords={__device__, nullptr}]
__device__ void Car::step_3_constraint_velocity() {
  for (int distance = 1; distance <= velocity_; ++distance) {
    // Invariant: Movement of up to distance - 1 many cells at velocity_ is allowed.
    Cell* next_cell = path_[distance - 1];

    // Avoid collision.
    if (next_cell->car_ != nullptr) {
      // Cannot enter cell.
      velocity_ = distance - 1;
      break;
    } // else: Can enter next cell.

    if (velocity_ > next_cell->current_max_velocity_) {
      // Car is too fast for this cell.
      if (next_cell->current_max_velocity_ > distance - 1) {
        // Even if we slow down, we would still make progress.
        velocity_ = next_cell->current_max_velocity_;
      } else {
        // Do not enter the next cell.
        velocity_ = distance - 1;
        break;
      }
    }
  }
}
\end{lstlisting}
\end{lstfloat}

Notice how the \texttt{Car::path} array is accessed. For each car, we access array elements sequentially, starting from index 0 up to index $\texttt{velocity} - 1$. We can optimize this access by storing this inner array in SOA layout.

\subsection{Object-oriented Traffic Simulations}
Previous work has demonstrated that traffic simulations can be expressed very well with object-oriented programming, because many real-world entities such as different kinds of vehicles, streets, intersections, traffic lights, etc. can be directly mapped to objects/classes~\cite{doi:10.1080/18128600808685689, 396833, traffic_phd_thesis_hel}. Traffic simulations are even used for teaching object-oriented software design to students~\cite{Proulx:1998:TSC:273133.273160}. 

\section{\textsf{wa-tor}: Fish and Sharks Simulation}
\label{sec:smmo_ex_wa_tor_sec}
\textsf{wa-tor} is an ecosystem of fish and sharks that occupy a 2D space in a predator-prey relationship~\cite{10.2307/24969495}. Such ecosystems can be described with the Lotka-Volterra equations~\cite{RevModPhys.43.231}. In this application, we simulate \textsf{wa-tor} with a cellular automaton (CA). Cellular automatons are well suited for GPU execution~\cite{GIBSON201511}. There are typically thousands of cells and all cells feature the same or similar computations, based on the state of neighboring cells.

\textsf{wa-tor} simulates a hypothetical, torus-shaped (2D space, wrapping around at the borders) planet made up of cells. Each cell can be occupied by up to one agent (fish or shark). Sharks are predators who are eating the fish (the prey). In each iteration, an agent can move to a neighboring cell. Fish can only move to empty cells, but sharks can move to empty cells or cells containing a fish, thereby consuming the fish. Sharks have an energy level that decreases with each iteration and increases when consuming a fish. Both fish and sharks reproduce after a certain number of iterations.

\subsection{Data Structure}
\begin{figure}
  \centering
  \includegraphics[scale=0.75]{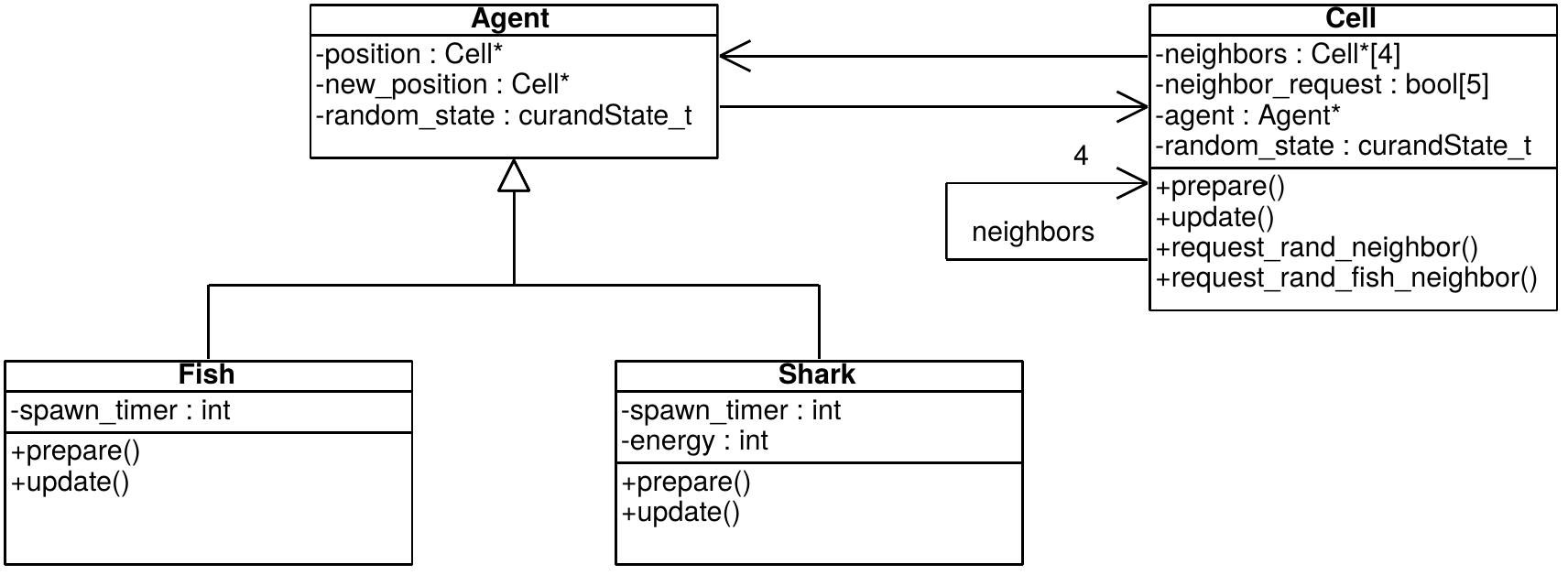}
  \caption[\textsf{wa-tor}: Data structure]{Data structure of \textsf{wa-tor}}
  \label{fig:data_s_wa_tor_smmoex}
\end{figure}
Figure~\ref{fig:data_s_wa_tor_smmoex} illustrates the data structure of \textsf{wa-tor}. There is an abstract class \texttt{Agent} with two subclasses \texttt{Fish} and \texttt{Shark}, and another class \texttt{Cell}. 

Agents store a pointer to their current position (\texttt{position}) and have a field \texttt{new\_position} which is used for determining the next cell to move to. Cells have pointers to all four neighboring cells (\texttt{neighbors}). Furthermore, there is an array \texttt{neighbor\_request} with a boolean flag per neighbor, indicating whether an agent from a neighboring cell is attemping to enter this cell. This data structure allows us to implement the movement of agents without expensive synchronization mechanisms such as atomic operations.

\subsection{Application Implementation}
\begin{figure}
  \centering
  \subfloat[Cell interaction]{\includegraphics[width=0.65\textwidth]{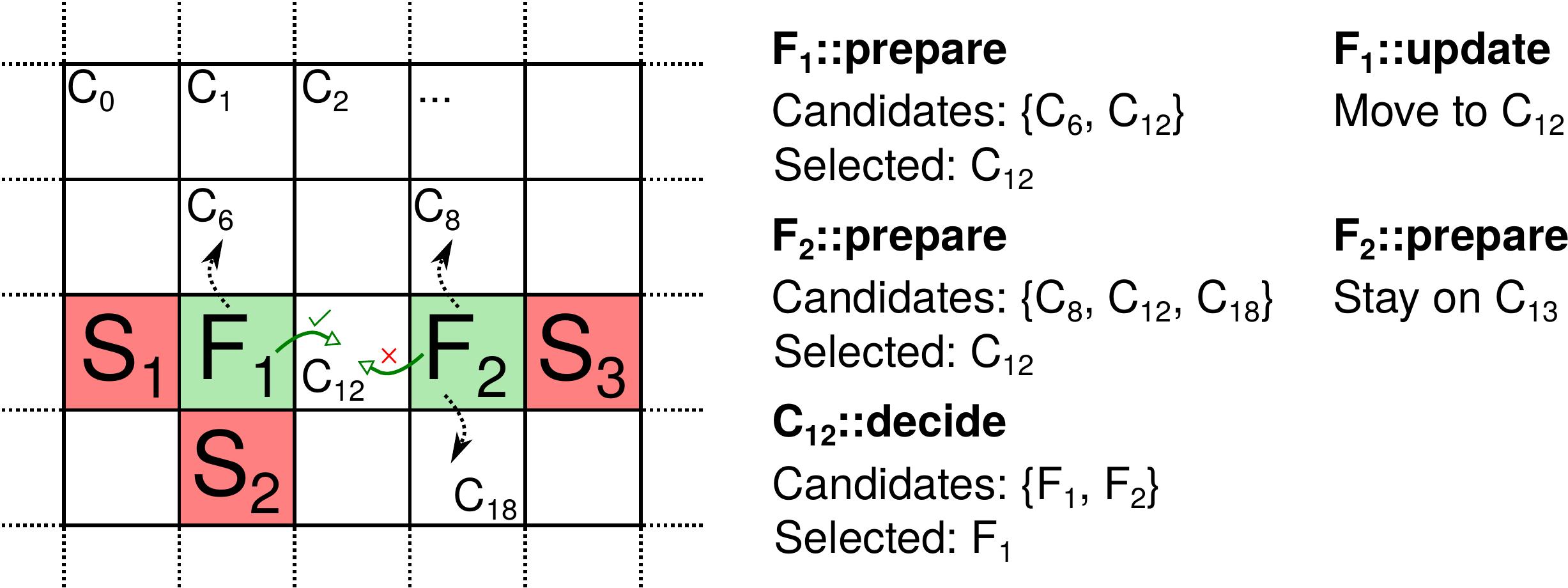}} \hfill
  \subfloat[Screenshot]{\includegraphics[width=0.24\textwidth]{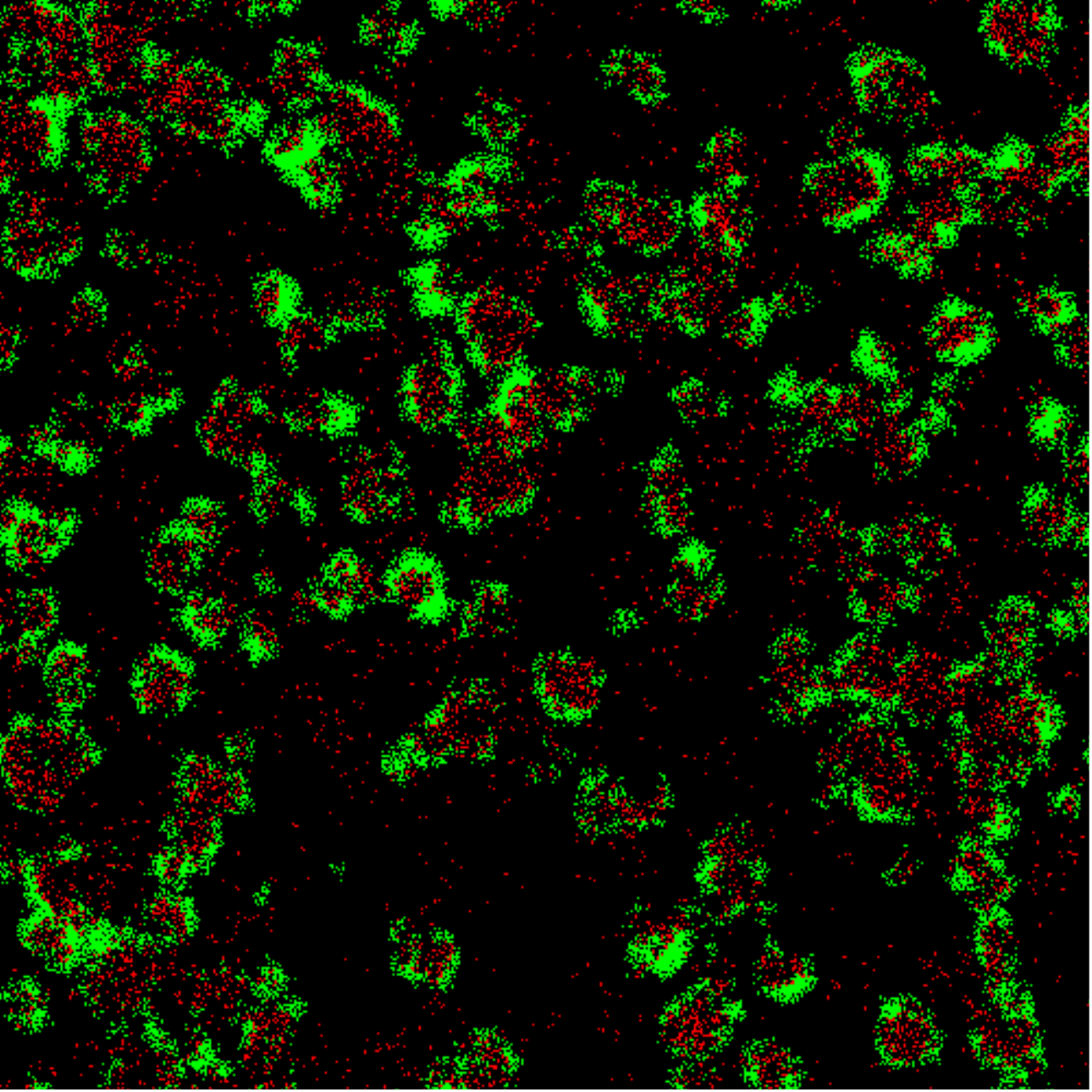}}
  \caption[\textsf{wa-tor}: Cell interaction and screenshot]{Cell interaction and screenshot of \textsf{wa-tor}. Green areas indicate fish and red areas indicate sharks.}
  \label{lst:wator_cell_interaction_smmoex}
\end{figure}

\textsf{wa-tor} is an iterative algorithm. An iteration consists of the following steps.
\begin{enumerate}
  \item \textbf{Reset Cells:} Initialize the array \texttt{neighbor\_requests} to \emph{false}. This array is used by agents of neighboring cells to indicate that they wish to enter this cell. Parallel do-all: \texttt{Cell::prepare}
  \item \textbf{[Fish] Select Outgoing Cell:} Among all four neighboring cells, select an empty cell at random and set the corresponding \texttt{neighbor\_request} to \emph{true}. This is the cell that the agent is planning to move to. Parallel do-all: \texttt{Fish::prepare}
  \item \textbf{Select Incoming Fish:} Among all agents that are trying to move to this cell, as indicated by the \texttt{neighbor\_request} array, select one at random. Parallel do-all: \texttt{Cell::decide}
  \item \textbf{[Fish] Move to New Cell:} Move to the selected cell if approval was granted. Leave a new \texttt{Fish} at the current location with a certain probability. Parallel do-all: \texttt{Fish::update}
  \item \textbf{Reset Cells:} Same as Step~1. Parallel do-all: \texttt{Cell::prepare}
  \item \textbf{[Shark] Select Outgoing Cell:} Same as Step~2, but sharks are also allowed to and prefer to move onto cells that contain a fish. Parallel do-all: \texttt{Shark::prepare}
  \item \textbf{Select Incoming Shark:} Same as Step~3. Parallel do-all: \texttt{Cell::decide}
  \item \textbf{[Shark] Move to New Cell:} Same as Step~4, but sharks consume a fish if present on the target cell. Parallel do-all: \texttt{Shark::update}
\end{enumerate}

Figure~\ref{lst:wator_cell_interaction_smmoex} illustrates the selection of cells by fish. In this figure, red blocks indicate sharks and green blocks indicate fish. $F_1$ scans its neighborhood for empty cells to move to\footnote{Fish and sharks cannot move diagonally.}. Possible candidates are $C_6$ and $C_{12}$. Among those, $F_1$ randomly selects $C_{12}$. However, $F_2$ also selects $C_{12}$ in the same iteration. Only one agent is allowed to move onto the cell. Among those two, \texttt{Cell::decide} randomly decides that $F_1$ is allowed to enter the cell. As a consequence, $F_1$ moves to $C_{12}$ and $F_2$ remains at its original position.

\begin{lstfloat}
\begin{lstlisting}[language=c++, caption={[\textsf{wa-tor}: Application logic]Simulation logic for \texttt{Fish}}, label={lst:wator_listing_step_1_to_4}, morekeywords={__device__, nullptr}]
__device__ void Cell::prepare() {
  for (int i = 0; i < 5; ++i) { neighbor_request_[i] = false; }
}

__device__ void Fish::prepare() {
  ++spawn_timer_;
  position_->request_random_free_neighbor();
}

__device__ void Cell::request_random_free_neighbor() {
  int candidates[4];  int num_candidates = 0;

  for (int i = 0; i < 4; ++i) {
    if (neighbors_[i]->agent_ == nullptr) { candidates[num_candidates++] = i; }
  }

  if (num_candidates == 0) {  // Stay on this cell.
    neighbor_request_[4] = true;
  } else {
    int selected_index = curand(&random_state) % num_candidates;  // cuRAND library
    int selected = candidates[selected_index];
    int neighbor_index = (selected + 2) % 4;
    neighbors_[selected]->neighbor_request_[neighbor_index] = true;
  }
}

__device__ void Cell::decide() {
  if (neighbor_request_[4]) {  // This cell has priority.
    agent_->new_position_= this;
  } else {  // Select random agent among requesting neighbors.
    int candidates[4];  int num_candidates = 0;

    for (int i = 0; i < 4; ++i) {
      if (neighbor_request_[i]) { candidates[num_candidates++] = i;  }
    }

    if (num_candidates > 0) {
      int selected_index = curand(&random_state_) % num_candidates;
      neighbors_[candidates[selected_index]]->agent_->new_position_ = this;
    }
  }
}

__device__ void Fish::update() {
  Cell* old_position = position_;
  if (old_position != new_position_) {
    position_ = new_position_;
    position_->agent_ = this;

    if (spawn_timer_ > kSpawnThreshold) {  // Spawn offspring.
      auto* new_fish = new(device_allocator) Fish();
      new_fish->position_ = old_position;
      old_position->agent_ = new_fish;
      spawn_timer_ = 0;
    } else { old_position->agent_ = nullptr; }
  }
}
\end{lstlisting}
\end{lstfloat}

Listing~\ref{lst:wator_listing_step_1_to_4}\textsc{a} shows how the simulation logic for \texttt{Fish} (Steps~1--4) is implemented. The simulation logic for \texttt{Shark}s is implemented in a similar way. Requests to move to a cell are stored in \texttt{neighbor\_request}. This array has five slots: One for each neighbor and one for the cell itself, indicating that an agent wishes to stay on its current cell. This is the case if all neighboring cells are occupied. After examining all requests, \texttt{Cell::decide} sets \texttt{new\_position} of the agent that is allowed to enter the cell. \texttt{Fish::update} moves to this cell and leaves a new \texttt{Fish} object at its old location every \texttt{kSpawnThreshold} iterations.

\subsection[Benefits of OOP and Dynamic Allocation]{Benefits of Object-oriented Programming and Dynamic Allocation}
Complex ecosystems such as predatory-prey ecosystems or population ecosystems are often modelled with object-oriented programming~\cite{JONES199431, doi:10.1177/003754979406200106, FERREIRA199521} because real or abstract entities can be mapped directly to objects/classes. Previous work describes in detail why object-oriented programming is suitable and an intuitive way of modelling such ecosystems~\cite{SILVERT199391}.

\textsf{wa-tor} is difficult to implement without dynamic object allocation. \texttt{Fish} and \texttt{Shark} objects are created and deleted all the time. However, every cell contains at most one \texttt{Fish} or \texttt{Shark} object at a time. To implement \textsf{wa-tor} with only static allocation (baseline), we merged all classes of \textsf{wa-tor} into a single class \texttt{Cell} (Figure~\ref{fig:data_s_wa_tor_smmoex_merged}). The field \texttt{agent\_type} indicates whether a cell contains an agent and if so, what the type of the agent is (fish or shark).

\begin{figure}
  \centering
  \includegraphics[scale=0.75]{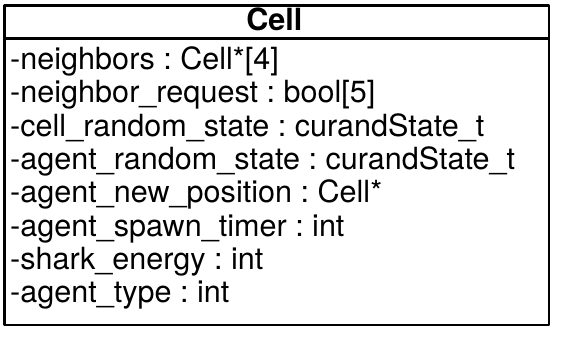}
  \caption[\textsf{wa-tor}: Data structure (merged agents into \texttt{Cell})]{Data structure of \textsf{wa-tor} (merged agents into \texttt{Cell}), without methods}
  \label{fig:data_s_wa_tor_smmoex_merged}
\end{figure}

This example highlights the importance of dynamic object allocation. Our baseline versions with only static allocation break abstractions of our object-oriented implementation because class \texttt{Cell} now contains fields that logically describe both cells and agents.

\section{\textsf{sugarscape}: Simulation of Population Dynamics}
Sugarscape is an agent-based social simulation, originally presented in Epstein and Axtell's book \emph{Growing Artificial Societies}~\cite{RePEc:mtp:titles:0262550253}. It is a celluar automaton that simulates the behavior and interaction of agents (male/female) on a 2D grid. Agents require a certain amount of sugar to survive an iteration (\emph{metabolism}). Cells grow and accumulate sugar over time and agents can harvest sugar by moving onto a cell.

Many variants of Sugarscape have been implemented in the past. Those variants can simulate complex social dynamics~\cite{DBLP:journals/corr/Kehoe15} such as trade, wars, diseases, etc. Our \textsf{sugarscape} implementation is rather simple and can simulate ageing and reproduction.

\begin{figure}
  \centering
  \includegraphics[scale=0.5]{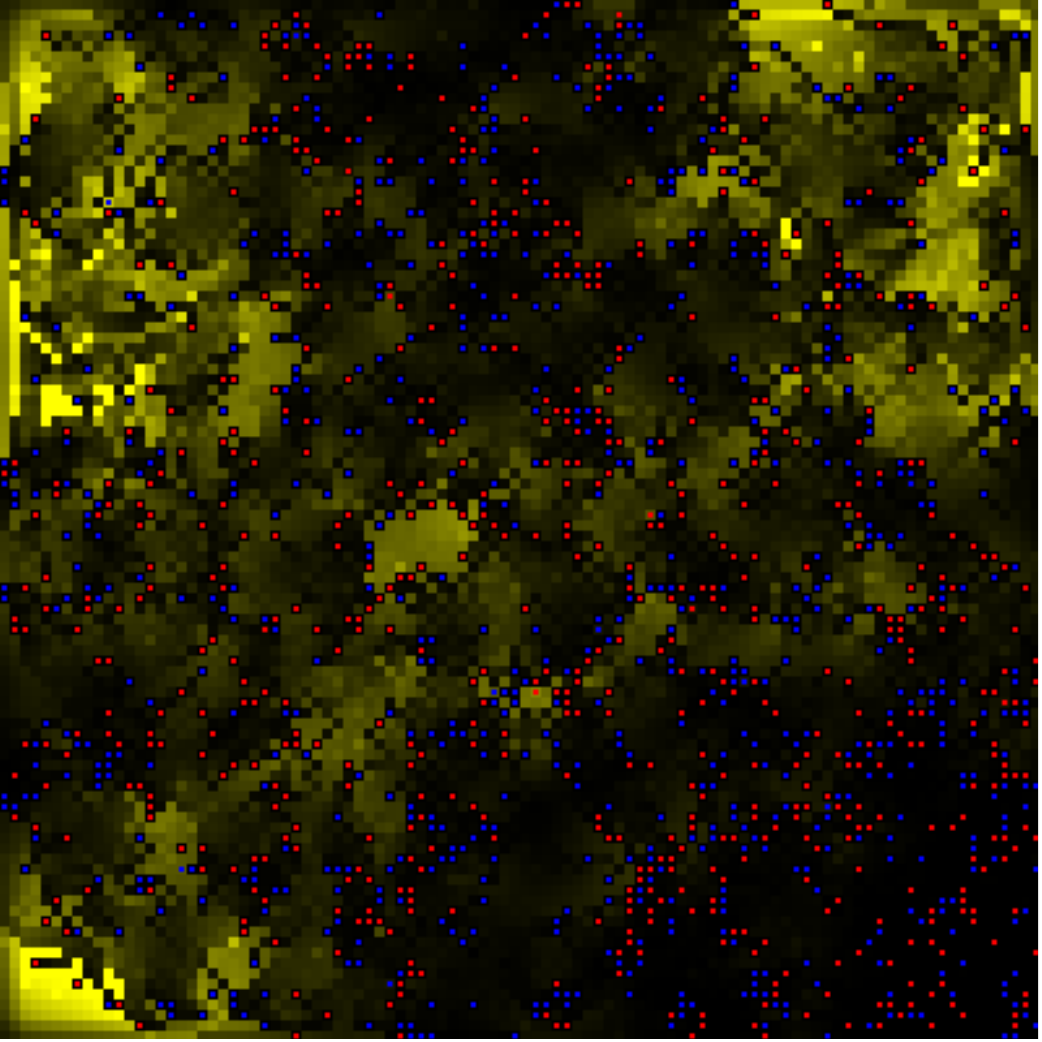}
    \caption[\textsf{sugarscape}: Screenshot]{Screenshot of \textsf{sugarscape}. Yellow areas indicate sugar levels. Sugar diffuses to neighboring cells over time. Male/female agents are colored in blue/red.}
\end{figure}

\subsection{Data Structure}
\textsf{sugarscape} consists of a 2D grid of cells with agents moving upon them, so the data structure (Figure~\ref{fig:data_s_sugarscape}) is similar to \textsf{wa-tor}. The main difference is that the grid structure is \emph{not} encoded with adjacency lists in class \texttt{Cell}. Instead, there is a global array \texttt{cells} which stores pointers to all cells from left to right, top to bottom. Either implementation, adjacency list or global array, is feasible.

\begin{figure}
  \centering
  \includegraphics[scale=0.75]{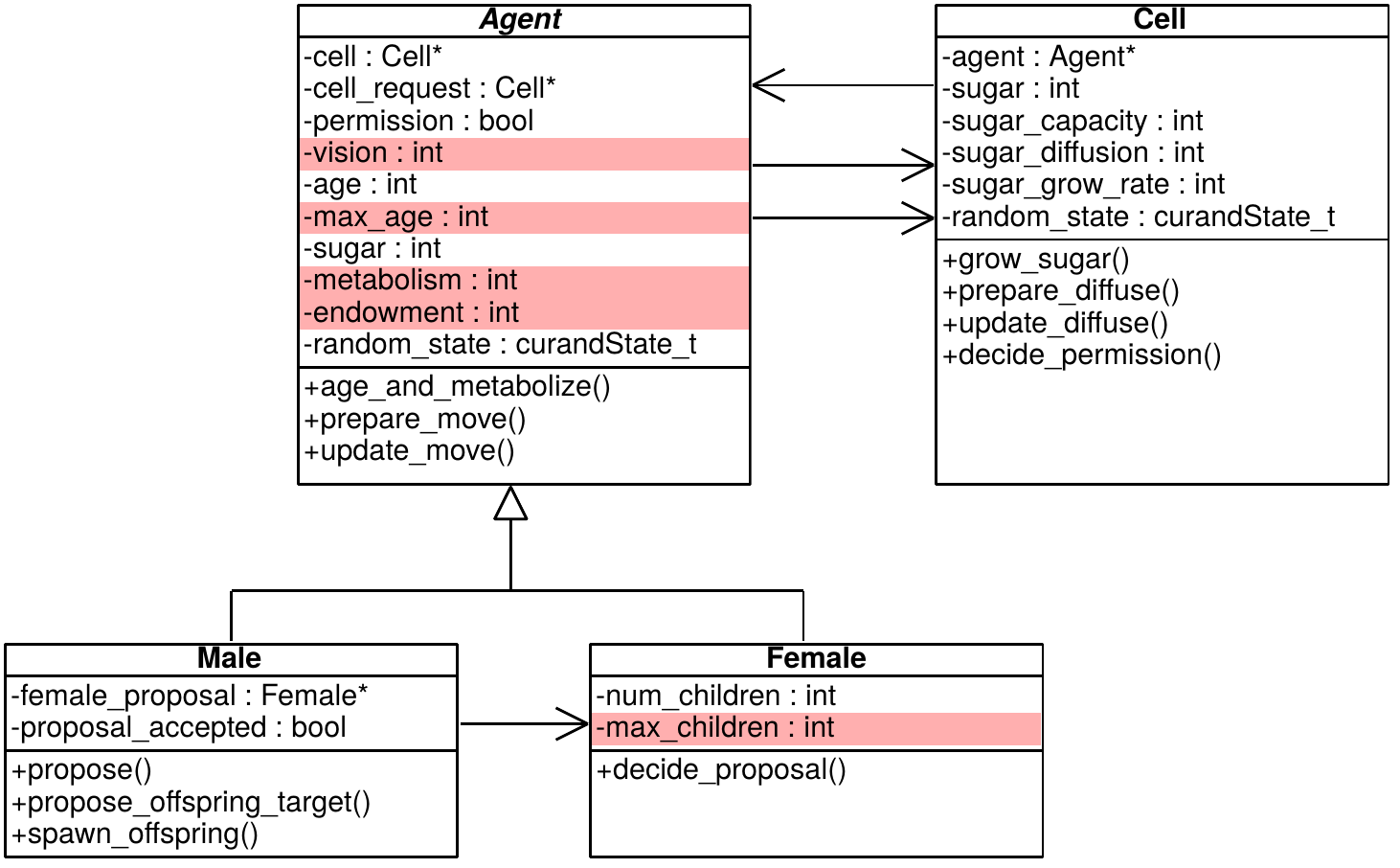}
  \caption{\textsf{sugarscape}: Data structure}
  \label{fig:data_s_sugarscape}
\end{figure}

Agents have a number of \emph{genetic properties}, as indicated by the red color in Figure~\ref{fig:data_s_sugarscape}. Those properties are passed down to offspring.

\subsection{Application Implementation}
This application consists of a large number of parallel do-all operations. More complex Sugarscape simulations would require even more do-all operations. The operations that implement the moving behavior of agents are similar to the respective operations of \textsf{wa-tor}.

\begin{enumerate}
  \item \textbf{Grow Sugar:} Increase the amount of sugar of this cell by \texttt{sugar\_grow\_rate}. Parallel do-all: \texttt{Cell::grow\_sugar}
  \item \textbf{Precompute Sugar Diffusion:} Compute the amount of sugar to be diffused to neighboring cells. Parallel do-all: \texttt{Cell::prepare\_diffuse}
  \item \textbf{Diffuse Sugar:} Take sugar from neighboring cells and add to this cell. Parallel do-all: \texttt{Cell::update\_diffuse}
  \item \textbf{Age and Metabolize:} Increase the age of the agent and reduce the sugar level of the agent by \texttt{metabolism\_rate}. If the age exceeds \texttt{max\_age} or if the sugar level drops below zero, the agent dies. Parallel do-all: \texttt{Agent::age\_and\_metabolize}
  \item \textbf{Select Outgoing Cell:} Among all neighboring cells with distance of up to \texttt{vision}, select the empty cell with the largest amount of sugar. Store a pointer to that cell in \texttt{cell\_request}. Parallel do-all: \texttt{Agent::prepare\_move}
  \item \textbf{Select Incoming Agent:} Among all agents that are trying to move to this cell, as indicated by the surrounding agents' \texttt{cell\_request}, select one at random. Parallel do-all: \texttt{Cell::decide\_permission}
  \item \textbf{Move to New Cell:} Move to the selected cell if approval was granted. Consume all sugar of that cell. Parallel do-all: \texttt{Agent::update\_move}
  \item \textbf{Male $\xrightarrow{\text{Propose}}$ Female:} If a \texttt{Male} is ready to spawn offspring, i.e., sugar level $>$ endowment: Among all neighboring \texttt{Female} agents with distance of up to \texttt{vision}, select the agent with the largest amount of sugar. Parallel do-all: \texttt{Male::propose}
  \item \textbf{Female $\xrightarrow{\text{Accept Proposal}}$ Male:} Among all neighboring, proposing \texttt{Male} agents with distance of up to \texttt{vision}, accept the agent with the largest amount of sugar. A \texttt{Female} cannot accept any more proposals once \texttt{num\_children} $\geq$ \texttt{max\_children}. Parallel do-all: \texttt{Female::decide\_proposal}
  \item \textbf{Choose Offspring Target:} If the proposal was accepted, select a neighboring cell with distance of up to \texttt{vision} at random. This is similar to Step~5. Parallel do-all: \texttt{Male::propose\_offspring\_target}
  \item \textbf{Select Agent:} Among all \texttt{Male} agents that are tyring to spawn offspring on this cell, select one agent at random. This is similar to Step~6. Parallel do-all: \texttt{Cell::decide\_permission}
  \item \textbf{Generate Offspring:} Spawn a new agent on the selected cell if approval was granted. The gender is selected randomly. The genetic properties of the new agent (vision, endowment, metabolism, maximum age) are average values of both parents. The new agent receives an initial sugar level of half the endowment of each parent. Parallel do-all: \texttt{Male::spawn\_offspring}
\end{enumerate}

Previous work describes how to implement Sugarscape efficiently on parallel architectures, in particular on GPUs~\cite{sugarscape_agents}. These Sugarscape implementations differ in details, but show that GPUs are well-suited for agent-based modelling~\cite{Aaby:2010:ESA:1808143.1808181, sugarcape_gpu_2} and simulating a massive number of agents.

\section{\textsf{gol}: Game of Life}
Game of Life is a cellular automaton due to John H. Conway. The game follows a simple set of rules, but can exhibit complex behavior and has even been shown to be Turing complete~\cite{Rendell:2015:TMU:2815663}.

\textsf{gol} consists of a 2D grid of cells. A cell can either be alive or dead. In every iteration, the new state of a cell is decided as follows: Living cells remain alive only if they are surrounded by two or three alive neighbors. Dead cells become alive if they are surrounded by three alive neighbors.

The most straightforward Game of Life implementation calculates the new state of \emph{every} cell depending on its neighborhood. \textsf{gol} follows a different approach. We only simulate cells that are alive or may become alive in the next iteration (\emph{alive-candidates}). Candidates are dead neighbors of alive cells. This approach has a lower expected runtime complexity because most cells are dead most of the time.

\subsection{Data Structure}
\begin{figure}
  \centering
  \includegraphics[width=0.75\textwidth]{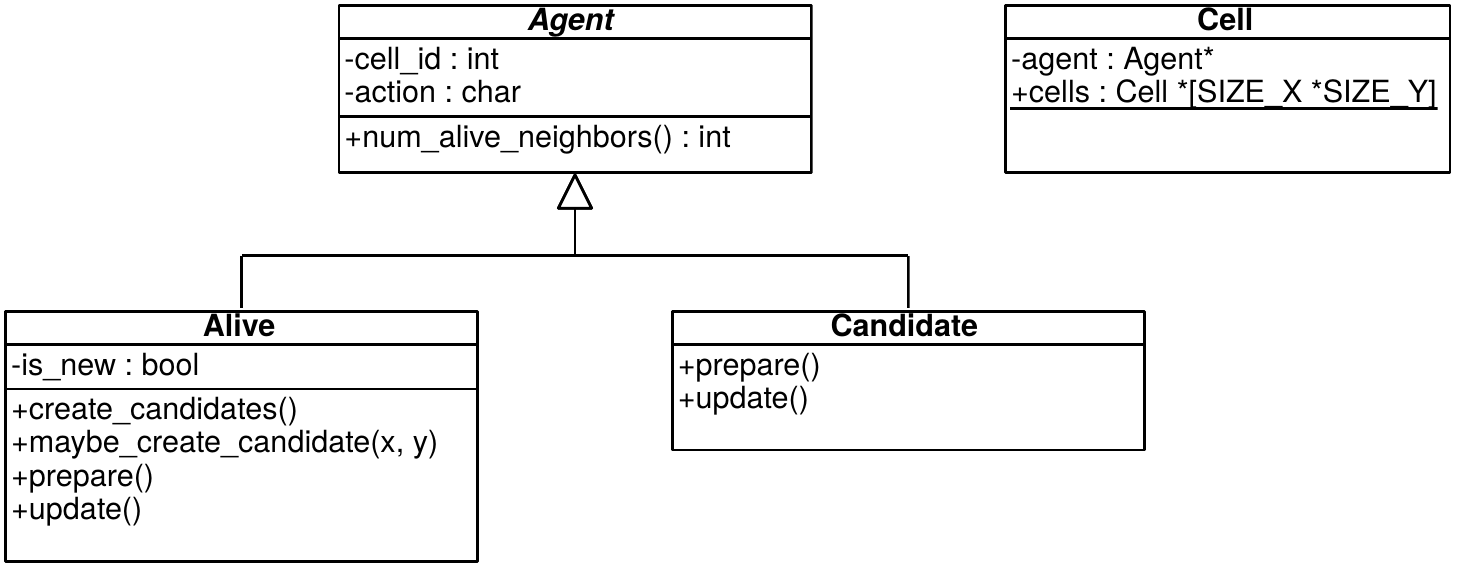}
  \caption[\textsf{gol}: Data structure]{Data structure of \textsf{gol}}
  \label{lst:data_s_gol}
\end{figure}
This application has four classes (Figure~\ref{lst:data_s_gol}). An abstract class \texttt{Agent} with two subclasses \texttt{Alive} and \texttt{Candidate}, and another class \texttt{Cell}. Cells are created at the beginning of the simulation and stored in a global array (static class variable) \texttt{Cell::cells}. A cell can contain an agent or be empty. Agents store references to their cell in the form of an integer index into the global cells array. This requires less memory than an 8-byte pointer and makes it easier to determine all neighboring cells without storing explicit adjencency lists as in \textsf{wa-tor}. 

The main problem of \textsf{gol} is space efficiency. A typical Game of Life implementation requires only 1 bit to encode the state of a cell. In \textsf{gol}, every cell requires at least 8~bytes in the form of an \texttt{Agent} pointer. Therefore, even though \textsf{gol} has a lower expected runtime complexity in terms of the number of processed cells, a naive implementation that spawns one GPU thread per cell and encodes the cell state with 1~bit/byte is usually much faster. However, our \textsf{gol} data structure and computation strategy can serve as a blueprint for other cellular automata that have to maintain a more complex state for each cell~\cite{10.1007/978-3-642-29219-4_84, ca_journal_appl, doi:10.1142/S0219525907001057}.

\subsection{Application Implementation}
\begin{figure}
  \centering
  \subfloat[Cell interaction: Creating new \texttt{Candidate}s]{\includegraphics[width=0.65\textwidth]{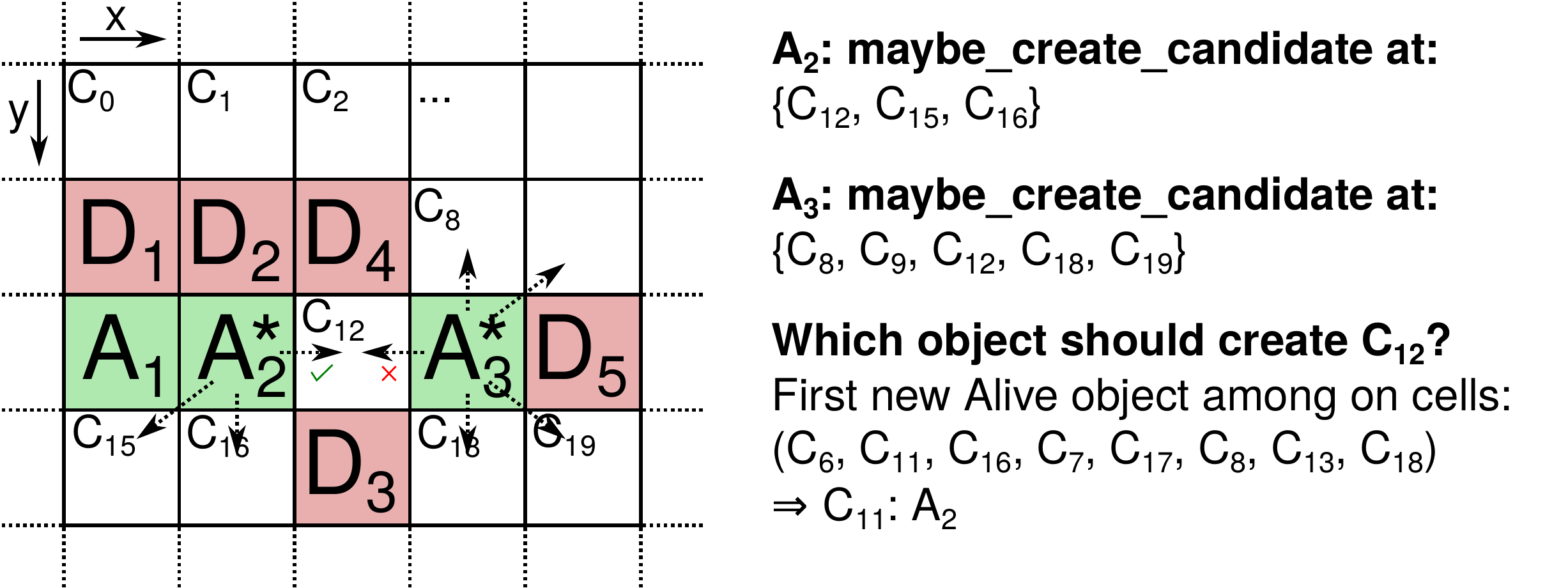}} \hfill
  \subfloat[Screenshot (UTM)]{\includegraphics[width=0.24\textwidth]{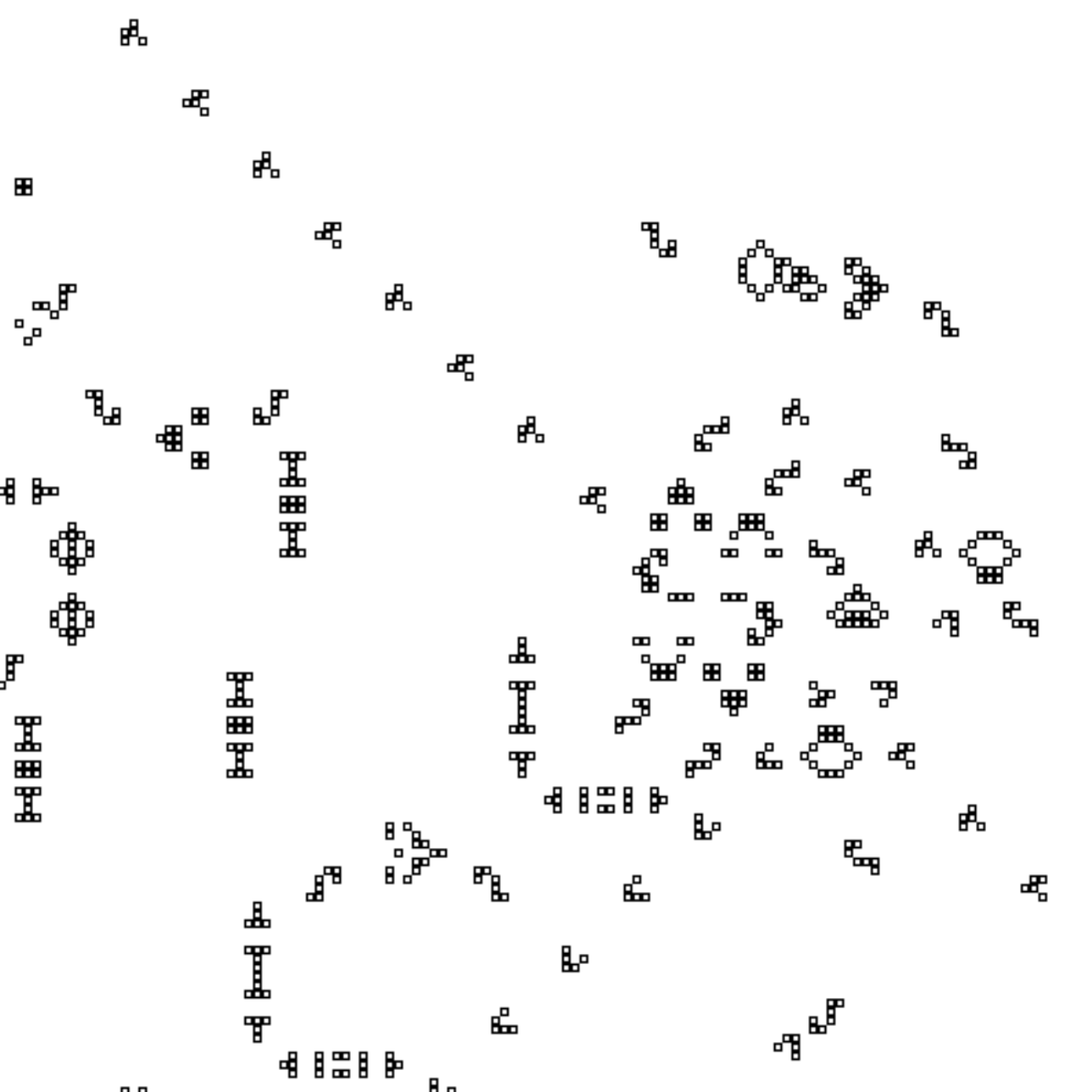}}
  \caption[\textsf{gol}: Cell interaction and screenshot]{Cell interaction and screenshot of \textsf{gol}. Green blocks = \texttt{Alive}, red blocks = \texttt{Candidate}, star = newly created.}
  \label{lst:gol_interaction_smmoex}

\vspace{10pt}
\begin{lstlisting}[language=c++, caption={[\textsf{gol}: Application logic]Creating \texttt{Candidate} objects on dead cells}, label={lst:create_candidate_object_smmo}, morekeywords={__device__, nullptr}, numbers=none]
__device__ void Alive::update() {
  if (is_new_) { create_candidates(); }
  else {
    if (action_ == kActionDie) {  // Replace with Candidate.
      Cell::cells[cell_id_]->agent_ = new(device_allocator) Candidate(cell_id_);
      destroy(device_allocator, this);
    }
  }
}

__device__ void Alive::create_candidates() {
  for (int dx = -1; dx < 2; ++dx) {
    for (int dy = -1; dy < 2; ++dy) {
      int nx = cell_id_ % SIZE_X + dx;
      int ny = cell_id_ / SIZE_X + dy;
      if (nx > -1 && nx < SIZE_X && ny > -1 && ny < SIZE_Y) {
        if (Cell::cells[ny*SIZE_X + nx]->agent_ == nullptr) {
          maybe_create_candidate(nx, ny);
        }
      }
    }
  }
}

__device__ void Alive::maybe_create_candidate(int x, int y) {
  // Check neighborhood of cell to determine who should create Candidate.
  for (int dx = -1; dx < 2; ++dx) {
    for (int dy = -1; dy < 2; ++dy) {
      int nx = x + dx;
      int ny = y + dy;
      if (nx > -1 && nx < SIZE_X && ny > -1 && ny < SIZE_Y) {
        Alive* alive = Cell::cells[ny*SIZE_X + nx]->agent_->cast<Alive>();
        if (alive != nullptr && alive->is_new_) {
          if (alive == this) {
            Cell::cells[y*SIZE_X + x]->agent_ =
                new(device_allocator) Candidate(y*SIZE_X + x);
          }  // else: Created by another Alive.
          return;
        }
      }
    }
  }
}
\end{lstlisting}
\end{figure}
Before entering the main application loop, \textsf{gol} loads the initial game state from a PBM (\emph{Portable BitMap}) file. Such files can be created with common image manipulation programs such as GIMP. We benchmarked \textsc{DynaSOAr} with Rendell's universal turing machine pattern~\cite{Rendell:2015:TMU:2815663}. Every loop iteration of \textsf{gol} consists of four parallel do-all operations.

\begin{enumerate}
  \item \textbf{Prepare Candidate Action:} Decide whether a candidate should be deleted, upgraded to an \texttt{Alive} cell or remain as is. Parallel do-all: \texttt{Candidate::prepare}
  \item \textbf{Prepare Alive Action:} Decide whether an alive cell should be deleted (downgraded to \texttt{Candidate}) or stay as is. Parallel do-all: \texttt{Alive::prepare}
  \item \textbf{Perform Candidate Action:} Perform the action determined in Step~1. Parallel do-all: \texttt{Candidate::update}
  \item \textbf{Perform Alive Action:} Perform the action determined in Step~2. If this is a newly created object (Step~3), create \texttt{Candidate}s on all surrounding cells. Requires special handling to ensure that no two objects create a \texttt{Candidate} on the same cell. Parallel do-all: \texttt{Alive::update}
\end{enumerate}

\paragraph{Step~4: Perform Alive Action}
This step is the most challenging part of the application (Listing~\ref{lst:create_candidate_object_smmo}). Newly created \texttt{Alive} objects must spawn \texttt{Candidate} objects around them, but we have to ensure that we do not create multiple objects per cell.

Figure~\ref{lst:gol_interaction_smmoex} illustrates this problem with an example. $A_2$ and $A_3$ are newly created \texttt{Alive} objects. These must create \texttt{Candidate} objects on all empty, surrounding cells. $C_{12}$ is a neighbor of both $A_2$ and $A_3$, but only one of them should create a \texttt{Candidate} at that location.

To that end, we specify an order among \texttt{Alive} objects that determines which object is creating a \texttt{Candidate} object: top to bottom, left to right. E.g., with respect to $C_{12}$, the \texttt{Alive} object on $C_6$ should create the candidate. However, that cell does not have a newly created \texttt{Alive} object. Therefore, we check $C_{11}$ next. This cell contains $A_2$, a newly created \texttt{Alive} object, so it is that object's responsibility to create a candidate on $C_{12}$. Every newly created \texttt{Alive} object can by itself scan the neighborhood of every surrounding, dead (empty) cell, to determine if it should create a candidate at that location.

\subsection{\textsf{generation}: Generational Cellular Automaton}
\begin{figure}
  \centering
  \includegraphics[scale=0.5]{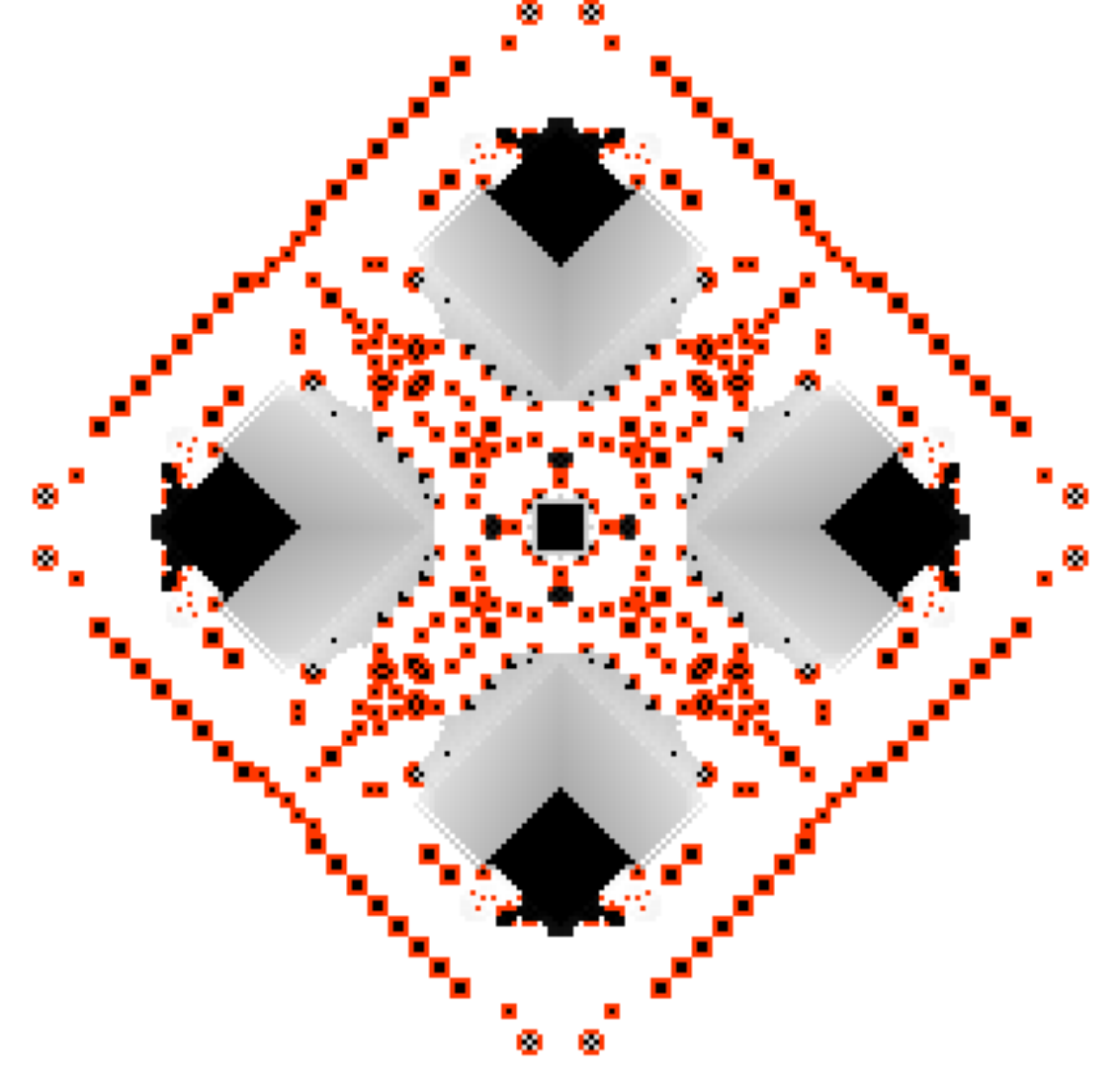}
  \caption[\textsf{generation}: Screenshot]{Screenshot of \textsf{generation}}
  \label{fig:smmo_screenshot_generation}
\end{figure}

\textsf{generation} is an extension of \textsf{gol}. When an \texttt{Alive} cell dies, it stays around for a constant number of iterations before it is replaced with a \texttt{Candidate} and can subsequently become alive again.

This application is based on the rule \emph{Burst}~\cite{gol_burst} (0235678/3468/255): An alive cell remains alive if it has 0, 2, 3, 5, 6, 7 or 8 alive, neighboring cells. Otherwise, the \texttt{Alive} object dies but blocks the cell for 255 more iterations. A dead cell becomes alive if it has 3, 4, 6 or 8 alive, neighboring cells.

Figure~\ref{fig:smmo_screenshot_generation} shows a screenshot of \textsf{generation}. The red areas indicate \texttt{Candidate}s. The black areas indicate \texttt{Alive} objects. \texttt{Alive} objects that are already dead but still block a cell are shown in gray.

\section{Conclusion}
In this chapter, we presented the design and implementation of nine applications from different domains with the object-oriented SMMO programming model. This chapter speaks to the importance and broadness of the SMMO model.

This chapter also highlights the benefits of object-oriented programming: In most applications, object-oriented abstractions greatly improve the code readability compared to a hand-written SOA data layout, mostly due to C++'s member-of operators. An object-oriented programming style also provides stronger type guarantees in C++, which can improve developer productivity. Some applications are more difficult to implement without dynamic object allocation, which we consider an essential feature of object-oriented programming. Finally, all applications exhibit an inherent object structure of abstract or real-world entities and, needless to say, we would like the application source code to reflect this structure.

\chapter{Conclusion}
\label{sec:thesis_concl}
GPU programming is challenging. For best performance, programmers have to write GPU programs in low-level C/C++ dialects and adopt a SIMD/GPU-specific programming style to avoid slowdowns due to inefficient memory access or control flow divergence. Even though high-level GPU programming languages and libraries exist, these systems often fail to deliver the performance of hand-tuned CUDA/OpenCL code.

Our goal is to make GPU programming available to a wider range of developers by providing better support for object-oriented programming (OOP) in low-level and high-level programming languages. Object-oriented programming is often seen as too inefficient for high-performance computing, but as we have shown in this thesis, object-oriented code can achieve competitive performance if properly optimized. Inefficient device memory access is often the biggest performance problem of object-oriented GPU code and we have demonstrated through various prototypes how to achieve good memory coalescing and cache utilization on GPUs.

\begin{itemize}
  \item \textsc{Ikra-Ruby}: We developed a Ruby library for array-based GPU computing in a high-level, object-oriented language. Kernel fusion of functional, parallel array operations such as \texttt{map}, \texttt{reduce} or \texttt{stencil} allows programmers to compose complex computations from small building blocks in a modular, object-oriented way, without losing performance compared to a single monolithic CUDA kernel. We also showed how to express kernel fusion as part of the static type inference process.
  \item \emph{Single-Method Multiple-Objects (SMMO)}: In the object-oriented SMMO programming model, a computation is expressed as running a method on all existing objects of a type. Such computations are well suited for SIMD parallelism and they achieve good performance on GPUs, because the expected amount of branch divergence is low. We demonstrated that a variety of applications and programming patterns from different domains can be expressed in SMMO, ranging from social/physical simulations over BFS graph traversals to dynamic tree updates/constructions.
  \item \textsc{Ikra-Cpp}: We developed a C++/CUDA library for SMMO applications. Most notably, \textsc{Ikra-Cpp} provides an embedded data layout DSL for the Structure of Arrays (SOA) data layout. The SOA data layout is a well-studied best practice for GPU programmers and results in a significant speedup of SMMO application code compared to a traditional AOS data layout. While standard C++/CUDA does not allow programmers to use object-oriented abstractions with custom data layouts such as SOA, \textsc{Ikra-Cpp} programmers can enjoy the benefits of object-oriented programming together with the performance improvements of the SOA data layout.
  \item \textsc{DynaSOAr}: Dynamic memory allocation is one of the corner stones of object-oriented programming. However, it is not supported well on GPUs. Most notably, existing dynamic memory allocators care only about data placement and miss key optimizations for efficient access of allocations. Building on top of \textsc{Ikra-Cpp}, \textsc{DynaSOAr} is a lock-free, dynamic memory allocator that optimizes the access of allocated memory with an SOA data layout and an efficient parallel do-all operation. Compared to other state-of-the-art allocators, \textsc{DynaSOAr} improves the performance of SMMO applications by a factor of up to 3x.
  \item \textsc{CompactGpu}: Unfortunate allocate-deallocate patterns can lead to increased memory fragmentation of dynamically allocated objects. On GPUs, fragmentation can have an adverse effect on memory coalescing and cache utilization. Therefore, we extended \textsc{DynaSOAr} with a memory defragmentation system \textsc{CompactGpu}, which compacts the heap by merging partly occupied memory blocks. In our benchmarks, \textsc{CompactGpu} could lower the memory consumption of SMMO applications by up to 14\% and improve the runtime performance of SMMO applications by up to 16\%.
\end{itemize}

With our three main prototypes \textsc{Ikra-Cpp}, \textsc{DynaSOAr} and \textsc{CompactGpu}, GPU programmers can achieve SMMO application performance that is close to, sometimes even faster than, non-OOP CUDA code. At the same time, they get all the benefits of object-oriented programming such as good abstraction, expressiveness, modularity and developer productivity.


We plan to improve \textsc{Ikra-Cpp} and \textsc{Ikra-Ruby} in the future. It is our vision that \textsc{Ikra-Cpp} will eventually become a part of \textsc{Ikra-Ruby}, such that programmers can develop SMMO applications in a high-level programming language.







\bibliographystyle{plainurlneww}
\bibliography{main}



\end{document}